\newlength{\figurewidth}
\newcommand{\UGO}[1]{\textcolor{blue}{}}
\newcommand{\NN}{\mathbb{N}}
\newtheorem{theorem}{Theorem}
\newtheorem{lemma}{Lemma}
\newtheorem{definition}{Definition}
\newtheorem{example}{Example}
\newenvironment{proof}{\begin{trivlist}
       \item[\hskip \labelsep {\bfseries Proof.}]}{\hfill $\Box$ \end{trivlist}}
\newcommand{\condskip}{}
\newenvironment{varitemize}
{
\begin{list}{\labelitemi}
{
\setlength{\itemsep}{0pt}
 \setlength{\topsep}{0pt}
 \setlength{\parsep}{0pt}
 \setlength{\partopsep}{0pt}
 \setlength{\leftmargin}{15pt}
 \setlength{\rightmargin}{0pt}
 \setlength{\itemindent}{0pt}
 \setlength{\labelsep}{5pt}
 \setlength{\labelwidth}{10pt}}}
{
 \end{list}
}
\newcommand{\hide}[1]{}
\newcommand{\kronecker}[2]{Kr(#1,#2)}
\newcommand{\defi}{\,:=\,}
\newcommand{\terms}{\text{Terms}}
\newcommand{\variables}{\textbf{V}}
\newcommand{\varone}{x}
\newcommand{\vartwo}{y}
\newcommand{\varthree}{z}
\newcommand{\termone}{M}
\newcommand{\termtwo}{N}
\newcommand{\termthree}{L}
\newcommand{\termfour}{K}
\newcommand{\termfive}{T}
\newcommand{\termsix}{U}
\newcommand{\termseven}{P}
\newcommand{\termeight}{Q}
\newcommand{\valone}{V}
\newcommand{\valtwo}{W}
\newcommand{\valset}{\mathcal{V}}
\newcommand{\subst}[3]{#1\{#3/#2\}}
\newcommand{\abstr}[2]{\lambda #1.#2}
\newcommand{\psum}[2]{#1\oplus #2}
\newcommand{\psumindex}[3]{#1 \oplus^{#2} #3}
\newcommand{\pair}[2]{\langle #1, #2\rangle}
\newcommand{\letin}[4]{\texttt{let}\ \pair{#1}{#2}={#3}\ \texttt{in}\ {#4}}
\newcommand{\diver}{\Omega}
\newcommand{\identity}{I}
\newcommand{\emcon}{\emptyset}
\newcommand{\contone}{\Gamma}
\newcommand{\conttwo}{\Delta}
\newcommand{\wfj}[2]{#1\vdash #2}
\newcommand{\wfjval}[2]{#1\vdash #2 : \textbf{ val }}
\newcommand{\wfjt}[3]{#1 \vdash #2 : #3}
\newcommand{\bss}{\Downarrow}
\newcommand{\ssp}[2]{#1 \rightarrow #2}
\newcommand{\sssp}[2]{#1 \Rightarrow #2}
\newcommand{\bssp}[2]{#1\Downarrow #2}
\newcommand{\sem}[1]{\llbracket#1\rrbracket}
\newcommand{\programs}{\textbf{P}}
\newcommand{\ttrue}{\textbf{true}}
\newcommand{\ffalse}{\textbf{false}}
\newcommand{\testone}{t}
\newcommand{\testtwo}{s}
\newcommand{\testthree}{r}
\newcommand{\wtr}[3]{#1 \stackrel{#2}{\rightarrow} #3}
\newcommand{\wtrc}[3]{{#1} \stackrel{#2}{\Rightarrow} #3}
\newcommand{\wtrct}[3]{#1 \stackrel{#2}{\rightarrow}_{\ensct} #3}
\newcommand{\wtrcct}[3]{{#1} \stackrel{#2}{\Rightarrow}_{\ensct} #3}
\newcommand{\emptytr}{\epsilon}
\newcommand{\dval}[1]{\widehat{#1}}
\newcommand{\traceone}{s}
\newcommand{\tracetwo}{t}
\newcommand{\tracethree}{u}
\newcommand{\concat}[2]{{#1 \cdot #2}}
\newcommand{\app}[1]{@#1}
\newcommand{\eval}{\text{eval}}
\newcommand{\sizec}[1]{{\mid #1 \mid}}
\newcommand{\proj}[1]{\pi_{#1}}
\newcommand{\tenseur}[1]{\otimes {#1}}
\newcommand{\contexts}{\mathbf C}
\newcommand{\hole}{[\cdot]}
\newcommand{\ctxone}{\mathcal C}
\newcommand{\ctxtwo}{\mathcal D}
\newcommand{\ctxthree}{\mathcal E}
\newcommand{\ctxfour}{\mathcal F}
\newcommand{\fillc}[2]{#1[#2]}
\newcommand{\freevar}[1]{FV(#1)}
\newcommand{\bnf}{::=}
\newcommand{\midd}{\; \; \mbox{\Large{$\mid$}}\;\;} 
\newcommand{\distrs}[1]{\text{Distr}\left( #1\right)}
\newcommand{\disjplus}{\dotplus}
\newcommand{\dirac}[1]{{\{{#1}^1\}} }
\newcommand{\probone}{p}
\newcommand{\emdist}{\emptyset}
\newcommand{\distrone}{\mathscr{D}}
\newcommand{\distrtwo}{\mathscr{E}}
\newcommand{\distrthree}{\mathscr{F}}
\newcommand{\distrfour}{\mathscr{G}}
\newcommand{\distrfive}{\mathscr{H}}
\newcommand{\distrsix}{\mathscr{I}}
\newcommand{\distrseven}{\mathscr{J}}
\newcommand{\distreight}{\mathscr{K}}
\newcommand{\distrnine}{\mathscr{L}}
\newcommand{\supp}[1]{\mathsf{S}(#1)}
\newcommand{\sumdistr}[1]{\sum{#1}}
\newcommand{\sumsem}[1]{\mathcal{P}^{\text{cv}}(#1)}
\newcommand{\abs}[1]{\lvert #1\rvert}
\newcommand{\forget}[1]{\mathbf{F}({#1})}
\newcommand{\probtr}[2]{Pr(#1,#2)}
\newcommand{\probtrbstupl}[2]{Pr^{\text{bs}}_{\text{mul}}(#1,#2)}
\newcommand{\probtrtupl}[2]{Pr^{\text{mul}}(#1,#2)}
\newcommand{\probtrsstupl}[2]{Pr^{\text{ss}}_{\text{mul}}(#1,#2)}
\newcommand{\mleqpar}[3]{#1 \equiv^{\text{par}}_{#3} #2}
\newcommand{\applprob}[4]{#1 (#2,#3)(#4)}
\newcommand{\normal}[1]{{#1}^\star}
\newcommand{\emptydistr}{\emptyset}
\newcommand{\relone}{R}
\newcommand{\relate}[3]{#2\,#1\,#3}
\newcommand{\leqtr}[2]{{#1} \leq^{\text{tr}} {#2}}
\newcommand{\equivtr}[2]{{#1} \equiv^{\text{tr}} {#2} }
\newcommand{\leqctx}[2]{{#1} \leq^{\text{ctx}} {#2}}
\newcommand{\equivctx}[2]{{#1} \equiv^{\text{ctx}} {#2} }
\newcommand{\metrs}[1]{ \Delta(#1)}
\newcommand{\bisim}{\delta^{\text{b}}}
\newcommand{\metrctx}{\delta^{\text{ctx}}}
\newcommand{\metrtr}{\delta^{\text{tr}}}
\newcommand{\metrtrtupl}{\delta^{\text{mul}}}
\newcommand{\leqmetr}[2]{#1 \leq^{\text{metr}}#2}
\newcommand{\functionnal}{F}
\newcommand{\appl}[3]{{#1}(#2,#3)}
\newcommand{\metrone}{\mu}
\newcommand{\metrtwo}{\rho}
\newcommand{\metrthree}{\nu}
\newcommand{\howe}[1]{{#1}^{H}}
\newcommand{\epsone}{\varepsilon}
\newcommand{\epstwo}{\iota}
\newcommand{\epsthree}{\gamma}
\newcommand{\epsfour}{\beta}
\newcommand{\actions}{\mathcal A}
\newcommand{\words}{\mathcal{T}r}
\newcommand{\ensct}{\contexts\times\programs}
\newcommand{\ltstrace}{\mathcal L^{\text{tr}}}
\newcommand{\ltstracect}{\mathcal L^{\text{tr}}_{\ensct}}
\newcommand{\setone}{S}
\newcommand{\markovone}{\mathscr{M}}
\newcommand{\labelsone}{\mathscr{L}}
\newcommand{\probmatrone}{\mathscr{P}}
\newcommand{\markovterm}{{\mathscr{M}^{\Lambda}}}
\newcommand{\labelsterm}{\mathscr{L}^{\Lambda}}
\newcommand{\probmatrterm}{\mathscr{P}^{\Lambda}}
\newcommand{\setterm}{{S}^{\Lambda}}
\newcommand{\markovtupl}{\mathscr{M}^{\Lambda}_{\text{mul}}}
\newcommand{\labelstupl}{\mathscr{A}^{\Lambda}_{\text{mul}}}
\newcommand{\probmatrtupl}{\mathscr{P}^{\Lambda}_{\text{mul}}}
\newcommand{\settupl}{{S}^{\Lambda}_{\text{mul}}}
\newcommand{\statebs}{\settupl }
\newcommand{\statess}{\setonewf }
\newcommand{\stateone}{s}
\newcommand{\statetwo}{t}
\newcommand{\statethree}{u}
\newcommand{\statefour}{v}
\newcommand{\actone}{a}
\newcommand{\actbs}{\labelstupl }
\newcommand{\actss}{Act^{ss}}
\newcommand{\doact}[3]{{#1}\stackrel{#2}{\rightsquigarrow}{#3}}
\newcommand{\obs}[1]{\text{Obs}(#1)}
\newcommand{\coeffone}{a}
\newcommand{\coefftwo}{b}
\newcommand{\coeffdone}{l}
\newcommand{\coeffdtwo}{h}
\newcommand{\coeffdthree}{n}
\newcommand{\coeffdxone}{x}
\newcommand{\coeffdyone}{y}
\newcommand{\coeffdxtwo}{w}
\newcommand{\coeffdytwo}{z}
\newcommand{\coeffdxthree}{a}
\newcommand{\coeffdythree}{b}
\newcommand{\judghowe}[5]{{#1}\vdash{\appl{\howe{#2}}{#3}{#4}} \leq {#5}}
\newcommand{\judgclh}[3]{\vdash {\appl {\closure {\howe \bisim}} {#1}{#2}} \leq {#3}}
\newcommand{\closure}[1]{#1_\triangle}
\newcommand{\embone}{\Theta}
\newcommand{\calcwpair}{\Lambda^{\pair{}{}}_{\oplus}}
\newcommand{\torvstates}{\mathcal {TV} }
\newcommand{\torvstateswf}{\mathcal {TV}^{\text{focus}}}
\newcommand{\setonewf}{\mathcal{S}^{\text{focus}}}
\newcommand{\torvone}{s}
\newcommand{\torvtwo}{t}
\newcommand{\torvthree}{u}
\newcommand{\tuple}[3]{\left[#1,#2,#3\right]}
\newcommand{\tuplone}{K}
\newcommand{\tupltwo}{H}
\newcommand{\tuplonea   }[1]{\left[#1\right]}
\newcommand{\actcut}[1]{\text{unfold}^{#1}}
\newcommand{\actappl}[2]{@{#1}^{#2}}
\newcommand{\actpair}[1]{\pair{}{}\cdot{#1}}
\newcommand{\acteval}[1]{\text{eval} ^{#1}}
\newcommand{\typone}{\sigma}
\newcommand{\typtwo}{\tau}
\newcommand{\typthree}{\gamma}
\newcommand{\tprod}[2]{#1 \otimes #2}
\newcommand{\ctwellformed}{\mathcal A}
\newcommand{\ctwellformedt}[1]{\mathcal A^{#1}}
\newcommand{\functionone}{\phi}
\newcommand{\functiontwo}{\psi}
\newcommand{\functionthree}{\nu}
\newcommand{\functionfour}{\eta}
\newcommand{\restr}[2]{{(#1)}\restriction_{#2}}
\newcommand{\domain}[1]{\text{Dom}(#1)}
\newcommand{\compose}[2]{#1\circ #2}
\newcommand{\parall}[3]{(#2\,|^{#1}\,#3)}
\newcommand{\image}[1]{Im({#1})}
\newcommand{\suitone}[1]{u_{#1}}
\newcommand{\limit}{l}
\newcommand{\suittwo}[1]{v_{#1}}
\newcommand{\suitthree}[1]{w_{#1}}
\newcommand{\focus}[2]{\text{focus}^{#1}(#2) }
\newcommand{\internact}{\tau}
\newcommand{\focusindex}[1]{f(#1)}
\newcommand{\reduced}[1]{(#1^{\star})}
\newcommand{\permut}{\theta}
\newcommand{\fdistrs}[1]{\Delta(#1)}
\newcommand{\wtrtupl}[3]{#1 \stackrel{#2}{\rightarrow_{\tuplonea{}}} #3}
\newcommand{\wtrdistrtupl}[3]{#1 \stackrel{#2}{\rightarrow_{\Delta(\tuplonea{})}} #3}
\newcommand{\wtrcdistrtupl}[3]{#1 \stackrel{#2}{\Rightarrow_{\Delta(\tuplonea{})}} #3}
\newcommand{\wtrbs}[3]{#1 \stackrel{#2}{\rightarrow}_{\fdistrs \settupl} #3}
\begin{document}

\title{Metric Reasoning About $\lambda$-Terms:\\ The Affine Case\\ (Long Version)}

\author{Rapha\"elle Crubill\'e \and Ugo Dal Lago}

\maketitle

\begin{abstract}
  Terms of Church's $\lambda$-calculus can be considered
  \emph{equivalent} along many different definitions, but context
  equivalence is certainly the most direct and universally accepted
  one. If the underlying calculus becomes probabilistic, however,
  equivalence is too discriminating: terms which have totally
  unrelated behaviours are treated the same as terms which behave very
  similarly. We study the problem of evaluating the \emph{distance}
  between affine $\lambda$-terms. The most natural definition for it,
  namely a natural generalisation of context equivalence, is shown to
  be characterised by a notion of \emph{trace} distance, and to be
  bounded from above by a coinductively defined distance based on the
  Kantorovich metric on distributions. A different, again
  fully-abstract, tuple-based notion of trace distance is shown to
  be able to handle nontrivial examples.
\end{abstract}


\section{Introduction}
Probabilistic models are formidable tools when abstracting the
behaviour of complicated, intractable systems by simpler ones, at the
price of introducing uncertainty. But there is more: randomness can be
seen as \emph{a way to compute}; in modern cryptography, as an
example, having access to a source of uniform randomness is essential
to achieve security in an asymmetric
setting~\cite{GoldwasserMicali}. Other domains where probabilistic
models play a key role include machine
learning~\cite{pearl1988probabilistic},
robotics~\cite{thrun2002robotic}, and
linguistics~\cite{manning1999foundations}. 

Probabilistic models of computation have been studied not only
directly, but also through concrete or abstract programming languages,
which are most often extensions of their deterministic siblings. Among
the many ways probabilistic choice can be captured in programming, the
simplest one consists in endowing the language of programs with an
operator modelling the flipping of a fair coin. This renders program
evaluation a probabilistic process, and under mild assumptions the
language becomes universal for probabilistic computation. Particularly
fruitful in this sense has been the line of work on the functional
paradigm, both at a theoretical~\cite{JonesPlotkin,Ramsey,Pfenning}
and at a more practical level~\cite{Church}. 

In presence of higher-order functions, program equivalence can be
captured by so-called \emph{context equivalence}: two programs
$\termone$ and $\termtwo$ are considered equivalent if they behave the
same no matter how the environment interacts with them: for every
context $\ctxone$, it holds that
$\obs{\fillc{\ctxone}{\termone}}=\obs{\fillc{\ctxone}{\termtwo}}$. However,
this definition has the drawback of being based on an universal
quantification over \emph{all} contexts: showing that two programs are
equivalent, requires considering their interaction with every possible
context. The problem of giving handier characterisations of context
equivalence can be approached in many different ways. As an example,
coinductive methodologies for program equivalence have been studied
thoroughly in deterministic~\cite{Abramsky1990,Pitts} and
non-deterministic~\cite{Lassen} computation, with new and exciting
results appearing recently also for probabilistic languages:
applicative bisimilarity, a coinductively defined notion of
equivalence for functional programs, has been shown to be sound,
and sometime even fully abstract, for probabilistic
$\lambda$-calculi~\cite{DalLagoSangiorgiAlberti2014POPL,CrubilleDalLago2014ESOP}.

In a probabilistic setting, however, equivalences are too strong if
defined as above. Indeed, two programs are equivalent if their
probabilistic behaviour is \emph{exactly} the same (in every
context). The actual value of probabilities in a probabilistic model
often comes from statistical measurements, and should be considered
more as an approximation to the actual probability law. Consequently,
we would like to compare programs by appropriately reflecting small
variations in them. Another scenario in which a richer, more
informative way of comparing programs is needed is cryptography, where
a central notion of equivalence, called \emph{computational
  indistinguishability}~\cite{Goldreich} is indeed based on
statistical distance rather than equality: the adversary \emph{can} win
the game, but with a \emph{small} probability. Summing up,
equivalences should be refined into metrics, and this is the path we
will follow in this paper.

In probabilistic $\lambda$-calculi, the notion of observation
$\obs{\cdot}$ is quantitative: it is either the \emph{probability} of
convergence to a certain observable base value (e.g. the empty
string), or the probability of convergence \textsl{tout court}. One can
then easily define a notion of \emph{context distance} as the maximal
distance contexts can achieve when separating two terms:
$$
\metrctx(\termone,\termtwo)=\sup_{\ctxone}\;\;\abs{\obs{\fillc{\ctxone}{\termone}}-\obs{\fillc{\ctxone}{\termtwo}}}.
$$
This looks very close to computational indistinguishability,
except for the absence of a security parameter: a scheme is
secure if the advantage of any adversary in a given game (e.g., 
consisting in distinguishing between the case where the scheme is used,
and the case where it is replaced by a truly random process) is
``small'' (e.g., negligible). Again, however, we find ourselves
in front of a definition which risks to be useless in proofs,
given that all contexts must be taken into account.
But how difficult is evaluating the distance between concrete
higher-order terms? Are there ways to alleviate the burden
of dealing with all contexts, like for equivalences? These
are the questions we address in this paper, and which have to
the authors' knowledge not been investigated before.

As we will discuss in Section~\ref{sect:anatomy} below, finding
handier characterisations of the context distance poses challenges
which are simply different (and often harder) than the ones
encountered in context equivalence. In particular, the context
distance tends to \emph{trivialise} and, perhaps worse, naively
applying the natural generalisation of techniques known for
equivalence is bound to lead to unsound methodologies. Indeed, one
immediately realises that the number of times contexts access
their argument is a crucial parameter, which must necessarily be
dealt with. This is the reason why we work with an affine
$\lambda$-calculus in this paper: this is a necessary first step, but
also points to the right way to tame the general, non-linear case.

An extended version of this paper with more details is available~\cite{EV}.

\subsection*{Contributions}
We introduce in this paper \emph{three} distinct notions of distance
for terms in an untyped, probabilistic, and affine $\lambda$-calculus.
The first one is a notion of \emph{trace distance}, in which terms are
faced with \emph{linear} tests, i.e. sequences of arguments. The
distance between two terms is then defined as the greatest separation
any linear test achieves. The first results of this paper are the
non-expansiveness of the trace distance, which implies (given that any
linear test can easily be implemented by an affine context) that the
trace and context distances coincide. This is the topic of
Section~\ref{sect:tracedistance} below. 

Section~\ref{sect:bisimulationdistance}, instead, focuses on another
notion of distance, which is coinductively defined following the
well-known Kantorovich metric~\cite{kantorovich1942transfer} for
distributions of states in any labelled Markov chain (LMC in the
following), and that we dub the \emph{bisimulation distance}. This
second notion of a distance is not only smaller than the trace
distance, which is well expected, but non-expansive itself. This is
proved by a variation on the Howe's method~\cite{Howe}, a well-known
technique for proving that bisimilarity is a congruence in an
higher-order setting, and which has never been used for metrics
before. On the other hand, the bisimulation distance \emph{does not}
coincide with the context distance, a fact that we do not only prove
 by giving a counterexample, but that we justify by relying on a
test-based characterisation of the bisimulation distance known from the
literature.

For the sake of simplicity, the trace and bisimulation distances are
analysed on a purely applicative $\lambda$-calculus, keeping in mind
that pairs could be very easily handled, and can even be encoded in
the applicative fragment, as discussed in
Section~\ref{sect:pairs}. The presence of pairs, however, allows us to
form very interesting examples of distance problems, one of which will
drive us throughout the paper but unfortunately turns out hard to
handle neither by the trace distance nor by the bisimulation
distance. This is the starting point for the third notion of distance
introduced in this paper, which is the subject of
Section~\ref{sect:tupledistance}, and which we call the
\emph{tuple distance}. Our third notion of distance can be proved
to coincide with the trace distance, and thus with the context
distance. But this is not the end of the story: in the tuple
distance, not a single but \emph{many} terms are compared, and this
makes the distance between concrete terms much easier to evaluate:
interaction is somehow internalised. In particular, our running
example can be handled quite easily. The way the tuple
distance is defined makes it adaptable to non-affine calculi, a topic
which is outside the scope of this paper, but which we briefly discuss
in Section~\ref{sect:exponentials}.

\subsection*{Related Work}\label{sect:related}
This is definitely not the first work on metrics for probabilistic
systems: notions of coinductively defined metrics for LMCs, as an
example, have been extensively studied (e.g.~\cite{DesharnaisLICS02,
  DesharnaisCONCUR99,Worell}). There has been, to our knowledge, not
so many investigations on the meaning of metrics for concrete
programming languages~\cite{GeblerTini}, and almost nothing on metric
for higher-order languages.

If the key property notions of \emph{equivalences} are required to
satisfy consists in being \emph{congruences}, the corresponding
property for metrics has traditionally been taken as
\emph{non-expansiveness}.  Indeed, many results from the literature
(e.g.~\cite{DesharnaisLICS02,Mitchell}) have precisely the form of
non-expansiveness results for metrics defined in various forms.
The underlying language, however, invariably take the form of a process algebra
without any higher-order feature.  The work of Gebler, Tini, and
co-authors shows that one could go beyond non-expansiveness and
towards uniform continuity~\cite{GeblerTini} but, again, higher-order
functions remain out of scope.

Notions of \emph{equivalence} for various forms of probabilistic
$\lambda$-calculi have also been extensively studied, starting from
the pioneering work by Plotkin and Jones~\cite{JonesPlotkin}, down to recent
results on probabilistic applicative
bisimulation~\cite{DalLagoSangiorgiAlberti2014POPL,CrubilleDalLago2014ESOP}, logical
relations~\cite{BizjakBirkedal}, and probabilistic coherent
spaces~\cite{DanosEhrhard,EhrhardTassonPagani2014POPL}.  None of the works above,
however, go beyond equivalences and deals with notions of distances
between terms.

\section{The Anatomy of a Distance}\label{sect:anatomy}
In this section, we describe the difficulties one encounters when
trying to characterise the context distance with either bisimulation
or trace metrics.

Suppose we have two terms $\termone$ and $\termtwo$ of boolean type
written in a probabilistic $\lambda$-calculus. As such, $\termone$ and
$\termtwo$ do not evaluate to \emph{a} value in the domain of booleans
but to a \emph{distribution} over the same domain. $\termone$
evaluates to the distribution assigning $\ttrue$ probability $1$,
while $\termtwo$ evaluates to the uniform distribution over booleans,
(i.e. the distribution which attributes probability $\frac{1}{2}$ to
both $\ttrue$ and $\ffalse$). Figure~\ref{fig:exa} depicts the
relevant fragment of a LMC, whose induced notion of probabilistic
bisimilarity has been proved to be sound for context
equivalence~\cite{CrubilleDalLago2014ESOP}.
\begin{figure}
\begin{center}
\fbox{
\begin{minipage}{\figurewidth}
\begin{center}
\begin{tikzpicture}[auto]
\node [draw, rectangle, rounded corners] (M) at (0,0) {\scriptsize$\termone =\ttrue $};
\node[draw, rectangle, rounded corners] (N) at (4,0) {\scriptsize$\termtwo = \psum \ttrue \ffalse$};
\node[draw, rectangle, rounded corners] (T) at (0,-2){\scriptsize$\dval \ttrue$};
\node[draw, rectangle, rounded corners] (F) at (4,-2){\scriptsize $\dval \ffalse$};
\node [draw, circle](A) at (0,-1){};
\node [draw, circle](B) at (4,-1){};
\draw[-](M) to node {eval} (A);
\draw[-](N) to node {eval}(B);
\draw[->](A) to node {$1 $}  (T);
\draw[->](B) to node {$\frac 1 2 $}  (T);
\draw[->](B) to node {$\frac 1 2 $}  (F);
\draw[->](T)[loop left] to node {$\text{true} $}  (T);
\draw[->](F) [loop right]to node {$\text{false} $}  (F);
\end{tikzpicture}
\end{center}
\end{minipage}}
\end{center}
\caption{$\termone$ and $\termtwo$ as states of a LMC}\label{fig:exa}
\end{figure}
$\termone$ and $\termtwo$ are not bisimilar. Indeed, $\dval{\ttrue}$
and $\dval{\ffalse}$ are trivially not bisimilar, while $\termone$ and
$\termtwo$ go to equivalent states with different probabilities.  The
two terms are non-equivalent also contextually. But what \emph{should
  be} the distance between $\termone$ and $\termtwo$?

For the moment, let us forget about the context distance, and
concentrate on the notions of distance for LMCs we mentioned in
Section \ref{sect:related}. In all cases we are aware of, we obtain
that $\termone$ and $\termtwo$ are at distance $\frac{1}{2}$. As an
example, if we consider a trace metric, we have to compare the
success probability of linear tests, starting from $\termone$ and
$\termtwo$. More precisely, the tests of interest with respect to
these two terms are: 
$$
\testone \defi \eval; \qquad
\testtwo \defi\concat \eval {\text{true}}; \qquad 
\testthree \defi\concat \eval{\text{false}}.
$$
Since neither $\termone$ nor $\termtwo$ has a non-zero divergence
probability, they both pass the test $\testone$ with probability
$1$. The success probability of the test $\testtwo$ corresponds to the
probability of evaluating to $\ttrue$: it is $1$ for $\termone$ and
$\frac 1 2$ for $\termtwo$. Similarly, the success probability of
$\testthree$ corresponds to the probability to obtain $\ffalse$ after
evaluation: it is $0$ for $\termone$ and $\frac 1 2$ for
$\termtwo$. So we can see that the maximal separation linear tests can
obtain is $\frac 1 2$. The situation is quite similar
for bisimulation metrics~\cite{DesharnaisLICS02}, which attribute
distance $\frac{1}{2}$ to $\termone$ and $\termtwo$.

It is easy, however, to find a family of contexts
$\{\ctxone_n\}_{n\in\NN}$ such that $\fillc{\ctxone_n}{\termone}$
evaluates to $\ttrue$ with probability $1$, and
$\fillc{\ctxone_n}{\termtwo}$ evaluates to $\ffalse$ with probability
$1-\frac{1}{2^n}$: define $\ctxone_n$ as a context that copies its
argument $n$ times, returning $\ffalse$ if \emph{at least one} of the
$n$ copies evaluates to $\ffalse$, and otherwise returns $\ttrue$.  As
a consequence, the context distance between $\termone$ and $\termtwo$
is $1$. In fact, this reasoning can be extended to any pair of
programs which are not equivalent but whose probability of convergence
is $1$: out of a context which separates them of $\epsone>0$, with
$\epsone$ very small, we can construct a context that separates them
of $1$ performing some statistical reasoning. The situation is more
complicated if we take the probability of convergence as an
observable: we cannot \emph{always} construct contexts that
discriminate terms based on their probability of convergence, although
something can be done if the terms' probabilities of convergence are
different but close to $1$. The context metric, in other words, risks
to be not continuous and close to trivial if contexts are too
powerful. What the example above shows, however, is something even
worse: if contexts are allowed to copy their arguments, then any
metric defined upon the usual presentation of probabilistic
$\lambda$-calculus as an LMC (a fragment of which is depicted in
Figure~\ref{fig:exa}) is bound to be \emph{unsound} w.r.t. the context
metric.

Whether bisimulation metrics are sound, how close they are to the
context distance, and whether they are useful in relieving the burden
of evaluating it, are however open and interesting questions even in
absence of the copying capability, i.e., when the underlying language
is \emph{affine}. This is the main reason why we focus in this work on
such a $\lambda$-calculus, whose expressive power is limited (although
definitely non-trivial~\cite{MairsonJFP}) but which
is anyway higher-order. We discover this way an elegant and deep
theory in which trace and bisimulation metrics are indeed sound, 
At the end of this paper, some hints will be given about how the case
of the untyped $\lambda$-calculus can be handled, a problem which we
leave for future work.

Evaluating the context distance between affine terms is already an
interesting and nontrivial problem. Consider, as an example, a
sequence of terms $\{\termone_n\}_{n \in \NN}$ defined inductively as follows (where $\diver$ stands for a term with zero probability of converging):
$$
\termone_0 = \pair {\abstr\varone \diver}{\abstr \varone \diver};\qquad
\termone_{n+1} = \pair {\abstr\varone{\termone_n}}{\abstr\varone\diver}.
$$ 
$ \termone_0$ is the pair whose components are both equal to $\abstr
\varone \diver$, and $\termone_{n+1}$ is defined as a pair whose first
component is the function which returns $\termone_n$ whatever its
argument is, and the second component is again $\abstr \varone
\diver$. We are now going to define another sequence of terms
$\{\termtwo_n\}_{n\in \NN}$, which can be seen as a noisy variation on
$\{\termone_n\}_{n\in \NN}$. More precisely, $\termtwo_0$ is the same
as $\termone_0$, and for each $n\in \NN$, $\termtwo_{n+1}$ is
constructed similarly to $\termone_{n+1}$, but adding some negligible
noise in \emph{both} components:
\begin{align*}
\termtwo_0&=\pair{\abstr \varone \diver}{\abstr \varone \diver};\\
\termtwo_{n+1}&=\pair{\abstr\varone {\psumindex
    {\termtwo_n}{\frac 1 {2^{n+1}}}{\diver}}}{\abstr \varone
  {\psumindex{\diver}{\frac 1 {2^{n+1}}}{\identity}}}.
\end{align*}
($\identity$ stands for the identity: $\abstr \varone \varone$, while the term $\psumindex \termthree p \termfour$ has the same behaviour
as $\termthree$ with probability $(1-p)$, and the same behaviour as
$\termfour$ with probability $p$.)  We would like to study how the
distance between $\termone_n$ and $\termtwo_{n}$ evolves when $n$
tends to infinity: do the little differences we apply at each step $n$
accumulate, and how can we express this accumulation quantitatively?

Intuitively, it is easy for the environment to separate $\termone_n$
and $\termtwo_n$ of ${\frac 1 {2^{n}}}$: it is enough to consider a
context $\ctxone$ which simply takes the second component of the pair,
passes any argument to it, and evaluates it: the convergence
probability of $\fillc \ctxone {\termone_n}$ is $0$, while the
convergence probability of $\fillc \ctxone {\termtwo_{n}}$ is $ \frac
1 {2^{n}}$. But the environment can also decide to take the \emph{first}
component of the pair, in order to use the fact that $\termone_{n-1}$
and $\termtwo_{n-1}$ can be distinguished: more precisely, let us
suppose that we have a context $\ctxone$ which separates
$\termone_{n-1}$ and $\termtwo_{n-1}$. Then we can construct a context
$\ctxtwo$ which takes the first element of the pair, passes any argument
to it, tries to evaluate it, and if it succeeds, gives the
result as an argument to $\ctxone$. We would like to express the
supremum of the separation that such a context can obtain as a
function of the distance between $\termone_{n-1}$ and
$\termtwo_{n-1}$. Unfortunately, this is not so simple: if $\ctxone$
is such that the convergence probability of $\fillc \ctxone
{\termone_{n-1}}$ is $\epsone$ and the convergence probability of
$\fillc \ctxone {\termtwo_{n-1}}$ is $\epstwo$, we can see that the
convergence probability of $\fillc \ctxtwo {\termone_n}$ is $\epsone$,
whereas the convergence probability of $\fillc \ctxtwo {\termtwo_n}$ is
$(\epstwo \cdot (1-\frac 1 {2^{n}}))$. But it is not possible to
express $\abs{\epsone - \epstwo \cdot (1- \frac 1 {2^{n}}) }$ as a
function of $\abs{\epsone - \epstwo}$ and $n$: intuitively, the
separation that the context $\ctxtwo$ can achieve depends not only on
the separation that the context $\ctxone$ can achieve, but also on
\emph{how} $\ctxone$ achieves it.  And moreover, the environment may
of course decide to use the two components of the pair, and to make
them interact in an arbitrary way. Summing up, although the
mechanism of construction of these terms seems to be locally easy to
measure, it is complicated to have any idea about how the distance
between them evolves when $n$ tends to infinity.
\section{Preliminaries}
In this section, an affine and probabilistic $\lambda$-calculus, which is
the object of study of this paper, will be introduced formally,
together with a notion of context distance for it.
\subsection{An Affine, Untyped, Probabilistic $\lambda$-Calculus}
We endow the $\lambda$-calculus with a probabilistic operator $\psum{}{}$,
which corresponds to the possibility for the program to choose one 
between two arguments, each with the same probability. \emph{Terms} 
are expressions generated by the following grammar:
$$
\termone\bnf\varone\midd\abstr{\varone}{\termone}\midd\termone\termone\midd\psum{\termone}{\termone}\midd\diver,
$$
where $\diver$ models divergence\footnote{since we only consider
  affine terms, we cannot encode divergence by the usual constructions
  of $\lambda$-calculus}, and $\varone$ ranges over a countable
set $\variables$ of variables.

The class of affine terms, which model functions using their
arguments at most once, can be isolated by way of a formal system,
whose judgements are in the form $\wfj{\contone}{\termone}$ (where
$\contone$ is any finite set of variables) and whose rules are the
following (where $\contone, \conttwo$ stands for the union of two disjoints contexts):
$$
\AxiomC{}
\UnaryInfC{$\wfj{\contone,\varone}{\varone}$}
\DisplayProof
\qquad
\AxiomC{$\wfj{\contone,\varone}{\termone}$}
\UnaryInfC{$\wfj{\contone}{\abstr{\varone}{\termone}}$}
\DisplayProof
\qquad
\AxiomC{$\wfj{\contone}{\termone}$}
\AxiomC{$\wfj{\conttwo}{\termtwo}$}
\BinaryInfC{$\wfj{\contone,\conttwo}{\termone\termtwo}$}
\DisplayProof
$$
$$
\AxiomC{$\wfj{\contone}{\termone}$}
\AxiomC{$\wfj{\contone}{\termtwo}$}
\BinaryInfC{$\wfj{\contone}{\psum{\termone}{\termtwo}}$}
\DisplayProof
\qquad
\AxiomC{}
\UnaryInfC{$\wfj{\contone}{\diver}$}
\DisplayProof  
$$

A \emph{program} is a term such that $\wfj{\emcon}{\termone}$, and
$\programs$ is the set of all such terms. We will call them 
\emph{closed terms}. We say that a program is a \emph{value} if it is of
the form $\abstr \varone \termone$, and $\valset$ is the set of such programs. The semantics of the just defined
calculus is expressed as a binary relation $\bss$ between programs and
\emph{value subdistributions} (or simply \emph{value distributions}),
i.e.  functions from values to real numbers whose sum is
\emph{smaller} or equal to $1$. The relation $\bss$ is inductively
defined by the following rules:
$$
\AxiomC{}
\UnaryInfC{$\bssp{\diver}{\emdist}$}
\DisplayProof
\qquad
\AxiomC{$\valone \text{ a value }$}
\UnaryInfC{$\bssp{\valone}{\{\valone^1\}}$}
\DisplayProof
\qquad
\AxiomC{$\bssp{\termone}{\distrone}$}
\AxiomC{$\bssp{\termtwo}{\distrtwo}$}
\BinaryInfC{$\bssp{\psum{\termone}{\termtwo}}{\frac{1}{2}\distrone+\frac{1}{2}\distrtwo}$}
\DisplayProof$$ $$
\AxiomC{$
  \begin{array}{c}
    \bssp{\termone}{\distrone}\qquad\bssp{\termtwo}{\distrtwo}\\
    \{\bssp{\subst{\termthree}{\varone}{\valone}}{\distrthree_{\termthree,\valone}}\}_{\abstr{\varone}{\termthree}\in\supp{\distrone},\valone\in\supp{\distrtwo}}
  \end{array}$}
\UnaryInfC{$\bssp{\termone\termtwo}{\sum\distrone(\abstr{\varone}{\termthree})\cdot\distrtwo(\valone)\cdot\distrthree_{\termthree,\valone}}$}
\DisplayProof
$$
where $\supp{\distrone}$ stands for the support of the distribution
$\distrone$.  The divergent program $\diver$, as expected, evaluates
to the \emph{empty} value distribution $\emdist$ which assigns $0$ to
any value. The expression $\{\valone^1\}$ stands for the Dirac's value
distribution on $\valone$; more generally the expression
$\{\valone_1^{\probone_1},\ldots,\valone_n^{\probone_n}\}$ indicates
the value distribution assigning probability $\probone_i$ to each
$\valone_i$ (and $0$ to any other value).

For every program $\termone$, there exists precisely \emph{one} value
distribution $\distrone$ such that $\bssp \termone \distrone$, that we
note $\sem \termone$. This holds only because we restrict ourselves to
affine terms. Moreover, $\sem \termone$ is always a finite
distribution. The rule for application expresses the fact that the
semantics is call-by-value: the argument is evaluated before being
passed to the function. There is no special reason why we adopt
call-by-value here, and all we are going to say also holds for (weak)
call-by-name evaluation.

In some circumstances, we would need to have a more local view of how
the programs behave. For these reason, we define an equivalent notion
of small-steps semantics, which allows us to reason about every small
execution step. We define first a one-step semantics $\ssp {} {}$
between programs and distribution over programs: {
$$ 
\AxiomC{}
  \UnaryInfC{$\ssp \diver \emdist$}
  \DisplayProof \qquad
  \AxiomC{}
  \UnaryInfC{$\ssp{\psum{\termone}{\termtwo}}{\frac 1 2 \cdot \dirac\termone + \frac 1 2 \cdot\dirac{\termtwo}}$}
  \DisplayProof
$$ 
$$
  \AxiomC{}
  \UnaryInfC{$\ssp{\left(\abstr \varone \termone\right) \valone}{\dirac {\subst \termone \varone \valone}}$}
  \DisplayProof$$
  $$ \AxiomC{$\ssp \termone \distrone$}
  \UnaryInfC{$\ssp{\termone\termtwo}{\sum\distrone(\termthree)\cdot
      \dirac {\termthree \termtwo}}$} \DisplayProof \qquad
  \AxiomC{$\ssp \termtwo \distrone$}
  \UnaryInfC{$\ssp{\valone\termtwo}{\sum\distrone(\termthree)\cdot
      \dirac {\valone\termthree}}$} \DisplayProof$$ } 
Then we use it to define a small step semantics $\sssp {} {}$, which
is a relation between programs and value distributions, and
corresponds to do as much as possible steps of $\ssp {} {}$. The rules
are the following:
\begin{center}{\small
\AxiomC{}
\UnaryInfC{$\sssp \valone \dirac \valone $}
\DisplayProof
\quad
\AxiomC{$\ssp \termone \distrone $}
\AxiomC{$(\sssp \termtwo \distrtwo_\termtwo)_{\termtwo \in \supp \distrone} $}
\BinaryInfC{$\sssp \termone {\sum \distrone(\termtwo) \cdot \distrtwo_\termtwo}$}
\DisplayProof
}\end{center}
 
Big-step and small-steps semantics are equivalent: for
every program $\termone$, there exists a unique distribution
$\distrone$ such that $\sssp \termone \distrone$, and moreover
$\distrone = \sem \termone$.
\subsection{Context Distance}
We now want to define a notion of observation for programs which
somehow measures the convergence probability of a program. We will do
that following the previous literature on this subject. For any 
distribution $\distrone$ over a set $A$, its sum $\sum_{a \in A} \distrone(a)$
is indicated as $\sumdistr{\distrone}$ and is said to be the
\emph{weight} of $\distrone$.  The \emph{convergence probability} of a
term $\termone$, that we note $\sumsem{\termone}$, is simply
$\sumdistr{\sem \termone}$, i.e., the weight of its semantics. For
instance, the convergence probability of $\diver$ is zero.

The environment, as usual, is modelled by the notion of a
\emph{context}, which is nothing more than a term with a
single occurrence of the \emph{hole} $[\cdot]$.
 They are generated by the following grammar:
$$
\ctxone\bnf\hole \midd\termone \midd\abstr{\varone}{\ctxone}\midd\ctxone\termone\midd \termone \ctxone\midd\psum{\ctxone}{\ctxone}.
$$ Affine
contexts can be identified by a formal system akin to the
one for terms. We note as $\fillc \ctxone \termone$ the program obtained by replacing
$[\cdot]$ by the closed term $\termone$ in $\ctxone$. The interaction of a program $\termone$ with
a context $\ctxone$ is the execution of the program $\fillc \ctxone
\termone$. 

We now consider three different ways of comparing programs, based on
their behaviour when interacting with the environment: a preorder
$\leqctx {}{}$, an equivalence relation $\equivctx{}{}$, and a map $\metrctx$:
\condskip
\begin{definition}[Context Equivalence, Context Distance]
Let $\termone$ and $\termtwo$ be two programs. Then we write that
$\leqctx\termone\termtwo$ if and only if for every context $\ctxone$, it holds that
$\sumsem{\fillc\ctxone\termone}\leq\sumsem{\fillc\ctxone\termtwo}$. 
If $\leqctx{\termone}{\termtwo}$ and $\leqctx{\termtwo}{\termone}$,
then we say that the two terms are \emph{context equivalent}, and
we write $\equivctx \termone \termtwo$. With the same hypotheses,
we say the \emph{context distance} between $\termone$ and $\termtwo$
is the real number $\appl \metrctx \termone \termtwo$
defined as $\sup_{\ctxone} \abs{\sumsem {\fillc \ctxone \termone} - {\sumsem{\fillc \ctxone \termtwo}}}$.
\end{definition}
\condskip
Please observe that, following~\cite{DengZhang}, we only compare
programs and not arbitrary terms. This is anyway harmless in an affine
setting.
\begin{example}\label{exampleun} 
  Let $\identity$ be the identity $\abstr \varone
  \varone$. $\identity$ and $\diver$ are as far as two programs can
  be: $\appl \metrctx {\identity}{\diver}=1$. To prove that, finding a
  context which always converges for one of the terms, and always
  diverges for the other one, suffices. We can take $\ctxone=\hole$,
  and we have that $\sumsem {\fillc \ctxone \identity} = 1$ and
  $\sumsem {\fillc \ctxone \diver} = 0$. Of course, $\identity$ and
  $\diver$ are not context equivalent.
  Throwing in probabilistic choice can complicate matters a bit.
  Consider the two terms $\psum{\identity}{\diver}$ and $\identity$.
  One can easily prove that $\appl \metrctx {\psum \identity \diver}
  {\identity}\geq\frac{1}{2}$: just consider $\ctxone = \hole$.
  However, showing that the above inequality is in fact an equality,
  requires showing that there cannot exist any context that separates
  more, which is possible, but definitely harder. This will be shown
  in the next section, using a trace-based characterisation of context
  distance.
\end{example}
\subsection{On Pseudometrics}\label{metricproperties}
Which properties does the context distance satisfy, and which structure
it then gives to the set of programs? This section answers these
questions, and prepares the ground for the sequel by
fixing some terminology.
\condskip
\begin{definition}[Pseudometrics]\label{defmetric}
  Let $\setone$ be a set. A \emph{premetric} on $\setone$ is any
  function $\metrone:\setone \rightarrow \setone$ such that $0 \leq
  \appl\metrone\stateone\statetwo \leq 1$ and
  $\appl\metrone\stateone\stateone=0$. A \emph{pseudometric} on
  $\setone$ is any premetric such that for every $\stateone,\statetwo,\statethree\in\setone$, 
  it holds that
    $\appl\metrone\stateone\statetwo = \appl\metrone\statetwo\stateone$
     and $\appl\metrone\stateone\statetwo \leq \appl\metrone\stateone\statethree+\appl\metrone\statethree\statetwo$.
   The set of all pseudometrics on $\setone$ is indicated with $\metrs{\setone}$.
\end{definition}
\condskip 
Please observe that pseudometrics are not metrics in the usual sense,
since $\appl \metrone \stateone \statetwo = 0$ does not necessarily
imply that $\stateone=\statetwo$. If we have a pseudometric
$\metrone$, we can construct an equivalence relation by considering
the \emph{kernel} of $\metrone$, that is the set of those pairs
$(\stateone, \statetwo)$ such that $\appl \metrone \stateone \statetwo
= 0$. It is easy to prove that the context distance is indeed a
pseudometric, and that its kernel is context equivalence. We would
now want to define a preorder $\leqmetr{}{}$ on pseudometrics in such a way that if
$\leqmetr \metrone \metrtwo$, then the kernel of $\metrone$ is
included in the kernel of $\metrtwo$. The natural choice, then, is to
take the following definition, which is the reverse of the
pointwise order on $[0,1]$:
\condskip
\begin{definition}[Pseudometric Ordering]
  Let $\setone$ be any set, and let $\metrone$ and $\metrtwo$ be two metrics
  in $\metrs \setone$. Then we stipulate that $\leqmetr \metrone \metrtwo$ if and only if, for
  every $\stateone, \statetwo \in \setone$ we have that $
  \appl\metrtwo\stateone\statetwo \leq \appl \metrone
  \stateone\statetwo$.
\end{definition}
\condskip

\begin{lemma}\label{lattice}
  For any set $\setone$, $(\metrs \setone,\leqmetr\cdot\cdot)$ is a
  complete lattice.
\end{lemma}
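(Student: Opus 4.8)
The plan is to construct arbitrary greatest lower bounds directly and then appeal to the standard order-theoretic fact that a partial order in which every subset has an infimum is automatically a complete lattice. First I would check that $\leqmetr\cdot\cdot$ is a partial order on $\metrs\setone$: reflexivity and transitivity are inherited from the pointwise order on $[0,1]$, and antisymmetry holds because a pseudometric is completely determined by the family of its values $\appl\metrone\stateone\statetwo$, so two pseudometrics that are pointwise $\leq$ each other are equal.

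The core step is the construction of meets. Given an arbitrary family $\{\metrone_i\}_{i\in I}\subseteq\metrs\setone$, I would define $\metrthree$ by the \emph{pointwise supremum}, $\appl\metrthree\stateone\statetwo = \sup_{i\in I}\appl{\metrone_i}\stateone\statetwo$, with the convention that an empty supremum equals $0$ (so that for $I=\emptyset$ one gets the constantly-$0$ pseudometric). I would then verify $\metrthree\in\metrs\setone$: the bounds $0\leq\appl\metrthree\stateone\statetwo\leq 1$, the condition $\appl\metrthree\stateone\stateone = 0$, and symmetry are all inherited termwise from the $\metrone_i$; and the triangle inequality holds because for every $i$ one has $\appl{\metrone_i}\stateone\statetwo \leq \appl{\metrone_i}\stateone\statethree + \appl{\metrone_i}\statethree\statetwo \leq \appl\metrthree\stateone\statethree + \appl\metrthree\statethree\statetwo$, so the right-hand side is an upper bound of $\{\appl{\metrone_i}\stateone\statetwo\}_{i\in I}$ and hence dominates its supremum $\appl\metrthree\stateone\statetwo$. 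Since $\leqmetr\cdot\cdot$ is the \emph{reverse} of the pointwise order, $\metrthree$ being pointwise above every $\metrone_i$ means $\metrthree$ is a $\leqmetr$-lower bound of the family; and if $\metrtwo\leqmetr\metrone_i$ for all $i$, i.e. $\metrone_i$ is pointwise below $\metrtwo$ for all $i$, then $\metrthree = \sup_i\metrone_i$ is pointwise below $\metrtwo$, i.e. $\metrtwo\leqmetr\metrthree$. Thus $\metrthree$ is the greatest lower bound of $\{\metrone_i\}_{i\in I}$ in $(\metrs\setone,\leqmetr\cdot\cdot)$.

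Having shown that every subset of $\metrs\setone$ has a $\leqmetr$-infimum, I would conclude by the classical fact that such a poset is a complete lattice: the $\leqmetr$-supremum of a set $A$ is obtained as the $\leqmetr$-infimum of the set of all its $\leqmetr$-upper bounds (which is nonempty because the constantly-$0$ pseudometric, the $\leqmetr$-top, bounds everything). For concreteness I would also record the extreme elements: the constantly-$0$ pseudometric is the $\leqmetr$-greatest element (the meet of the empty family), and the ``discrete'' pseudometric assigning $1$ to every pair of distinct points — which is easily checked to satisfy the triangle inequality — is the $\leqmetr$-least element.

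The only genuinely non-formal point, and the thing to be careful about, is that one must take the pointwise \emph{supremum} and not the infimum: the pointwise infimum of pseudometrics need not satisfy the triangle inequality, so the construction cannot be blindly dualised, and because $\leqmetr\cdot\cdot$ reverses the pointwise order this pointwise supremum is the $\leqmetr$-\emph{meet}, not the join. Everything else — the partial-order axioms and the passage from arbitrary meets to arbitrary joins — is routine.
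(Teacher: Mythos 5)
Your proposal is correct. The paper states Lemma~\ref{lattice} without giving a proof, treating it as a standard fact, and your argument is exactly the standard one: pointwise suprema of pseudometrics are again pseudometrics and, since the ordering on $\metrs\setone$ reverses the pointwise order, they yield arbitrary meets, from which all joins follow by the usual infimum-of-upper-bounds argument; your cautionary remark that pointwise infima may fail the triangle inequality correctly identifies why the construction cannot be dualised naively.
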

\condskip
But when, precisely, can a pseudometric on programs be considered as a
sound notion of distance? First of all, we would like it to put two
programs at least as far as the difference between their convergence
probabilities, since this is precisely our notion of observation:
\condskip
\begin{definition}[Adequacy]
  Let $\metrone$ be a pseudometric on the set of programs. Then $\metrone$
  is an \emph{adequate pseudometric} if for any programs $\termone$ and
  $\termtwo$, we have that $\abs{\sumdistr {\sem \termone} -
    \sumdistr{\sem \termtwo}} \leq \appl \metrone \termone \termtwo$.
\end{definition}
\condskip
Secondly, we are interested in how programs behave when interacting
with the environment. Especially, if we have two terms $\termone$ and
$\termtwo$ at a given distance $\epsone$, and we put them in an
environment $\ctxone$, we would like a pseudometric $\metrone$ to give
us some information about the distance between $\fillc \ctxone
\termone$ and $\fillc \ctxone \termtwo$. This is the idea behind the
following, standard, definition:
\condskip
\begin{definition}[Non-Expansiveness]
  Let $\metrone$ be a pseudometric on programs. We say $\metrone$
  is \emph{non-expansive} if for every pair of programs $\termone$ and
  $\termtwo$ and for every context $\ctxone$, we have that $\appl\metrone{\fillc \ctxone\termone}{\fillc
    \ctxone \termtwo} \leq \appl\metrone \termone \termtwo $.
\end{definition}
\condskip 
Non-expansiveness is the natural generalisation of the usual notion of
\emph{congruence}: if $\relone$ is an equivalence
relation on program, it is congruent if for every context $\ctxone$,
if $\relate \relone \termone \termtwo$, then $\relate \relone {\fillc
  \ctxone \termone} { \fillc \ctxone \termtwo}$.
By construction,
$\metrctx$ is a non-expansive pseudometric. We can also adapt the notion of soundness 
to pseudometric; $\metrone$ is said to be a \emph{sound} pseudometric on programs 
if $\leqmetr \metrone \metrctx$. Clearly, any adequate and non-expansive pseudometric is sound.
In the rest of this paper, we will only deal with pseudometrics, but
for the sake of simplicity we will refer to them simply as metrics.
\section{The Trace Distance}\label{sect:tracedistance}
The first notion of metric we study is based on traces, i.e., linear
tests. This is handier than the context distance, since contexts are
replaced by objects with a simpler structure.
\subsection{Definition}
A \emph{trace} $\traceone$ is a sequence in the form $\concat{\app\valone_1}
\cdots {\app \valone_n}$, where $\valone_1, \cdots \valone_n$ are
values, and we note $\words$ the set of traces. In other words,
  traces are generated by the following grammar:
$$
\traceone \bnf \emptytr \mid \concat {\app\valone} \traceone
$$
We define the probability that a program accepts a trace inductively
on the length of the trace, as follows:
\begin{align*}
\probtr {\abstr \varone \termone} {\emptytr} &= 1;\\
\probtr{\abstr \varone \termone}{\concat {\app\valone} \traceone} & =\probtr{\subst \termone \varone \valone}{\traceone}; \\
\probtr \termone {\traceone} &=\sum_{\valone}\sem \termone(\valone) \cdot \probtr {\valone} {\traceone} \quad\text{ if } \termone \not \in \valset.
\end{align*}
Please observe that the probability that a term $\termone$ accepts a
trace $\traceone = {\app {\valone_1}}\cdots \app {\valone_n}$ is the
probability of convergence of $\termone \valone_1 \cdots
\valone_n$. We are now going to define a metric, based on the
probability that programs accept arbitrary traces:
\condskip
\begin{definition}
  Let $\termone,\termtwo$ be two programs. Then we define the
  \emph{trace distance} between them as
  $\appl{\metrtr}{\termone}{\termtwo}=\sup_{\traceone}\abs{\probtr
    \termone \traceone - \probtr \termtwo \traceone}$.
  One can then define \emph{trace equivalence}
  and the \emph{trace preorder}, in the expected way.

\end{definition}
\condskip 
Please observe that $\metrtr$ is a pseudometric on programs in the sense of
Definition \ref{defmetric}, and that it is an adequate one. The kernel
of $\metrtr$ is nothing more than trace equivalence.
\condskip
\begin{example}
   $\appl \metrtr {\identity}{\diver} = 1$: we have to find a
    trace that separates them as much.  It is enough to consider the
    empty trace: it holds that $\probtr \emptytr \identity = 1$, and
    $\probtr \emptytr \diver = 0$. 
  The trace distance $\appl \metrtr {\psum \identity \diver}
  {\identity}$ between $\psum{\identity}{\diver}$ and $\identity$ is
  $\frac 1 2$.  Showing that it is greater than $\frac 1 2$ is easy:
  it is sufficient to consider the empty trace. The other inequality,
  requires evaluating, for any trace $\traceone$, the probability of
  accepting it. This is however much easier than dealing with all
  contexts, because we can now control the structure of the overall
  program we obtain: for any trace $\traceone =\app \valone_1 \cdots
  \app\valone_n$, we can see that: $\probtr{\psum \identity
    \diver} {\traceone} = \frac 1 2 \cdot {\sumdistr{ \sem{\valone_1 \cdots
        \valone_n}}}$, and $\probtr{\identity} {\traceone} = \sumdistr
         {\sem{\valone_1 \cdots \valone_n}}$. The difference (in
         absolute value) between $\probtr {\psum \identity
    \diver}{\traceone}$ and $\probtr{\identity}{\traceone}$, then, cannot be greater than $\frac{1}{2}$.
\end{example}
\condskip 
The trace distance and the context distance indeed \emph{coincide}, as
well as the trace and context preorder, and the trace and context
equivalence. In the rest of this section, we will give the details of
the proof for the pseudometric case, but the proof is similar for
$\equivctx{}{}$ and $\leqctx{}{}$. It is easy to realise that the
context distance is a lower bound on the trace distance, since any
trace $\app\valone_1 \cdots \app\valone_n$ can be seen as the context
$\hole \valone_1 \cdots \valone_n$.
\condskip
\begin{lemma}
 $\metrctx \leq \metrtr$
\end{lemma}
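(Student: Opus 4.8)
The plan is to show that every trace $\traceone = \app\valone_1 \cdots \app\valone_n$ can be simulated by a context, so that the supremum defining $\metrtr$ is bounded above by the supremum defining $\metrctx$. Concretely, given a trace $\traceone$, consider the context $\ctxone_\traceone \defi \fillc{\hole}{}\,\valone_1\,\cdots\,\valone_n$, i.e.\ the hole applied in sequence to the values occurring in $\traceone$. Since each $\valone_i$ is a (closed) value, hence a closed term, this is a well-formed affine context: the hole occurs exactly once, and no variable is used twice because the $\valone_i$ are closed. For any program $\termone$, the program $\fillc{\ctxone_\traceone}{\termone}$ is precisely $\termone\,\valone_1\,\cdots\,\valone_n$.

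The key step is the identity $\probtr{\termone}{\traceone} = \sumsem{\fillc{\ctxone_\traceone}{\termone}}$, i.e.\ that the probability that $\termone$ accepts $\traceone$ equals the convergence probability of $\termone\,\valone_1\,\cdots\,\valone_n$. This is already observed in the text immediately after the definition of $\probtr{\cdot}{\cdot}$ (``the probability that a term $\termone$ accepts a trace $\traceone = \app{\valone_1}\cdots\app{\valone_n}$ is the probability of convergence of $\termone\,\valone_1\,\cdots\,\valone_n$''); if a proof is wanted, it follows by a straightforward induction on $n$, unfolding the defining equations of $\probtr{\cdot}{\cdot}$ against the big-step semantics $\bss$ and using that $\sumsem{\termone} = \sumdistr{\sem{\termone}}$ together with the application rule for $\bss$.

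Putting these together: for every trace $\traceone$ and every pair of programs $\termone,\termtwo$,
$$
\abs{\probtr{\termone}{\traceone} - \probtr{\termtwo}{\traceone}}
= \abs{\sumsem{\fillc{\ctxone_\traceone}{\termone}} - \sumsem{\fillc{\ctxone_\traceone}{\termtwo}}}
\leq \sup_{\ctxone}\abs{\sumsem{\fillc{\ctxone}{\termone}} - \sumsem{\fillc{\ctxone}{\termtwo}}}
= \appl{\metrctx}{\termone}{\termtwo}.
$$
Taking the supremum over all traces $\traceone$ on the left-hand side yields $\appl{\metrtr}{\termone}{\termtwo} \leq \appl{\metrctx}{\termone}{\termtwo}$, which is exactly $\metrctx \leq \metrtr$ in the pseudometric ordering $\leqmetr{}{}$ (recall this ordering reverses the pointwise order, so $\metrctx \leq \metrtr$ means $\appl{\metrtr}{\termone}{\termtwo} \le \appl{\metrctx}{\termone}{\termtwo}$ for all $\termone,\termtwo$). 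There is no real obstacle here: this direction is the easy inclusion, since linear tests are a special case of contexts; the only mild care needed is checking that $\ctxone_\traceone$ is affine (which holds because the $\valone_i$ are closed) and the elementary induction identifying trace acceptance with convergence of the applied term. The hard direction — non-expansiveness of $\metrtr$, giving $\metrtr \leq \metrctx$ — is the substance of the section and is treated separately.
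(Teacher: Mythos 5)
Your proof is correct and follows the same route as the paper's: the paper also takes the context $\hole\,\valone_1\cdots\valone_n$ and observes that it separates $\termone$ and $\termtwo$ by exactly the same amount as the trace $\app\valone_1\cdots\app\valone_n$, so the supremum over traces is bounded by the supremum over contexts. You simply spell out the details (affineness of the context, the identity $\probtr{\termone}{\traceone}=\sumsem{\termone\valone_1\cdots\valone_n}$, and the reading of the inequality in the reversed pseudometric ordering) that the paper leaves implicit.
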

\begin{proof}
  For any trace $\traceone = \app\valone_1 \cdots \app\valone_n$ which
  separates $\termone$ and $\termtwo$ of $\epsone$, we can easily
  construct a context which separates them of the same quantity: just
  take $\ctxone = \hole \valone_1 \cdots \valone_n$.
\end{proof}
\condskip
\subsection{Non-Expansiveness}
Are there contexts that can separate strictly more than traces? In
order to show that it is not the case, it is enough to show that
$\metrtr$ is non-expansive:
\condskip
\begin{theorem}\label{theo:tracenonexpansive}
Let be $\termone$ and $\termtwo$ two programs, and let be $\ctxone$ a
context. Then $\appl\metrtr{\fillc \ctxone\termone}{\fillc \ctxone
  \termtwo} \leq \appl\metrtr \termone \termtwo $.
\end{theorem}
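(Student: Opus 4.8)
The plan is to reduce non-expansiveness to a \emph{linear decomposition} of contexts, which is exactly where affinity bites: since an affine context uses the hole at most once, the probability that $\fillc\ctxone\termone$ accepts a trace is an affine function of the numbers $\probtr\termone\tracetwo$. Concretely, I would first establish the following.

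\emph{Decomposition Lemma.} For every (affine) context $\ctxone$ such that $\fillc\ctxone\termone$ is a program whenever $\termone$ is, and for every trace $\traceone$, there exist $q\in[0,1]$ and a subdistribution $\distrone$ over $\words$ with $\sumdistr\distrone\le1$ and $\abs\tracetwo\le\abs{\fillc\ctxone\diver}+\abs\traceone$ for all $\tracetwo\in\supp\distrone$, such that for \emph{every} program $\termone$,
$$
\probtr{\fillc\ctxone\termone}\traceone\;=\;q\;+\;\sum_{\tracetwo\in\words}\distrone(\tracetwo)\cdot\probtr\termone\tracetwo ,
$$
where $\abs\traceone$ abbreviates the number of arguments in $\traceone$ plus the total size of the values occurring in it. Granting the lemma, Theorem~\ref{theo:tracenonexpansive} is immediate: the pair $(q,\distrone)$ depends only on $\ctxone$ and $\traceone$, so for every trace $\traceone$,
$$
\abs{\probtr{\fillc\ctxone\termone}\traceone-\probtr{\fillc\ctxone\termtwo}\traceone}
=\Bigl|{\textstyle\sum_\tracetwo}\distrone(\tracetwo)\bigl(\probtr\termone\tracetwo-\probtr\termtwo\tracetwo\bigr)\Bigr|
\le{\textstyle\sum_\tracetwo}\distrone(\tracetwo)\,\appl\metrtr\termone\termtwo
\le\appl\metrtr\termone\termtwo ,
$$
and taking the supremum over $\traceone$ yields $\appl\metrtr{\fillc\ctxone\termone}{\fillc\ctxone\termtwo}\le\appl\metrtr\termone\termtwo$.

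I would prove the Decomposition Lemma by induction on the lexicographic pair $\bigl(\abs{\fillc\ctxone\diver}+\abs\traceone,\ \abs\ctxone\bigr)$, with a case analysis on the shape of $\ctxone$ (and, when $\ctxone$ is an abstraction, on $\traceone$). Two features of the affine calculus are used repeatedly: first, every reduction step strictly decreases term size — in particular $\beta$-reduction does, since a bound variable occurs at most once — so if $\termthree$ is not a value then every $\valone\in\supp{\sem\termthree}$ has $\abs\valone\le\abs\termthree-1$; second, the identity $\probtr\termthree\traceone=\sum_\valone\sem\termthree(\valone)\cdot\probtr\valone\traceone$, valid for any term $\termthree$. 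The base cases are $\ctxone=\hole$ (take $q=0$, $\distrone=\dirac\traceone$), $\ctxone=\termthree$ with no hole (take $q=\probtr\termthree\traceone$, $\distrone=\emptydistr$), and $\ctxone=\abstr\varone\ctxtwo$ with $\traceone=\emptytr$ (then $\fillc\ctxone\termone$ is a value, so take $q=1$). For the inductive cases one unfolds one layer of the big-step semantics and reassembles the $(q,\distrone)$ of the sub-instances into a (sub)convex combination:
\begin{varitemize}
\item $\ctxone=\abstr\varone\ctxtwo$, $\traceone=\concat{\app\valone}\tracetwo$: here $\probtr{\fillc\ctxone\termone}\traceone=\probtr{\fillc{\subst\ctxtwo\varone\valone}\termone}\tracetwo$; apply the induction hypothesis to $\subst\ctxtwo\varone\valone$ against $\tracetwo$.
\item $\ctxone=\ctxtwo\,\termthree$ (hole in function position): $\probtr{\fillc\ctxone\termone}\traceone=\sum_\valone\sem\termthree(\valone)\cdot\probtr{\fillc\ctxtwo\termone}{\concat{\app\valone}\traceone}$; apply the hypothesis to $\ctxtwo$ against each longer trace $\concat{\app\valone}\traceone$.
\item $\ctxone=\termthree\,\ctxtwo$ (hole in argument position): after evaluating $\termthree$ to values $\abstr\vartwo\termfour$ one gets $\probtr{\fillc\ctxone\termone}\traceone=\sum_{\abstr\vartwo\termfour}\sem\termthree(\abstr\vartwo\termfour)\sum_\valone\sem{\fillc\ctxtwo\termone}(\valone)\,\probtr{\subst\termfour\vartwo\valone}\traceone$; since $\vartwo$ occurs at most once in $\termfour$ (affinity!), $\subst\termfour\vartwo\valone=\fillc\ctxthree\valone$ for the fixed context $\ctxthree$ obtained by putting the hole where $\vartwo$ was, so one applies the hypothesis first to $\ctxthree$ against $\traceone$ and then to $\ctxtwo$ against $\emptytr$ and against the traces produced by the first application — this is precisely where the invariant $\abs\tracetwo\le\abs{\fillc\ctxthree\diver}+\abs\traceone$ is needed to keep the measure decreasing.
\item $\ctxone=\psum{\ctxtwo}\termthree$ (and symmetrically $\psum{\termthree}{\ctxtwo}$): $\probtr{\fillc\ctxone\termone}\traceone=\tfrac12\probtr{\fillc\ctxtwo\termone}\traceone+\tfrac12\probtr\termthree\traceone$; apply the hypothesis to $\ctxtwo$.
\end{varitemize}
In each case one checks that the recursive instances have strictly smaller lexicographic measure — the only delicate point being $\ctxone=\ctxtwo\,\valtwo$ with $\valtwo$ a value, where the first component is unchanged while $\abs\ctxone$ strictly drops — and that the reassembled $q$ and $\distrone$ still satisfy $\sumdistr\distrone\le1$ and the trace-size bound.

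The main obstacle is engineering the statement so its own induction closes: the trace $\traceone$ must stay an explicit parameter (so the hole-in-function case can \emph{grow} it instead of consuming it, otherwise the recursion is not well founded there), one must carry the auxiliary bound on the sizes of traces in $\supp\distrone$, and the termination measure has to combine the size of $\fillc\ctxone\diver$ weighed against $\traceone$ with the context size as a tie-breaker, since no single syntactic quantity strictly decreases in all cases (the move turning $\ctxtwo\,\valtwo$ into $\ctxtwo$ tested against $\concat{\app\valtwo}\traceone$ leaves the underlying term literally unchanged). Once the right invariant is fixed, each case is a routine unfolding of the operational semantics together with the linearity identity above.
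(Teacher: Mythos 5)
Your proof is correct in its essentials but follows a genuinely different route from the paper's. The paper keeps the context and the plugged program syntactically separate inside a second labelled transition system over distributions on $\ensct$, relates it to the ordinary trace LTS via a forgetful map $\forget{\cdot}$, and then proves a small-step stability result: $\epsone$-related distributions stay $\epsone$-related along any trace. Your Decomposition Lemma instead packages the whole interaction denotationally: affinity guarantees that $\traceone\mapsto\probtr{\fillc\ctxone\termone}\traceone$ is an affine function $q+\sum_\tracetwo\distrone(\tracetwo)\cdot\probtr\termone\tracetwo$ of the trace functional of $\termone$, with coefficients depending only on $(\ctxone,\traceone)$, and non-expansiveness drops out in one line. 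This is arguably cleaner, it yields non-expansiveness of the trace preorder and trace equivalence at the same stroke, and it makes vivid exactly where affinity is used: a single hole occurrence gives an affine, rather than multilinear, dependence --- precisely what breaks for pairs in Example~\ref{expair}. The paper's operational formulation, on the other hand, is the one that is reused almost verbatim for the tuple LMC of Section~\ref{sect:tupledistance}.

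Two details of your induction need repair, though neither is fatal. First, with $\abs\traceone$ defined as ``number of arguments plus total size of the values'', the case $\ctxone=\ctxtwo\,\valtwo$ with $\valtwo$ a value sends the first component of your measure from $\sizec{\fillc{\ctxtwo}{\diver}}+\sizec\valtwo+\abs\traceone$ to $\sizec{\fillc{\ctxtwo}{\diver}}+1+\sizec\valtwo+\abs\traceone$: it \emph{increases} by one, contrary to your claim that it is unchanged, and the lexicographic descent fails. Dropping the ``number of arguments'' summand (so that $\abs\traceone$ is just the total size of the values occurring in $\traceone$) fixes this: the first component is then genuinely unchanged in that case, the tie-breaker $\abs\ctxone$ applies, and one checks that every other case still decreases strictly (in the abstraction case because substituting $\valone$ for an at-most-once-occurring variable costs at most $\sizec\valone-1$, in the hole-in-function case with a non-value argument because evaluation strictly shrinks affine terms). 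Second, for the reassembly in the hole-in-argument case to produce $q'\in[0,1]$ and $\sumdistr{\distrone'}\le 1$, you need the stronger invariant $q+\sumdistr\distrone\le 1$; it holds in all your base cases and is preserved by the sub-convex combinations you form, so carrying it costs nothing, but the statement as you wrote it is not obviously closed under your own induction without it.
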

\condskip 
Since $\metrtr$ is adequate, we can conclude that trace metric and 
context metric actually coincide:
\condskip
\begin{theorem}
$\metrctx=\metrtr$.
\end{theorem}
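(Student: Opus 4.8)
The plan is to prove the equality of the two pseudometrics by establishing both pointwise inequalities, $\appl\metrctx{\termone}{\termtwo}\geq\appl\metrtr{\termone}{\termtwo}$ and $\appl\metrctx{\termone}{\termtwo}\leq\appl\metrtr{\termone}{\termtwo}$ for all programs $\termone,\termtwo$; both are now within reach. The first is exactly the lemma proved just above: any trace $\traceone=\app{\valone_1}\cdots\app{\valone_n}$ is realised by the context $\ctxone_\traceone=\hole\,\valone_1\cdots\valone_n$, and since the probability that a term accepts $\traceone$ equals its probability of convergence when applied to $\valone_1,\ldots,\valone_n$ we have $\probtr{\termone}{\traceone}=\sumsem{\fillc{\ctxone_\traceone}{\termone}}$, and likewise for $\termtwo$; hence every number $\abs{\probtr{\termone}{\traceone}-\probtr{\termtwo}{\traceone}}$ appearing in the supremum defining $\metrtr$ already appears in the supremum defining $\metrctx$, so the latter dominates the former.

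For the converse inequality I would invoke the principle recorded in Section~\ref{metricproperties} that an adequate and non-expansive pseudometric is sound. The trace distance is a pseudometric and is adequate (both noted right after its definition --- for adequacy, the empty trace already witnesses $\abs{\sumsem{\termone}-\sumsem{\termtwo}}\leq\appl\metrtr{\termone}{\termtwo}$), and it is non-expansive by Theorem~\ref{theo:tracenonexpansive}. Unfolding the principle: for every context $\ctxone$ and all programs $\termone,\termtwo$,
$$
\abs{\sumsem{\fillc{\ctxone}{\termone}}-\sumsem{\fillc{\ctxone}{\termtwo}}}\ \leq\ \appl\metrtr{\fillc{\ctxone}{\termone}}{\fillc{\ctxone}{\termtwo}}\ \leq\ \appl\metrtr{\termone}{\termtwo},
$$
the first inequality by adequacy of $\metrtr$ and the second by Theorem~\ref{theo:tracenonexpansive}; taking the supremum over all contexts $\ctxone$ gives $\appl\metrctx{\termone}{\termtwo}\leq\appl\metrtr{\termone}{\termtwo}$. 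Together with the previous paragraph this yields $\metrctx=\metrtr$, and the same two-sided reasoning proves the companion facts that the context and trace preorders coincide and that context and trace equivalence coincide.

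The genuine difficulty is not in this final step, which is pure bookkeeping, but in Theorem~\ref{theo:tracenonexpansive}: the fact that plugging two terms into a common context never increases the best separation a trace can achieve. Granted that theorem, the only points left to verify here are that $\metrtr$ really is a pseudometric --- symmetry and the triangle inequality are immediate from the corresponding properties of $\abs{\cdot}$, and all values lie in $[0,1]$ because acceptance probabilities do --- and that it is adequate; both are already on record, so the soundness principle of Section~\ref{metricproperties} applies verbatim and closes the argument.
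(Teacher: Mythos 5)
Your proposal is correct and follows exactly the paper's route: the inequality $\metrtr\leq\metrctx$ (pointwise) comes from realising each trace as the context $\hole\,\valone_1\cdots\valone_n$, and the converse comes from combining adequacy of $\metrtr$ with the non-expansiveness of Theorem~\ref{theo:tracenonexpansive}, i.e.\ the general principle that an adequate, non-expansive pseudometric is sound. The paper leaves this final assembly implicit ("Since $\metrtr$ is adequate, we can conclude\ldots"); your write-up just makes that bookkeeping explicit.
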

\condskip
The rest of this section is devoted to an outline of the proof of
Theorem~\ref{theo:tracenonexpansive}.
The proof we give here is roughly inspired by the proof of
congruence of trace equivalence for a non-deterministic
$\lambda$-calculus~\cite{DengZhang}. The overall structure of the
proof is the following: we first express the capacity of a program to
do a trace by means of a labelled transition system (LTS in the following) $\ltstrace$ 
whose states are distributions over programs. Then we consider
\emph{another} LTS $\ltstracect$, where the states are distributions over pairs of
contexts and programs that intuitively models the execution of $\fillc
\ctxone \termone$, but keeps the evolution of $\ctxone$ and $\termone$
apart.
\subsubsection{The LTS $\ltstrace$}
The first LTS, called $\ltstrace$, has distributions over programs as
states, and traces as actions. We indicate with $\wtrc{}{\cdot}{}$ the
transition relation associated to $\ltstrace$.
We're in fact going to define 
it on top of an auxilliary labelled relation $\wtr{}{\cdot}{}$. 
Intuitively, the idea behind $\wtr{}{\cdot}{}$ is to consider a term as a process who can make actions.
There are two kinds of possible actions: an internal action
$\internact$, which corresponds to the internal reduction of the term,
and external actions, which corresponds to the application of an
argument $\valone$ to the term. More precisely, this labelled relation 
$\wtr{}{\cdot}{}$ is defined as a subset of the set $\distrs\programs \times \actions \times \distrs \programs$
, where the set $\actions$ is defined as:
\condskip
\begin{definition} 
 We define the set of actions $\actions$ by:
 $$\actions = \{\internact \} \bigcup  \left\{ \app \valone \mid \valone \text{ value }\right\}.$$
\end{definition}
\condskip
Intuitively, a
$\internact$-step corresponds to an internal computation step for any
term in the support of the distribution, while a $\app{\valone}$-step
corresponds to an interaction with the environment, which provides
$\valone$ as an argument.
\condskip
\begin{definition}
We define a labelled transition relation $\wtr{}{\cdot}{} \subseteq \distrs\programs \times \actions \times \distrs \programs$,
by the rules of Figure \ref{tracesemunl}. (We write $\distrone\disjplus\distrtwo$ for $\distrone + \distrtwo$ when we want to insist on $\distrone$ and $\distrtwo$ to have disjoint supports).
\end{definition}
 \condskip
\begin{figure}[h]
\begin{center}
\fbox{
\begin{minipage}{\figurewidth}
$$
\AxiomC{$\ssp \termone \distrtwo$}
\UnaryInfC{$\wtr {\distrone \dotplus \alpha \cdot \dirac{\termone}}{\internact} {\distrone + \alpha \cdot \distrtwo}$}
\DisplayProof
\qquad
\AxiomC{$\distrone \text{ value distribution }$}
\UnaryInfC{$\wtr  {\distrone}{\app \valone}{\sum_\termone \distrone(\abstr \varone \termone)\cdot \dirac{\subst \termone \varone \valone }}$}
\DisplayProof$$
\end{minipage}}
\end{center}
\caption{One-step Trace Relations on Program Distributions.}\label{tracesemunl}
\end{figure}
 The relation $\wtrc{}{\cdot}{}$ is defined
as the accumulation of several steps of $\wtr{}{\cdot}{}$. 
We define now the LTS $\ltstrace$.
\condskip
\begin{definition}
We define the LTS $\ltstrace$ as:
\begin{itemize}
\item Its set of states is $\distrs \programs$
\item Its set of labels is the set of traces $\words$.
\item Its transition relation $\wtrc{}{\cdot}{}$ is defined by by the rules of Figure \ref{tracesemdeuxl}
\end{itemize}
\end{definition}
\condskip
\begin{figure}[h]
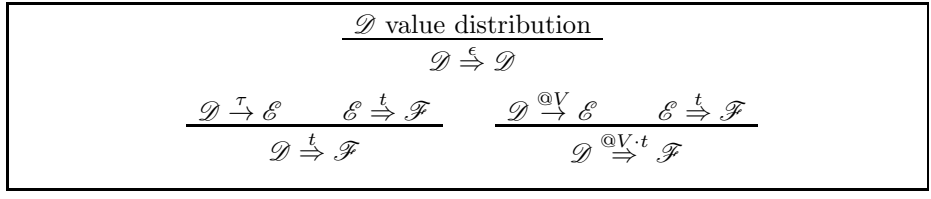

\begin{center}
\fbox{
\begin{minipage}{\figurewidth}
{
$$
\AxiomC{$\distrone$ value distribution}
\UnaryInfC{$\wtrc \distrone \emptytr \distrone $}
\DisplayProof
 $$
$$
\AxiomC{$\wtr \distrone \internact \distrtwo $}
\AxiomC{$\wtrc \distrtwo \tracetwo \distrthree $}
\BinaryInfC{$\wtrc  {\distrone}{ \tracetwo} \distrthree$}
\DisplayProof \qquad
\AxiomC{$\wtr \distrone {\app \valone} \distrtwo $}
\AxiomC{$\wtrc \distrtwo \tracetwo \distrthree $}
\BinaryInfC{$\wtrc  {\distrone}{\concat {\app\valone} \tracetwo} \distrthree$}
\DisplayProof
$$}
\end{minipage}}
\end{center}
\caption{Small-step Trace Relations on Program Distributions.}\label{tracesemdeuxl}
\end{figure}

Please observe that these relations are \emph{not} probabilistic. The
relation $\wtr {} \internact {}$ is non-deterministic, since at
each step we can decide which term of the distribution we
want to reduce. However, $\wtr{}\internact{}$ is strongly
normalising and confluent.
\condskip
\begin{lemma}\label{strongnormp}
The relation:
$\wtr {} \internact {}$ is strongly normalising 
\end{lemma}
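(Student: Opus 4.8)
The plan is to exhibit a measure that strictly decreases along every $\internact$-step. A single $\internact$-step picks one program $\termone$ in the support of the current distribution (more precisely, writes the distribution as $\distrone \dotplus \alpha\cdot\dirac{\termone}$) and replaces it by $\ssp{\termone}{\distrtwo}$, redistributing mass $\alpha$ over $\supp{\distrtwo}$. So it suffices to find a well-founded measure $\mu$ on \emph{programs} such that, whenever $\ssp{\termone}{\distrtwo}$, every program $\termthree\in\supp{\distrtwo}$ satisfies $\mu(\termthree) < \mu(\termone)$; then the finite multiset $\{\!\{\mu(\termone)\mid \termone\in\supp{\distrone}\}\!\}$ (with multiplicities given by, say, a fixed enumeration, or simply the multiset of values $\mu(\termone)$ counted with the number of support elements) strictly decreases in the multiset order over a well-order, which is well-founded by Dershowitz--Manna. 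The only subtlety is that a $\internact$-step on $\termone=\psum{\termseven}{\termeight}$ turns one program into \emph{two}, so we genuinely need the multiset ordering rather than a plain natural-number bound on the whole distribution; but since each of the two children has strictly smaller $\mu$ than the parent, the multiset step is still decreasing.

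The key step is therefore to define $\mu$ on closed affine terms and check the one-step decrease. A natural candidate is a suitably weighted size: define $\mu$ by induction so that $\mu(\varone)=1$, $\mu(\diver)=1$, $\mu(\abstr\varone\termone)=\mu(\termone)+1$, $\mu(\psum{\termone}{\termtwo})=\mu(\termone)+\mu(\termtwo)+1$, and for applications $\mu(\termone\termtwo)$ chosen large enough to dominate the $\beta$-redex case. Here the affine restriction is exactly what makes this work: in the rule $\ssp{(\abstr\varone\termone)\valone}{\dirac{\subst{\termone}{\varone}{\valone}}}$, the variable $\varone$ occurs at most once in $\termone$, so $\mu(\subst{\termone}{\varone}{\valone})\le \mu(\termone)+\mu(\valone)$ (substitution does not duplicate $\valone$), and we need $\mu((\abstr\varone\termone)\valone) > \mu(\termone)+\mu(\valone)$; defining $\mu(\termone\termtwo)=\mu(\termone)+\mu(\termtwo)+2$ makes the left-hand side $\mu(\termone)+\mu(\valone)+3$, which works. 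For the $\psum{}{}$ rule $\ssp{\psum{\termone}{\termtwo}}{\frac12\dirac{\termone}+\frac12\dirac{\termtwo}}$ we need $\mu(\termone)<\mu(\psum{\termone}{\termtwo})$ and $\mu(\termtwo)<\mu(\psum{\termone}{\termtwo})$, which holds since $\mu(\psum{\termone}{\termtwo})=\mu(\termone)+\mu(\termtwo)+1$ and each summand is at least $1$. For the congruence rules $\ssp{\termone\termtwo}{\sum\distrone(\termthree)\cdot\dirac{\termthree\termtwo}}$ with $\ssp{\termone}{\distrone}$, we use the inductive hypothesis $\mu(\termthree)<\mu(\termone)$ for $\termthree\in\supp{\distrone}$, hence $\mu(\termthree\termtwo)<\mu(\termone\termtwo)$; the value-application congruence case is symmetric. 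The $\diver$ rule $\ssp{\diver}{\emdist}$ produces an empty support, which trivially respects any decrease condition.

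The main obstacle is simply bookkeeping: making sure the constants in the definition of $\mu$ on applications are chosen so that \emph{all} one-step cases decrease simultaneously, and handling the fact that $\ssp{}{}$ acts inside a distribution by working in the multiset ordering. (An alternative, avoiding multisets, is to bound the total number of remaining $\internact$-steps from a distribution $\distrone$ by $\sum_{\termone\in\supp{\distrone}}\mu(\termone)$ and observe that each step strictly decreases this finite sum, since the children's $\mu$-values sum to strictly less than the parent's $\mu$-value plus one — but the two children of a $\psum{}{}$ redex together have $\mu$-sum equal to $\mu(\psum{\termone}{\termtwo})-1$, so the total is still cut by at least $1$; this gives a direct natural-number termination bound and is perhaps cleaner.) Confluence, mentioned in the same sentence of the paper, is not part of this lemma and would be argued separately by a standard local-confluence/Newman argument, the critical pairs all arising from choosing which program in the support to reduce, which commute trivially.
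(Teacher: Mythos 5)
Your main argument is correct and is essentially the paper's: both proofs exhibit a per-term measure that strictly decreases on every element of the support of a one-step reduct (the affine restriction being what makes the $\beta$-case go through), and then lift it to distributions. The paper defines $\sizec{\termone}$ with $\sizec{\psum{\termone}{\termtwo}}=1+\max\{\sizec{\termone},\sizec{\termtwo}\}$ and lifts via the natural number $\sum_{\termone\in\supp{\distrone}}3^{\sizec{\termone}}$, using the fact that a step replaces one summand $3^{s}$ by at most two summands each bounded by $3^{s-1}$; you lift via the Dershowitz--Manna multiset order, which is equally valid (support merging in $\distrone+\alpha\cdot\distrtwo$ only helps you) and arguably more standard.

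However, the parenthetical ``alternative avoiding multisets'' is wrong, and you should not present it as the cleaner option. The plain sum $\sum_{\termone\in\supp{\distrone}}\mu(\termone)$ is \emph{not} decreasing under the congruence rules, because the surrounding context is duplicated across the two children of a probabilistic redex and its $\mu$-contribution is then counted twice. Concretely, take $(\psum{\termone}{\termtwo})\termthree$: with your weights its measure is $\mu(\termone)+\mu(\termtwo)+\mu(\termthree)+3$, while the reduct $\frac12\dirac{\termone\termthree}+\frac12\dirac{\termtwo\termthree}$ has support-sum $\mu(\termone)+\mu(\termtwo)+2\mu(\termthree)+4$, which is strictly larger. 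You only checked the top-level $\psum{}{}$ case, where the context is empty. This is precisely the problem the paper's exponential weighting $3^{\sizec{\cdot}}$ is designed to absorb (a factor-of-two blowup in the number of children against a base-three decay in each child's contribution), and it is also why the multiset order --- which compares children to the parent individually rather than summing them --- is the right elementary tool here. Keep the multiset version and drop the sum-based aside, or replace it with an exponential sum as in the paper.
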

\begin{proof}
\begin{itemize}
\item We show first that it is terminating: for a term $\termone$, we define a quantity $\sizec \termone \in \NN$ which corresponds to the size of the term:
$$\sizec{\diver}  = 0; \qquad \sizec \varone  = 1; \qquad \sizec {\abstr \varone \termone} = 1 + \sizec{\termone};$$ 
$$ \sizec{\termone \termtwo} = \sizec \termone + \sizec \termtwo; \qquad \sizec{\psum \termone \termtwo} = 1 + \max \{\sizec{\termone}, \sizec \termtwo\};$$
Since our $\lambda$-calculus is linear, $\sizec \termone$ decreases during the execution for every program $\termone$. More precisely: 
If $\ssp \termone \distrone$, then for every $\termtwo \in \supp \distrtwo$, $\sizec \termtwo < \sizec \termone$. (It is easily checked by observing the rules of $\ssp {}{}$).\\
Moreover, if $\ssp \termone \distrone$, then the cardinal of $\supp \distrone$ is at most 2. So, if for a distribution $\distrone$ we note: 
 $$\sizec {\distrone} = \sum_{( \termone) \in \supp \distrone} 3^{\sizec{ \termone}}$$, we can see that: for every $\distrone$, if $\wtr {\distrone}\internact{\distrtwo}$, we have that $\sizec \distrtwo  < \sizec \distrone$.
\item Moreover, let $\distrone$ be a distribution over program, and let be $\distrtwo$ such that $\wtr{\distrone}{\internact^n}{{\distrtwo}}$, and $\distrtwo$ is a normal form for $\wtr{}{\internact}{}$. Then we are going to show by induction over $n \in \NN$ that $\distrtwo = \sum_{\termone \in \supp \distrone}\distrone (\termone) \cdot \sem \termone$.
\begin{itemize}
\item if $n = 0$, then $\distrone = \distrtwo$, and moreover $\distrone$ is a distribution over values. So the result holds.
\item if $n >0$, it is a consequence of rules of Figure \ref{tracesemunl}.
\end{itemize}
 
\end{itemize}
\end{proof}
\condskip
Moreover, we can show that $\wtrc{}{\cdot}{}$ is in fact deterministic. That is, we have the following Lemma:
\condskip
\begin{lemma}
For every trace $\traceone$, for every $\distrone$, there exist an only one $\distrtwo$ such that $\wtrc \distrone \traceone \distrtwo$.
\end{lemma}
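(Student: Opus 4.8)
The plan is to induct on the length of the trace $\traceone$, using Lemma~\ref{strongnormp} to tame the non-deterministic internal reductions. The key preliminary remark is that, for any program distribution $\distrone$, there is exactly one $\wtr{}{\internact}{}$-normal form reachable from it, namely the value distribution $\distrone^\star \defi \sum_{\termone\in\supp\distrone}\distrone(\termone)\cdot\sem\termone$: existence (together with this explicit description) and the fact that no \emph{other} normal form is reachable are both delivered by the second item in the proof of Lemma~\ref{strongnormp}. We also use the easy observation — a straightforward progress argument — that a program distribution is $\wtr{}{\internact}{}$-irreducible if and only if it is a value distribution: every closed non-value admits a $\ssp{}{}$-step, so any non-value in the support of $\distrone$ enables a $\internact$-transition, whereas values carry no $\ssp{}{}$-step.

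For the base case $\traceone = \emptytr$, any derivation of $\wtrc\distrone\emptytr\distrtwo$ can only interleave applications of the $\internact$-rule with, at the leaf, the $\emptytr$-axiom (the $\app\valone$-rule would produce a non-empty trace); such a derivation is thus nothing but a $\wtr{}{\internact}{}$-reduction of $\distrone$ to a value distribution $\distrtwo$, so by the preliminary remark $\distrtwo = \distrone^\star$ is forced, while conversely a normalising sequence for $\distrone$ yields one such derivation. For the inductive step $\traceone = \concat{\app\valone}{\tracetwo}$, I would show that every derivation of $\wtrc\distrone{\concat{\app\valone}{\tracetwo}}{\distrthree}$ has the shape: a spine of $\internact$-rule applications, necessarily finite by strong normalisation, ending at a distribution on which neither the $\internact$-rule nor the $\emptytr$-axiom applies — hence a value distribution, hence (being reached from $\distrone$ by $\internact$-steps) equal to $\distrone^\star$ — followed by one $\app\valone$-rule application whose conclusion is the distribution $\distrtwo \defi \sum_{\termone}\distrone^\star(\abstr\varone\termone)\cdot\dirac{\subst\termone\varone\valone}$ deterministically fixed by the rule of Figure~\ref{tracesemunl}, followed by a derivation of $\wtrc\distrtwo\tracetwo\distrthree$; conversely, gluing a normalising sequence for $\distrone$, one $\app\valone$-step, and an arbitrary derivation for $\tracetwo$ produces such a derivation. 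Hence $\wtrc\distrone{\concat{\app\valone}{\tracetwo}}{\distrthree}$ holds if and only if $\wtrc\distrtwo\tracetwo\distrthree$ does, and the induction hypothesis applied to $\tracetwo$ supplies a unique such $\distrthree$.

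The only delicate point is this structural analysis in the inductive step: it is precisely where the non-determinism of $\wtr{}{\internact}{}$ has to collapse into the determinism of $\wtrc{}{\cdot}{}$, and it rests entirely on Lemma~\ref{strongnormp} — strong normalisation bounds the length of the $\internact$-rule spine, and the uniqueness-of-normal-form content of its proof guarantees that, however the $\internact$-steps are chosen, the first (and only, up to identity) distribution along the spine from which an $\app\valone$-rule can fire, being a value distribution reached from $\distrone$, is $\distrone^\star$. Everything else is routine bookkeeping with the two rules of Figure~\ref{tracesemdeuxl} and those of Figure~\ref{tracesemunl}.
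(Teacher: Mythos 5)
Your proof is correct. The paper actually states this lemma without any proof, and your argument --- reducing the determinism of $\wtrc{}{\cdot}{}$ to the uniqueness of $\wtr{}{\internact}{}$-normal forms established inside the proof of Lemma~\ref{strongnormp}, together with the observations that value distributions are exactly the $\internact$-irreducible ones and that the $\app\valone$-rule can only fire at such a distribution --- is precisely the argument the authors evidently intend when they place the lemma immediately after strong normalisation; the only blemish is the phrase ``a distribution on which neither the $\internact$-rule nor the $\emptytr$-axiom applies,'' which should rather say that at the node ending the $\internact$-spine the rule actually \emph{applied} is neither of these (the $\emptytr$-axiom does apply to value distributions but cannot be used on a non-empty trace), forcing the $\app\valone$-rule and hence a value distribution.
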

\condskip 
The interest of the relation $\wtrc{}{\cdot}{}$ is that it gives an
alternate formulation for the probability that a program succeeds in
doing a trace:
\condskip
\begin{lemma}
Let be $\termone$ a program, $\traceone$ a trace, and let be $\distrtwo$ the distribution such that $\wtrc {\dirac \termone}{\traceone}{\distrtwo}$. Then
$\probtr {\traceone}{\termone} = \sumdistr \distrtwo.$ 
\end{lemma}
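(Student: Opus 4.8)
The plan is to prove, by induction on the length of the trace, a statement slightly more general than the lemma: for every distribution $\distrone \in \distrs{\programs}$ and every trace $\traceone$, if $\wtrc{\distrone}{\traceone}{\distrtwo}$ then $\sumdistr{\distrtwo} = \sum_{\termone} \distrone(\termone) \cdot \probtr{\termone}{\traceone}$; the lemma is then the instance $\distrone = \dirac{\termone}$. Generalising to distributions is unavoidable, since already after a single internal reduction step a Dirac distribution turns into an arbitrary one, and the induction must be carried through arbitrary distributions.

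The argument rests on two observations about the one-step relation $\wtr{}{\cdot}{}$, both saying that the weighted acceptance probability behaves well. First, internal steps are transparent: if $\wtr{\distrone}{\internact}{\distrtwo}$ then $\sum_{\termone} \distrone(\termone) \cdot \probtr{\termone}{\traceone} = \sum_{\termtwo} \distrtwo(\termtwo) \cdot \probtr{\termtwo}{\traceone}$ for every trace $\traceone$. Writing $\distrone = \distrthree \disjplus \alpha \cdot \dirac{\termone}$ and $\distrtwo = \distrthree + \alpha \cdot \distrfour$ with $\ssp{\termone}{\distrfour}$, this reduces to the identity $\probtr{\termone}{\traceone} = \sum_{\termtwo} \distrfour(\termtwo) \cdot \probtr{\termtwo}{\traceone}$; and using that $\probtr{\termone}{\traceone} = \sum_{\valone} \sem{\termone}(\valone) \cdot \probtr{\valone}{\traceone}$ holds for \emph{every} term (for values it is trivial, as $\sem{\valone} = \dirac{\valone}$), this reduces further to $\sem{\termone} = \sum_{\termtwo} \distrfour(\termtwo) \cdot \sem{\termtwo}$ whenever $\ssp{\termone}{\distrfour}$ — exactly the content of the equivalence between the big-step and small-step semantics (each $\termtwo \in \supp{\distrfour}$ satisfies $\sssp{\termtwo}{\sem{\termtwo}}$, hence $\sssp{\termone}{\sum_{\termtwo}\distrfour(\termtwo) \cdot \sem{\termtwo}}$, and $\sem{\termone}$ is the unique value distribution with $\sssp{\termone}{\cdot}$). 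Second, an external step peels off the first action: if $\distrone$ is a value distribution and $\wtr{\distrone}{\app\valone}{\distrtwo}$, then $\sum_{\termtwo} \distrtwo(\termtwo) \cdot \probtr{\termtwo}{\traceone} = \sum_{\termone} \distrone(\termone) \cdot \probtr{\termone}{\concat{\app\valone}{\traceone}}$ for every $\traceone$; this is immediate from the definition of $\wtr{\distrone}{\app\valone}{\distrtwo}$ (substitute $\valone$ into every $\abstr{\varone}{\termthree}$ of $\supp{\distrone}$) together with the defining clause $\probtr{\abstr{\varone}{\termthree}}{\concat{\app\valone}{\traceone}} = \probtr{\subst{\termthree}{\varone}{\valone}}{\traceone}$, since $\distrone$ charges only values.

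Given these, the induction is routine bookkeeping about the shape of $\wtrc$-derivations: any such derivation starts with a (possibly empty) block of $\internact$-steps $\wtr{\distrone}{\internact}{\cdots}$ reaching a value distribution $\distrfour$ (a value distribution is needed in order to apply either the $\emptytr$-axiom or an external-step rule). If $\traceone = \emptytr$ we are done there, and the first observation together with $\probtr{\valone}{\emptytr} = 1$ for values gives $\sumdistr{\distrtwo} = \sumdistr{\distrfour} = \sum_{\termtwo}\distrfour(\termtwo)\cdot\probtr{\termtwo}{\emptytr} = \sum_{\termone}\distrone(\termone)\cdot\probtr{\termone}{\emptytr}$. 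If $\traceone = \concat{\app\valone}{\tracetwo}$, the $\internact$-block is followed by a single step $\wtr{\distrfour}{\app\valone}{\distrfive}$ and a derivation of $\wtrc{\distrfive}{\tracetwo}{\distrtwo}$; chaining the induction hypothesis on $\tracetwo$, the second observation, and the first observation yields $\sumdistr{\distrtwo} = \sum_{\termfive}\distrfive(\termfive)\cdot\probtr{\termfive}{\tracetwo} = \sum_{\termtwo}\distrfour(\termtwo)\cdot\probtr{\termtwo}{\concat{\app\valone}{\tracetwo}} = \sum_{\termone}\distrone(\termone)\cdot\probtr{\termone}{\traceone}$. The only delicate point is the first observation, i.e. checking that a single internal step preserves the whole semantics; this is precisely where the equivalence of the two operational semantics is invoked, and it has to be handled with a little care because of the call-by-value application rule. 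Everything else is routine.
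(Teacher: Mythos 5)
Your proof is correct. The paper actually states this lemma without giving any proof, so there is nothing to compare against; your argument — generalising to a weighted acceptance probability over arbitrary distributions, establishing the two one-step invariants (that a $\internact$-step preserves $\sum_\termone \distrone(\termone)\cdot\probtr{\termone}{\traceone}$, via the fact that $\ssp{\termone}{\distrfour}$ implies $\sem{\termone}=\sum_{\termtwo}\distrfour(\termtwo)\cdot\sem{\termtwo}$, and that an $\app\valone$-step converts $\concat{\app\valone}{\traceone}$-acceptance into $\traceone$-acceptance), and then inducting on the length of the trace using the forced shape of $\wtrc$-derivations — is the natural one and is sound. It is also consistent with the machinery the paper does develop (the second bullet of the proof of Lemma~\ref{strongnormp} identifies the $\internact$-normal form of $\distrone$ with $\sum_{\termone}\distrone(\termone)\cdot\sem{\termone}$, which is exactly your first observation iterated to normal form).
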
 
\condskip
In fact, the labelled transition system $\ltstrace$ allows us to extend the notion of probability of success for a trace to the case where we start not from a program, but from a probability distribution over program: 
  $$\probtr \traceone \distrone = \sumdistr \distrtwo \text{ when }
\wtrc \distrone \traceone \distrtwo$$ In the same way we extend the
preorder $\leqtr\cdot\cdot$, the equivalence relation
$\equivtr\cdot\cdot$, and the metric $\metrtr$ to distributions.
We can now use the relation $\wtrc{}{\cdot}{}$ to give an equivalent
formulation of Theorem \ref{theo:tracenonexpansive}: if $\termone$ and $\termtwo$
are such that $\appl \metrtr \termone \termtwo \leq \epsone$, then for
every trace $\traceone$, and context $\ctxone$, if $\wtrc {\dirac
  {\fillc \ctxone\termone}}{\traceone}{\distrone}$ and $\wtrc {\dirac
  {\fillc \ctxone \termtwo}}{\traceone}{\distrtwo}$, then it holds
that $\abs{\sum \distrone - \sum \distrtwo} \leq \epsone$. This
statement, however, cannot be proved directly, yet, because the way
$\ctxone$ and the argument terms interact is lost.

\subsubsection{The LTS $\ltstracect$}
It is then time to introduce our second LTS, called $\ltstracect$,
which will allow us to relate $\wtrc {\dirac{\fillc \ctxone
    \termone}}{\cdot}{\cdot}$ to the behaviour of $\termone$: we want
to talk about the evolution of a system consisting of the program
$\termone$ and the environment $\ctxone$, while keeping the system and
the environment as separate as possible. $\ensct$ is the set
of pairs of the form $(\ctxone, \termone)$, where $\ctxone$ is a
context and $\termone$ is a program. The states of
  $\ltstracect$ are distributions over $\ensct$, and the labels of
  $\ltstracect$ are traces. The transition relation of $\ltstracect$
  corresponds to the transition relation of $\ltstrace$, where we keep
  the information about what part of the whole system is the program,
  and what part is the environment interacting with it. We'll use
  the following notation, which will be useful in the formal
  definition of $\ltstracect$: If $\termtwo$ is a term, $\distrone$ a
  distribution over $\ensct$, we define $\distrone \cdot \termtwo$ and
  $\termtwo \cdot \distrone$ as the distributions over $\ensct$ given
  by:
\begin{align*}
\distrone \cdot \termtwo &= \sum_{(\ctxone, \termone) \in \supp \distrone} \distrone(\ctxone, \termone)\cdot \dirac{(\ctxone\termtwo, \termone)}\\
\termtwo \cdot \distrone & = \sum_{(\ctxone, \termone) \in \supp \distrone} \distrone(\ctxone, \termone)\cdot \dirac{(\termtwo\ctxone, \termone)} 
\end{align*}
And if $\ctxone$ is a context, $\termone$ a term, and $\distrone$ a distribution over $\programs$, we define $(\ctxone \distrone, \termone)$ and $(\distrone \ctxone, \termone)$ as the distributions over $\ensct $ given by:
\begin{align*}
(\ctxone \distrone, \termone) &= \sum_{\termtwo \in \supp \distrone}\distrone(\termone \dirac{(\ctxone\termtwo, \termone)}) \\
(\distrone \ctxone, \termone) &= \sum_{\termtwo \in \supp \distrone}\distrone(\termone \dirac{(\termtwo\ctxone, \termone)}) 
\end{align*}
If $(\ctxone, \termone) \in \ensct$ is such that $\fillc \ctxone \termone$ is a value, we say by abuse of notation that $(\ctxone, \termone)$ is a value. If $\distrone$ is a distribution over $\ensct$ such that every $(\ctxone, \termone) \in \supp \distrone$ is a value, we say that $\distrone$ is a value distribution. 
\condskip
\begin{definition}
We define  $\ltstracect$ as the labelled transition system such that:
\begin{itemize}
\item its set of states is the set of probability distributions over $\ensct$.
\item its set of labels is the set of traces.
\item its transition relations $\wtrcct{}{\cdot}{}$ is defined by the rules of Figure \ref{tracesemct}. The definition uses an auxiliary one-step transition relation $\wtrct{\distrone}{\actone}{\distrtwo}$, where $\actone \in \actions$, and $\distrone$, $\distrtwo$ are distributions over $\ensct$.
\end{itemize}
\end{definition}
\condskip

\begin{figure}[!h]
\begin{center}
\fbox{\footnotesize
\begin{minipage}{.97\textwidth}
$$\AxiomC{$\ssp \termtwo \distrtwo $}
  \UnaryInfC{$\wtrct{\distrone \disjplus p \cdot \dirac{(\termtwo, \termone)}}{\internact}{\distrone + p\cdot (\distrtwo, \termone) }$} 
  \DisplayProof 
  \qquad
  \AxiomC{$ $}
  \UnaryInfC{$ \wtrct{p \cdot \dirac{(\abstr\varone\termtwo, \termone)}}{\app\valone}{ p\cdot (\subst\termtwo\varone \valone, \termone) }$}
  \DisplayProof $$ 
  $$
  \AxiomC{$\wtrct {\dirac{(\ctxone, \termone)}} \internact \distrtwo $}
  \UnaryInfC{$\wtrct{\distrone \disjplus p \cdot \dirac{(\ctxone\termtwo, \termone)}}{\internact}{\distrone + p\cdot (\distrtwo\cdot \termtwo) }$}
  \DisplayProof
  $$
  $$
  \AxiomC{$ \fillc \ctxone \termone \text{ is a value } $}
  \AxiomC{$\ssp \termtwo \distrtwo$}
  \BinaryInfC{$\wtrct{\distrone \disjplus p\cdot \dirac{(\ctxone\termtwo, \termone)}}{\internact}{\distrone + p\cdot(\ctxone\distrtwo, \termone)}$}
  \DisplayProof 
 \qquad
  \AxiomC{$\ssp \termtwo \distrtwo $}
  \UnaryInfC{$\wtrct{\distrone \disjplus p \cdot \dirac{(\termtwo\ctxone, \termone)}}{\internact}{\distrone + p \cdot (\distrtwo\ctxone, \termone) }$}
  \DisplayProof
$$
$$
  \AxiomC{$\wtrct{\dirac{(\ctxone, \termone)}}{\internact}{\distrtwo}$}
  \UnaryInfC{$\wtrct{\distrone \disjplus p \cdot \dirac{(\valone\ctxone, \termone)}}{\internact}{\distrone + p \cdot(\valone\distrtwo, \termone)} $}
  \DisplayProof
  \qquad
\AxiomC{$\ssp \termone  \distrtwo$}
\UnaryInfC{$\wtrct{\distrone \disjplus p\cdot \dirac{\hole, \termone}}{\internact}{\distrone + p\cdot (\hole, \distrtwo)}$}
\DisplayProof 
$$
$$
\AxiomC{}
\UnaryInfC{$\wtrct{\distrone \disjplus p \cdot\dirac{\hole\valone, \abstr\varone\termtwo}}{\internact}{\distrone + p \cdot\dirac {\hole, \subst \termtwo \varone \valone}}$}
\DisplayProof
\qquad
\AxiomC{$\fillc \ctxone \termone$ \text{value}}
\UnaryInfC{$\wtrct{\distrone \disjplus p \cdot\dirac{(\abstr\varone \termtwo)\ctxone, \termone}}{\internact}{\distrone + p \cdot \dirac {\subst \termtwo \varone \ctxone, \termone}}$}
\DisplayProof
$$ $$
 \AxiomC{}
 \UnaryInfC{$\wtrct {p \cdot \dirac{\abstr \varone \ctxone, \termone}}{\app \valone}{p \cdot\dirac{({\subst\ctxone \varone \valone}, \termone)}}  $} 
\DisplayProof
\qquad
\AxiomC{$ $}
\UnaryInfC{$\wtrct{p \cdot \dirac{(\hole, \abstr\varone \termone)}}{\app \valone}{p \cdot \dirac {(\hole, \subst \termone \varone \valone)}}$}
\DisplayProof
$$ $$
\AxiomC{$\wtrct {\distrone_i}{\app \valone}{\distrtwo_i} $}
\UnaryInfC{$\wtrct {\stackrel{\cdot}{\sum_i} \distrone_i } {\app \valone} {\sum_i{\distrtwo_i}} $}
\DisplayProof
$$
$$
\AxiomC{$\distrone$ value distribution}
\UnaryInfC{$\wtrcct \distrone \emptytr \distrone $}
\DisplayProof
\qquad
\AxiomC{$\wtrct \distrone \internact \distrtwo $}
\AxiomC{$\wtrcct \distrtwo \tracetwo \distrthree $}
\BinaryInfC{$\wtrcct  {\distrone}{ \tracetwo} \distrthree$}
\DisplayProof 
\qquad
\AxiomC{$\wtrct \distrone {\app \valone} \distrtwo $}
\AxiomC{$\wtrcct \distrtwo \tracetwo \distrthree $}
\BinaryInfC{$\wtrcct  {\distrone}{\concat \valone \tracetwo} \distrthree$}
\DisplayProof
$$
\end{minipage}}
\end{center}
\caption{small-step trace relations on distributions over $\ensct $ }\label{tracesemct}
\end{figure}

\condskip
\begin{lemma}\label{strongnormct}
The relation $\wtr {}\internact{}$ on distributions over $\ensct$ is
strongly normalising, and normal forms of $\wtr{}{\internact}{}$ are
value distributions.
\end{lemma}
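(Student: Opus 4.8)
The plan is to reduce strong normalisation to the already-established Lemma~\ref{strongnormp} by means of a \emph{forgetful} map $\forget{\cdot}$ that collapses a state of $\ltstracect$ into a state of $\ltstrace$: for a distribution $\distrone$ over $\ensct$, set $\forget{\distrone}=\sum_{(\ctxone,\termone)\in\supp\distrone}\distrone(\ctxone,\termone)\cdot\dirac{\fillc{\ctxone}{\termone}}$, i.e.\ each pair is mapped to the program obtained by plugging the term into the hole of the context. The first and main step is a \emph{simulation lemma}: whenever $\wtrct{\distrone}{\internact}{\distrtwo}$ in $\ltstracect$, then $\wtr{\forget{\distrone}}{\internact}{\forget{\distrtwo}}$ is a single $\internact$-step of $\ltstrace$. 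This is proved by inspecting the rules of Figure~\ref{tracesemct} defining $\wtrct{}{\internact}{}$; the recursive rules (those with a premise of the form $\wtrct{\dirac{(\ctxone,\termone)}}{\internact}{\distrtwo}$) are handled by an inner induction, using that $\ssp{}{}$ is closed under the evaluation contexts $\hole\,\termtwo$ and $\valone\,\hole$. The delicate points are the rules carrying the side condition ``$\fillc{\ctxone}{\termone}$ is a value'': there the forgetful image recombines context and term into a genuine redex (e.g.\ $(\abstr\varone\termtwo)(\fillc{\ctxone}{\termone})$), and one must check that the administrative reshuffling on pairs corresponds, after applying $\forget{\cdot}$, to exactly the $\beta$-step that $\ssp{}{}$ performs — this uses affinity, since the hole occurs exactly once, so substitution and hole-filling commute.

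Granting the simulation lemma, strong normalisation of $\wtr{}{\internact}{}$ on $\ensct$-distributions is immediate: an infinite $\internact$-reduction $\distrone_0\to\distrone_1\to\cdots$ in $\ltstracect$ would be mapped by $\forget{\cdot}$ to an infinite $\internact$-reduction $\forget{\distrone_0}\to\forget{\distrone_1}\to\cdots$ in $\ltstrace$, contradicting Lemma~\ref{strongnormp}. Alternatively, one could avoid $\ltstrace$ and replay the size argument directly, defining $\sizec{(\ctxone,\termone)}=\sizec{\ctxone}+\sizec{\termone}$ with $\sizec{\hole}=1$ and $\sizec{\distrone}=\sum_{(\ctxone,\termone)\in\supp\distrone}3^{\sizec{(\ctxone,\termone)}}$; each one-step rule strictly decreases this measure, again because affinity prevents the substitution rules from duplicating subterms. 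The forgetful-map route is shorter and, moreover, gives for free the correspondence between reductions of the two LTSs, which is useful later.

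For the second claim — normal forms of $\wtr{}{\internact}{}$ are exactly the value distributions — we argue both inclusions. If $\distrone$ is a value distribution, i.e.\ $\fillc{\ctxone}{\termone}\in\valset$ for every $(\ctxone,\termone)\in\supp\distrone$, then no left-hand side of a rule in Figure~\ref{tracesemct} can match: patterns of the shape $\dirac{(\ctxone\,\termtwo,\termone)}$ or $\dirac{(\termtwo\,\ctxone,\termone)}$ would force $\fillc{\cdot}{\cdot}$ to be an application, hence not a value, while every remaining pattern requires a $\ssp{}{}$-reducible subterm, and values are $\ssp{}{}$-irreducible; so $\distrone$ is normal. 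Conversely, suppose some $(\ctxone,\termone)\in\supp\distrone$ has $\fillc{\ctxone}{\termone}\notin\valset$. Since every closed non-value has a $\ssp{}{}$-reduct, it suffices to establish a \emph{redex-localisation} sublemma: for every such pair there is a $\distrtwo$ with $\wtrct{\dirac{(\ctxone,\termone)}}{\internact}{\distrtwo}$, proved by structural induction on $\ctxone$, each case matching one rule shape of Figure~\ref{tracesemct}, with a further split on whether the relevant sub-component (the term plugged in, or the function side of an application) has already reached a value — this being exactly the information the side conditions encode. Hence $\distrone$ is not normal, which is the contrapositive we wanted.

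The main obstacle is the combination of the simulation lemma and the redex-localisation sublemma: both require walking through the long list of rules of Figure~\ref{tracesemct} and checking that the artificial split between $\ctxone$ and $\termone$ is faithfully tracked — in particular that the ``$\fillc{\ctxone}{\termone}$ value'' rules, which momentarily re-fuse the environment and the program, are the only places where reduction crosses the hole, and that they match precisely the $\ssp{}{}$-steps on the fused program. Everything else is bookkeeping: confluence of $\wtr{}{\internact}{}$ is not needed for this statement, and no combinatorial ingredient beyond affinity enters the argument.
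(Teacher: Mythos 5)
Your ``alternative'' route is in fact the paper's proof: the paper extends the size measure of Lemma~\ref{strongnormp} to distributions over $\ensct$ by setting $\sizec{\distrone}=\sum_{(\ctxone,\termone)\in\supp\distrone}3^{\sizec{\fillc\ctxone\termone}}$ and observes that every $\wtrct{}{\internact}{}$ step strictly decreases it, exactly the ``same reasoning'' as before (your variant $\sizec{\ctxone}+\sizec{\termone}$ with $\sizec\hole=1$ differs only by a constant in the exponent). So the content you relegate to a remark is the intended argument, and it is correct.

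Your preferred route, however, has a genuine gap: the one-step simulation lemma as stated --- $\wtrct{\distrone}{\internact}{\distrtwo}$ implies $\wtr{\forget\distrone}{\internact}{\forget\distrtwo}$ in \emph{a single step} --- can fail. The reason is that $\forget{\cdot}$ may identify distinct pairs: e.g.\ $(\hole,\termone)$ and $(\termone,\termone)$ (the second component being a hole-free context) both forget to $\dirac{\termone}$, and such configurations are reachable from a Dirac initial state via a probabilistic choice $\psum{\hole}{\termone}$. The rule of Figure~\ref{tracesemunl} splits the source as $\distrone\disjplus\alpha\cdot\dirac{\termone}$, i.e.\ it reduces the \emph{entire} mass carried by the program $\termone$ at once, whereas the corresponding $\ltstracect$ step may reduce only the portion of that mass coming from one of the identified pairs; the forgetful image of the target is then not reachable in one legal $\wtr{}{\internact}{}$ step (nor, without further work, in finitely many that you could count against an infinite reduction). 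This is presumably why the paper states its simulation result (Lemma~\ref{auxun}) only for the accumulated relations $\wtrcct{}{\cdot}{}$ and $\wtrc{}{\cdot}{}$, and proves strong normalisation by the direct measure instead. Your treatment of the second claim (value distributions are exactly the normal forms, via exhaustiveness of the rules of Figure~\ref{tracesemct} and a redex-localisation induction on $\ctxone$) is fine and spells out what the paper leaves implicit.
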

\begin{proof}
The proof is exactly the same that for the relation
$\wtr{}{\internact}{}$ for distribution over programs. We extend the
definition of $\sizec \cdot$ to distribution over $\ensct$, by:
$\sizec {\distrone} = \sum_{(\ctxone, \termone) \in \supp \distrone}
3^{\sizec{\fillc \ctxone \termone}}$, and we do the same reasoning.
\end{proof}
\condskip
For $\distrone$ a distribution over $\ensct$, we note $\normal
\distrone$ the normal form of $\distrone$ for the relation $\wtr
         {}{\internact}{}$. Please observe that Lemma
         \ref{strongnormct} implies that for any distribution
         $\distrone$, there exists only one distribution $\distrtwo$
         such that $\wtrc \distrone \emptytr \distrtwo$, and moreover
         $\distrtwo = \normal \distrone$. The
  trace semantics for distributions over $\ensct$ allows us to extend
  the notions of trace equivalence, trace preorder and trace metric on
  distributions over $\ensct$ in a natural way.

\subsubsection{Relating $\ltstrace$ and $\ltstracect$}
Intuitively, considering a semantics for distributions over $\ensct$ allows us to separate the part of the semantics which talks about the program, and the part which talks about the context. We would like to obtain the trace semantics for $\fillc \ctxone \termone$, just by looking at the semantics of $(\ctxone , \termone)$. We are going to express this idea by relating the two trace semantics.
\condskip
\begin{lemma}\label{lemprobtreq}
Let be $\termone$ a closed term, $\ctxone$ a context and $\traceone$ a trace. Let be $\distrone$ and $\distrtwo$ such that  $\wtrc {\dirac{\fillc \ctxone\termone}}{\traceone}{\distrone}$, and $\wtrcct {\dirac{(\ctxone, \termone)}}{\traceone}{\distrtwo}$
Then ${\sumdistr \distrone}  = {\sumdistr \distrtwo } $.
\end{lemma}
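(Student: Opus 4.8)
The plan is to reduce the statement to a \emph{functional simulation} between the two labelled transition systems, mediated by the obvious forgetting map. First I would extend $\forget{\cdot}$ to distributions over $\ensct$ by push-forward along $(\ctxone,\termone)\mapsto\fillc\ctxone\termone$, i.e. $\forget{\distrone}=\sum_{(\ctxone,\termone)\in\supp\distrone}\distrone(\ctxone,\termone)\cdot\dirac{\fillc\ctxone\termone}$. Two trivial remarks are then recorded: $\forget{\cdot}$ is linear, and it preserves weight, $\sumdistr{\forget\distrone}=\sumdistr\distrone$; also $\forget{\dirac{(\ctxone,\termone)}}=\dirac{\fillc\ctxone\termone}$, and $(\ctxone,\termone)$ is a value exactly when $\fillc\ctxone\termone$ is (this is the definition), so $\forget{\cdot}$ maps value distributions over $\ensct$ to value distributions over $\programs$. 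Given this, the lemma follows from the claim that $\wtrcct{\dirac{(\ctxone,\termone)}}{\traceone}{\distrtwo}$ implies $\wtrc{\dirac{\fillc\ctxone\termone}}{\traceone}{\forget\distrtwo}$: indeed, combining this with the determinism of $\wtrc{}{\cdot}{}$ forces $\forget\distrtwo=\distrone$, whence $\sumdistr\distrone=\sumdistr{\forget\distrtwo}=\sumdistr\distrtwo$.

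To prove that claim I would first establish the one-step version: for every distribution $\distrone$ over $\ensct$, every $\actone\in\actions$ and every $\distrtwo$, if $\wtrct{\distrone}{\actone}{\distrtwo}$ then $\wtr{\forget\distrone}{\actone}{\forget\distrtwo}$ (and, when $\actone=\internact$, the matching step on the program side is again a genuine $\internact$-step, not the identity). This goes by induction on the derivation in Figure~\ref{tracesemct}, inspecting each rule. Linearity of $\forget{\cdot}$ disposes of the summation rule for $\app\valone$ and of all the "framing" with $\distrone\disjplus\cdots$ contexts. The remaining ingredients are: compatibility of one-step reduction with the evaluation positions occurring in the rules (if $\ssp\termone\distrone$ then $\ssp{\termone\termtwo}{\sum\distrone(\termthree)\cdot\dirac{\termthree\termtwo}}$, and likewise for $\valone\termtwo$, exactly as in the definition of $\ssp{}{}$); and the elementary commutations of hole-plugging with term formation and with capture-avoiding substitution, namely $\fillc{\ctxone\termtwo}{\termone}=(\fillc\ctxone\termone)\termtwo$, $\fillc{\abstr\varone\ctxone}{\termone}=\abstr\varone{\fillc\ctxone\termone}$, and $\subst{(\fillc\ctxone\termone)}{\varone}{\valone}=\fillc{\subst\ctxone\varone\valone}{\termone}$ — the last of these being sound because $\termone$ is closed, so the substituted variable never reaches the subterm plugged into the hole. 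The fact that $\forget{\cdot}$ preserves "being a value" is what licenses firing the $\app\valone$-rule of Figure~\ref{tracesemunl} on the program side precisely when the $\app\valone$-rule of Figure~\ref{tracesemct} fires on the pair side.

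Then the trace-level claim follows by induction on the length of $\traceone$, reading off the last three rules of Figure~\ref{tracesemct} together with the rules of Figure~\ref{tracesemdeuxl}. For $\traceone=\emptytr$: $\wtrcct{\dirac{(\ctxone,\termone)}}{\emptytr}{\distrtwo}$ means $\dirac{(\ctxone,\termone)}$ reaches the value distribution $\distrtwo$ by finitely many $\internact$-steps, so by the one-step version $\dirac{\fillc\ctxone\termone}$ reaches $\forget\distrtwo$ by the same number of $\internact$-steps, and $\forget\distrtwo$ is again a value distribution; using strong normalisation and determinism of $\internact$ (Lemmas~\ref{strongnormp} and~\ref{strongnormct}) this gives $\wtrc{\dirac{\fillc\ctxone\termone}}{\emptytr}{\forget\distrtwo}$. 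For $\traceone=\concat{\app\valone}{\tracetwo}$: decompose the $\ltstracect$-derivation into an $\internact$-prefix, one $\app\valone$-step and a $\tracetwo$-derivation, apply the one-step version to the $\internact$-steps and to the $\app\valone$-step and the induction hypothesis to the $\tracetwo$-part, and reassemble on the program side with the identically shaped rules of Figure~\ref{tracesemdeuxl}.

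The main obstacle is the case analysis for the one-step simulation: Figure~\ref{tracesemct} has many rules, each of which must be aligned with a rule of Figure~\ref{tracesemunl} while carefully tracking which subterm plays the role of the program and which of the context. The delicate cases are those with a premise $\wtrct{\dirac{(\ctxone,\termone)}}{\internact}{\distrtwo}$, where one has to push the induction hypothesis through an application surrounding the pair, and the rule producing a context-valued substitution $\subst\termtwo\varone\ctxone$, where one needs $\fillc{\subst\termtwo\varone\ctxone}{\termone}=\subst\termtwo\varone{(\fillc\ctxone\termone)}$; both rely on the commutation lemmas above and on the affinity of the calculus (single occurrence, no duplication of the plugged subterm). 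Everything else is bookkeeping.
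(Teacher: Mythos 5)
Your top-level strategy is the paper's: both arguments go through the forgetting map $\forget{\cdot}$ and reduce the lemma to the identity $\distrone=\forget\distrtwo$, concluding by weight preservation. Your key claim is exactly the paper's Lemma~\ref{auxun}, which the paper states but does not prove, choosing instead to do the detailed work on the converse direction (Lemma~\ref{lemmesept}: an induction on the program-side derivation that \emph{reconstructs} the $\ensct$-side one via the one-step correspondence of Lemma~\ref{corresp}). Your route --- prove Lemma~\ref{auxun} for the Dirac case and then invoke the uniqueness of $\distrtwo$ in $\wtrc{\distrone}{\traceone}{\distrtwo}$ --- is legitimate and arguably more economical, since that uniqueness lemma is already available.

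There is, however, one step that fails as written: the one-step simulation ``$\wtrct{\distrone}{\internact}{\distrtwo}$ implies $\wtr{\forget\distrone}{\internact}{\forget\distrtwo}$'' for arbitrary $\distrone$ over $\ensct$. The program-side rule of Figure~\ref{tracesemunl} carries the side condition that the reduced term is isolated together with its \emph{entire} mass (writing $\distrone\disjplus\alpha\cdot\dirac{\termone}$ demands $\termone\notin\supp\distrone$), whereas the $\ensct$-side rules isolate the mass of a single \emph{pair}. Since $\forget{\cdot}$ need not be injective on the support --- distinct pairs such as $(\hole\,\identity,\identity)$ and $(\identity\,\hole,\identity)$ forget to the same program $\identity\,\identity$ --- an $\ensct$-side $\internact$-step may move only part of the mass that the program-side rule is forced to move all at once, and then no matching $\wtr{}{\internact}{}$ step exists. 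So ``linearity of $\forget{\cdot}$ disposes of all the framing'' is precisely where the induction breaks; rescuing the statement by restricting to distributions reachable from a Dirac would require an injectivity invariant for $\forget{\cdot}$ on reachable supports that does not obviously hold. The damage is local: what the rules of Figure~\ref{tracesemdeuxl} actually require is only that the $\internact$-\emph{normal form} of $\forget\distrone$ be $\forget{}$ of the $\internact$-normal form of $\distrone$, and this follows from strong normalisation and confluence on both sides (Lemmas~\ref{strongnormp} and~\ref{strongnormct}) together with the explicit description of the program-side normal form as $\sum_{\termone}\distrone(\termone)\cdot\sem\termone$. On value distributions the aggregation problem disappears, so your treatment of the $\app\valone$-steps and of the composition along the trace is fine; restate the $\internact$-part of the simulation up to normalisation and the rest of your argument goes through.
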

\begin{proof}
The proof of Lemma \ref{lemprobtreq} is relatively technical, and is based on three auxilliary lemmas : Lemma \ref{auxun}, and Lemma \ref{lemmesept}.
If $\distrone$ is a distribution over $\ensct$, we call $\forget \distrone$ the distribution obtained by filling each context by its associated term. 
To express this idea more formally, we define an operator $\forget{}$ on distributions over $\ensct$, which transforms every distribution in its corresponding distribution over terms.
 $$\forget  \distrone = \sum_{\ctxone, \termone}\distrone{((\ctxone, \termone))}\cdot \dirac{\fillc \ctxone \termone}. $$ 
 We can now express the correspondence between the trace semantics on distributions over programs, and the trace semantics on distributions over $\ensct$, by the following lemma.
\condskip
  \begin{lemma}\label{auxun}
Let  $\distrone, \distrtwo$ be  distributions over $\ensct$. If $\wtrcct \distrone \traceone \distrtwo$, then we have that: $\wtrc {\forget \distrone}{\traceone}{\forget {\distrtwo}} $
\end{lemma} 
\condskip
But we would like to have some information in the other directions too: if we have the trace semantics of the term $\fillc \ctxone\termone$, is it possible to deduce something about the trace semantics of $(\ctxone, \termone )$ ? The following lemma give a positive answer:
\condskip 
\begin{lemma}\label{lemmesept}
  Let  $\distrone$ be a distribution over $\ensct$ such that $\wtrc {\forget \distrone} \tracetwo \distrthree$.
  Let be  $\distrfour$ such that $\wtrcct \distrone \tracetwo \distrfour$. Then $\distrthree = \forget \distrfour$.
  \end{lemma}
\begin{proof}
We need first to show an auxiliary lemma, in order to express the correspondence between the one-step relation on distributions over programs, and the one-step relation on distributions over $\ensct$.
\condskip
\begin{lemma}\label{corresp}
Let be  $\ctxone$ a context, and $\termtwo$ a term.
 Let be $\distrone$ such that  $\ssp {\fillc \ctxone \termtwo} \distrone$. Then there exists $\distrtwo$ such that: $\wtrct {\dirac {(\ctxone, \termtwo)}}{\internact}{\distrtwo}$, and $\forget \distrtwo = \distrone$.
\end{lemma}
\condskip
Using this lemma we are now going to show Lemma \ref{lemmesept}.
The proof is by induction on the derivation  of $\wtrc {\forget \distrone} \tracetwo \distrthree$:
\begin{itemize}
\item The basic case is the case where $\forget \distrone$ is a value distribution (and consequently a normal form for $\wtr \cdot \internact \cdot$), and where we are interested in the empty trace. The the derivation tree of $\wtrc {\forget \distrone} \tracetwo \distrthree$ is of the form:
$$\AxiomC{$\forget \distrone$ value distribution}
\UnaryInfC{$\wtrc {\forget\distrone} \emptytr {\forget\distrone} $}
\DisplayProof$$
Then $\forget \distrone$ is a value distribution. By definition of values for distribution over $\ensct$, it means that $\distrone$ is a value distribution too. And so we can observe that $\wtrcct \distrone \emptytr \distrone$, and the result holds.
\item The first induction case is the case we don't start from a value distribution. Then  the derivation tree of $\wtrc {\forget \distrone} \tracetwo \distrthree$ is of the form:
$$\AxiomC{$\wtr {\forget\distrone} \internact \distrfour $}
\AxiomC{$\wtrc \distrfour \tracetwo \distrthree $}
\BinaryInfC{$\wtrc  {\forget \distrone}{ \tracetwo} \distrthree$}
\DisplayProof
$$ 
The only possible way to have obtained: $\wtr {\forget\distrone} \internact \distrfour $ is to have used a derivation of the form: 
$$\AxiomC{$\ssp \termone \distrsix $}
  \UnaryInfC{$\wtr{\forget \distrone = \distrfive \disjplus p \cdot \dirac{( \termone)}}{\internact}{\distrfour = \distrfive + p\cdot (\distrsix) }$} 
  \DisplayProof$$
Since $\forget \distrone = \distrfive \disjplus p \cdot \dirac{( \termone)}$, $\distrone = \distrseven \disjplus p \cdot \distreight$, with $\forget \distrseven = \distrfive$ and  $\forget \distreight = \dirac \termone$.
So for any $(\ctxone, \termtwo) \in \supp \distreight$, we have that $\ssp {\fillc \ctxone \termtwo} \distrsix$. By Lemma \ref{corresp}, there exist $\distrnine_{\ctxone, \termtwo}$ such that $\forget {\distrnine_{\ctxone, \termtwo}} = \distrsix$, and $\wtrct {\dirac{(\ctxone, \termtwo)}} \internact {\distrnine_{\ctxone, \termtwo}}$.
And now we can see by the rules of trace semantics for distributions over $\ensct$ that:
\begin{align} & \wtrct \distrone \internact {\cdots \wtrct {\cdots} \internact {\distrseven + p \cdot \sum_{(\ctxone, \termtwo) \in \supp \distreight} \distrnine_{(\ctxone, \termtwo)}}}\label{eq:eqn10} \end{align} 
 Moreover, we can see that $\forget {\distrseven + p \cdot \sum_{(\ctxone, \termtwo) \in \supp \distreight} \distrnine_{(\ctxone, \termtwo)}} = \distrfive + p \cdot (\distrsix) = \distrfour$.
So now we can apply the induction hypothesis, and we have that there exists a distribution $\distrtwo$ such that: 
\begin{align}
&\wtrcct  {\distrseven + p \cdot \sum_{(\ctxone, \termtwo) \in \supp \distreight} \distrnine_{(\ctxone, \termtwo)}} \traceone \distrtwo \label{eq:eqn20}\\
& \forget \distrtwo = \distrthree.
\end{align} And now we can conclude (by equations \eqref{eq:eqn10} and \eqref{eq:eqn20} ) that  $\wtrcct \distrone \traceone \distrtwo$, and so the results holds. 

\item The second induction case is the case where we start from a value distribution, and we are interested in a non-empty trace.
 Then  the derivation tree of $\wtrc {\forget \distrone} \tracetwo \distrthree$ is of the form:
$$\AxiomC{$\wtr {\forget\distrone} {\app \valone} \distrfour $}
\AxiomC{$\wtrc \distrfour \traceone \distrthree $}
\BinaryInfC{$\wtrc  {\forget \distrone}{\tracetwo = \concat {\app\valone} \traceone} \distrthree$}
\DisplayProof
$$ 
The only possible way to have obtained: $\wtr {\forget\distrone} {\app \valone} \distrfour $ is to have used a derivation of the form: 
$$
\AxiomC{$\forget \distrone \text{ value distribution }$}
\UnaryInfC{$\wtr  {\forget \distrone}{\app \valone}{\distrfour = \sum {\forget\distrone}(\abstr \varone \termone)\cdot \dirac{\subst \termone \varone \valone }}$}
\DisplayProof
$$
 For every $(\ctxone,\termtwo) \in \supp \distrone$, let be $\termone_{(\ctxone, \termtwo)}$such that  $\fillc \ctxone \termtwo = \abstr \varone \termone_{(\ctxone, \termtwo)}$. Using this notation, we can now express $\distrfour$ as a sum over the support of the distribution $\distrone$:  
\begin{align}
\distrfour &=  \sum_{(\ctxone, \termtwo) \in \supp \distrone } \distrone((\ctxone, \termtwo)) \cdot \dirac{(\subst {\termone_{(\ctxone, \termtwo)}} \varone \valone)}\label{eq:eqna1}
\end{align} 
We are going to define a distribution $\distrfive_{(\ctxone, \termtwo)}$ over $\ensct$ for every  $(\ctxone, \termtwo) $ in the support of $\distrone$. We can see that for every $(\ctxone,\termtwo) \in \supp \distrone$, we have two possible cases:
\begin{itemize}
\item Or $\ctxone = \hole$, and $\termtwo = \abstr \varone (\termone_{(\ctxone, \termtwo)})$.
Then let be $\distrfive_{(\ctxone, \termtwo)} = \dirac{(\hole, \subst {\termone_{(\ctxone, \termtwo)}} \varone \valone)}$
\item Or $\ctxone = \abstr \varone \ctxtwo$, and $\fillc \ctxone \termtwo = \termone_{(\ctxone, \termtwo)}$. Then let be $\distrfive_{(\ctxone , \termtwo)} = \dirac {(\subst \ctxtwo \varone \valone, \termtwo)}$. Please observe that, since the calculus is linear, $\subst \ctxtwo \varone \valone$ is indeed a context.
\end{itemize}
Now we can write the equation \eqref{eq:eqna1} the following way:
\begin{align}
\distrfour &=  \forget {\sum_{(\ctxone, \termtwo) \in \supp \distrone } \distrone((\ctxone, \termtwo)) \cdot  {\distrfive_{(\ctxone, \termtwo)}}}\label{eq:eqna2}
\end{align}
 Moreover, for every $(\ctxone, \termtwo) \in \supp \distrone$, we have:
$\wtrct {\dirac{\ctxone, \termtwo}}{\app \valone}{\distrfive_{(\ctxone, \termtwo)}}$, and so the rules of one-step trace semantics for distribution over $\ensct$ allow us to say that: 
\begin{align}
&\wtrct \distrone {\app \valone}  \sum_{(\ctxone, \termtwo) \in \supp \distrone } \distrone((\ctxone, \termtwo)) \cdot {\distrfive_{(\ctxone, \termtwo)}} \label{eq:eqna3}
\end{align}
 By applying the induction hypothesis to $\wtrc \distrfour \tracetwo \distrthree$ and using equation \eqref{eq:eqna2}, we know that there exists $\distrsix$ such that:
\begin{align}
& \wtrcct  {\sum_{(\ctxone, \termtwo) \in \supp \distrone } \distrone((\ctxone, \termtwo)) \cdot {\distrfive_{(\ctxone, \termtwo)}}} {\traceone}{\distrsix} \\
&\text{ and }  \forget \distrsix = \distrthree \label{eq:eqna4}
\end{align}
And now we can conclude by using the rules of trace semantics for distributions over $\ensct$ that $\wtrcct \distrone {\concat {\app \valone} \traceone} \distrsix$, and since we have equation \eqref{eq:eqna4} the result holds.
\end{itemize}
\end{proof}
\end{proof}
\condskip
\subsection{$\epsone$-parents distributions}
Lemma \ref{lemprobtreq} allows us
 to give yet another equivalent formulation of Theorem
\ref{theo:tracenonexpansive}: if $\appl \metrtr\termone \termtwo \leq
\epsone$, then if $\wtrcct {\dirac{(\ctxone,
    \termone)}}{\traceone}{\distrone}$ and $\wtrcct {\dirac{(\ctxone,
    \termtwo)}}{\traceone}{\distrtwo}$, it holds that $\abs {\sumdistr
  \distrone - \sumdistr \distrtwo} \leq \epsone$.  We are in fact
going to show a stronger result, which uses the notion of
$\epsone$-related distributions:
\condskip
\begin{definition}
We say that two distributions $\distrone$ and $\distrone'$ over
$\ensct$ are $\epsone$-related, and we note $\mleqpar \distrone
{\distrone'} \epsone $ if there exist $n \in \NN$, and
$\ctxone_1,...,\ctxone_n$ distinct contexts, $p_1,...,p_n$ positive real numbers with $\sum_i p_i \leq 1$, and
$\distrtwo_1,...,\distrtwo_n$, and $\distrtwo'_1,...,\distrtwo'_n$ distributions over $\programs$,
such that:
\begin{itemize}
\item $\distrone = \sum_{1 \leq i \leq n} p_i \cdot  (\ctxone_i, \distrtwo_i) $
\item $\distrone' = \sum_{1 \leq i \leq n} p_i \cdot  (\ctxone_i, \distrtwo'_i) $
\item $\forall i$, $\appl \metrtr {\distrtwo_i} {\distrtwo'_i} \leq \epsone$
\end{itemize}
\end{definition}
\condskip
Please observe that, if $\appl \metrtr \termone \termtwo \leq
\epsone$, then for every context $\ctxone$, the distributions
$\dirac{(\ctxone, \termone)}$ and $\dirac{(\ctxone, \termtwo)}$ are
$\epsone$-related. In fact, the notion of $\epsone$-relatedness is a
way to capture the idea of a pair of distributions over $\ensct$
representing \emph{the same} environment, in which we put programs
which are close for the trace pseudometric.  The following can be seen
as a stability result: if we start from $\epsone$-related
distributions, and we do a trace $\traceone$, we end up in two
distributions which are still $\epsone$-related.
\condskip
\begin{lemma}\label{lemmehuit}
Let be $\distrone$, $\distrtwo$ distributions over $\ensct$, and $\epsone \in [0,1]$ such that  $\mleqpar \distrone \distrtwo \epsone$. Let be $\traceone$ a trace.
Let be $\distrthree$ and $\distrfour$  such that:
 $\wtrcct \distrone \traceone \distrthree$, 
and  $\wtrcct \distrtwo \traceone \distrfour$.
Then  $\mleqpar \distrthree \distrfour \epsone$
\end{lemma}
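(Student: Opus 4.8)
The plan is to prove Lemma~\ref{lemmehuit} by induction on the length of the trace $\traceone$, after isolating two facts. By Lemma~\ref{strongnormct} together with determinism of $\wtrcct{}{\cdot}{}$, the derivation of $\wtrcct\distrone\traceone\distrthree$ factors canonically as a maximal run of $\internact$-steps (reaching $\normal{\distrone}$), followed, when $\traceone=\concat{\app\valone}{\traceone'}$, by one $\app\valone$-step to some $\distrone_1$ and then a derivation of $\wtrcct{\distrone_1}{\traceone'}{\distrthree}$; likewise for $\distrtwo$. So it suffices to prove \textbf{(A)}: if $\mleqpar\distrone\distrtwo\epsone$ then $\mleqpar{\normal{\distrone}}{\normal{\distrtwo}}\epsone$; and \textbf{(B)}: if $\distrone,\distrtwo$ are value distributions over $\ensct$ with $\mleqpar\distrone\distrtwo\epsone$ and $\wtrct\distrone{\app\valone}{\distrone_1}$, $\wtrct\distrtwo{\app\valone}{\distrtwo_1}$, then $\mleqpar{\distrone_1}{\distrtwo_1}\epsone$. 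Indeed, the base case $\traceone=\emptytr$ is (A), and the inductive step chains (A), (B), and the induction hypothesis for $\traceone'$. Throughout I will use that three operations on program sub-distributions are \emph{non-expansive for $\metrtr$}: normalisation $\distrthree\mapsto\normal{\distrthree}$ --- in fact $\appl{\metrtr}{\distrthree}{\normal{\distrthree}}=0$, since $\probtr{\distrthree}{\traceone}=\probtr{\normal{\distrthree}}{\traceone}$ for every trace (Figure~\ref{tracesemdeuxl}, Lemma~\ref{strongnormp}); feeding a value, i.e.\ $\distrthree\mapsto\distrthree_{\app\valone}$ with $\distrthree_{\app\valone}=\sum_{\abstr\varone\termone\in\supp{\normal{\distrthree}}}(\normal{\distrthree})(\abstr\varone\termone)\cdot\dirac{\subst\termone\varone\valone}$, because $\probtr{\distrthree_{\app\valone}}{\traceone}=\probtr{\distrthree}{\concat{\app\valone}{\traceone}}$; and convex combination, because $\distrthree\mapsto\probtr{\distrthree}{\traceone}$ is affine in $\distrthree$.

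Fact (A) is where the real work lies. Write $\distrone=\sum_i p_i\cdot(\ctxone_i,\distrtwo_i)$ and $\distrtwo=\sum_i p_i\cdot(\ctxone_i,\distrtwo_i')$ with $\appl{\metrtr}{\distrtwo_i}{\distrtwo_i'}\le\epsone$. Since every $\internact$-rule of Figure~\ref{tracesemct} rewrites a single summand and leaves the rest untouched, $\normal{(-)}$ distributes over such sums and acts pairwise on the $(\ctxone_i,\termthree)$, $\termthree\in\supp{\distrtwo_i}$; so (A) follows from the \emph{normalisation claim}: for every context $\ctxone$ and every program sub-distribution $\distrthree$ one has $\normal{(\ctxone,\distrthree)}=\sum_j q_j\cdot(\ctxtwo_j,g_j(\distrthree))$, where the $q_j\in[0,1]$ with $\sum_j q_j\le 1$, the contexts $\ctxtwo_j$, and the maps $g_j$ on program sub-distributions depend \emph{only on $\ctxone$}, and each $g_j$ is a finite composition of normalisations and value-feedings (hence non-expansive for $\metrtr$). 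Granting the claim, $\normal{\distrone}=\sum_i p_i\sum_j q^i_j\cdot(\ctxtwo^i_j,g^i_j(\distrtwo_i))$ and $\normal{\distrtwo}=\sum_i p_i\sum_j q^i_j\cdot(\ctxtwo^i_j,g^i_j(\distrtwo_i'))$; non-expansiveness gives $\appl{\metrtr}{g^i_j(\distrtwo_i)}{g^i_j(\distrtwo_i')}\le\appl{\metrtr}{\distrtwo_i}{\distrtwo_i'}\le\epsone$; and grouping together the summands carrying the same context (legitimate because $\metrtr$ is stable under convex combination, and $\sum_{i,j}p_iq^i_j\le\sum_ip_i\le1$) exhibits a witness for $\mleqpar{\normal{\distrone}}{\normal{\distrtwo}}\epsone$.

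The normalisation claim itself is proved by a rule-by-rule inspection of the $\internact$-rules of Figure~\ref{tracesemct}, by induction on the (finite, by Lemma~\ref{strongnormct}) reduction of $(\ctxone,\distrthree)$ to normal form. What makes it true is that the $\ensct$-machine keeps the environment genuinely apart from the program: whenever a $\beta$-redex inside the context would substitute a sub-term containing the hole, the matching rule substitutes the \emph{context, hole and all}, rather than the program's value (e.g.\ $((\abstr\varone\termtwo)\ctxone,\termone)\mapsto(\subst\termtwo\varone\ctxone,\termone)$), so the program component is never copied and the control flow on the context side --- which $\ctxtwo_j$ are reached, with which weights $q_j$ --- is fixed by $\ctxone$ and the closed subterms occurring in it. Hence the program component is only ever (a) reduced by internal $\ssp{}{}$-steps (invisible to $\metrtr$), or (b) evaluated and then handed an argument $\valone$ taken from the context (the rules $\dirac{(\hole\valone,\abstr\varone\termtwo)}\mapsto\dirac{(\hole,\subst\termtwo\varone\valone)}$ and relatives), which is a value-feeding. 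The only rules whose applicability inspects the program --- those guarded by ``$\fillc\ctxone\termone$ is a value'' --- do so solely when $\ctxone=\hole$, and there one splits $\distrthree$ into its value and non-value parts: the non-value part falls under (a) and, once normalised, rejoins the trajectory of the value part, so the normal forms reached carry matching contexts. Affineness is essential here: because the hole occurs at most once, along any branch (b) happens for a finite list $\valone_1,\dots,\valone_k$ of values determined by $\ctxone$, so $g_j$ is ``normalise; feed $\valone_1$; $\dots$; feed $\valone_k$'', and $\probtr{g_j(\distrthree)}{\traceone}=\probtr{\distrthree}{\concat{\traceone_0}{\traceone}}$ with $\traceone_0=\app{\valone_1}\cdots\app{\valone_k}$.

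Fact (B) is established by the same argument in a lighter form. If $\distrone=\sum_i p_i\cdot(\ctxone_i,\distrtwo_i)$ is a value distribution over $\ensct$, each $\fillc{\ctxone_i}{\termthree}$ is a value, so each $\ctxone_i$ is either $\hole$ --- with $\distrtwo_i$ then a value distribution --- or a $\lambda$-abstraction; the $\app\valone$-rules of Figure~\ref{tracesemct} send $(\hole,\distrtwo_i)$ to $(\hole,(\distrtwo_i)_{\app\valone})$ and $(\abstr\varone\ctxtwo,\termone)$ to $(\subst\ctxtwo\varone\valone,\termone)$, so the program component is either value-fed or untouched, and the new context depends only on $\ctxone_i$ and $\valone$; grouping coinciding contexts as before gives $\mleqpar{\distrone_1}{\distrtwo_1}\epsone$. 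The main obstacle is, unsurprisingly, the normalisation claim of (A): one must verify, against all the $\internact$-rules, that normalising a $\ensct$-configuration really does decompose into running a program-independent ``context skeleton'' and subjecting the program component to a fixed sequence of $\metrtr$-non-expansive operations --- together with the bookkeeping of the value/non-value splits and of regrouping summands whose contexts have become equal. Everything downstream of that structural fact is routine manipulation of the three non-expansive operations above.
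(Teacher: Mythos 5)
Your proposal is correct and follows essentially the same route as the paper: the same reduction of the trace induction to a normalisation lemma (your Fact (A), the paper's Lemma~\ref{lemmehuitauxmetr}) and an argument-passing lemma on value distributions (your Fact (B), the paper's Lemma~\ref{lemmehuitauxdeux}), with the same key observations that internal reduction is invisible to $\metrtr$, that feeding a value corresponds to prefixing a trace with $\app\valone$, that $\metrtr$-closeness is stable under convex combination, and that summands with coinciding contexts must be regrouped. Your ``normalisation claim'' is a slightly more modular packaging of the case analysis the paper carries out inline (evaluation position, context-only reduction, context feeding the program, context consuming the program's value), but the content is the same.
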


\begin{proof}
If $\distrone$ is a distribution over $\programs$, we note $\normal \distrone$ the distribution such that $\wtrc \distrone \emptytr {\normal \distrone}$. Please observe that it is the normal form of $\distrone$ for the transition relation $\wtr \cdot \internact \cdot$. Similarly, if $\distrone$ is a distribution over $\ensct$, $\normal \distrone$ the distribution such that $\wtrcct \distrone \emptytr {\normal \distrone}$.
We are first going to show two auxiliary lemma:
\condskip
\begin{lemma}\label{lemmehuitauxmetr}
Let be $\distrone$, $\distrtwo$ two distributions over $\ensct$ such that $\mleqpar \distrone \distrtwo\epstwo$. Then $\mleqpar{\normal \distrone}{\normal \distrtwo}\epstwo$ .
\end{lemma}
\begin{proof}
We will use $\kronecker{a}{b}$ as an integer being $1$ if $a$ is equal to $b$, and $0$ otherwise.  
 Let be $\distrthree$ any distribution over $\ensct$. We note $$n_\text{max}(\distrthree) = \max\{n \mid \wtrct \distrthree {\internact^n} {\normal\distrthree}\} $$
  We are going to show the lemma by induction on: $n = \max{(n_\text{max}(\distrone), n_\text{max}(\distrtwo))}$.

\begin{itemize}
 \item If $n = 0$ then $\distrone^\star = \distrone$, and $\distrtwo = \distrtwo^\star$, and the result holds.
 \item If $n>0$:
 Then we have: there exist $p_1,...,p_n$, and $\distrtwo_1,...,\distrtwo_n$, and $\distrtwo'_1,...,\distrtwo'_n$ such that:
\begin{align*}
 \distrone = &\sum_{1 \leq i \leq n} p_i \cdot  (\ctxone_i, \distrone_i) \\
\distrtwo = & \sum_{1 \leq i \leq n} p_i \cdot  (\ctxone_i, \distrtwo_i) \\
\forall i, & \appl \metrtr {\distrone_i}  {\distrtwo_i} \leq \epsone
\end{align*}

 Then there exists $i$ such that: 
 there exists $\termone \in \supp{\distrone_i} \cup \supp{ \distrtwo_i}$, such that $\fillc {\ctxone_i} \termone$ is not an irreducible term.
We consider every possible case for the form of $\fillc {\ctxone_i} \termone$:

 \begin{itemize}
 \item or $\ctxone_i$ is an evaluation context, and there exist $\termone \in \supp{\distrone_i} \cup \supp{ \distrtwo_i}$, such that $\termone$ is not an irreducible term. Intuitively, we want to reduce $\distrone_i$ and $\distrtwo_i$ as much as possible, since they are in evaluation position. And the two resulting distributions should be again $\epsone$-related distributions. 
More precisely,  for every $\termone \in \supp \distrone_i\cup\supp \distrtwo_i$, we note $\distrthree_\termone = \normal{\left(\dirac \termone\right)}$. 
Then  the rules of $\wtrct {} \internact {}$ allow us to see that there exist $k_1$ and $k_2$ such that $k_1 + k_2 > 0$, and:
$$\wtrct{\distrone}{ \internact^{k_1}}{\distrone' = \sum_{j\neq i}p_j\cdot{(\ctxone_j, \distrone_j)} +p_i {(\ctxone_i, \sum_{\termone}\distrone_i(\termone)\cdot \distrthree_\termone)} }, $$ and 
 $$ \wtrct{\distrtwo}{ \internact^{k_2}}{\distrtwo' = \sum_{j\neq i}p_j\cdot{(\ctxone_j, \distrtwo_j)} +p_i {(\ctxone_i, \sum_{\termone}\distrtwo_i(\termone)\cdot \distrthree_\termone)} }.$$
  We can easily show that $\mleqpar  {\distrone'}{\distrtwo'} \epsone$, and moreover, $\max{(n_\text{max}(\distrone'), n_\text{max}(\distrtwo'))} < \max{(n_\text{max}(\distrone), n_\text{max}(\distrtwo))}$. So we can apply the induction hypothesis, and we have that $\mleqpar{\normal\distrone} {\normal \distrtwo}\epsone$.
 \item If $\ctxone_i$ is such that the reduction depends only of the $\ctxone_i$, that is if there exist $q_1,..,q_m$,..., such that for every term $\termtwo$, $$\ssp{\fillc{\ctxone_i}{\termtwo}} {q_1 \cdot \fillc {\ctxtwo_1} \termtwo +...+ q_m \cdot \fillc {\ctxtwo_m}{\termtwo}}.$$
  Then the rules of $\wtrct {} \internact {}$ allows us to show that there exist $k_1$ and $k_2$, such that $k_1 + k_2 > 0$:
 $$\wtrct{\distrone}{ \internact^{k_1}}{\distrone' = \sum_{j\neq i}p_j\cdot{(\ctxone_j, \distrone_j)} +p_i \cdot \sum_{1 \leq k \leq m} q_k \cdot {(\ctxtwo_k, \distrone_i)}}$$ and that 
  $$\wtrct{\distrtwo}{ \internact^{k_2}}{\distrtwo' = \sum_{j\neq i}p_j\cdot{(\ctxone_j, \distrtwo_j)} +p_i \cdot \sum_{1 \leq k \leq m} q_k \cdot {(\ctxtwo_k, \distrtwo_i)}}.$$
In the definition of $\epsone$-related distribution, we consider contexts $(\ctxone_j)_j$ disjoints. So since we want to show that the new distributions we have obtained are still $\epsone$-related, we have to regroup the identical contexts (for instance, it can be the case that: $\ctxtwo_k = \ctxone_j$):
We note $\textbf C =  \{(\ctxone_j)_{j \neq i} \cup (\ctxtwo_k)_{1\leq k \leq m} \}$ the set of all contexts that can have been obtained at this step.
For $\ctxone \in \textbf C$, we take $p'_\ctxone$ his total probability: $p'_\ctxone = \sum_{j \neq i} {\kronecker{\ctxone} {\ctxone_j}} \cdot p_j + \sum_{1\leq k\leq m}\kronecker{\ctxone}{ \ctxtwo_k} \cdot p_i\cdot q_k$, and similarly: $$\distrone'_\ctxone = \sum_{j \neq i} {\kronecker{\ctxone}{ \ctxone_j}}\cdot\frac{p_j}{p'_\ctxone}\cdot\distrone_j + \sum_{1\leq k\leq m}{\kronecker{\ctxone} {\ctxtwo_k}}\cdot\frac{p_i \cdot q_k}{p'_\ctxone} \cdot \distrone_i$$, and
 $$\distrtwo'_\ctxone = \sum_{j \neq i} {\kronecker{\ctxone} {\ctxone_j}}\cdot\frac{p_j}{p'_\ctxone}\cdot\distrtwo_j + \sum_{1\leq k\leq m}\kronecker{\ctxone} {\ctxtwo_k}\cdot\frac{p_i \cdot q_k}{p'_\ctxone} \cdot \distrtwo_i.$$
And now we have $$\distrone' = \sum_{\ctxone \in \textbf C} p'_\ctxone (\ctxone, \distrone'_\ctxone) $$, and similarly: 
$$\distrtwo' = \sum_{\ctxone \in \textbf C} p'_\ctxone (\ctxone, \distrtwo'_\ctxone), $$ and for every $\ctxone \in \textbf C$, $\appl \metrtr {\distrone'_\ctxone}{\distrtwo'_\ctxone} \leq \epsone$. 

 \item The last case is the case where the term and the context really interact: more precisely,
 $\distrone_i$ is a value distribution, and moreover:
 \begin{itemize}
 \item Either $\ctxone_i = \fillc\ctxtwo{\hole \valone}$, which means that we are in the case where the contexts pass values to the program.  Then the following facts are derivable with the rules of $\wtrct  {}\internact{}$:
\begin{align*}
 \wtrct{\distrone}{ \internact^{k_1}}{\distrone' & = \sum_{j\neq i}p_j\cdot{(\ctxone_j, \distrone_j)} \\ & +p_i  \cdot {(\ctxtwo, \sum_{\abstr \varone \termtwo}\distrone_i(\abstr \varone \termtwo) \cdot \dirac{\subst \termtwo \varone \valone})}} 
\end{align*}
and 
\begin{align*}
 \wtrct{\distrtwo}{ \internact^{k_2}}{\distrtwo' & = \sum_{j\neq i}p_j\cdot{(\ctxone_j, \distrtwo_j)}\\ & +p_i  \cdot {(\ctxtwo, \sum_{\abstr \varone \termtwo}\distrtwo_i(\abstr \varone \termtwo) \cdot \dirac{\subst \termtwo \varone \valone})}}
\end{align*} 
  We are now going to show that $\distrone'$ and $\distrtwo'$ are $\epsone$-related.
We should again regroup the identical contexts (for instance, it can be the case that: $\ctxtwo = \ctxone_j$):
We note $\textbf C =  \{(\ctxone_j)_{j \neq i} \cup (\ctxtwo) \}$ the set of all contexts that can have been obtained at this step.
For $\ctxone \in \textbf C$, we take $p'_\ctxone$ his total probability defined as $p'_\ctxone = \sum_{j \neq i} \kronecker{\ctxone} {\ctxone_j} \cdot p_j + \kronecker{\ctxone}{ \ctxtwo} \cdot p_i$, 
and we obtain: 
\begin{align*}
\distrone'_\ctxone = &\sum_{j \neq i} \kronecker{\ctxone}{\ctxone_j}\cdot\frac{p_j}{p'_\ctxone}\cdot\distrone_j \\& + \kronecker{\ctxone}{\ctxtwo}\cdot\frac{p_i}{p'_\ctxone} \cdot \sum_{\abstr \varone \termtwo}\distrone_i(\abstr \varone \termtwo) \cdot \dirac{\subst \termtwo \varone \valone}
\end{align*}
and 
\begin{align*}  
\distrtwo'_\ctxone = & \sum_{j \neq i} \kronecker{\ctxone}{ \ctxone_j}\cdot\frac{p_j}{p'_\ctxone}\cdot\distrtwo_j \\&+ \kronecker{\ctxone}{ \ctxtwo}\cdot\frac{p_i}{p'_\ctxone} \cdot \sum_{\abstr \varone \termtwo}\distrtwo_i(\abstr \varone \termtwo) \cdot \dirac{\subst \termtwo \varone \valone}
\end{align*}
  We have that: $\forall j \neq i$,
 $\appl \metrtr {\distrone'_j = \distrone_j}{\distrtwo'_j = \distrtwo_j} \leq \epsone$ by hypothesis, and moreover,  for every trace $\traceone$:  
\begin{align*}
& |{\probtr{\left(\sum_{\abstr \varone \termtwo}\distrone_i(\abstr \varone \termtwo) \cdot \dirac{\subst \termtwo \varone \valone}\right)}{\traceone}} \\ 
& -  \probtr{\left(\sum_{\abstr \varone \termtwo}\distrtwo_i(\abstr \varone \termtwo) \cdot \dirac{\subst \termtwo \varone \valone}\right)}{\traceone}| \\
= &| \sum_{\abstr \varone \termtwo}\distrone_i(\abstr \varone \termtwo) \cdot \probtr{\subst \termtwo \varone \valone}{\traceone}\\
& - \sum_{\abstr \varone \termtwo}\distrtwo_i(\abstr \varone \termtwo) \cdot \probtr{\subst \termtwo \varone \valone}{\traceone}|
\\=& {\sum_{\abstr \varone \termtwo}\distrone_i(\abstr \varone \termtwo) \cdot \probtr{{\abstr \varone \termtwo})}{\concat {\app \valone}\traceone}}\\& -   {\sum_{\abstr \varone \termtwo}\distrone_i(\abstr \varone \termtwo) \cdot \probtr{{\abstr \varone \termtwo})}{\concat {\app \valone}\traceone}}|\\
  =| &\probtr {\distrone_i}{\concat {\app \valone}\traceone}\\
 & -  \probtr {\distrtwo_i}{\concat {\app \valone}\traceone}|\\
   \leq & \epsone
  \end{align*}
Since the relation $\appl \metrtr \cdot \cdot \leq \epsone$ on terms distribution is stable by convex summations, the result holds. 
 \item Or $\ctxone_i = \fillc\ctxtwo{{\abstr\varone\termtwo}\hole}$ 
 Then the rules of $\wtrct {} \internact {}$ allows us to show that there exist $k_1$ and $k_2$ with $k_1 + k_2 > 0$, and such that:
 $$\wtrct{\distrone}{ \internact^{k_1}}{\distrone' = \sum_{j\neq i}p_j\cdot{(\ctxone_j, \distrone_j)} +p_i \cdot {(\subst{\termtwo}{\varone}{\hole}, \distrone_i)}}$$
  and:
   $$\wtrct{\distrtwo}{ \internact^{k_2}}{\distrtwo' = \sum_{j\neq i}p_j\cdot{(\ctxone_j, \distrtwo_j)} +p_i \cdot {(\subst{\termtwo}{\varone}{\hole}, \distrtwo_i)}}$$
  , and we can easily see that $\mleqpar{\distrone'}{\distrtwo'} \epsone$.
 \end{itemize}
 \end{itemize}
 \end{itemize}
\end{proof}
\condskip
 \begin{lemma}\label{lemmehuitauxdeux}
Let be $\epsone > 0$.
  If $\distrone$ and $\distrtwo$ are two value distributions over $\ensct$ (and consequently, in normal form for $\wtr \cdot \internact \cdot$) with $\mleqpar \distrone \distrtwo  \epsone$, then for every $\valone$, there exists $\distrthree$, $\distrfour$ with $ \mleqpar {\distrthree}{\distrfour} \epsone$ such that $\wtrct \distrone {\app \valone} {\distrthree}$, and $\wtrct \distrtwo {\app \valone}{\distrfour}$. 
  \end{lemma}
\begin{proof}
By hypothesis, we know that $\mleqpar \distrone \distrtwo  \epsone$, and so we can write $\distrone$ and $\distrtwo$ as:
\begin{align*}
\distrone &= \sum_{i}p_i \cdot {(\ctxone_i, \distrone_i)} 
&& \text{ and }\distrtwo = \sum_{i}p_i \cdot {(\ctxone_i, \distrtwo_i)}\\
\text{ and } \forall i ,\, & \appl \metrtr {\distrone_i}{\distrtwo_i} \leq \epsone &&
\end{align*}
When $\distrone_i$  is a term distribution in normal form (i.e with value or non-reducible terms), we note 
\begin{align*}
\distrone'_i &= \sum_{\abstr \varone \termone}\distrone(\abstr \varone \termone)\cdot \dirac{\subst \termone \varone \valone}\\
  \text{ and }  \distrtwo'_i &= \sum_{\abstr \varone \termone}\distrtwo(\abstr \varone \termone)\cdot \dirac{\subst \termone \varone \valone}
\end{align*}
And we have $$\wtrct {\distrone}{\app \valone}{\distrthree = \sum_{i \mid \ctxone_i = \hole}} p_i \cdot (\hole, \distrone'_i) + \sum_{i \mid \ctxone_i = \abstr \varone \ctxtwo_i} p_i \cdot (\subst{\ctxtwo_i}{\varone}{\valone}, \distrone_i)$$,
and similarly: 
$$\wtrct {\distrtwo}{\app \valone}{\distrfour = \sum_{i \mid \ctxone_i = \hole}} p_i \cdot (\hole, \distrtwo'_i) + \sum_{i \mid \ctxone_i = \abstr \varone \ctxtwo_i} p_i \cdot (\subst{\ctxtwo_i}{\varone}{\valone}, \distrtwo_i)$$,
and we can see that $\appl \metrtr \distrthree \distrfour \leq \epsone$.
\end{proof}
\condskip
We can now use these two auxiliary lemma in order to prove Lemma \ref{lemmehuit}.
The proof is by induction on the length of $\traceone$:
\begin{itemize}
\item: if $\traceone = \emptytr$, then we have that $\distrthree = \normal \distrone$, $\distrfour = \normal \distrtwo$ and we have that $\mleqpar \distrthree \distrfour \epsone$ by Lemma \ref{lemmehuitauxmetr}.
\item if $\traceone = \concat{\app \valone}{\tracetwo}$. 
 Let be $\distrfive$, $\distrsix$ such that $\wtrct {\normal \distrone} {\app \valone}{\distrfive}$ and $\wtrct {\normal \distrtwo}{\app \valone}{\distrfive}$. We have (since $\wtrc{}{\traceone}{}$ is confluent ):

$$\wtrcct{\distrone}{\emptytr}{\wtrct{\normal \distrone}{\app \valone}{\wtrc\distrfive\tracetwo\distrthree}}.$$
and 
$$\wtrcct{\distrtwo}{\emptytr}{\wtrct{\normal \distrtwo}{\app \valone}{\wtrc\distrsix\tracetwo\distrfour}}.$$
Then by Lemma \ref{lemmehuitauxmetr} we have:
$\mleqpar{\normal \distrone}{\normal\distrtwo} \epsone$.
 Now we can apply Lemma \ref{lemmehuitauxdeux}, and we obtain that $\mleqpar {\distrfive}{\distrsix}{\epsone}$.
And now we apply the induction hypothesis to  $\tracetwo$, and we obtain that $\mleqpar \distrthree \distrfour \epsone$.
\end{itemize}
\end{proof}
\condskip

\subsubsection{Proof of Theorem \ref{theo:tracenonexpansive}.}

We can now see that Theorem \ref{theo:tracenonexpansive} is a direct
consequence of Lemma \ref{lemmehuit}. Indeed, let $\termone$ and
$\termtwo$ be two programs at distance at most $\epsone$ for the
trace metric, and let $\distrone$ and $\distrtwo$ be such that
$\wtrcct {\dirac{\ctxone, \termone}}{\traceone}{\distrone}$, and
$\wtrcct {\dirac{\ctxone, \termtwo}}{\traceone}{\distrtwo}$. Then, as
we have already observed, $ {\dirac{\ctxone, \termone}}$ and
${\dirac{\ctxone, \termtwo}}$ are $\epsone$-related. By Lemma
\ref{lemmehuit}, we can deduce that $\distrone$ and $\distrtwo$ are
$\epsone$-related. And it is easy to see that it implies that
$\abs{\sumdistr \distrone - \sumdistr \distrtwo} \leq \epsone$.

\subsection{Adding Pairs to the Calculus}\label{sect:pairs}
The trace distance and the results we have just presented about
it can be extended to an affine $\lambda$-calculus \emph{with pairs},
namely a calculus whose language of terms also includes the
following two constructs:
$$
\termone\bnf\pair{\termone}{\termtwo}\midd\letin{\varone}{\vartwo}{\termone}{\termtwo}.
$$ 
We assume that terms are typed in any linear type system guaranteeing
the absence of deadlocks (e.g., simple recursive types), and
 we add the following rules to the big-step semantics:
$$\AxiomC{} 
\UnaryInfC{$\bssp {\pair \termone \termtwo}{\dirac{\pair \termone \termtwo}} $}
\DisplayProof
$$
$$\AxiomC{$\bssp \termone \distrone $}
  \AxiomC{
$ (\bssp \termthree {\distrthree_{\termthree}},\, \bssp \termfour {\distrfour_\termfour})_{\pair \termthree \termfour\in \supp \distrone}  $}
\AxiomC{$\bssp {\subst{\subst \termtwo \varone \valone} \vartwo \valtwo} \distrtwo_{\valone, \valtwo} $}
  \TrinaryInfC{$\bssp {\letin \varone \vartwo \termone \termtwo}{\sum \distrone(\pair \termthree \termfour)\cdot {\distrthree_\termthree}(\valone) \cdot {\distrfour_\termfour(\valtwo) \cdot \distrtwo_{\valone, \valtwo}} }$} 
  \DisplayProof$$ 

We would now like to extend the definition of a trace to pairs
accordingly: which action should we perform on a term in the form
$\pair{\termone}{\termtwo}$? The na\"ive solution would be to add
projections to the trace language: $\traceone \bnf \concat {\proj
  1}{\traceone} \midd \concat{\proj 2}{\traceone}$, with trace
interpretation extended in the expected way:
\begin{align*}
\probtr {\pair \termone \termtwo}{\concat {\proj 1} \tracetwo} &= \probtr \termone \tracetwo \\
 \probtr {\pair \termone \termtwo}{\concat {\proj 2} \tracetwo} &= \probtr \termtwo \tracetwo 
\end{align*} 
However, this way the trace distance would not coincide 
with the context distance, anymore. Indeed, let us consider 
the following example:
\condskip
\begin{example}\label{expair}
We are going to compare the following terms:
$$
\termone\defi \pair {\abstr \varthree (\psum \identity \diver)}{\abstr \varthree (\psum \identity \diver)};\qquad
\termtwo\defi \pair {\abstr \varthree\identity} {\abstr \varthree\identity}.
$$
These two terms are at context distance at least $\frac 3 4$, since
we can consider the context $\ctxone \defi \letin \varone \vartwo
\hole { {\left(\varone \identity\right)} {\left(\vartwo
    \identity\right)}}$, and we can see that $\sumdistr{\sem {\fillc
    \ctxone \termone}} = \frac 1 4$, while $\sumdistr{\sem {\fillc
    \ctxone \termtwo}} = 1$. But we cannot find any trace that separates
them more than $\frac 1 2$. The interesting case is when $\traceone = \concat
{\proj i} \tracetwo$.  But then:
\begin{align*}
\abs{\probtr \termone \traceone - \probtr \termtwo \traceone}  & 
= \abs{\probtr {\abstr \varthree (\psum \diver \identity)} \tracetwo - \probtr {\abstr \varthree \identity} \tracetwo} \\
& \leq \appl \metrtr {\abstr\varthree (\psum \diver \identity)}{\abstr \varthree \identity}.   
\end{align*}
And it is easy to see that in the calculus with pairs we still have $\appl \metrtr {\abstr
  \varthree (\psum \diver \identity)}{\abstr \varthree \identity} = \frac
1 2 $.
\end{example}
\condskip
The reason why we cannot recover the context distance by way of
projections is that the $\texttt{let}$ construct above allows us to
access \emph{both} components of a pair, and the distances each of
them induce can \emph{add up}. A way out consists in extending the
trace language to pairs really following linearity, and considering a
new action in the form $\tenseur{\termone}$ with the following
extension of trace interpretation:

{\footnotesize
$$
\probtr {\pair \termone \termtwo}{\concat {\tenseur \termthree} \tracetwo} = \sum_{\valone, \valtwo} \sem \termone( \valone) \cdot \sem \termtwo (\valtwo) \cdot \probtr { {\subst \termthree {\varone, \vartwo} {\valone, \valtwo}} }\tracetwo 
$$}

Please observe that we could in fact express the pairs in the original
language~\cite{Asperti2002TOCL}: let us consider $\embone: \calcwpair\rightarrow \terms$
defined by
\begin{align*}
\embone(\pair \termone \termtwo) & \defi \abstr \varone \varone \embone(\termone) \embone(\termtwo)  \\
\embone(\letin \varone \vartwo \termone \termtwo ) & \defi \embone(\termone)\, (\abstr \varone (\abstr \vartwo \embone(\termtwo)))\\
\embone (\abstr \varone \termone) & \defi \abstr \varone \embone( \termone)  \cdots 
\end{align*}
Moreover, we could see that every trace for the language $\calcwpair$ can be seen as a trace in the original language:
We can extend  $\embone: \words(\calcwpair) \rightarrow \words$, by:
\begin{align*}
\embone(\emptytr) &= \emptytr \\
\embone(\concat {\app \valone} \traceone) &= \concat {\app\valone} {\embone (\traceone)}\\
\embone(\concat{\tenseur \termone} \traceone) & = \concat{\abstr \varone \abstr \vartwo \termone}{\embone (\traceone)}
\end{align*}
and we have for every term $\termone \in \calcwpair$, and for every trace $\traceone \in \words(\calcwpair)$, 
$$\probtr \termone \traceone = \probtr {\embone (\termone)}{\embone(\traceone)} $$

This way of handling pairs allows the trace distance and the context
distance to coincide, again. However, the trace distance loses its
grip with respect to the context distance. Consider, for instance, the terms
$\termone$ and $\termtwo$ from Example \ref{expair}. Showing an
upper bound on the distance between $\termone$ and $\termtwo$ is the
same thing as showing an upper bound on $\appl \metrtr {\subst
  \termthree {\varone, \vartwo}{\abstr \varthree(\psum \diver
    \identity), \abstr \varthree (\psum \diver \identity)}}{\subst
  \termthree {\varone, \vartwo}{\abstr \varthree\identity, \abstr
    \varthree\identity}} $ \emph{for all terms} $\termthree$ such that $\wfj
{\varone, \vartwo} \termthree$, which is in fact not far away from what we
should show if we were considering the context distance directly.
\section{The Bisimulation Distance}\label{sect:bisimulationdistance}
As we realised in the last section, the trace metric can be a
way to alleviate the burden of evaluating the context distance
between terms but, in particular in presence of pairs, its
usefulness can be limited. In this section, we will look at
another way to define the distance between programs which is
genuinely coinductive, and based on the Kantorovich metric
for distributions.
\subsection{Definition}
A labelled Markov chain (LMC) is a triple $\markovone = (\setone,
\labelsone, \probmatrone)$, where $\setone$ is a countable set of
states, $\labelsone$ is a countable set of labels, and $\probmatrone$
is a transition probability matrix, that is a function: $\probmatrone:
\setone \times \labelsone \rightarrow \distrs \setone$. Moreover, if
the image of $\probmatrone$ only consists of distributions with
\emph{finite} support, we call $\markovone$ an \emph{image-finite}
LMC. We are now going to define, in a similar way to
\cite{DesharnaisLICS02} (but in absence of non-determinism), the
metric analog to bisimulation.
The idea is to define a metric on the set $\setone$ of states of the
LMC as the greatest fixed point of some monotone operator on
metrics. Please recall that $(\metrs
\setone,\leqmetr{}{})$ is a complete lattice, and so any monotone
operator has indeed a greatest fixed point. 
\subsubsection*{Lifting Metrics to Distributions}
We are going to define a way to turn any premetric over a set
$\setone$ into a metric over finite distribution over $\setone$.
\condskip
\begin{definition}
Let $\metrone$ be a premetric on a set $\setone$. We define the
\emph{lifting of $\metrone$} as the metric on the set of finite
distributions over $\setone$ defined by: for every $\distrone$,
$\distrtwo$ finite distributions over $\setone$,
$\appl{\metrone}{\distrone}{\distrtwo}$ is the optimum solution to the
following linear program:

{\footnotesize
$$
\min\sum_{i, j}\coeffdtwo_{i, j} \cdot \appl{\metrone}{\stateone_i}{\stateone_j}  + \sum_{i}\coeffdxtwo_i + \sum_{j} \coeffdytwo_j 
$$
\vspace{-10pt}
\begin{align*}
\mbox{subject to}\qquad&\mbox{$\sum_{i}\coeffdtwo_{i, j} + \coeffdytwo_j= \distrtwo({\stateone_j});$}\\
&\mbox{$\sum_{j}\coeffdtwo_{i, j} + \coeffdxtwo_i= \distrone({\stateone_i});$}\\
&\forall i, j, \coeffdtwo_{i,j}, \, \coeffdytwo_j, \, \coeffdxtwo_i \geq 0.
\end{align*}}
\end{definition}
\condskip
Please observe that this linear program has an optimal solution. We
can make use of the notion of duality from linear programming, and
obtain an alternative characterisation of lifting:
\condskip
\begin{theorem}\label{theo:duality}
Let $\metrone$ be a premetric on $\setone$ and Let $\distrone$,
$\distrtwo$ be finite distributions over $\setone$. Then:

{\footnotesize
$$
\appl{\metrone}{\distrone}{\distrtwo}=\max \sum_{\stateone}\coeffone_\stateone \distrone(\stateone)+ \coefftwo_\stateone \distrtwo(\stateone) 
$$
\vspace{-10pt}
\begin{align*}
\mbox{subject to}\qquad&\forall \stateone \in \setone,\, \coeffone_\stateone \leq 1;\\
 &\forall \stateone\in \setone, \, \coefftwo_\stateone \leq 1;\\
 &\forall \stateone,\, \statetwo \in \setone, \, \coeffone_\stateone + \coefftwo_\statetwo \leq \appl{\metrone}{\stateone}{\statetwo}.
\end{align*}}
\end{theorem}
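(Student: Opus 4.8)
The plan is to recognise the minimisation problem defining $\appl\metrone\distrone\distrtwo$ as a linear program, to dualise it mechanically, and to read off the statement as an instance of LP strong duality. Since $\distrone$ and $\distrtwo$ are finitely supported, the program is a genuine finite-dimensional linear program: its variables are the $\coeffdtwo_{i,j}\geq 0$, for $\stateone_i\in\supp\distrone$ and $\stateone_j\in\supp\distrtwo$, together with the $\coeffdxtwo_i\geq 0$ and the $\coeffdytwo_j\geq 0$, subject to the equalities $\sum_j\coeffdtwo_{i,j}+\coeffdxtwo_i=\distrone(\stateone_i)$ (one per $i$) and $\sum_i\coeffdtwo_{i,j}+\coeffdytwo_j=\distrtwo(\stateone_j)$ (one per $j$). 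Its feasible region is a nonempty bounded polytope, so the minimum is attained and $\appl\metrone\distrone\distrtwo$ is well defined.

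First I would compute the dual. Assign a free multiplier $\coeffone_i$ to the $i$-th equality and a free multiplier $\coefftwo_j$ to the $j$-th one. Each primal variable produces one dual inequality: the variable $\coeffdtwo_{i,j}$, with objective coefficient $\appl\metrone{\stateone_i}{\stateone_j}$ and occurring with coefficient $1$ in exactly the $i$-th and the $j$-th equalities, yields $\coeffone_i+\coefftwo_j\leq\appl\metrone{\stateone_i}{\stateone_j}$; the variable $\coeffdxtwo_i$, with objective coefficient $1$, yields $\coeffone_i\leq 1$; and symmetrically $\coeffdytwo_j$ yields $\coefftwo_j\leq 1$. The dual objective is $\max\,\sum_i\coeffone_i\distrone(\stateone_i)+\sum_j\coefftwo_j\distrtwo(\stateone_j)$. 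This is the program in the statement, up to re-indexing the multipliers by states: a state outside $\supp\distrone$ (resp.\ $\supp\distrtwo$) contributes $0$ to the $\coeffone$-sum (resp.\ $\coefftwo$-sum), and a dual-optimal solution over the supports extends to a feasible solution of the state-indexed program by setting all the extra coefficients to $-1$ --- admissible because $\metrone$ takes values in $[0,1]$ and the support coefficients are $\leq 1$ --- while conversely a state-indexed feasible point restricts to a support-indexed one with the same value, so the two optima coincide.

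It then remains to invoke strong duality for the finite LP. The primal is feasible: $\coeffdtwo_{i,j}=0$, $\coeffdxtwo_i=\distrone(\stateone_i)$, $\coeffdytwo_j=\distrtwo(\stateone_j)$ satisfies every constraint, with objective value $\sumdistr\distrone+\sumdistr\distrtwo$; as the objective is everywhere nonnegative, the primal optimum is finite. By the LP duality theorem the dual is feasible --- indeed $\coeffone=\coefftwo=0$ is feasible since $\metrone\geq 0$ --- its optimum is attained, and the two optimal values are equal, which is exactly the claimed identity.

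The whole argument is mechanical; the only points that need a little care are checking that the program really is finite-dimensional (so that ``optimum'' and the appeal to strong duality are legitimate) and the index bookkeeping relating the support-indexed primal to the state-indexed dual of the statement --- in particular that the dual constraints for states outside the supports can always be met. Neither is a genuine obstacle, so I do not expect a hard step here beyond tracking which dual constraint arises from which primal variable.
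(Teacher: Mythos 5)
Your proposal is correct and follows exactly the route the paper takes: the paper's entire proof is the one-line appeal to LP strong duality, and you have simply filled in the mechanical dualisation, the feasibility/boundedness checks, and the (valid) bookkeeping extending support-indexed dual solutions to state-indexed ones via the value $-1$. Nothing to object to.
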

\begin{proof}
By strong duality theorem in linear programming.
\end{proof}
\condskip

We would like to have the lifting of a metric $\metrone$ behaving
coherently with $\metrone$ itself. If we know the lifting of
$\metrone$, we should first of all be able to recover $\metrone$
 by considering Dirac distributions:
\condskip
\begin{lemma}
\label{distdirac}
Let $\metrone$ be a premetric on $\setone$, and let $\stateone,\statetwo \in \setone$. Then $\appl{\metrone}{\dirac \stateone}{\dirac \statetwo} = \appl{\metrone}{\stateone}{\statetwo}$. 
\end{lemma}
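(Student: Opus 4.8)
The plan is to unfold the definition of the lifting as the (primal) linear program from the definition above and compute its optimum directly in the special case $\distrone\defi\dirac{\stateone}$, $\distrtwo\defi\dirac{\statetwo}$. The first step is to exploit the non-negativity of all the variables to eliminate almost all of them. Since $\dirac{\statetwo}$ assigns $0$ to every state other than $\statetwo$, each column constraint $\sum_i\coeffdtwo_{i,j}+\coeffdytwo_j=\dirac{\statetwo}(\stateone_j)=0$ (for $\stateone_j\ne\statetwo$) forces $\coeffdtwo_{i,j}=0$ for all $i$ and $\coeffdytwo_j=0$; symmetrically, $\dirac{\stateone}$ forces $\coeffdtwo_{i,j}=0$ and $\coeffdxtwo_i=0$ whenever $\stateone_i\ne\stateone$. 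Hence in any feasible solution the only possibly non-zero variables are $\coeffdtwo_{\stateone,\statetwo}$ (the mass transported directly from $\stateone$ to $\statetwo$), $\coeffdxtwo_{\stateone}$ and $\coeffdytwo_{\statetwo}$, and the two surviving equalities read $\coeffdtwo_{\stateone,\statetwo}+\coeffdytwo_{\statetwo}=1$ and $\coeffdtwo_{\stateone,\statetwo}+\coeffdxtwo_{\stateone}=1$.

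The second step is then a trivial one-variable optimisation. Writing $m\defi\coeffdtwo_{\stateone,\statetwo}\in[0,1]$, we get $\coeffdxtwo_{\stateone}=\coeffdytwo_{\statetwo}=1-m$, so the objective equals
\[
m\cdot\appl{\metrone}{\stateone}{\statetwo}+2(1-m)=\appl{\metrone}{\stateone}{\statetwo}+(1-m)\bigl(2-\appl{\metrone}{\stateone}{\statetwo}\bigr).
\]
Because $\metrone$ is a premetric we have $0\le\appl{\metrone}{\stateone}{\statetwo}\le 1<2$, so the last summand is non-negative and vanishes exactly at $m=1$; the minimum of the program is therefore $\appl{\metrone}{\stateone}{\statetwo}$, witnessed by the transport plan that moves the whole unit of mass directly from $\stateone$ to $\statetwo$. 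The computation is uniform in whether $\stateone=\statetwo$ or not: when $\stateone=\statetwo$ it specialises to $\appl{\metrone}{\stateone}{\stateone}=0$, as it must.

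I do not expect any genuine obstacle here; the only point to be vigilant about is that $\metrone$ is merely a \emph{premetric}, so one may invoke neither symmetry nor the triangle inequality — the argument above uses solely $0\le\appl{\metrone}{\cdot}{\cdot}\le 1$ and $\appl{\metrone}{\stateone}{\stateone}=0$. If one prefers to go through duality, Theorem~\ref{theo:duality} gives an equally short alternative: the single dual constraint $\coeffone_{\stateone}+\coefftwo_{\statetwo}\le\appl{\metrone}{\stateone}{\statetwo}$ already bounds the dual objective, which for Dirac distributions equals $\coeffone_{\stateone}+\coefftwo_{\statetwo}$, from above by $\appl{\metrone}{\stateone}{\statetwo}$, and a matching feasible point is $\coeffone_{\stateone}=\appl{\metrone}{\stateone}{\statetwo}$, $\coefftwo_{\stateone'}=\appl{\metrone}{\stateone}{\stateone'}-\appl{\metrone}{\stateone}{\statetwo}$ for every $\stateone'$, and $\coeffone_{\stateone'}=-1$ for $\stateone'\ne\stateone$, where all constraints follow again from $0\le\metrone\le 1$.
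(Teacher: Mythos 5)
Your proof is correct, but it goes through the \emph{primal} (minimisation) formulation of the lifting, whereas the paper's own proof of Lemma~\ref{distdirac} goes through the \emph{dual} (maximisation) characterisation of Theorem~\ref{theo:duality}: there the objective collapses to $\coeffone_{\stateone}+\coefftwo_{\statetwo}$ and the single constraint $\coeffone_{\stateone}+\coefftwo_{\statetwo}\leq\appl{\metrone}{\stateone}{\statetwo}$ is read off directly. Your primal route buys two things: it is self-contained (no appeal to strong duality is needed for this lemma), and it exhibits the optimal transport plan explicitly, namely moving all the mass along the single edge from $\stateone$ to $\statetwo$; the only facts used are non-negativity of the variables and $\appl{\metrone}{\cdot}{\cdot}\leq 1<2$, exactly as you note. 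The paper's dual route is shorter once Theorem~\ref{theo:duality} is in place, but as written it is slightly elliptic: attaining the maximum requires choosing values of $\coeffone_{\statethree}$ and $\coefftwo_{\statethree}$ for \emph{all} states $\statethree$ satisfying every cross constraint $\coeffone_{\statethree_1}+\coefftwo_{\statethree_2}\leq\appl{\metrone}{\statethree_1}{\statethree_2}$, not just the one involving $\stateone$ and $\statetwo$; your closing remark supplies precisely such a feasible point (using that the dual variables are not bounded below), so your alternative dual argument is in fact a more complete version of the paper's proof. Both arguments correctly avoid symmetry and the triangle inequality, which is essential since $\metrone$ is only a premetric here.
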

\begin{proof}
Let be $\stateone, \statetwo \in \setone$, and 
let be $\distrone = \dirac \stateone$, $\distrtwo = \dirac \statetwo$. Then we can see that:
\begin{align*}
\appl{\metrone}{\distrone}{\distrtwo} &=\max\left\{
\begin{array}{l}
 \sum_{\statethree}\coeffone_\statethree \cdot\distrone(\statethree)+ \coefftwo_\statethree\cdot \distrtwo(\statethree) \\ \text{ with } \forall \statethree \in \setone,  \coeffone_\statethree \leq 1 \wedge \coefftwo_\statethree \leq 1 \\ \text{ and } \forall \statethree_1, \statethree_2 \in \setone, \,  \coeffone_{\statethree_1} + \coefftwo_{\statethree_2} \leq \appl{\metrone}{\statethree_1}{\statethree_2}
\end{array}
\right\}\\
& = \max \left\{
\begin{array}{l} 
  \coeffone_\stateone + \coefftwo_{\statetwo} \\ 
\text{with }  \coeffone_\stateone \leq 1 \wedge \coefftwo_\statetwo \leq 1
\\ \text{ and } \coeffone_{\stateone} + \coefftwo_{\statetwo} \leq \appl{\metrone}{\stateone}{\statetwo}  
\end{array}
\right\}\\
& = \appl \metrone \stateone \statetwo
\end{align*}
\end{proof}
\condskip
If a premetric on states verifies the triangular inequality, its
lifting verifies the triangular inequality too, which is a consequence
of the following lemma:
\condskip
\begin{lemma}\label{lifting}
Let $\metrone$, $\metrtwo$, $\metrthree$ be three premetrics on
$\setone$, such that $\forall \stateone, \statetwo, \statethree \in
\setone,\, \appl\metrone\stateone\statetwo \leq
\appl\metrtwo\stateone\statethree +
\appl\metrthree\statethree\statetwo $.  Let $\distrone$,
$\distrtwo$, $\distrthree$ be finite distributions over $\setone$. Then $
\appl{\metrone}{\distrone}{\distrthree} \leq
\appl{\metrtwo}{\distrone}{\distrtwo} +
\appl{\metrthree}{\distrtwo}{\distrthree}$.
\end{lemma}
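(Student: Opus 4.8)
The plan is to work with the \emph{primal} (minimisation) form of the lifting. A feasible point of the linear program computing $\appl\metrtwo\distrone\distrtwo$ is a transportation plan with slack: nonnegative numbers $\coeffdtwo^{1}_{i,k}$ (mass moved from $\stateone_i$ to $\stateone_k$), $\coeffdxtwo^{1}_i$ (part of $\distrone(\stateone_i)$ left unmatched) and $\coeffdytwo^{1}_k$ (part of $\distrtwo(\stateone_k)$ left unmatched) satisfying the two marginal identities, with cost $\sum_{i,k}\coeffdtwo^{1}_{i,k}\appl\metrtwo{\stateone_i}{\stateone_k}+\sum_i\coeffdxtwo^{1}_i+\sum_k\coeffdytwo^{1}_k$; since $\distrone,\distrtwo,\distrthree$ are finite, only finitely many bins carry mass and the program attains its optimum. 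First I would fix an optimal solution $(\coeffdtwo^{1},\coeffdxtwo^{1},\coeffdytwo^{1})$ witnessing $\appl\metrtwo\distrone\distrtwo$ and an optimal solution $(\coeffdtwo^{2},\coeffdxtwo^{2},\coeffdytwo^{2})$ witnessing $\appl\metrthree\distrtwo\distrthree$, where $\coeffdtwo^{2}_{k,j}$ is the mass moved from $\stateone_k$ to $\stateone_j$, $\coeffdxtwo^{2}_k$ the part of $\distrtwo(\stateone_k)$ left unmatched, and $\coeffdytwo^{2}_j$ the part of $\distrthree(\stateone_j)$ left unmatched.

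The heart of the argument is a gluing step carried out separately at each state $\stateone_k$. For fixed $k$, the mass $\distrtwo(\stateone_k)$ is decomposed ``towards $\distrone$'' as $\sum_i\coeffdtwo^{1}_{i,k}+\coeffdytwo^{1}_k$ and ``towards $\distrthree$'' as $\sum_j\coeffdtwo^{2}_{k,j}+\coeffdxtwo^{2}_k$, and both sums equal $\distrtwo(\stateone_k)$. I would therefore pick any feasible transport plan $g^{k}$ between the source bins $\{\coeffdtwo^{1}_{i,k}\}_i$ plus one extra bin of size $\coeffdytwo^{1}_k$ labelled $\bot$, and the sink bins $\{\coeffdtwo^{2}_{k,j}\}_j$ plus one extra bin of size $\coeffdxtwo^{2}_k$ labelled $\bot$ --- such a plan exists because the two sides have equal total mass and all data is finite (e.g. the normalised product plan, trivially empty when $\distrtwo(\stateone_k)=0$). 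Setting $\coeffdtwo_{i,j}:=\sum_k g^{k}(i\to j)$, $\coeffdxtwo_i:=\coeffdxtwo^{1}_i+\sum_k g^{k}(i\to\bot)$, $\coeffdytwo_j:=\coeffdytwo^{2}_j+\sum_k g^{k}(\bot\to j)$, and discarding the $\bot\to\bot$ flow, produces a candidate solution for the program computing $\appl\metrone\distrone\distrthree$.

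Then I would check feasibility: summing the marginal identities of $g^k$ over $k$ gives $\sum_j\coeffdtwo_{i,j}+\coeffdxtwo_i=\sum_k\coeffdtwo^{1}_{i,k}+\coeffdxtwo^{1}_i=\distrone(\stateone_i)$ and $\sum_i\coeffdtwo_{i,j}+\coeffdytwo_j=\sum_k\coeffdtwo^{2}_{k,j}+\coeffdytwo^{2}_j=\distrthree(\stateone_j)$. For the cost, the hypothesis $\appl\metrone{\stateone_i}{\stateone_j}\le\appl\metrtwo{\stateone_i}{\stateone_k}+\appl\metrthree{\stateone_k}{\stateone_j}$, applied inside each $g^k$ together with the fact that the flow leaving a non-$\bot$ source bin is at most its size (and dually for sinks), bounds $\sum_{i,j}\coeffdtwo_{i,j}\appl\metrone{\stateone_i}{\stateone_j}$ by $\sum_{i,k}\coeffdtwo^{1}_{i,k}\appl\metrtwo{\stateone_i}{\stateone_k}+\sum_{k,j}\coeffdtwo^{2}_{k,j}\appl\metrthree{\stateone_k}{\stateone_j}$; and since the $\bot$-directed flow at level $k$ cannot exceed $\coeffdxtwo^{2}_k$ resp.\ $\coeffdytwo^{1}_k$, the slack terms obey $\sum_i\coeffdxtwo_i+\sum_j\coeffdytwo_j\le(\sum_i\coeffdxtwo^{1}_i+\sum_k\coeffdytwo^{1}_k)+(\sum_k\coeffdxtwo^{2}_k+\sum_j\coeffdytwo^{2}_j)$. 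Adding, the candidate has cost at most $\appl\metrtwo\distrone\distrtwo+\appl\metrthree\distrtwo\distrthree$, and since $\appl\metrone\distrone\distrthree$ is the minimum over all feasible solutions, the claimed inequality follows.

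The main obstacle is just the bookkeeping of the gluing: one must keep track of four kinds of mass sitting at each state $\stateone_k$ --- matched to both $\distrone$ and $\distrthree$, matched only to $\distrone$, matched only to $\distrthree$, matched to neither --- and route them through the auxiliary per-state transportation problems without breaking either global marginal constraint. An alternative is to use the dual characterisation of Theorem~\ref{theo:duality}, splitting an optimal dual pair for $\appl\metrone\distrone\distrthree$ through $\distrtwo$ by setting $\coeffone'_k:=\inf_i(\appl\metrtwo{\stateone_i}{\stateone_k}-\coeffone_{\stateone_i})$ and dually on the $\distrthree$ side; but this requires a truncation to keep $\coeffone'_k\le 1$ and is less transparent, so I would present the primal gluing.
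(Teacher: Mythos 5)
Your proof is correct and follows essentially the same route as the paper's: both glue the two optimal primal transportation plans through the intermediate distribution $\distrtwo$, route the slack mass accordingly, and bound the cost of the composite plan via the pointwise hypothesis relating $\metrone$, $\metrtwo$ and $\metrthree$. The paper simply instantiates your per-state coupling $g^{k}$ with the normalised product plan $\coeffdone_{i,j}\coeffdtwo_{j,k}/\distrtwo(\stateone_j)$ --- exactly the example you mention --- so the two arguments coincide up to the level of abstraction at which the gluing is described.
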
 
\begin{proof}
Let be $\distrone, \distrtwo, \distrthree$ finite distributions over $\setone$.
We're going to use the minimum-based definition of lifting:
Let be $\epsone$, $\epstwo$ such that:
$\appl{\metrtwo}{\distrone}{\distrtwo} = \epsone$ and
$\appl{\metrthree}{\distrtwo}{\distrthree} = \epstwo$.
By assumption, there is a finite number of states which appear in the union of the support of every considered distributions. We numerate these states between $1$ and $n$.\\
Then let be $(\coeffdone_{i,j})_{1 \leq i ,j \leq n}$, $(\coeffdxone_i)_{1 \leq i \leq n}$, $(\coeffdyone_j)_{1 \leq j \leq n}$ the coefficients for which the minimum of the optimisation problem associated with $\appl{\metrtwo}{\distrone}{\distrtwo}$ is reached. They verify the following equations:
$$\begin{cases}
\epsone = \sum\nolimits_{i, j}\coeffdone_{i, j} \cdot \appl{\metrtwo}{\stateone_i}{\stateone_j}  + \sum\nolimits_{i}\coeffdxone_i + \sum\nolimits_{j} \coeffdyone_j \\
\forall  j ,\, \sum\nolimits_{i}\coeffdone_{i, j} + \coeffdyone_j = \distrtwo({\stateone_j})\\
\forall  i, \, \sum\nolimits_{j}\coeffdone_{i, j} + \coeffdxone_i = \distrone({\stateone_i})\\
\forall  i, j \,:  \, \coeffdone_{i,j}, \coeffdxone_i, \coeffdyone_j  \geq 0
\end{cases}$$

Similarly, let be $(\coeffdtwo_{i,j})_{1 \leq i ,j \leq n}$, $(\coeffdxtwo_i)_{1 \leq i \leq n}$, $(\coeffdytwo_j)_{1 \leq j \leq n}$ the coefficients that reach the minimum for the optimisation problem associated to $\appl{\metrthree}{\distrtwo}{\distrthree}$. They verify the following equations: 
$$
\begin{cases}
&\epstwo = \sum\nolimits_{j, k}\coeffdtwo_{j, k} \cdot \appl{\metrthree}{\stateone_j}{\stateone_k}  + \sum\nolimits_{j}\coeffdxtwo_j + \sum\nolimits_{k} \coeffdytwo_k \\
&\forall  k , \, \sum\nolimits_{j}\coeffdtwo_{j, k} + \coeffdytwo_k = \distrthree({\stateone_k}) \\
&\forall  j , \, \sum\nolimits_{k}\coeffdtwo_{j, k} + \coeffdxtwo_j = \distrtwo({\stateone_j})\\
&\forall  j, k \,: \, \coeffdtwo_{j,k}, \coeffdxtwo_j, \coeffdytwo_k  \geq 0
\end{cases}
$$

We want to show that $\appl{\metrone}{\distrone}{\distrthree} \leq \epsone + \epstwo$. In order to do that, we would like to have coefficients 
 $\coeffdthree_{i,k}$, $\coeffdxthree_i$, $\coeffdythree_k$ which verifies the constraints of the optimisation problem associated with $\appl \metrone \distrone \distrthree$, and such that the objective function is bounded by $\epsone + \epstwo$. That is, we would like to have:
$$\left\{\begin{minipage}{0.45 \textwidth} 
\begin{align}
 &\sum\nolimits_{i}\coeffdthree_{i, k} + \coeffdythree_k = \distrthree({\stateone_k}) \label{eq:eqn1}\\
&\sum\nolimits_{k}\coeffdthree_{i, k} + \coeffdxthree_i = \distrone({\stateone_i}) \label{eq:eqn2}\\
&\sum\nolimits_{i, k}\coeffdthree_{i, k} \cdot \appl{\metrone}{\stateone_i}{\stateone_k}  + \sum_{i}\coeffdxthree_i + \sum_{k} \coeffdythree_k  \leq \epsone + \epstwo \label{eq:eqn3}
\end{align}
\end{minipage}
\right.
$$
In order to achieve that, we define the $\coeffdthree_{i,k}$, $\coeffdxthree_i$, $\coeffdythree_k$ on the following way: 
\begin{align*}
\coeffdthree_{i,k} &=  \sum_{j}\frac{\coeffdone_{i,j}\cdot\coeffdtwo{j,k}}{\distrtwo(\stateone_j)}\\
\coeffdxthree_i & = \coeffdxone_i + \sum_{j}\frac{\coeffdone_{i,j}\cdot\coeffdxtwo{j}}{\distrtwo(\stateone_j)}\\
\coeffdythree_k & = \coeffdytwo_k + \sum_{j}\frac{\coeffdtwo_{j,k}\cdot\coeffdytwo{j}}{\distrtwo(\stateone_j)}
\end{align*}
where we have adopted the following notation:  if $\distrtwo(\stateone_j) = 0$, then $\coeffdtwo_{j,k} = 0 = \coeffdone_{i,j}$, and then the meaning of $\frac{\coeffdone_{i,j}}{\distrtwo(\stateone_j)}$ is 0.\\
Now we are going to show that this choice of coefficients gives us what we wanted to have. We first verify that equation \eqref{eq:eqn2} holds. Indeed, we have that:
\begin{align*}
\sum_{k}\coeffdthree_{i,k} + \coeffdxthree_i & =
\sum_k\sum_{j}\frac{\coeffdone_{i,j}\cdot\coeffdtwo{j,k}}{\distrtwo(\stateone_j)} + (\coeffdxone_i + \sum_{j}\frac{\coeffdone_{i,j}\cdot\coeffdxtwo_{j}}{\distrtwo(\stateone_j)})\\
& = \coeffdxone_i + \sum_j\coeffdone_{i,j}\frac{\sum_k
\coeffdtwo_{j,k} + \coeffdxtwo_{j}}{\distrtwo(\stateone_j)}\\
& = \coeffdxone_i + \sum_j\coeffdone_{i,j}\frac{\distrtwo(\stateone_j)}{\distrtwo(\stateone_j)}
 = \distrone(\stateone_i)
\end{align*}

We can verify in a very similar way that equation \eqref{eq:eqn1} holds, that is: $\sum_{i}\coeffdthree_{i,k} + \coeffdythree_k  = \distrthree(\stateone_k)$.
We are now going to verify that equation $\eqref{eq:eqn3}$ holds. Indeed, we have that:
\begin{align*}
\sum_{i,k}& \coeffdthree_{i,k} \appl\metrone{\stateone_i} {\stateone_k} + \sum_i\coeffdxthree_i  + \sum_k \coeffdythree_k \\
& = \left[
\begin{array}{l}
 \sum_{i,k}\sum_{j}\frac{\coeffdone_{i,j}\cdot\coeffdtwo{j,k}}{\distrtwo(\stateone_j)} \cdot \appl\metrone{\stateone_i} {\stateone_k}\\ +
\sum_i (\coeffdxone_i + \sum_{j}\frac{\coeffdone_{i,j}\cdot\coeffdxtwo{j}}{\distrtwo(\stateone_j)}) \\+  \sum_k{(\coeffdytwo_k + \sum_{j}\frac{\coeffdtwo_{j,k}\cdot\coeffdytwo{j}}{\distrtwo(\stateone_j)})}
\end{array}\right.\\
& \leq \left[
\begin{array}{l}
\sum_{i,k}\sum_{j}\left(\frac{\coeffdone_{i,j}\cdot\coeffdtwo{j,k}}{\distrtwo(\stateone_j)} \cdot \appl\metrone{\stateone_i} {\stateone_k}\right)\\ + \sum_i(\coeffdxone_i + \sum_{j}\frac{\coeffdone_{i,j}\cdot\coeffdxtwo{j}}{\distrtwo(\stateone_j)})\\ + \sum_k(\coeffdytwo_k + \sum_{j}\frac{\coeffdtwo_{j,k}\cdot\coeffdytwo{j}}{\distrtwo(\stateone_j)})
\end{array}
\right.\\
& \leq \left[ 
\begin{array}{l}
\sum_{i,k}\sum_{j}\left(\frac{\coeffdone_{i,j}\cdot\coeffdtwo{j,k}}{\distrtwo(\stateone_j)}\cdot (\appl\metrtwo{\stateone_i}{\stateone_j}+ \appl \metrthree{\stateone_i} {\stateone_k})\right) \\
+ \sum_i(\coeffdxone_i + \sum_{j}\frac{\coeffdone_{i,j}\cdot\coeffdxtwo{j}}{\distrtwo(\stateone_j)}) \\ 
+ \sum_k( \coeffdytwo_k + \sum_{j}\frac{\coeffdtwo_{j,k}\cdot\coeffdytwo{j}}{\distrtwo(\stateone_j)})
\end{array}\right. \\
& \leq 
\left[ \begin{array}{l}
\sum_{i,j}\coeffdone_{i,j}\cdot\appl{\metrtwo}{\stateone_i}{\stateone_j}\left( \frac{\sum_k\coeffdtwo{j,k}}{\distrtwo(\stateone_j)}
 \right)\\ +
  \sum_{j,k}\coeffdtwo{j,k}\cdot\appl{\metrthree}{\stateone_j}{\stateone_k}\left( \frac{\sum_i\coeffdone_{i,j}}{\distrtwo(\stateone_j)}
 \right) \\
  + \sum_i \coeffdxone_i  + \sum_{j}\coeffdxtwo_j \left(  \frac{\sum_i\coeffdone_{i,j}}{\distrtwo(\stateone_j)}\right) \\ + \sum_k \coeffdytwo_k + \sum_j\coeffdytwo{j} \cdot\left(\frac{\sum_k\coeffdtwo_{j,k}}{\distrtwo(\stateone_j)}\right)
\end{array}\right.\\
  & \leq \left[
\begin{array}{l}
\sum_{i,j}\coeffdone_{i,j}\cdot\appl{\metrtwo}{\stateone_i}{\stateone_j}\\+
  \sum_{j,k}\coeffdtwo{j,k}\cdot\appl{\metrthree}{\stateone_j}{\stateone_k} \\
  + \sum_i \coeffdxone_i  + \sum_{j}\coeffdxtwo_j  + \sum_k \coeffdytwo_k + \sum_j\coeffdytwo{j}  
\end{array}\right.
\\& \leq \epsone + \epstwo
\end{align*}

\end{proof}
\condskip
\subsubsection*{Metrics as Fixpoints}
In a non-probabilistic setting, a relation $\relone$ is a
bisimulation if every pair of states $\stateone,\statetwo$ such that
$\relate \relone\stateone \statetwo$ can do the same actions and
end up into states which are still bisimilar.
More precisely,for every
action $\actone$, and for every state $\statethree$ such that $\doact
\stateone \actone \statethree$, there exists $\statefour$ such
that $\doact \statetwo \actone \statefour$, and $\relate \relone
\statethree \statefour$.

In order to obtain a quantitative counterpart
of the scheme above, we define an operator $\functionnal$ on the set
of metrics over the states of a LMC: intuitively, given
a metric $\metrone$, we define a new metric $\functionnal (\metrone)$
which corresponds to the distance obtained by first doing a step of
the transition relation, and then applying the lifting of $\metrone$
to the resulting distributions. 
More precisely, let be two states
$\stateone$ and $\statetwo$: $\appl {\functionnal
  (\metrone)}{\stateone}{\statetwo}$ is computed in the following way:
for every action $\actone$, we consider the distance (with respect to
$\metrone$) between the behaviour obtained from $\stateone$ after
doing the action $\actone$, and the behaviour obtained from
$\statetwo$ after doing the same action $\actone$, and then we take
the maximum over all action $\actone$ of those quantity.
\condskip
\begin{definition}
Let $\markovone = (\setone, \labelsone, \probmatrone)$ be an image-finite LMC.
We define an operator $\functionnal$ on $\metrs \setone$ as 
\begin{align*}\appl{\functionnal(\metrone)}{\stateone}{ \statetwo} = & \sup \{\appl \metrone  {\probmatrone(\stateone)(\actone)}{ \probmatrone(\statetwo)(\actone)} \mid \actone \in \labelsone \}.
\end{align*}
\end{definition}
\condskip
\begin{theorem}
For any image-finite LMC $\markovone$, $\functionnal$ has a
maximum fixpoint. We call it \emph{the bisimulation metric} for the
LMC $\markovone$, and we note it $\bisim_\markovone$
\end{theorem}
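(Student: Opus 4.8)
The plan is to exhibit $\bisim_\markovone$ as the greatest fixpoint of $\functionnal$ by appealing to the Knaster--Tarski theorem. By Lemma~\ref{lattice}, $(\metrs\setone,\leqmetr{}{})$ is a complete lattice, so it suffices to prove two things: that $\functionnal$ is a well-defined endofunction on $\metrs\setone$, and that it is monotone with respect to $\leqmetr{}{}$.

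For the first, I would fix $\metrone\in\metrs\setone$ and check that $\functionnal(\metrone)$ is again a pseudometric. Since $\markovone$ is image-finite, each $\probmatrone(\stateone)(\actone)$ has finite support, so the lifting $\appl\metrone{\probmatrone(\stateone)(\actone)}{\probmatrone(\statetwo)(\actone)}$ is defined; inspecting the primal program (match as much mass of the two distributions as possible and push the remainder into the slack variables $\coeffdxtwo_i,\coeffdytwo_j$) shows that it always lies in $[0,1]$, hence so does $\appl{\functionnal(\metrone)}\stateone\statetwo$, being a supremum over $\actone\in\labelsone$ of such numbers. Taking the primal variables concentrated on the diagonal gives $\appl\metrone\distrone\distrone=0$ for every finite $\distrone$, so $\functionnal(\metrone)$ vanishes on the diagonal; symmetry of $\functionnal(\metrone)$ reduces to symmetry of the lifting, which follows from symmetry of $\metrone$ by exchanging the two distributions in the program. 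The only substantial axiom is the triangle inequality, and this is exactly what Lemma~\ref{lifting} provides: instantiating it with $\metrtwo=\metrthree=\metrone$ (allowed because $\metrone$ satisfies the triangle inequality) yields, for every action $\actone$,
$$\appl\metrone{\probmatrone(\stateone)(\actone)}{\probmatrone(\statethree)(\actone)}\le\appl\metrone{\probmatrone(\stateone)(\actone)}{\probmatrone(\statetwo)(\actone)}+\appl\metrone{\probmatrone(\statetwo)(\actone)}{\probmatrone(\statethree)(\actone)},$$
and taking suprema over $\actone$ (exactly on the left, with over-approximation on the right) gives $\appl{\functionnal(\metrone)}\stateone\statethree\le\appl{\functionnal(\metrone)}\stateone\statetwo+\appl{\functionnal(\metrone)}\statetwo\statethree$.

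For monotonicity, I would first note that the lifting is monotone in its metric argument: the feasible region of the primal program is independent of the metric and the primal variables are nonnegative, so if $\metrtwo\le\metrone$ pointwise then $\appl\metrtwo\distrone\distrtwo\le\appl\metrone\distrone\distrtwo$ for all finite $\distrone,\distrtwo$. Hence, if $\leqmetr\metrone\metrtwo$ (i.e.\ $\metrtwo$ lies pointwise below $\metrone$), then for every action $\appl\metrtwo{\probmatrone(\stateone)(\actone)}{\probmatrone(\statetwo)(\actone)}\le\appl\metrone{\probmatrone(\stateone)(\actone)}{\probmatrone(\statetwo)(\actone)}$, and taking suprema over $\actone$ gives $\leqmetr{\functionnal(\metrone)}{\functionnal(\metrtwo)}$. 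Knaster--Tarski then applies, producing the greatest fixpoint $\bisim_\markovone=\bigvee\{\metrone\in\metrs\setone\mid\leqmetr\metrone{\functionnal(\metrone)}\}$, with $\bigvee$ the join in $(\metrs\setone,\leqmetr{}{})$.

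I expect the only real content to be the verification that $\functionnal(\metrone)$ is a pseudometric, and within that the triangle inequality --- which is precisely what Lemma~\ref{lifting} was established to provide. The remaining subtlety is orientation: the maximum fixpoint is taken in $\leqmetr{}{}$, the reverse of the pointwise order, so that $\bisim_\markovone$ is the fixpoint identifying as many states as possible; monotonicity of both $\functionnal$ and the lifting must be phrased with respect to that order.
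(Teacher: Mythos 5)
Your proposal is correct and follows exactly the route the paper intends: it invokes Lemma~\ref{lattice} (completeness of $(\metrs\setone,\leqmetr{}{})$) together with monotonicity of $\functionnal$ and Knaster--Tarski, which is precisely the argument the paper sketches just before defining $\functionnal$ (the paper itself leaves the verification that $\functionnal$ preserves the pseudometric axioms and is monotone implicit). Your filling-in of those details --- in particular deriving the triangle inequality for $\functionnal(\metrone)$ from Lemma~\ref{lifting} with $\metrtwo=\metrthree=\metrone$, and getting monotonicity of the lifting from nonnegativity of the primal variables --- is sound and consistent with the orientation of $\leqmetr{}{}$.
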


\subsubsection*{Bisimulation Metric and the Affine $\lambda$-Calculus}
We are now going to consider a specific LMC $\markovterm$,
which captures the interactive behaviour of our calculus.
\condskip
\begin{definition}
We define the LMC $\markovterm = (\setterm ,\labelsterm
,\probmatrterm )$ where:
\begin{varitemize}
\item 
  The set of states $\setterm$ is defined as follows:
  $$
  \setterm= \programs\uplus\valset,
  $$ 
  A value $\valone$ in the second component of $\setterm$ is
  distinguished from one in the first by using the notation
  $\hat\valone$.
\item 
  The set of labels $\labelsterm$ is taken to be
  $$
  \labelsterm = \{ \app \valone \mid \valone \text{a value} \} \bigcup \{\eval\}.
  $$
\item 
  The transition probability matrix $\probmatrterm$ is such that:
  for every  
  $\termone \in \programs$, and any value 
  $\valone\in\supp{\sem{\termone}}$, it holds that
  $\probmatrterm(\termone,\eval)(\dval \valone)=\sem{\termone}(\valone)$, and that for every term $\termone$ such that $\abstr \varone \termone \in \programs$, and $\valone \in \valset$, it holds that $\probmatrterm(\dval{\abstr \varone \termone},\app\valone)(\subst{\termone}{\valone}{\varone})=1$.
\end{varitemize}
\end{definition}
\condskip
The results we have proved previously in this section apply to
$\markovterm$. In particular, one can define the bisimulation metric
on $\markovterm$. The \emph{bisimulation distance} on programs, which
we indicate $\bisim$, is defined to be the restriction of
$\bisim_{\markovterm}$ to programs.
\condskip
\begin{definition}
We define a metric $\bisim$ on the set of closed terms, by: for every $\termone$, $\termtwo$,
$$ \appl \bisim \termone \termtwo = \appl {\bisim_\markovterm} \termone \termtwo$$ 
\end{definition}
\condskip
\begin{lemma}
For this particular LMC, we have that:
\begin{align*}
&\appl{\functionnal(\metrone)}{\dval{\abstr\varone\termone}}{\dval{\abstr\varone\termtwo}} = \sup\{\appl \metrone {\subst \termone \varone \valone} {\subst \termtwo \varone \valone} \mid {\valone\text{ a value }}\}\\
&\appl{\functionnal(\metrone)}{\termone}{ \termtwo} = \appl \metrone{\sem \termone}{\sem\termtwo}\}\\
&\appl{\functionnal(\metrone)}{\termone}{ \dval V} =0
\end{align*}
\end{lemma}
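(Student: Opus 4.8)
The plan is to prove the three identities by unfolding the definition of $\functionnal$ on $\markovterm$ and checking, for each kind of pair of states, which labels in $\labelsterm$ actually move each of the two states. Recall that in $\markovterm$ a program state $\termone$ reacts only to the label $\eval$, going to the distribution over value states $\dval\valone$ with weight $\sem\termone(\valone)$, whereas a value state $\dval{\abstr\varone\termone}$ reacts only to labels $\app\valone$, going to the Dirac distribution $\dirac{\subst\termone\varone\valone}$ on a program state. So computing $\appl{\functionnal(\metrone)}{\stateone}{\statetwo}$ amounts to taking a supremum of liftings of $\metrone$ over the labels on which both $\stateone$ and $\statetwo$ have a transition.

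First I would treat the two value states $\dval{\abstr\varone\termone}$ and $\dval{\abstr\varone\termtwo}$: the relevant labels are exactly the $\app\valone$, and for each of them the two resulting distributions are $\dirac{\subst\termone\varone\valone}$ and $\dirac{\subst\termtwo\varone\valone}$, which by Lemma \ref{distdirac} are at distance $\appl\metrone{\subst\termone\varone\valone}{\subst\termtwo\varone\valone}$; taking the supremum over values $\valone$ yields the first identity. Next, for two program states $\termone$ and $\termtwo$ the only relevant label is $\eval$, and the two resulting distributions are the images of $\sem\termone$ and $\sem\termtwo$ under $\valone\mapsto\dval\valone$, so, modulo this harmless identification, the distance is $\appl\metrone{\sem\termone}{\sem\termtwo}$, which is the second identity. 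Finally, a program state and a value state have no label on which both move, so the supremum is over an empty index set and equals $0$, giving the third identity.

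The only point that needs care is making explicit the scope of the supremum in the definition of $\functionnal$, namely that it ranges over the labels $\actone\in\labelsterm$ for which both $\probmatrterm(\stateone)(\actone)$ and $\probmatrterm(\statetwo)(\actone)$ are defined; it is precisely the emptiness of this set for a mixed (program, value) pair that makes the third identity come out as $0$ rather than as a positive quantity. Granting this reading, each identity is a one-line unfolding together with Lemma \ref{distdirac}, and I expect no genuine obstacle here — the substance of this section lies not in this lemma but in the fixpoint construction defining $\functionnal$, in the triangle-inequality argument of Lemma \ref{lifting}, and, later, in the Howe-style proof that $\bisim$ is non-expansive.
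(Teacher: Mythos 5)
Your proof is correct and is exactly the intended argument: the paper states this lemma without proof, treating it as an immediate unfolding of the definitions of $\markovterm$ and $\functionnal$ together with Lemma \ref{distdirac}, which is precisely what you do. Your explicit remark about the scope of the supremum is well taken --- read literally over all of $\labelsterm$ with missing transitions as empty distributions the third identity would fail, so restricting to the labels on which both states actually have a transition (making the index set empty for a mixed pair) is indeed the reading the paper's statement presupposes.
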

\condskip
We can see easily that $\bisim$ is an adequate metric.
\condskip
\begin{lemma}
$\bisim$ is an adequate metric on programs.
\end{lemma}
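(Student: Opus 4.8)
The plan is to combine the fixpoint equation $\bisim = \functionnal(\bisim)$ with the explicit description of $\functionnal$ on the LMC $\markovterm$ given in the preceding lemma, and then reduce adequacy to an elementary estimate on the linear program defining the lifting. First I would observe that for two states $\termone,\termtwo$ lying in the $\programs$-component of $\setterm$, the action $\app\valone$ sends both to the empty distribution and hence contributes distance $0$, so the only informative action is $\eval$; this is precisely what the lemma records by the equation $\appl{\functionnal(\metrone)}{\termone}{\termtwo}=\appl{\metrone}{\sem\termone}{\sem\termtwo}$, where on the right $\sem\termone,\sem\termtwo$ are read as the distributions over the $\dval{\cdot}$-states obtained from $\probmatrterm(\cdot,\eval)$, and where $\appl{\metrone}{\cdot}{\cdot}$ denotes the lifting. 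Since $\bisim$ is a fixpoint of $\functionnal$, instantiating $\metrone:=\bisim$ gives $\appl{\bisim}{\termone}{\termtwo}=\appl{\bisim}{\sem\termone}{\sem\termtwo}$, and it remains to prove $\abs{\sumdistr{\sem\termone}-\sumdistr{\sem\termtwo}}\leq\appl{\bisim}{\sem\termone}{\sem\termtwo}$.

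For this last inequality I would argue directly on the primal program of the definition of lifting. Take any feasible point $(\coeffdtwo_{i,j},\coeffdxtwo_i,\coeffdytwo_j)$ of the program computing $\appl{\bisim}{\sem\termone}{\sem\termtwo}$. Summing the constraint $\sum_j\coeffdtwo_{i,j}+\coeffdxtwo_i=\sem\termone(\stateone_i)$ over $i$ yields $\sumdistr{\sem\termone}=\sum_{i,j}\coeffdtwo_{i,j}+\sum_i\coeffdxtwo_i$, and symmetrically $\sumdistr{\sem\termtwo}=\sum_{i,j}\coeffdtwo_{i,j}+\sum_j\coeffdytwo_j$; subtracting, $\sumdistr{\sem\termone}-\sumdistr{\sem\termtwo}=\sum_i\coeffdxtwo_i-\sum_j\coeffdytwo_j$. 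As $\coeffdtwo_{i,j}\geq 0$, $\appl{\bisim}{\stateone_i}{\stateone_j}\geq 0$ and $\coeffdxtwo_i,\coeffdytwo_j\geq 0$, the objective value of this feasible point is at least $\sum_i\coeffdxtwo_i+\sum_j\coeffdytwo_j\geq\abs{\sum_i\coeffdxtwo_i-\sum_j\coeffdytwo_j}=\abs{\sumdistr{\sem\termone}-\sumdistr{\sem\termtwo}}$. Since this holds for every feasible point, the optimum, which is $\appl{\bisim}{\sem\termone}{\sem\termtwo}$, is at least $\abs{\sumdistr{\sem\termone}-\sumdistr{\sem\termtwo}}$, establishing adequacy. (Equivalently, one can use the dual program of Theorem~\ref{theo:duality}: the choice $\coeffone_\stateone=1$ and $\coefftwo_\stateone=-1$ for every $\stateone$ is feasible because $\appl{\bisim}{\stateone}{\statetwo}\geq 0$, and its objective is $\sumdistr{\sem\termone}-\sumdistr{\sem\termtwo}$; the symmetric choice handles the opposite sign.)

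I do not expect a genuine obstacle here: once the fixpoint equation and the primal/dual formulation of the lifting are available, adequacy is a one-line non-negativity estimate. The only point requiring care is the bookkeeping about states of $\markovterm$ — in particular that a program which is syntactically a value is still treated as a state of the $\programs$-component, so that the reduction of $\functionnal$ to the $\eval$-action (and hence the identity $\appl{\bisim}{\termone}{\termtwo}=\appl{\bisim}{\sem\termone}{\sem\termtwo}$) is legitimate, and that the weight of $\probmatrterm(\termone,\eval)$ equals $\sumdistr{\sem\termone}$ so that the absolute value appearing in the estimate is exactly the one in the statement of adequacy.
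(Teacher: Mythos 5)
Your proof is correct. The paper's own ``proof'' of this lemma merely restates the goal $\abs{\sumdistr{\sem\termone}-\sumdistr{\sem\termtwo}}\leq\appl\bisim\termone\termtwo$ without argument, and your route --- using the fixpoint equation $\bisim=\functionnal(\bisim)$ together with $\appl{\functionnal(\metrone)}{\termone}{\termtwo}=\appl\metrone{\sem\termone}{\sem\termtwo}$ to reduce to the lifting, and then bounding the lifting from below by the difference of weights via the primal constraints (or, equivalently, the dual feasible point $\coeffone_\stateone=1$, $\coefftwo_\stateone=-1$) --- is exactly the natural way to fill in the omitted details; both the reduction and the non-negativity estimate are sound.
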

\begin{proof}
We have to show: for every $\termone$, $\termtwo$:
 $$\abs{\sumdistr {\sem \termone} - \sumdistr{\sem \termtwo}} \leq \appl \bisim \termone \termtwo$$
\end{proof}
\condskip 
But there is more, since the bisimulation metric is well-known to be a
lower bound on the trace distance: the bisimulation distance is a
\emph{sound} metric. In the next
section, we anyway show non-expansiveness for it, which is stronger.
\subsection{Non-Expansiveness}
Proving the non-expansiveness of $\bisim$ cannot be done directly,
by a plain induction on contexts. Our strategy towards the result
is the Howe's technique~\cite{Howe}, a way of proving
congruence of coinductively-defined equivalences which has been widely
used for deterministic and non-deterministic languages, and that we
here adapt to metrics.
 
The idea, then, is to start from $\bisim$, to construct another metric
$\howe \bisim$ on top of $\bisim$ (which turns out to be non-expansive
by construction), and to show that $\howe \bisim=\bisim$.  We first
need to transform our metric $\bisim$ on programs into a metric on
(potentially open) terms. Any metric $\metrone$ on programs can be
extended into a metric on open terms, which by abuse of notation we
continue to call $\metrone$ and which is defined as follows
\begin{align*}
\appl\metrone{\termone}{\termtwo} = \sup_{\valone_1,\ldots,\valone_n}\appl \metrone{&\subst\termone{\varone_1,\ldots,\varone_n}{\valone_1,\ldots,\valone_n}}{\\ &\subst\termtwo{\varone_1,\ldots,\varone_n}{\valone_1,\ldots,\valone_n}},
\end{align*}
where $\varone_1,\ldots,\varone_n$ are the variables
occurring free in either $\termone$ or $\termtwo$.
\condskip
\begin{definition}
Let be $\metrone$ a metric on terms. An \emph{Howe judgement} is a
element of the form $(\contone, (\termone, \termtwo), \epsone)$, where
$\contone$ is a typing context, $\termone$ and $\termtwo$ are two
terms, and $\epsone \in [0,1]$. We say that an Howe judgement is
\emph{valid}, and we note
\emph{$\judghowe{\contone}{\metrone}{\termone}{\termtwo}{\epsone}$} ,
if it can be derived by the rules of figure \ref{Howe}.
\end{definition}
\condskip
Please observe that, potentially, there are several different
$\epsone$ such that $\judghowe \contone \metrone\termone\termtwo
\epsone$. 
\begin{figure}[!h]
 $$\AxiomC{${\appl{\metrone}{\varone}{\termone}\leq \epsone} $}
\AxiomC{$\wfj{\varone, \contone}{\termone} $}
  \BinaryInfC{$\judghowe{\varone,\contone}\metrone\varone\termone\epsone$}
  \DisplayProof
$$

$$ 
\AxiomC{$\judghowe \contone {\metrone}\termone\termfour\epsone$}
\AxiomC{$\judghowe \conttwo \metrone\termtwo\termfive \epsthree $}
\AxiomC{$\appl\metrone{\termfour\termfive}{\termthree} \leq \epstwo $}
\AxiomC{$\wfj{\contone, \conttwo}{\termthree} $}
  \QuaternaryInfC{$\judghowe {\contone, \conttwo} \metrone{\termone\termtwo}\termthree{\epsone +\epsthree + \epstwo}$}
  \DisplayProof$$
  
$$
\AxiomC{$\judghowe \contone \metrone \termone\termfour \epsone $}
\AxiomC{$\judghowe \contone \metrone\termtwo\termfive \epstwo $}
\AxiomC{$\appl\metrone{\psum\termfour\termfive}{\termthree} \leq \epsthree $}
\AxiomC{$\wfj\contone\termthree $}
  \QuaternaryInfC{$\judghowe\contone\metrone{\psum\termone\termtwo}\termthree{\frac{\epsone +\epstwo}{2} + \epsthree} $}
  \DisplayProof
$$ $$
\AxiomC{$\judghowe{\varone,\contone}\metrone\termone \termfour \epsone $}
\AxiomC{${\appl\metrone{\abstr\varone\termfour}{\termthree} \leq \epstwo} $}
\AxiomC{$\wfj \contone\termthree $}
  \TrinaryInfC{$\judghowe{\contone}{\metrone}{\abstr\varone\termone}\termthree{\epsone + \epstwo} $}
\DisplayProof
$$
\caption{Rule for Howe's constructor on metrics}\label{Howe}
\end{figure}

We are finally in a position to define the Howe's lifting of $\metrone$:
\condskip
\begin{definition}
Let be $\metrone$ a metric on terms.  We define a premetric $\howe
\metrone$ on terms by:
$$ 
\appl {\howe \metrone}\termone\termtwo = \inf\left(\{\epsone\mid
\exists \contone, \judghowe\contone\metrone\termone\termtwo \epsone\}
\bigcup \{1\}\right).
$$ 
\end{definition}
\condskip

The following lemma says that the optimum value of $\epsone$ can be reached with any typing context $\contone$ which contains the free variables of $\termone$ and $\termtwo$.
\begin{lemma}
\label{typehowe}
For every terms $\termone$, $\termtwo$, for every typing contexts $\contone$, and every real $\epsone$ such that ${\judghowe \contone \metrone \termone \termtwo  \epsone}$, we have that: $\freevar \termone \cup \freevar \termtwo \subseteq \contone$. Moreover, for any context $\conttwo$ such that $\{\epstwo \mid {\judghowe \conttwo\metrone \termone \termtwo \epstwo} \} \neq \emptyset$, then $\inf \{\epstwo \mid {\judghowe \conttwo \metrone \termone \termtwo \epsone} \} \leq \epsone$.
\end{lemma}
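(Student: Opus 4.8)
The plan is to prove the two assertions separately: the inclusion of free variables by a direct structural induction, and the ``moreover'' by reducing it to a context-invariance property of Howe judgements.

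For $\freevar\termone\cup\freevar\termtwo\subseteq\contone$ I would induct on the derivation of $\judghowe\contone\metrone\termone\termtwo\epsone$, inspecting each of the four rules of Figure~\ref{Howe}. In every rule the right-hand term $\termthree$ carries a side condition $\wfj{\contone}{\termthree}$ with $\contone$ the conclusion context, and the standard fact that a derivable affine judgement mentions a superset of the term's free variables gives $\freevar\termthree\subseteq\contone$ at once. For the left-hand term one combines the induction hypotheses on the premises with how free variables compose: union of the two premise contexts in the application and sum cases, the premise context minus the bound variable in the abstraction case, and the explicit shape $\varone,\contone$ of the context in the variable axiom. This part is routine.

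For the ``moreover'' I would isolate the real content as the following \emph{context-invariance lemma}: if $\judghowe\contone\metrone\termone\termtwo\epsone$ and $\conttwo$ is any typing context with $\freevar\termone\cup\freevar\termtwo\subseteq\conttwo$, then $\judghowe\conttwo\metrone\termone\termtwo\epsone$. Granting it, the statement follows immediately: nonemptiness of $\{\epstwo\mid\judghowe\conttwo\metrone\termone\termtwo\epstwo\}$ together with the first part gives $\freevar\termone\cup\freevar\termtwo\subseteq\conttwo$, the hypothesis $\judghowe\contone\metrone\termone\termtwo\epsone$ gives (again by the first part) the same inclusion for $\contone$, and the lemma then puts $\epsone$ itself into the set, so its infimum is at most $\epsone$. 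The lemma is proved by induction on the derivation. Enlarging the context is easy: the auxiliary contexts appearing in each rule scheme are arbitrary and may be taken larger, the side conditions surviving by weakening for the affine type system, and the bound variable of the abstraction rule is kept outside the new context by $\alpha$-renaming.

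The delicate case, which I expect to be the main obstacle, is \emph{shrinking} toward a small $\conttwo$ in the application and abstraction rules: the intermediate (``Howe'') term $\termfour$ (and $\termfive$) may have free variables lying in the old context but not in $\conttwo$. The fix is to instantiate those superfluous variables by a fixed closed value, say $\identity$. Because the open-term extension of $\metrone$ is defined as a supremum over all closing substitutions, such an instantiation cannot increase any distance $\appl\metrone{\cdot}{\cdot}$ occurring as a side condition (the right-hand term $\termthree$ has none of these variables free, hence is unaffected), and a short auxiliary induction yields that $\judghowe\contone\metrone\termone\termtwo\epsone$ with $\varthree\in\contone\setminus\freevar\termone$ and $\valone$ a closed value implies $\judghowe{\contone\setminus\{\varthree\}}{\metrone}{\termone}{\subst\termtwo\varthree\valone}{\epsone}$, using the substitution lemma of the affine type system for the well-formedness side conditions. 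With the intermediate terms so trimmed, the premise contexts can be repartitioned to cover $\conttwo$, the induction hypothesis applies, and the rule is re-applied with the same bound; checking that this trimming respects the affine (disjoint-context) discipline is where essentially all the work sits.
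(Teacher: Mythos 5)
The paper states Lemma~\ref{typehowe} without any proof, so there is nothing to compare against line by line; judged on its own terms, your reconstruction is sound and, importantly, it correctly locates where the content of the ``moreover'' clause actually lies. The first part is indeed a routine induction: every rule of Figure~\ref{Howe} carries a side condition $\wfj{\contone}{\termthree}$ on the right-hand term, and the free variables of the left-hand term are recovered from the premises exactly as you describe. For the second part, reducing the claim to a context-invariance property and observing that enlarging the context is mere weakening while \emph{shrinking} it forces you to repair the intermediate Howe terms $\termfour,\termfive$ (whose free variables may live in $\contone\setminus\conttwo$) is the right analysis; instantiating the superfluous variables by a fixed closed value and using that the open extension of $\metrone$ is a supremum over closing substitutions is a workable fix, and your auxiliary substitution lemma goes through in each rule because the substituted variable is never free in the left-hand subterm of the relevant premise (by the first part and the disjointness of the premise contexts) nor in the conclusion's right-hand term $\termtwo$.

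Two caveats are worth making explicit. First, your argument for the ``moreover'' part uses that $\appl{\metrone}{\subst{\termfour}{\varthree}{\identity}}{\subst{\termthree}{\varthree}{\identity}}\leq\appl{\metrone}{\termfour}{\termthree}$ for closed values; this is not a hypothesis of the lemma as literally stated, which quantifies over an arbitrary metric $\metrone$ on terms. It does hold for the open extension of a metric on programs, which is the only instantiation the paper ever uses (the same implicit restriction appears in Lemma~\ref{pseudosubst}), but you should state it as a standing assumption rather than leave it tacit. Second, when you re-apply the application rule after trimming, you must exhibit a disjoint repartition $\conttwo=\conttwo_1,\conttwo_2$ covering the free variables of the trimmed premises; taking $\conttwo_2=\contone_2\cap\conttwo$ and throwing everything else into $\conttwo_1$ works, and the induction hypothesis still applies because the trimming does not increase the height of the premise derivations --- this is exactly the bookkeeping you flag as ``where all the work sits,'' and it does close.
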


We can see that $\howe \bisim$ is a premetric on open terms. Please
observe that it is not necessarily a metric, since its construction
entails neither symmetry nor the triangular inequality.
\begin{lemma}\label{reflhowe}
If $\metrone$ is any premetric on closed terms, then 
$\howe\metrone$ is a premetric on (potentially open) terms.
\end{lemma}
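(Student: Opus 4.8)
The plan is to check the two clauses of Definition~\ref{defmetric} directly from the definition of $\howe\metrone$. The bound $0\le\appl{\howe\metrone}{\termone}{\termtwo}\le 1$ is immediate: $\appl{\howe\metrone}{\termone}{\termtwo}$ is an infimum over a nonempty set of reals, each of them nonnegative (every $\epsone$ arising in the rules of Figure~\ref{Howe} is built by nonnegative combinations from values of $\metrone$, which lie in $[0,1]$) and the set moreover contains $1$; hence the infimum itself lies in $[0,1]$.

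The substance is reflexivity: $\appl{\howe\metrone}{\termone}{\termone}=0$ for every, possibly open, term $\termone$. I would first record that the open-term extension of $\metrone$ is again a premetric --- boundedness is preserved by taking suprema, and the value of the extension at $(\termone,\termone)$ is, by its very definition, a supremum of quantities $\appl\metrone\termtwo\termtwo$ with $\termtwo$ closed, hence equal to $0$ since $\metrone$ is reflexive on closed terms. Then I prove, by induction on the structure of $\termone$, that $\judghowe\contone\metrone\termone\termone{0}$ is derivable, with $\contone=\freevar\termone$ (or, if one prefers, any typing context containing it, which is legitimate by Lemma~\ref{typehowe}); since $0$ then belongs to the set whose infimum defines $\appl{\howe\metrone}{\termone}{\termone}$, reflexivity of $\howe\metrone$ follows. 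In each inductive case one applies the matching rule of Figure~\ref{Howe}, choosing every target term --- the $\termfour,\termfive$ of the sub-judgements and the term appearing in the side conditions --- syntactically equal to the corresponding source term. Every side condition is then of the form $\appl\metrone\termthree\termthree\le\epsone$, which by reflexivity of the extended metric is satisfied with $\epsone=0$; the induction hypotheses supply the required sub-derivations, again with slack $0$; and the slack computed in the conclusion --- $\epsone+\epsthree+\epstwo$ in the application rule, $\frac{\epsone+\epstwo}{2}+\epsthree$ in the $\oplus$ rule, $\epsone+\epstwo$ in the $\lambda$ rule --- evaluates to $0$. The base cases $\termone=\varone$ and $\termone=\diver$ are discharged by the respective axioms of Figure~\ref{Howe}, again using $\appl\metrone\varone\varone=0$.

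The only point calling for a modicum of care is the distribution of contexts in the application case: when $\termone\termtwo$ is well formed its two components have disjoint sets of free variables, so the induction hypothesis can be applied to $\termone$ and to $\termtwo$ with the two halves of the ambient context, matching the premise of the application rule (the $\lambda$ case likewise needs a routine $\alpha$-renaming, so that the bound variable is fresh for $\contone$). Beyond this bookkeeping there is no real obstacle: every rule of Figure~\ref{Howe} admits a ``diagonal'' instance of zero slack, so reflexivity of $\howe\metrone$ is essentially forced by the shape of the rules. Nothing is required for symmetry or the triangle inequality --- and indeed $\howe\metrone$ satisfies neither, being only a premetric, as already remarked just before the statement.
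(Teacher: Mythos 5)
Your proof is correct, and it is the canonical argument: the paper states Lemma~\ref{reflhowe} without proof, and what you give --- boundedness from the definition of $\howe\metrone$ as an infimum of a subset of $[0,1]$ containing $1$, and reflexivity by a structural induction producing, for each term, a ``diagonal'' derivation of slack $0$ from reflexivity of the open extension of $\metrone$ --- is exactly what is intended. The only caveat is not yours but the paper's: Figure~\ref{Howe} as printed has no rule for $\diver$ (only variable, application, $\oplus$ and $\lambda$), so the axiom you invoke for the base case $\termone=\diver$ has to be supplied as the evident missing clause $\appl\metrone\diver\termthree\le\epstwo$, $\wfj\contone\termthree$ entails $\judghowe\contone\metrone\diver\termthree\epstwo$; with that understood, your induction goes through as stated.
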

\begin{lemma}
For every terms $\termone$, $\termtwo$:
 $$ \appl {\howe\metrone}{\termone}{\termtwo}\leq \appl \metrone \termone \termtwo$$
\end{lemma}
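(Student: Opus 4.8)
The plan is to reduce the statement to the following claim: for all terms $\termone,\termtwo$ and every $\delta>0$, the judgement $\judghowe\contone\metrone\termone\termtwo{(\appl\metrone\termone\termtwo+\delta)}$ is derivable for a suitable typing context $\contone$. Granting this, the lemma follows at once from the definition of $\howe\metrone$ as an infimum: we get $\appl{\howe\metrone}{\termone}{\termtwo}\leq\appl\metrone\termone\termtwo+\delta$ for every $\delta>0$, hence $\appl{\howe\metrone}{\termone}{\termtwo}\leq\appl\metrone\termone\termtwo$. (One should also record the auxiliary fact that, for fixed $\contone,\termone,\termtwo$, the set of reals $\epsone$ with $\judghowe\contone\metrone\termone\termtwo\epsone$ is upward closed --- it suffices to enlarge the last, $\metrone$-labelled premise of the derivation --- so it is harmless to pin the bound at exactly $\appl\metrone\termone\termtwo+\delta$.)

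I would prove the claim by induction on the structure of $\termone$, in each case applying the rule of Figure~\ref{Howe} whose conclusion is formed with the outermost constructor of $\termone$: the term targeted by the final $\metrone$-premise is taken to be $\termtwo$, and the Howe-premises are supplied by the induction hypothesis applied to the immediate subterms of $\termone$ related to themselves. If $\termone=\varone$, the variable rule already gives $\judghowe{\varone,\contone}\metrone\varone\termtwo{\appl\metrone\varone\termtwo}$ provided $\contone\supseteq\freevar\termtwo\setminus\{\varone\}$, so here even $\delta=0$ works. If $\termone=\abstr\varone\termthree$, the induction hypothesis furnishes $\judghowe{\varone,\contone}\metrone\termthree\termthree\epsone$ with $\epsone<\delta$ (using $\appl\metrone\termthree\termthree=0$), and the abstraction rule instantiated with $\termfour:=\termthree$ and last premise $\appl\metrone{\abstr\varone\termthree}{\termtwo}\leq\appl\metrone\termone\termtwo$ yields $\judghowe\contone\metrone\termone\termtwo{(\epsone+\appl\metrone\termone\termtwo)}$, whose value is below $\appl\metrone\termone\termtwo+\delta$. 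The cases of an application $\termone=\termthree\termfour$ and a sum $\termone=\psum\termthree\termfour$ are handled in the same fashion, applying the induction hypothesis with slack $\delta/2$ to each of the two subterms; the case $\termone=\diver$ is immediate from its rule.

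The step I expect to demand the most care is the bookkeeping of typing contexts in the application case. There the conclusion of the rule must carry a context of the form $\contone,\conttwo$ (a disjoint union), the two subterms of $\termone$ have disjoint free-variable sets, and $\termtwo$ may have free variables lying in neither subterm, so one has to choose $\contone$ and $\conttwo$ with some care and then invoke the induction hypothesis against exactly these contexts. This is precisely the situation Lemma~\ref{typehowe} was designed for: every valid Howe judgement about $\termone$ and $\termtwo$ uses a context containing $\freevar\termone\cup\freevar\termtwo$, and enlarging the context never raises the attainable value, so the surplus free variables of $\termtwo$ may be distributed arbitrarily between $\contone$ and $\conttwo$ and the induction hypothesis then applies unchanged. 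Once this is settled, the remaining verifications in each case are purely arithmetical.
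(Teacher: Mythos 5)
Your argument is correct. The paper itself states this lemma without proof, so there is nothing to compare against line by line; the natural justification it has in mind is the two-step one available from the surrounding lemmas, namely $\appl{\howe\metrone}{\termone}{\termtwo}\leq\appl{\howe\metrone}{\termone}{\termone}+\appl{\metrone}{\termone}{\termtwo}=\appl{\metrone}{\termone}{\termtwo}$, using that $\howe\metrone$ is a premetric (Lemma~\ref{reflhowe}) together with pseudo-transitivity (Lemma~\ref{pseudotrans}). Your structural induction is essentially an unfolding of exactly that combination: the inductive hypotheses applied to subterms related to themselves reprove near-reflexivity of $\howe\metrone$ (the $\delta$-slack being needed because the infimum in the definition of $\howe\metrone$ need not be attained), and targeting $\termtwo$ in the final $\metrone$-labelled premise is precisely the right-composition with $\metrone$ that pseudo-transitivity packages. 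The points you flag --- upward closure of the set of derivable bounds, the distribution of the surplus free variables of $\termtwo$ over the two halves of the disjoint context in the application case via Lemma~\ref{typehowe}, and the affineness guaranteeing that the subterms of an application have disjoint free variables --- are the right ones, and the appeal to a rule for $\diver$ is legitimate even though Figure~\ref{Howe} as printed omits it.
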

The construction of Howe's lifting allows us to have the two following properties: 
\begin{lemma}[Pseudo-Transitivity]\label{pseudotrans}
Let be $\metrone$ a metric on terms.
For every terms $\termone,\, \termtwo,\, \termthree $:
$$\appl{\howe\metrone}{\termone}{\termtwo}\leq \appl{\howe\metrone}{\termone}{\termthree} + \appl{\metrone}{\termthree}{\termtwo}$$
\end{lemma}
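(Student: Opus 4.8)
The plan is to reduce the statement to a purely syntactic manipulation of Howe derivations. First I would dispose of the trivial case: since $\howe\metrone$ takes values in $[0,1]$, if $\appl{\howe\metrone}{\termone}{\termthree}+\appl\metrone\termthree\termtwo\ge 1$ there is nothing to prove, so I would assume this sum is strictly below $1$ and put $\epsthree:=\appl\metrone\termthree\termtwo$. By the definition of $\howe\metrone$ as an infimum, for every small $\eta>0$ there are a typing context $\contone$ and a real $\epsone<\appl{\howe\metrone}{\termone}{\termthree}+\eta$ with $\epsone+\epsthree\le 1$ such that $\judghowe\contone\metrone\termone\termthree\epsone$ is derivable. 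It then suffices to produce a derivation of $\judghowe{\contone'}\metrone\termone\termtwo{\epsone+\epsthree}$, where $\contone'$ extends $\contone$ with the free variables of $\termtwo$: this gives $\appl{\howe\metrone}{\termone}{\termtwo}\le\epsone+\epsthree$, and letting $\eta\to 0$ yields the claim.

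The heart of the argument is this surgery on a fixed derivation of $\judghowe\contone\metrone\termone\termthree\epsone$, which I would carry out by case analysis on its last rule (the shape of $\termone$ fixes which rule of Figure~\ref{Howe} was used). The point to exploit is that, in every rule, the premises that are themselves Howe judgements mention only the immediate subterms of $\termone$ and never $\termthree$, whereas $\termthree$ shows up only in a single side condition of the form $\appl\metrone{\termfour}{\termthree}\le\epstwo$ --- with $\termfour$ the appropriate constructor applied to the terms furnished by the Howe premises --- and in a well-formedness premise $\wfj\contone\termthree$. I would therefore keep all the Howe sub-derivations verbatim, replace that side condition by $\appl\metrone{\termfour}{\termtwo}\le\epstwo+\epsthree$, which holds because $\metrone$ is a pseudometric and hence $\appl\metrone{\termfour}{\termtwo}\le\appl\metrone{\termfour}{\termthree}+\appl\metrone{\termthree}{\termtwo}$, and replace the well-formedness premises mentioning $\termthree$ (and $\termone$) by the corresponding ones over $\contone'$. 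Re-applying the same rule then produces a judgement whose numeric label is the original one with $\epstwo$ replaced by $\epstwo+\epsthree$, i.e. exactly $\epsone+\epsthree$. The base case ($\termone$ a variable, plus a possible analogous clause for $\diver$) is identical, reading off directly from $\appl\metrone{\termone}{\termtwo}\le\appl\metrone{\termone}{\termthree}+\epsthree$.

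I do not expect a genuine obstacle: the whole content is the triangular inequality for $\metrone$ together with the structural fact that the constructor rules of Figure~\ref{Howe} confine the right-hand term $\termthree$ to one $\metrone$-side-condition and one typing premise, so nothing deeper than inspection of the rules is needed. The parts that do require care are bookkeeping: checking that $\termthree$ never migrates into a Howe sub-derivation, so that those sub-derivations really can be reused unchanged; checking that the passage from $\contone$ to $\contone'$ leaves every well-formedness premise --- and, where a sub-judgement must be re-weakened, every Howe sub-judgement --- derivable, for which I would invoke weakening in the affine type system and Lemma~\ref{typehowe}; and keeping all numeric labels inside $[0,1]$, which is precisely why the case where the right-hand side already reaches $1$ was split off at the outset.
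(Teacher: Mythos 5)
Your proposal is correct and is essentially the paper's own argument: the paper likewise fixes a derivation of $\judghowe\contone\metrone\termone\termthree\epsone$ and shows that $\judghowe\contone\metrone\termone\termtwo{\epsone+\appl\metrone\termthree\termtwo}$ is derivable, the only change being the single $\metrone$-side condition, which is absorbed by the triangular inequality of $\metrone$. Your observation that the Howe sub-derivations can be reused verbatim (so that only the last rule needs inspection, rather than a full induction), together with the explicit handling of the infimum and of sums exceeding $1$, merely makes explicit what the paper leaves implicit.
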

 \begin{proof}
Let be $\epsone$ such that $\judghowe \contone \metrone \termone \termthree \epsone$ is a valid judgement. It is enough to show  that  $$\judghowe \contone\metrone\termone\termtwo{\epsone + \appl \metrone \termthree \termtwo}$$ is a valid judgement.The proof is by induction on the rules of the construction of valid judgements.
\end{proof}

\begin{lemma}[Pseudo-substitutivity]\label{pseudosubst}
If $\metrone$ verifies that, for every terms $\termone, \termtwo$, for every values $\valone$: 
$\appl{\metrone} { \subst\termone\varone\valone}{\subst \termtwo\varone\valone} \leq \appl{\metrone}{\termone}{\termtwo}$.
Then for every terms $\termone, \termtwo$, for every values $\valone$, $\valtwo$: 
$$\appl{\howe\metrone} { \subst\termone\varone\valone}{\subst \termtwo\varone\valtwo} \leq \appl{\howe\metrone}{\termone}{\termtwo} + \appl{\howe\metrone}{\valone}{\valtwo}$$
\end{lemma}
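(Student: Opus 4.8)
The plan is a Howe-style argument carried out entirely at the level of \emph{derivations}, so that infima enter only at the very end. The key auxiliary statement I would prove is a substitution property for Howe judgements: if $\judghowe{\varone,\contone}{\metrone}{\termone}{\termfour}{\epsone}$ is derivable and $\judghowe{\conttwo}{\metrone}{\valone}{\valtwo}{\epsfour}$ is derivable with $\valone,\valtwo$ values, then $\judghowe{\contone,\conttwo}{\metrone}{\subst{\termone}{\varone}{\valone}}{\subst{\termfour}{\varone}{\valtwo}}{\epsone+\epsfour}$ is derivable. Granting this, the lemma follows by taking infima over $\epsone$ and $\epsfour$, using that Howe judgements can always be weakened to a larger numeric bound (implicit in the remark after Figure~\ref{Howe}) and Lemma~\ref{typehowe} to reconcile the typing contexts $\contone$ and $\conttwo$. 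Two facts are used throughout: the hypothesis on $\metrone$, in the form $\appl{\metrone}{\subst{\termtwo}{\varone}{\valtwo}}{\subst{\termthree}{\varone}{\valtwo}}\leq\appl{\metrone}{\termtwo}{\termthree}$ for all terms $\termtwo,\termthree$ and values $\valtwo$; and Lemma~\ref{pseudotrans}, whose proof in fact establishes the derivation-level statement that a derivation of $\judghowe{\contone}{\metrone}{\termtwo}{\termthree}{\epsone}$ can be transformed into one of $\judghowe{\contone}{\metrone}{\termtwo}{\termfour}{\epsone+\appl{\metrone}{\termthree}{\termfour}}$.

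I would prove the substitution property by induction on the derivation of $\judghowe{\varone,\contone}{\metrone}{\termone}{\termfour}{\epsone}$, i.e.\ by cases on the shape of the left-hand term $\termone$. If $\termone=\varone$, the premise is $\appl{\metrone}{\varone}{\termfour}\leq\epsone$; since $\subst{\varone}{\varone}{\valone}=\valone$, the hypothesis on $\metrone$ (instantiated at $M=\varone$, $N=\termfour$, value $\valtwo$) gives $\appl{\metrone}{\valtwo}{\subst{\termfour}{\varone}{\valtwo}}\leq\appl{\metrone}{\varone}{\termfour}\leq\epsone$, so applying the derivation-level form of Lemma~\ref{pseudotrans} to $\judghowe{\conttwo}{\metrone}{\valone}{\valtwo}{\epsfour}$ and then weakening the bound yields a derivation of $\judghowe{\conttwo}{\metrone}{\valone}{\subst{\termfour}{\varone}{\valtwo}}{\epsone+\epsfour}$ (Lemma~\ref{typehowe} fixes the typing context). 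If $\termone$ is a variable $\vartwo\neq\varone$, then $\subst{\vartwo}{\varone}{\valone}=\vartwo$ and $\appl{\metrone}{\vartwo}{\subst{\termfour}{\varone}{\valtwo}}\leq\appl{\metrone}{\vartwo}{\termfour}\leq\epsone$, so the variable rule gives a derivation with bound $\epsone\leq\epsone+\epsfour$.

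For the three compound cases --- application, $\psum{}{}$, and $\lambda$-abstraction --- I would exploit affinity: in each rule the premises split the free variables, so $\varone$ lies in the context of the premises in a controlled way (for application, of exactly one premise, the typing contexts being disjoint; for $\psum{}{}$, possibly of both, since they share the context; for $\lambda$, of the single premise), and by Lemma~\ref{typehowe} it does not occur free in the terms governed by the remaining premises, so the substitution is vacuous there. I apply the induction hypothesis to every premise whose context contains $\varone$, carrying along the same side judgement $\judghowe{\conttwo}{\metrone}{\valone}{\valtwo}{\epsfour}$, leave the other premises untouched, and push the substitution of $\valtwo$ through the rule's $\metrone$-side-condition using the substitutivity hypothesis on $\metrone$ (e.g.\ $\appl{\metrone}{\termfour\termfive}{\termthree}\leq\epstwo$ becomes $\appl{\metrone}{(\subst{\termfour}{\varone}{\valtwo})\termfive}{\subst{\termthree}{\varone}{\valtwo}}\leq\epstwo$, since $\varone\notin\freevar{\termfive}$). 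Re-applying the same rule reconstructs a derivation for the substituted terms, and a one-line arithmetic check shows the new bound is the old one plus $\epsfour$; the only slightly non-obvious instance is $\psum{}{}$, where $\epsfour$ is added to both premise bounds and then halved, which still gives a net increase of $\epsfour$ since $\tfrac{(\epsone+\epsfour)+(\epstwo+\epsfour)}{2}+\epsthree=\tfrac{\epsone+\epstwo}{2}+\epsthree+\epsfour$.

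I expect the main obstacle to be exactly the variable case $\termone=\varone$: it is the only point where $\howe{\metrone}$ fails to be structural, and it is precisely there that Lemma~\ref{pseudotrans} must be combined with the substitutivity of $\metrone$ (and where the value-ness of $\valone,\valtwo$ genuinely enters). Once that case is dispatched, the compound cases reduce to routine bookkeeping of additive error terms, and the passage from the derivation-level substitution property to the stated inequality is immediate by taking infima.
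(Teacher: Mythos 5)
Your proposal is correct and follows essentially the same route as the paper's proof: induction on the Howe derivation, with the variable case dispatched by combining the substitutivity hypothesis on $\metrone$ with (the derivation-level form of) Lemma~\ref{pseudotrans}, and the compound cases handled by applying the induction hypothesis to the premise(s) owning $\varone$ and pushing the substitution through the $\metrone$ side-condition via affinity. Your only departure is presentational --- isolating a clean derivation-level substitution statement with explicit additive bounds and taking infima once at the end, plus spelling out the $\oplus$ arithmetic --- which tidies up what the paper does implicitly.
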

Please observe that, the open extension of a metric on closed term verifies the hypothesis.

\begin{proof}
Let be $\epsone$ such that: $\judghowe \contone\metrone{\termone}{\termtwo} \epsone$.
The proof is by induction on the structure of the derivation of $\judghowe \contone \metrone{\termone}{\termtwo} \epsone$.
\begin{itemize}
\item If the derivation is:
\AxiomC{$\appl{\metrone}{\varone}{\termtwo}\leq \epsone $}
  \UnaryInfC{$\judghowe {\contone, \varone}\metrone\varone\termtwo\epsone$}
  \DisplayProof 
: Then   ${ \subst\termone\varone\valone} = \valone$. Then, since $\metrone$ is pseudo substitutive: $\appl{\metrone} { \valtwo}{\subst \termtwo\varone\valtwo} \leq \appl{\metrone}{\termone}{\termtwo}\leq \epsone$. Now by pseudo-transitivity of $\howe\metrone$, we have that: 
$\appl{\howe\metrone}\valone {\subst \termtwo\varone\valtwo} \leq \appl {\howe\metrone} \valone \valtwo + \appl{\metrone} { \valtwo}{\subst \termtwo\varone\valtwo} \leq \appl {\howe\metrone} \valone \valtwo + \epsone$.
\item If the derivation is:
$$
\AxiomC{$\judghowe\contone \metrone\termfive\termfour  \epsone $}
\AxiomC{$ \judghowe\conttwo \metrone{\termsix}{\termseven} \epsthree$}
\AxiomC{$\wfj {\contone, \conttwo}{\termthree}$}
\AxiomC{$\appl\metrone{\termfour\termseven}{\termthree} \leq \epstwo $}
  \QuaternaryInfC{$\judghowe {\contone, \conttwo}{\metrone}{\termfive\termsix}\termthree{ \epsone + \epstwo + \epsthree}$}
  \DisplayProof$$
We know that $x$ cannot appear  both in $\contone$ and in $\conttwo$.
Suppose for example that $x$ doesn't appear in $\conttwo$ . Then (by Lemma \ref{typehowe}) $x$ doesn't appear in $\freevar\termsix\cup\freevar \termseven$. Then:
We apply the induction hypothesis to: $\appl{\howe\metrone}\termfive\termfour \leq \epsone $. We have:
$$\appl{\howe\metrone}{\subst\termfive\varone\valone}{\subst\termfour\varone\valtwo} \leq {\appl{\howe\metrone}{\termfive}{\termfour}}+{\appl{\howe\metrone}{\valone}{\valtwo}} \leq \epsone + {\appl{\howe\metrone}{\valone}{\valtwo}} .$$
Moreover, since $\appl\metrone{\termfour\termseven}{\termthree} \leq \epstwo$, we have that (since $\metrone$ is value substitutive):
$$\appl\metrone{\subst{(\termfour\termseven)}{\varone}{\valtwo}}{\subst\termthree\varone\valtwo} \leq \epstwo$$ 
So now, we have that:
$$
\AxiomC{\small$\begin{array}{c}\judghowe{\contone \setminus \varone}{\metrone}{\subst\termfive\varone\valone}{\subst\termfour\varone\valtwo} { \epsone + \appl{\howe\metrone}{\valone}{\valtwo}}\\ \judghowe \conttwo\metrone{\termsix}{\termseven}\epsthree \end{array}$}
\AxiomC{\small$\appl\metrone{{\subst\termfour\varone\valtwo}{\termseven}}{\subst{\termthree}\varone\valtwo} \leq \epstwo $}
\AxiomC{\small$\wfj {\contone \setminus\varone, \conttwo}{{\subst{\termthree}\varone\valtwo}}$}
  \TrinaryInfC{$\judghowe{\contone \setminus \varone, \conttwo}\metrone{\subst{(\termfive\termsix)}{\varone}{\valone}}\termthree{\epsone + \epstwo + \epsthree + \appl{\howe\metrone}{\valone}{\valtwo}}$}
  \DisplayProof$$
  \item Other cases are similar.
\end{itemize}
\end{proof}
\condskip
The interest of this construction is that the metric $\howe \bisim$
is (more or less by construction) non-expansive:
\condskip
\begin{lemma}[Non-expansiveness of $\howe\bisim$] 
For every  context $\ctxone$ and for every terms $\termone$, $\termtwo$
it holds that
$\appl {\howe\bisim} {\fillc\ctxone\termone}{\fillc\ctxone\termtwo} \leq \appl {\howe\bisim} \termone \termtwo $.
\end{lemma}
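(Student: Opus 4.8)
The plan is to derive non-expansiveness of $\howe\bisim$ from a \emph{compatibility} property of the Howe lifting, which is essentially already encoded in the rules of Figure~\ref{Howe}. First I would prove that $\howe\bisim$ is compatible with every term constructor: that $\appl{\howe\bisim}{\abstr\varone{\termone_1}}{\abstr\varone{\termone_2}}\le\appl{\howe\bisim}{\termone_1}{\termone_2}$, that $\appl{\howe\bisim}{\termone_1\termtwo_1}{\termone_2\termtwo_2}\le\appl{\howe\bisim}{\termone_1}{\termone_2}+\appl{\howe\bisim}{\termtwo_1}{\termtwo_2}$, and that $\appl{\howe\bisim}{\psum{\termone_1}{\termtwo_1}}{\psum{\termone_2}{\termtwo_2}}\le\frac{1}{2}\bigl(\appl{\howe\bisim}{\termone_1}{\termone_2}+\appl{\howe\bisim}{\termtwo_1}{\termtwo_2}\bigr)$. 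Each of these is obtained by taking near-optimal Howe derivations of the premises --- using Lemma~\ref{typehowe} to realise the optimal $\epsone$'s over a prescribed pair of typing contexts, kept disjoint as the affine discipline demands --- and then applying the corresponding rule of Figure~\ref{Howe} with its metric-side term $\termthree$ instantiated to the composed right-hand side, so that the residual distance $\appl{\bisim}{\termthree}{\termthree}$ contributes $0$. Here I use that the open extension of $\bisim$ is still reflexive, which holds because $\bisim$ on closed terms is a pseudometric; finally one passes to infima.

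The statement then follows by structural induction on $\ctxone$, following its grammar $\ctxone\bnf\hole \midd\termone \midd\abstr{\varone}{\ctxone}\midd\ctxone\termone\midd \termone \ctxone\midd\psum{\ctxone}{\ctxone}$. The case $\ctxone=\hole$ is immediate. If $\ctxone$ is a hole-free term $\termthree$, then $\fillc\ctxone\termone=\termthree=\fillc\ctxone\termtwo$, so the left-hand side equals $\appl{\howe\bisim}{\termthree}{\termthree}\le\appl{\bisim}{\termthree}{\termthree}=0$, while the right-hand side is nonnegative since $\howe\bisim$ is a premetric (Lemma~\ref{reflhowe}). In each of the remaining cases the hole lies inside exactly one immediate subcontext; the induction hypothesis bounds the $\howe\bisim$-distance of the two fillings of that subcontext by $\appl{\howe\bisim}{\termone}{\termtwo}$, the hole-free siblings contribute a $0$ summand (again because $\appl{\howe\bisim}{\termthree}{\termthree}\le\appl{\bisim}{\termthree}{\termthree}=0$ for any term $\termthree$), and the relevant compatibility inequality above delivers the desired bound --- for the $\psum{}{}$ case one in fact gets the sharper estimate $\frac{1}{2}\appl{\howe\bisim}{\termone}{\termtwo}$.

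The obstacles here are bookkeeping rather than conceptual: one must handle the infimum in the definition of $\howe\bisim$ with care --- Lemma~\ref{typehowe} is precisely what makes this painless, since it lets us approximate the optimal $\epsone$ over any sufficiently large typing context --- and one must keep the splitting of variable contexts consistent when invoking the application rule in the affine setting. I would emphasise that this lemma is the \emph{easy} half of the Howe programme: $\howe\bisim$ is non-expansive almost by design. The substantive step, where the Kantorovich lifting and the structure of the LMC $\markovterm$ genuinely come into play, is the converse inclusion $\howe\bisim=\bisim$, which is not part of the present statement.
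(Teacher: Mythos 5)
Your proposal is correct and follows essentially the same route as the paper, whose proof of this lemma is given only as ``by induction on the structure of the context $\ctxone$''; your compatibility inequalities (obtained by instantiating each rule of Figure~\ref{Howe} with $\termthree$ equal to the composed right-hand side so that the $\bisim$-side premise contributes $0$, together with Lemma~\ref{typehowe} to control the infimum over typing contexts) are exactly the inductive steps that proof tacitly relies on. You are also right that this is the easy half of the Howe construction, the substantive work being the fixpoint argument showing $\howe\bisim=\bisim$.
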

\begin{proof}
The proof is by induction on the structure of the context $\ctxone$.
\end{proof}
\condskip
The goal now is to show that $\leqmetr{\howe \bisim} \bisim$. Since
$\bisim$ is the greatest fixed point of
$\functionnal$ for our LMC $\markovterm$, we are going to show that
$\howe\bisim$ can be extended into a metric on the states of
$\markovterm$, obtaining a fixed point for the operator
$\functionnal$.  First we extend $\howe\bisim$ to a premetric on
$\setterm$:
\condskip
\begin{definition}
We define the extension of $\howe \bisim$ to $\setterm$ (that we note
still $\howe \bisim$ by abuse of notation), by:
{\footnotesize \begin{align*}
\appl{\howe\bisim}{ \termone}{ \termtwo}&=\appl{\howe\bisim}{\termone}{\termtwo};\\
\appl{\howe\bisim}{\dval\valone}{\dval\valtwo}&=\appl{\howe\bisim}{\valone}{\valtwo};\\
\appl{\howe\bisim}{\termone}{ \dval\valtwo}&=1.
\end{align*}}
\end{definition}
\condskip
Since $\howe \bisim$ isn't guaranteed to be a metric, we are forced to further refine it, by adding rules corresponding to symmetry
and to the triangular inequality: we define $\closure {\howe \bisim}$
over $\setterm$ by the rules of Figure \ref{closure}. 
\begin{definition}
We define a valid $\closure{\howe\bisim}$-judgement $\judgclh \stateone \statetwo \epsone$, where $\stateone, \statetwo \in \setterm$, $\epsone \in [0,1]$, as the judgements which have a finite proof-tree by using the rules of Figure \ref{closure}.

We define $\closure {\howe \bisim}$ a metric over $\setterm$ by:

$${\appl {\closure {\howe \bisim}} \stateone \statetwo }= 
\inf { \{ \epsone \mid {\judgclh \stateone \statetwo \epsone }\}}$$  

\end{definition}

\begin{figure}[!h]
\begin{center}
\fbox{\footnotesize
\begin{minipage}{\figurewidth}
$$
\AxiomC{$\appl{\howe\bisim}{\stateone}{\statetwo}\leq\epsone$}
\UnaryInfC{$\judgclh{\stateone}{\statetwo} \epsone $}
\DisplayProof
\qquad
\AxiomC{$\judgclh{\stateone}{\statetwo}\epsone$}
\AxiomC{$\judgclh{\statetwo}{\stateone}\epstwo$}
\BinaryInfC{$\judgclh{\stateone}{\statetwo} { \min(\epsone, \epstwo)} $}
\DisplayProof$$
$$
\AxiomC{$\judgclh{\stateone}{\statetwo}\epsone$}
\AxiomC{$\judgclh{\statetwo}{\statethree}\epstwo$}
\BinaryInfC{$\judgclh{\stateone}{\statethree} {\epsone + \epstwo} $}
\DisplayProof
 $$
\end{minipage}}
\end{center}
\caption{Metric Closure of $\howe\bisim$}\label{closure}
\end{figure}
\begin{lemma}
$\closure {\howe\bisim}$ is a metric.
\end{lemma}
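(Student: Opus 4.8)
Write $d$ for the premetric $\closure{\howe\bisim}$. The plan is to verify directly the conditions of Definition~\ref{defmetric}: that $0\le\appl{d}{\stateone}{\statetwo}\le 1$, that $\appl{d}{\stateone}{\stateone}=0$, that $\appl{d}{\stateone}{\statetwo}=\appl{d}{\statetwo}{\stateone}$, and that $\appl{d}{\stateone}{\statethree}\le\appl{d}{\stateone}{\statetwo}+\appl{d}{\statetwo}{\statethree}$ for all $\stateone,\statetwo,\statethree\in\setterm$. First I would note that the extension of $\howe\bisim$ to $\setterm$ is still a premetric: by Lemma~\ref{reflhowe}, $\howe\bisim$ on terms is one, so $\appl{\howe\bisim}{\stateone}{\statetwo}\in[0,1]$ and $\appl{\howe\bisim}{\stateone}{\stateone}=0$ when $\stateone,\statetwo$ are in the same component of $\setterm$, while $\appl{\howe\bisim}{\termone}{\dval\valtwo}=1\in[0,1]$ in the mixed case, where no diagonal instance occurs because the two components are disjoint. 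In particular rule~(1) of Figure~\ref{closure} (embedding $\howe\bisim$ into $d$), used with $\epsone=1$, always yields a valid judgement $\judgclh{\stateone}{\statetwo}{1}$, so $\{\epsone\mid\judgclh{\stateone}{\statetwo}{\epsone}\}$ is nonempty and $\appl{d}{\stateone}{\statetwo}\le1$; an easy induction on derivations (rule~(1) forces $\epsone\ge\appl{\howe\bisim}{\stateone}{\statetwo}\ge0$, and rules~(2) and~(3) preserve nonnegativity) gives $\appl{d}{\stateone}{\statetwo}\ge0$; and since $\appl{\howe\bisim}{\stateone}{\stateone}=0$, rule~(1) gives $\judgclh{\stateone}{\stateone}{0}$, hence $\appl{d}{\stateone}{\stateone}=0$.

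For symmetry I would exploit rule~(2): given any $a$ with $\judgclh{\stateone}{\statetwo}{a}$ and any $b$ with $\judgclh{\statetwo}{\stateone}{b}$ (both sets nonempty by the above), rule~(2) yields $\judgclh{\stateone}{\statetwo}{\min(a,b)}$, so $\appl{d}{\stateone}{\statetwo}\le\min(a,b)\le b$; taking the infimum over $b$ gives $\appl{d}{\stateone}{\statetwo}\le\appl{d}{\statetwo}{\stateone}$, and the symmetric argument gives the converse. For the triangle inequality, fix $\stateone,\statetwo,\statethree$ and take any $a$ with $\judgclh{\stateone}{\statetwo}{a}$ and any $b$ with $\judgclh{\statetwo}{\statethree}{b}$: if $a+b\le1$, rule~(3) yields the valid judgement $\judgclh{\stateone}{\statethree}{a+b}$, whence $\appl{d}{\stateone}{\statethree}\le a+b$; and if $a+b>1$ then $\appl{d}{\stateone}{\statethree}\le1<a+b$ already, so in all cases $\appl{d}{\stateone}{\statethree}\le a+b$, and taking infima over $a$ and over $b$ gives $\appl{d}{\stateone}{\statethree}\le\appl{d}{\stateone}{\statetwo}+\appl{d}{\statetwo}{\statethree}$.

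There is no real obstacle here: the rules of Figure~\ref{closure} are designed precisely so that $d$ inherits symmetry from the $\min$-rule and the triangle inequality from the additive rule, on top of the premetric that $\howe\bisim$ already provides. The only point requiring a little attention is the truncation at $1$ — ensuring every infimum is taken over a nonempty set (which is why the $\epsone=1$ instance of rule~(1) is recorded first) and invoking rule~(3) only when $a+b\le1$, the complementary case being absorbed by the bound $\appl{d}{\stateone}{\statethree}\le1$.
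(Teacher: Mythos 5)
Your proof is correct. The paper states this lemma without any proof at all, so there is nothing to compare against; your argument is exactly the routine verification the construction is designed to make immediate: rule~(1) with $\epsone=1$ guarantees every infimum ranges over a nonempty subset of $[0,1]$, the $\min$-rule forces symmetry, the additive rule forces the triangle inequality (with the $a+b>1$ case absorbed by the bound $1$, as you note), and reflexivity descends from $\howe\bisim$ being a premetric on $\setterm$. The one small point worth flagging is that the paper's extension of $\howe\bisim$ to $\setterm$ only explicitly defines the mixed case $\appl{\howe\bisim}{\termone}{\dval\valtwo}=1$ and not its mirror image $\appl{\howe\bisim}{\dval\valtwo}{\termone}$; your argument implicitly (and reasonably) reads that clause as covering both orientations, which is clearly the intended reading and is what makes the nonemptiness claim for rule~(1) go through in every case.
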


We can see easily that $\leqmetr {\bisim}{\leqmetr{\howe
    \bisim}{\closure {\howe \bisim}}}$ with respect to the preorder on
terms. We want to show that $\howe \bisim = \bisim$. In order to have
that, we will show that $\leqmetr{\closure {\howe
    \bisim}}{\bisim}$. That is a direct consequence of the following
theorem:
\condskip
\begin{theorem}
$\closure {\howe \bisim}$ is a pre-fixpoint of $\functionnal$.
\end{theorem}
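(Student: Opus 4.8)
The plan is to establish the pre-fixpoint inequality $\leqmetr{\closure{\howe\bisim}}{\functionnal(\closure{\howe\bisim})}$, that is, $\appl{\functionnal(\closure{\howe\bisim})}{\stateone}{\statetwo}\le\appl{\closure{\howe\bisim}}{\stateone}{\statetwo}$ for all states $\stateone,\statetwo$ of $\markovterm$. First I would note that $\functionnal(\closure{\howe\bisim})$ is itself a pseudometric: the Kantorovich lifting preserves symmetry, by Lemma~\ref{lifting} (with the three premetrics all taken to be $\closure{\howe\bisim}$) it preserves the triangle inequality, and the pointwise supremum over actions preserves both. Consequently, by induction on the derivation of $\judgclh{\stateone}{\statetwo}{\epsone}$, the symmetry and transitivity rules of Figure~\ref{closure} reduce the goal to the base rule: it suffices to show that $\appl{\howe\bisim}{\stateone}{\statetwo}\le\epsone$ implies $\appl{\functionnal(\closure{\howe\bisim})}{\stateone}{\statetwo}\le\epsone$. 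Since $\leqmetr{\howe\bisim}{\closure{\howe\bisim}}$ gives $\closure{\howe\bisim}\le\howe\bisim$ pointwise and $\functionnal$ is monotone, $\appl{\functionnal(\closure{\howe\bisim})}{\stateone}{\statetwo}\le\appl{\functionnal(\howe\bisim)}{\stateone}{\statetwo}$, so the whole theorem reduces to the single inequality $\appl{\functionnal(\howe\bisim)}{\stateone}{\statetwo}\le\appl{\howe\bisim}{\stateone}{\statetwo}$ for all states.

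Using the computation of $\functionnal$ on $\markovterm$, this splits into three cases. When $\stateone$ is a program and $\statetwo$ a value-state (or vice versa) it is trivial, since $\howe\bisim$ takes value $1$ there while $\functionnal$ always returns a value in $[0,1]$. For two value-states $\dval{\abstr\varone\termone}$, $\dval{\abstr\varone\termtwo}$ one has $\appl{\functionnal(\howe\bisim)}{\dval{\abstr\varone\termone}}{\dval{\abstr\varone\termtwo}}=\sup_{\valone}\appl{\howe\bisim}{\subst\termone\varone\valone}{\subst\termtwo\varone\valone}$, and pseudo-substitutivity (Lemma~\ref{pseudosubst}, whose hypothesis holds for the open extension of $\bisim$) together with $\appl{\howe\bisim}{\valone}{\valone}=0$ and the routine fact that $\howe\bisim$ is stable under $\lambda$-abstraction bounds each member of the supremum by $\appl{\howe\bisim}{\abstr\varone\termone}{\abstr\varone\termtwo}=\appl{\howe\bisim}{\dval{\abstr\varone\termone}}{\dval{\abstr\varone\termtwo}}$. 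The last case, two programs $\termone,\termtwo$ with $\appl{\functionnal(\howe\bisim)}{\termone}{\termtwo}=\appl{\howe\bisim}{\sem\termone}{\sem\termtwo}$, is the content of the following \emph{Key Lemma}: for all closed terms $M,N$ and all $\epsone$, if $\judghowe{\emcon}{\bisim}{M}{N}{\epsone}$ then $\appl{\howe\bisim}{\sem M}{\sem N}\le\epsone$, where $\sem M,\sem N$ are read as distributions over the value-states of $\markovterm$ and $\howe\bisim$ is the lifting of its extended premetric. Since this holds for every $\epsone$ admitting a Howe derivation, taking the infimum yields $\appl{\functionnal(\howe\bisim)}{\termone}{\termtwo}\le\appl{\howe\bisim}{\termone}{\termtwo}$, finishing the reduction.

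I would prove the Key Lemma by induction on the derivation of $M\Downarrow\sem M$, analysing the last rule of the Howe derivation of $\judghowe{\emcon}{\bisim}{M}{N}{\epsone}$. The cases $M=\diver$ and $M=\abstr\varone\termone$ are direct: $\sem M$ is $\emdist$ or a Dirac value distribution, and one concludes from the relevant Howe rule, Lemma~\ref{distdirac}, stability under $\lambda$, and the identities $\appl\bisim{P}{Q}=\appl{\functionnal(\bisim)}{P}{Q}=\appl\bisim{\sem P}{\sem Q}$ arising from $\bisim$ being a fixpoint of $\functionnal$; these are used, via the distribution-lifted form of pseudo-transitivity (Lemma~\ref{lifting} applied to $\howe\bisim,\howe\bisim,\bisim$, i.e.\ Lemma~\ref{pseudotrans}), to relate $\sem N$ to the semantics of the intermediate term that the rule provides and declares $\bisim$-close to $N$. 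The $\oplus$-case $M=\psum A B$ is analogous, additionally using that the Kantorovich lifting is subadditive under $\tfrac12$-combinations, so that $\appl{\howe\bisim}{\tfrac12\sem A+\tfrac12\sem B}{\tfrac12\sem{A'}+\tfrac12\sem{B'}}\le\tfrac12\appl{\howe\bisim}{\sem A}{\sem{A'}}+\tfrac12\appl{\howe\bisim}{\sem B}{\sem{B'}}$, the two summands being bounded by the induction hypothesis.

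The main obstacle will be the application case $M=AB$, whose Howe derivation ends with the application rule from $\judghowe{\emcon}{\bisim}{A}{A'}{\alpha}$, $\judghowe{\emcon}{\bisim}{B}{B'}{\gamma}$ and $\appl\bisim{A'B'}{N}\le\beta$, with $\epsone=\alpha+\gamma+\beta$. The induction hypothesis on the sub-derivations $A\Downarrow\mathcal D_A$, $B\Downarrow\mathcal D_B$ gives $\appl{\howe\bisim}{\mathcal D_A}{\sem{A'}}\le\alpha$ and $\appl{\howe\bisim}{\mathcal D_B}{\sem{B'}}\le\gamma$; I would fix optimal transport plans (the primal linear-programming solutions of the lifting) witnessing these. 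For each matched pair of abstractions $(\abstr\varone\termeight,\abstr\varone\termnine)$ and matched pair of argument values $(\valone,\valtwo)$, pseudo-substitutivity and stability under $\lambda$ give $\appl{\howe\bisim}{\subst\termeight\varone\valone}{\subst\termnine\varone\valtwo}\le\appl{\howe\bisim}{\abstr\varone\termeight}{\abstr\varone\termnine}+\appl{\howe\bisim}{\valone}{\valtwo}$, a Howe judgement to which the induction hypothesis applies on the sub-derivation $\subst\termeight\varone\valone\Downarrow\mathcal D_{\termeight,\valone}$, yielding $\appl{\howe\bisim}{\mathcal D_{\termeight,\valone}}{\sem{\subst\termnine\varone\valtwo}}\le\appl{\howe\bisim}{\abstr\varone\termeight}{\abstr\varone\termnine}+\appl{\howe\bisim}{\valone}{\valtwo}$. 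Composing the $A$-plan, the $B$-plan, and these inner plans produces a transport plan between $\sem{AB}=\sum\mathcal D_A(\abstr\varone\termeight)\,\mathcal D_B(\valone)\,\mathcal D_{\termeight,\valone}$ and $\sem{A'B'}$, and a bookkeeping argument shows its cost is at most $\alpha+\gamma$ — the point being that matched joint mass is charged once, through the pseudo-substitutivity bound, and (using that every transported mass is $\le1$) contributes at most the matched cost of the $A$-plan plus that of the $B$-plan, while unmatched joint mass contributes at most the unmatched masses of the two plans. Avoiding the double-counting of the $A$- and $B$-distances on matched mass while correctly handling the unmatched mass and the nested distributional structure of $\sem{AB}$ is the delicate step. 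Finally $\appl{\howe\bisim}{\sem{AB}}{\sem N}\le\appl{\howe\bisim}{\sem{AB}}{\sem{A'B'}}+\appl\bisim{\sem{A'B'}}{\sem N}$ by Lemma~\ref{lifting}, and $\appl\bisim{\sem{A'B'}}{\sem N}=\appl{\functionnal(\bisim)}{A'B'}{N}=\appl\bisim{A'B'}{N}\le\beta$ by the fixpoint property, so $\appl{\howe\bisim}{\sem{AB}}{\sem N}\le\alpha+\gamma+\beta=\epsone$, as required.
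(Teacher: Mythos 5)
Your proposal is correct and follows essentially the same route as the paper: reduce the pre-fixpoint inequality to the two per-action statements, prove them first for $\howe\bisim$ via a Key Lemma (induction on the big-step derivation, with the application case handled by composing the optimal Kantorovich transport plans and invoking pseudo-substitutivity, pseudo-transitivity and the fixpoint property of $\bisim$), and then transfer to $\closure{\howe\bisim}$ by induction on the closure derivation using symmetry and the triangle inequality of the lifting. Your only real variation is packaging that last transfer step through monotonicity of $\functionnal$ rather than re-running the closure induction separately for each of the two inequalities, which is a harmless (arguably cleaner) rephrasing of what the paper does.
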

\condskip
\begin{proof}
We need to show that $\leqmetr{\closure{\howe
    \bisim}}{\functionnal {(\closure{\howe \bisim})}}$. Please
remember that the preorder on metrics corresponds to the reverse of
the point-wise preorder for states. So if we read this inequality on
metrics as an inequality on the states of $\markovterm$, we see that
it is equivalent to: for every $\stateone, \statetwo \in \setterm$,
$\appl{\functionnal{(\closure {\howe \bisim}})} \stateone \statetwo
\leq \appl{\closure {\howe \bisim}}{\stateone}{\statetwo}$.  If we
unfold the definition of the operator $\functionnal$ on metrics, we
can see that it means that for every $\actone \in \labelsterm$, $\appl
{\closure {\howe
    \bisim}}{\probmatrterm(\stateone,\actone)}{\probmatrterm(\statetwo,\actone)}
\leq \appl {\closure {\howe \bisim}} \stateone \statetwo$.  Please
remember that there are two kinds of actions in our LMC: the
action $\eval$ of evaluating a program to obtain a value
distribution, and the action $\app \valone$, which corresponds to
passing the value $\valone$ to a distinguished value. If we consider
separately each of these actions, we see that the result we want to
have is equivalent to:
\begin{varitemize}
\item 
  Let be $\termone$, $\termtwo$ closed terms.  Then
  $\appl{\closure{\howe\bisim}}{\dval{\sem\termone}}{\dval{\sem\termtwo}}
  \leq \appl{\closure{\howe\bisim}}{\termone}{\termtwo}$
\item 
  Let be $\termone$, $\termtwo$ such that $\wfj \varone \termone$ and
  $\wfj \varone \termtwo$, and let $\valone$ be a value. Then it holds
  that: $
  \appl{\closure{\howe\bisim}}{\subst\termone\varone\valone}{\subst\termtwo\varone\valone}
  \leq \appl{\closure {\howe\bisim}}{\dval{\abstr \varone
      \termone}}{\dval{\abstr\varone\termtwo}}$
\end{varitemize}

We are first going to show these two result to the original premetric
on terms $\howe \bisim$, and we will extend them later to $\closure
{\howe \bisim}$.
 \begin{lemma}[Key-Lemma]\label{KL}
Let be $\termone$ and $\termtwo$ two closed terms. Then: $$\appl{\howe\bisim}{\sem\termone}{\sem\termtwo} \leq \appl{\howe\bisim}{\termone}{\termtwo} $$
\end{lemma}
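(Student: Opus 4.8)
The plan is to prove the Key Lemma by induction on the derivation of $\bssp{\termone}{\sem\termone}$, establishing the slightly more explicit statement: for all closed $\termone,\termtwo$ and every $\epsone$ with $\judghowe{\emcon}{\bisim}{\termone}{\termtwo}{\epsone}$ (the Howe judgement of Figure~\ref{Howe}), one has $\appl{\howe\bisim}{\sem\termone}{\sem\termtwo}\le\epsone$, where on the left $\howe\bisim$ denotes the Kantorovich lifting of the extended premetric $\howe\bisim$ to value distributions. Taking the infimum over the derivable $\epsone$, and using Lemma~\ref{typehowe} to see that the empty typing context always suffices for closed terms, then gives the lemma. The induction splits on the shape of $\termone$, which in turn forces the last rule of the Howe derivation.

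Before the induction I would record three standing ingredients. First, a convexity/composition property of the lifting, immediate from its linear-programming definition: if $\distrone=\sum_k p_k\distrone_k$ and $\distrtwo=\sum_k p_k\distrtwo_k$ then $\appl{\metrone}{\distrone}{\distrtwo}\le\sum_k p_k\appl{\metrone}{\distrone_k}{\distrtwo_k}$, and more generally an optimal coupling for $\appl{\metrone}{\distrone}{\distrtwo}$ can be pushed forward along families of distributions indexed by states. Second, a lifted form of pseudo-transitivity: combining Lemma~\ref{lifting} with Lemma~\ref{pseudotrans} (suitably extended to $\setterm$) yields $\appl{\howe\bisim}{\distrone}{\distrthree}\le\appl{\howe\bisim}{\distrone}{\distrtwo}+\appl{\bisim}{\distrtwo}{\distrthree}$ on value distributions, and Lemma~\ref{distdirac} collapses Dirac distributions. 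Third, the characterisation of $\functionnal$ on $\markovterm$ together with $\bisim=\functionnal(\bisim)$, giving the identities $\appl{\bisim}{\termone}{\termtwo}=\appl{\bisim}{\sem\termone}{\sem\termtwo}$ and $\appl{\bisim}{\dval{\abstr\varone\termone}}{\dval{\abstr\varone\termtwo}}=\sup_{\valone}\appl{\bisim}{\subst{\termone}{\varone}{\valone}}{\subst{\termtwo}{\varone}{\valone}}$.

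The base cases are direct: for $\termone$ a variable the claim is vacuous since $\termone$ is closed, and for $\termone=\diver$ one has $\sem\diver=\emptydistr$, so the left-hand lifting is just the weight of $\sem\termtwo$, which equals $\appl{\bisim}{\diver}{\termtwo}$ and is hence $\le\epsone$ by the base Howe rule for $\diver$. If $\termone=\abstr\varone\termthree$, the derivation ends with the abstraction rule, yielding $\judghowe{\varone}{\bisim}{\termthree}{\termthree'}{\epstwo}$ and a closed abstraction $\abstr\varone\termthree'$ with $\appl{\bisim}{\abstr\varone\termthree'}{\termtwo}\le\epsthree$ and $\epsone=\epstwo+\epsthree$; re-applying the abstraction rule with a trivial outer component gives $\appl{\howe\bisim}{\abstr\varone\termthree}{\abstr\varone\termthree'}\le\epstwo$, hence (Lemma~\ref{distdirac}) $\appl{\howe\bisim}{\dirac{\abstr\varone\termthree}}{\dirac{\abstr\varone\termthree'}}\le\epstwo$, and lifted pseudo-transitivity through $\sem{\abstr\varone\termthree'}=\dirac{\abstr\varone\termthree'}$, together with $\appl{\bisim}{\abstr\varone\termthree'}{\termtwo}=\appl{\bisim}{\sem{\abstr\varone\termthree'}}{\sem\termtwo}$, closes the case. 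If $\termone=\psum{\termseven}{\termeight}$, the sum rule supplies targets $\termfour,\termfive$ with $\judghowe{\emcon}{\bisim}{\termseven}{\termfour}{\epsone_1}$, $\judghowe{\emcon}{\bisim}{\termeight}{\termfive}{\epsone_2}$ and $\appl{\bisim}{\psum{\termfour}{\termfive}}{\termtwo}\le\epsthree$; the induction hypothesis bounds $\appl{\howe\bisim}{\sem\termseven}{\sem\termfour}$ and $\appl{\howe\bisim}{\sem\termeight}{\sem\termfive}$, convexity of the lifting bounds $\appl{\howe\bisim}{\sem{\psum{\termseven}{\termeight}}}{\sem{\psum{\termfour}{\termfive}}}$ by $\tfrac{\epsone_1+\epsone_2}{2}$ (since $\sem{\psum{\termseven}{\termeight}}=\tfrac12\sem\termseven+\tfrac12\sem\termeight$ and likewise for the targets), and lifted pseudo-transitivity through $\sem{\psum{\termfour}{\termfive}}$ finishes it.

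The application case $\termone=\termseven\termeight$ is the crux and the expected main obstacle. The big-step rule decomposes $\sem\termone$ as $\sum\sem\termseven(\abstr\varone\termthree)\cdot\sem\termeight(\valone)\cdot\sem{\subst{\termthree}{\varone}{\valone}}$, and analogously for the target $\termfour\termfive$ given by the application rule of Figure~\ref{Howe} (with $\judghowe{\emcon}{\bisim}{\termseven}{\termfour}{\epsone_1}$, $\judghowe{\emcon}{\bisim}{\termeight}{\termfive}{\epsone_2}$, $\appl{\bisim}{\termfour\termfive}{\termtwo}\le\epsthree$). The induction hypothesis on $\termseven$ and $\termeight$ yields optimal couplings witnessing $\appl{\howe\bisim}{\sem\termseven}{\sem\termfour}\le\epsone_1$ and $\appl{\howe\bisim}{\sem\termeight}{\sem\termfive}\le\epsone_2$. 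For a pair of abstractions $\abstr\varone\termthree$, $\abstr\varone\termthree'$ related by the first coupling, the Howe derivation of their distance again ends with the abstraction rule, so combining Lemma~\ref{pseudosubst} (whose hypothesis holds for the open extension of $\bisim$) with Lemma~\ref{pseudotrans} and the identity $\appl{\bisim}{\dval{\abstr\varone\termsix}}{\dval{\abstr\varone\termthree'}}=\sup_{\valone}\appl{\bisim}{\subst{\termsix}{\varone}{\valone}}{\subst{\termthree'}{\varone}{\valone}}$ bounds $\appl{\howe\bisim}{\subst{\termthree}{\varone}{\valone}}{\subst{\termthree'}{\varone}{\valtwo}}$ by the abstraction-distance contribution plus $\appl{\howe\bisim}{\valone}{\valtwo}$ for arguments $\valone,\valtwo$ related by the second coupling; the induction hypothesis then transports $\sem{\subst{\termthree}{\varone}{\valone}}$ towards $\sem{\subst{\termthree'}{\varone}{\valtwo}}$ within that bound. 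Pushing forward the product of the two couplings assembles all these local transports into one coupling between $\sem\termone$ and $\sem{\termfour\termfive}$ of total cost at most $\epsone_1+\epsone_2$, and a last application of lifted pseudo-transitivity through $\sem{\termfour\termfive}$, using $\appl{\bisim}{\termfour\termfive}{\termtwo}=\appl{\bisim}{\sem{\termfour\termfive}}{\sem\termtwo}\le\epsthree$, gives $\appl{\howe\bisim}{\sem\termone}{\sem\termtwo}\le\epsone_1+\epsone_2+\epsthree=\epsone$. The delicate points are purely the bookkeeping: the leftover (unmatched) mass of each coupling must be matched against the corresponding leftover on the $\sem{\termfour\termfive}$ side and transported at unit cost, and every use of a value of $\howe\bisim$ must be realised by an actual Howe judgement up to arbitrarily small slack; both are routine but make up the bulk of a full write-up.
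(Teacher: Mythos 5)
Your proof follows essentially the same route as the paper's: induction on the big-step derivation of $\bssp{\termone}{\sem\termone}$, proving that every derivable Howe judgement bounds the lifted distance, with the application case handled by composing the couplings given by the induction hypothesis on the subterms, pseudo-substitutivity (Lemma~\ref{pseudosubst}) on the substituted bodies, and lifted pseudo-transitivity (Lemmas~\ref{lifting} and~\ref{pseudotrans}) through the intermediate term of the Howe rule together with the fixpoint property of $\bisim$. The only differences are presentational: you stay entirely in the primal (coupling) formulation where the paper interleaves the primal and dual characterisations of the Kantorovich lifting, and you spell out the $\diver$ and $\psum{}{}$ cases that the paper leaves implicit.
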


\begin{proof}
We show in fact that,  
for every $\epsone$ such that the judgement $\judghowe{} \bisim {\termone}{\termtwo} \epsone$ is a valid one, it holds that $\appl{\howe\bisim}{\sem\termone}{\sem\termtwo} \leq \epsone$.
We show that by induction on the structure of the derivation of: $\bssp \termone {\sem \termone}$
\begin{itemize}
\item If $\termone$ is a value: then $\termone = \abstr \varone \termfour$, and the derivation of $\bssp \termone {\sem \termone}$ is of the following form:
$$\AxiomC{}
\UnaryInfC{$\bssp {\abstr \varone \termfour}{\dirac{\abstr \varone \termfour}} $}
\DisplayProof $$
Then the proof tree allowing to certify the validity of  $\judghowe{} \bisim {\termone}{\termtwo} \epsone$ should be of the form:
$$
\AxiomC{$\judghowe {\varone}{\bisim}\termfive \termfour \epsthree $}
\AxiomC{$\appl\bisim{\abstr\varone\termfour}{\termtwo} \leq \epstwo $}
  \BinaryInfC{$\judghowe{}{\bisim}{\abstr\varone\termfive}\termtwo{\epsone = \epsthree + \epstwo} $}
\DisplayProof
$$
Since $\bisim$ is a fixpoint of $\functionnal$, we have that:
$\appl{\bisim}{\sem{\abstr\varone\termfour}}{\sem\termtwo} \leq\appl{\bisim}{\abstr\varone\termfour}{\termtwo} \leq \epstwo $.
And so:
\begin{align*}
\appl{\howe\bisim}{\sem{\abstr\varone\termfive}}{\sem\termtwo}& \leq {\appl{\howe\bisim}{\sem {\abstr\varone\termfive}}{\sem{\abstr\varone\termfour}}} + \appl{\bisim}{\sem{\abstr\varone\termfour}}{\sem\termtwo} && \text{ by lemma \ref{pseudotrans}}\\
&\leq {\appl{\howe\bisim}{\sem{\abstr\varone\termfive}}{\sem{\abstr\varone\termfour}}} + \delta
\end{align*}
Moreover, since $\abstr \varone \termfive$ and $\abstr \varone \termfour$ are values, we know that: $\sem {\abstr \varone \termfive} = \dirac{\abstr \varone \termfive}$, and $\sem{\abstr \varone \termfour} = \dirac{\abstr \varone \termfour}$. By Lemma \ref{distdirac}, we can see that:
${\appl{\howe\bisim}{\sem{\abstr\varone\termfive}}{\sem{\abstr\varone\termfour}}}  = {\appl{\howe\bisim}{{\abstr\varone\termfive}}{{\abstr\varone\termfour}}}$. It follows that: $\appl{\howe\bisim}{\sem{\abstr\varone\termfive}}{\sem\termtwo} \leq {\appl{\howe\bisim}{{\abstr\varone\termfive}}{{\abstr\varone\termfour}}}+ \delta$
and since we have the following proof tree, it allows us to conclude.
$$
\AxiomC{$\judghowe{\varone}{\bisim}\termfive \termfour \epsthree $}
\AxiomC{$\appl\bisim{\abstr\varone\termfour}{\abstr\varone\termfour} \leq 0 $}
  \BinaryInfC{$\judghowe{}\bisim{\abstr\varone\termfive}{\abstr\varone\termfour} \epsthree $}
\DisplayProof
$$

\item If $\termone = \termsix\termthree$.
Then the derivation of $\bssp \termone {\sem \termone}$ is the following:
 $$
  \AxiomC{$
    \begin{array}{c}
      \bssp{\termsix}{\sem\termsix}\qquad\bssp{\termthree}{\sem\termthree}\\
      \{\bssp{\subst{\termseven}{\varone}{\valone}}{\sem{\subst\termseven\varone\valone}}\}_{\abstr{\varone}{\termseven}\in\supp{\sem\termsix},\valone\in\supp{\sem\termthree}}
    \end{array}$}
  \UnaryInfC{$\bssp{\termone\termtwo}{\sum{\sem\termsix}(\abstr{\varone}{\termthree})\cdot{\sem\termthree}(\valone)\cdot\sem{\subst\termseven\varone\valone}}$}
  \DisplayProof
  $$

And the proof tree corresponding to the validity of $\judghowe{} {\bisim} \termone \termtwo  \epsone$ has the following form:
$$ 
\AxiomC{$\judghowe{}\bisim\termsix\termfour  \epsfour $}
\AxiomC{$\judghowe{}\bisim\termthree\termfive \epsthree $}
\AxiomC{$\appl\bisim{\termfour\termfive}{\termtwo} \leq \epstwo $}
  \TrinaryInfC{$\judghowe{}{\bisim}{\termsix\termthree}\termtwo{\epsone =  \epsfour +\epsthree + \epstwo}$}
  \DisplayProof$$
  We have: 
  \begin{align*}
  \appl{\howe\bisim}{\sem{\termsix\termthree}}{\sem\termtwo} & \leq \appl{\howe\bisim}{\sem{\termsix\termthree}}{\sem{\termfour\termfive}} + \appl{\bisim}{\sem{\termfour\termfive}}{\termtwo}\\
  &\leq \appl{\howe\bisim}{\sem{\termsix\termthree}}{\sem{\termfour\termfive}} + \epstwo \\
  \end{align*}
  So it is enough to show that: $\appl{\howe\bisim}{\sem{\termsix\termthree}}{\sem{\termfour\termfive}} \leq \epsone + \epsthree$.

So we have that:
\begin{align*}
\appl{\howe\bisim}{\sem{\termsix\termthree}}{\sem{\termfour\termfive}} &=\max\left\{ \sum_{\stateone}\coeffone_\stateone\cdot  {\sem{\termsix\termthree}}(\stateone)+ \coefftwo_\stateone{\sem{\termfour\termfive}}(\stateone) \mid  \coeffone_\stateone \leq 1, \coefftwo_\stateone \leq 1 \wedge \coeffone_\stateone + \coefftwo_\statetwo \leq \appl{\howe\bisim}{\stateone}{\statetwo}\right\}\\
&= \max\{ \sum_{\stateone}(\coeffone_\stateone \cdot {\sum_{\termseven}\sum_{\valone}{\sem{\termsix}(\abstr\varone\termseven)\cdot\sem{\termthree}(\valone)\cdot\sem{\subst\termseven \varone\valone}(\stateone)}}+ \\& \qquad\qquad\qquad
\coefftwo_\stateone\cdot{\sum_{\termeight}\sum_{\valtwo} 
{\sem{\termfour}(\abstr\varone\termeight)\cdot\sem{\termfive}(\valtwo)\cdot\sem{\subst\termeight \varone\valtwo}(\stateone)}}) \\ & \qquad \mid   \coeffone_\stateone \leq 1, \coefftwo_\stateone \leq 1 \wedge \coeffone_\stateone + \coefftwo_\statetwo \leq \appl{\howe\bisim}{\stateone}{\statetwo}\}
\end{align*}

We are now going to use the dual characterisation of the lifting of a metric to a distribution:
We know that: $\appl{\howe\bisim}\termsix\termfour \leq \epsone $.\\
So there exist $(\coeffdone_{\termseven, \termeight})_{{\abstr\termseven \in \supp{\sem\termsix}}, {\abstr\termeight \in \supp{\sem\termfour}}}$, and $({\coeffdxone_\termseven})_{\abstr\varone\termseven \in \supp{\sem\termsix}}$, and $(\coeffdyone_\termeight)_{\abstr\varone\termeight \in \supp{\sem\termfour}}$, such that:
\begin{align}
\label{eqn1}
\sum_{\termseven, \termeight}\coeffdone_{\termseven, \termeight} \cdot \appl{\howe\bisim}{\abstr\varone\termseven}{\abstr\varone\termeight} & + \sum_{\termseven}\coeffdxone_\termseven + \sum_{\termeight} \coeffdyone_\termeight = \appl{\howe\bisim}{\termsix}{\termfour}\\
\label{eqn2}
\sum_{\termseven}\coeffdone_{\termseven, \termeight} + \coeffdyone_\termeight &= \sem{\termfour}({\abstr\varone\termeight})\\
\label{eqn3}
\sum_{\termeight}\coeffdone_{\termseven, \termeight} + \coeffdxone_\termseven &= \sem{\termsix}({\abstr\varone\termseven})
\end{align}
Please observe that the equation (\ref{eqn1}) implies in particular that: $\sum_{\termseven, \termeight}\coeffdone_{\termseven, \termeight}\leq 1$. Similarly, the equations (\ref{eqn2}) and (\ref{eqn3}) implies that $\sum_{\termseven}\coeffdxone_{\termseven,}\leq 1$
and $\sum_{\termeight}\coeffdyone_{ \termeight}\leq 1$

and
\begin{align*}
\appl{\howe\bisim}{\sem{\termsix\termthree}}{\sem{\termfour\termfive}} 
&= \max\{ \sum_{\stateone}(\coeffone_\stateone \cdot \sum_{\termseven}\sum_{\valone}\left( \sum_{\termeight}\coeffdone_{\termseven, \termeight} + \coeffdxone_\termseven\right)\cdot\sem{\termthree}(\valone)\cdot\sem{\subst\termseven \varone\valone}(\stateone)\\
& \qquad\qquad + \coefftwo_\stateone \sum_{\termeight}\sum_{\valtwo} 
\left( \sum_{\termseven}\coeffdone_{\termseven, \termeight} + \coeffdyone_\termeight \right)\cdot\sem{\termfive}(\valtwo)\cdot\sem{\subst\termeight \varone\valtwo}(\stateone)) \\
 & \qquad \qquad \mid  \coeffone_\stateone \leq 1 \wedge \coefftwo_\stateone \leq 1 \wedge \coeffone_\stateone + \coefftwo_\statetwo \leq \appl{\howe\bisim}{\stateone}{\statetwo}\}\\ \ \\
& = \max\{ \sum_{\stateone}  \sum_{\termseven, \termeight}( \coeffdone_{\termseven, \termeight} \sum_{\valone}\coeffone_\stateone
 \cdot\sem{\termthree}(\valone)\cdot\sem{\subst\termseven \varone\valone}(\stateone)+\coefftwo_\stateone \cdot\sem{\termfive}(\valtwo)\cdot\sem{\subst\termeight \varone\valtwo}(\stateone)\\
 &\qquad + \sum_{\stateone}\coeffone_\stateone \cdot\sum_{\termseven} \coeffdxone_{\termseven} \sum_{\valone}
 \cdot\sem{\termthree}(\valone)\cdot\sem{\subst\termseven \varone\valone}(\stateone)\\
  & \qquad  + \sum_{\stateone} \coefftwo_\stateone \sum_{\termeight}\coeffdyone_{\termeight}\sum_{\valtwo} \cdot\sem{\termfive}(\valtwo)\cdot\sem{\subst\termeight \varone\valtwo}(\stateone)\\
 & \qquad \mid  \coeffone_\stateone\leq 1\wedge \coefftwo_\stateone \leq 1 \wedge \coeffone_\stateone + \coefftwo_\statetwo \leq \appl{\howe\bisim}{\stateone}{\statetwo}\}\\ \ \\
& \leq
  \max\{\sum_{\stateone}\sum_{\termseven, \termeight}( \coeffdone_{\termseven, \termeight} \sum_{\valone}\coeffone_\stateone
 \cdot\sem{\termthree}(\valone)\cdot\sem{\subst\termseven \varone\valone}(\stateone)+\coefftwo_\stateone \cdot\sem{\termfive}(\valtwo)\cdot\sem{\subst\termeight \varone\valtwo}(\stateone) \\ &\qquad  \mid \coeffone_\stateone \leq 1 \wedge \coefftwo_\stateone \leq 1 \wedge \coeffone_\stateone + \coefftwo_\statetwo \leq \appl{\howe\bisim}{\stateone}{\statetwo}\}\\
 &+ \max \{\sum_{\stateone}\coeffone_\stateone\cdot \sum_{\termseven} \coeffdxone_{\termseven} \sum_{\valone}
 \cdot\sem{\termthree}(\valone)\cdot\sem{\subst\termseven \varone\valone}(\stateone)\\ &\qquad \mid  \coeffone_\stateone \leq 1, \coefftwo_\stateone \leq 1 \wedge \coeffone_\stateone + \coefftwo_\statetwo \leq \appl{\howe\bisim}{\stateone}{\statetwo}\}\\
& +\max \{  \sum_{\stateone} \sum_{\termeight}\coeffdyone_{\termeight}\sum_{\valtwo} \coefftwo_\stateone\cdot\sem{\termfive}(\valtwo)\cdot\sem{\subst\termeight \varone\valtwo}(\stateone)\\
 & \qquad \mid  \coeffone_\stateone \leq 1, \coefftwo_\stateone \leq 1 \wedge \coeffone_\stateone + \coefftwo_\statetwo \leq \appl{\howe\bisim}{\stateone}{\statetwo}\}\\ \ \\
  & \leq
 \max\{ \sum_{\stateone}\sum_{\termseven, \termeight}( \coeffdone_{\termseven, \termeight} \sum_{\valone}
 \coeffone_\stateone\cdot\sem{\termthree}(\valone)\cdot\sem{\subst\termseven \varone\valone}(\stateone)+\coefftwo_\stateone \cdot\sem{\termfive}(\valtwo)\cdot\sem{\subst\termeight \varone\valtwo}(\stateone) \\ &\qquad \mid  \coeffone_\stateone \leq 1 \wedge \coefftwo_\stateone \leq 1 \wedge \coeffone_\stateone + \coefftwo_\statetwo \leq \appl{\howe\bisim}{\stateone}{\statetwo}\}\\
& + \sum_{\termseven} \coeffdxone_{\termseven} 
 + \sum_{\termeight}\coeffdyone_{\termeight}
\end{align*}

We can now apply the induction hypothesis to 
$\appl{\howe\bisim}\termthree\termfive \leq \epsthree $.
We obtain that: $\appl{\howe\bisim}{\sem\termthree}{\sem\termfive} \leq \epsthree $.
So there exist $(\coeffdtwo_{\valone, \valtwo})_{{\valone \in \supp{\sem\termthree}}, {\valtwo \in \supp{\sem\termfive}}}$, and $({\coeffdxtwo_\valone})_{\valone \in \supp{\sem\termthree}}$, and $(\coeffdytwo_\valtwo)_{\valtwo \in \supp{\sem\termfive}}$, such that:
\begin{align}
\label{eqn1b}
\sum_{\valone, \valtwo}\coeffdtwo_{\valone, \valtwo} \cdot \appl{\howe\bisim}{\valone}{\valtwo} & + \sum_{\valone}\coeffdxtwo_\valone + \sum_{\valtwo} \coeffdytwo_\valtwo = \appl{\howe\bisim}{\termthree}{\termfive} \leq \epsthree \\
\sum_{\valone}\coeffdtwo_{\valone, \valtwo} + \coeffdytwo_\valtwo &= \sem{\termfive}({\valtwo})\\
\sum_{\valtwo}\coeffdtwo_{\valone, \valtwo} + \coeffdxtwo_\valone &= \sem{\termthree}({\valone})
\end{align}

And now we have:
\begin{align*}
& \appl{\howe\bisim }{\sem{\termsix\termthree}}{\sem{\termfour\termfive}} \\
 & \leq \max\{ \sum_{\stateone}\sum_{\termseven, \termeight} \coeffdone_{\termseven, \termeight}( \sum_{\valone}
 (\sum_{\valtwo}\coeffdtwo_{\valone, \valtwo} + \coeffdxtwo_\valone)\cdot\sem{\subst\termseven \varone\valone}(\stateone) \cdot \coeffone_\stateone \\
 & \qquad\qquad\qquad\qquad \quad + \sum_{\valtwo} (\sum_{\valone}\coeffdtwo_{\valone, \valtwo} + \coeffdytwo_\valtwo )\cdot\sem{\subst\termeight \varone\valtwo}(\stateone)\cdot \coefftwo_\stateone ) \\ &\qquad \qquad \mid  \coeffone_\stateone \leq 1, \coefftwo_\stateone \leq 1 \wedge \coeffone_\stateone + \coefftwo_\statetwo \leq \appl{\howe\bisim}{\stateone}{\statetwo}\}\\
& \qquad + \sum_{\termseven} \coeffdxone_{\termseven} 
 + \sum_{\termeight}\coeffdyone_{\termeight}\\ 
\ \\
 & \leq
 \max \{\sum_{\termseven, \termeight}( \coeffdone_{\termseven, \termeight}( \sum_{\valone,\valtwo}\coeffdtwo_{\valone, \valtwo}\sum_{\stateone}
 \sem{\subst\termseven \varone\valone}(\stateone)\cdot \coeffone_\stateone + 
 \sem{\subst\termeight \varone\valtwo}(\stateone) \cdot \coefftwo_\stateone \\
 & \qquad\qquad\qquad\quad + \sum_{\valone}\coeffdxtwo_\valone\cdot \sum_{\stateone}
\sem{\subst\termseven \varone\valone}(\stateone)\cdot \coeffone_\stateone \\
 & \qquad \qquad\qquad\quad +\sum_{\valtwo}\coeffdytwo_\valtwo\cdot\sum_{\stateone} \sem{\subst\termeight \varone\valtwo}(\stateone))\cdot \coefftwo_\stateone
   \\ &\qquad \qquad\mid  \coeffone_\stateone \leq 1 \wedge \coefftwo_\stateone \leq 1 \wedge \coeffone_\stateone + \coefftwo_\statetwo \leq \appl{\howe\bisim}{\stateone}{\statetwo}\}\\
& + \sum_{\termseven} \coeffdxone_{\termseven} 
 + \sum_{\termeight}\coeffdyone_{\termeight}
 \\ \ \\
 & \leq
   \sum_{\termseven, \termeight} \coeffdone_{\termseven, \termeight}( \sum_{\valone,\valtwo}\coeffdtwo_{\valone, \valtwo}\max\{\sum_{\stateone}
 \sem{\subst\termseven \varone\valone}(\stateone)\cdot \coeffone_\stateone + 
 \sem{\subst\termeight \varone\valtwo}(\stateone) \cdot  \coefftwo_\stateone \mid  \coeffone_\stateone \leq 1\wedge \coefftwo_\stateone \leq 1 \wedge \coeffone_\stateone + \coefftwo_\statetwo \leq \appl{\howe\bisim}{\stateone}{\statetwo}\}\\
  & \qquad\qquad\quad +\sum_{\valone}\coeffdxtwo_\valone\cdot\max\{ \sum_{\stateone}
\sem{\subst\termseven \varone\valone}(\stateone)\cdot \coeffone_\stateone 
 \mid  \coeffone_\stateone \leq 1 \wedge \coefftwo_\stateone \leq 1 \wedge \coeffone_\stateone + \coefftwo_\statetwo \leq \appl{\howe\bisim}{\stateone}{\statetwo}\}\\
  & \qquad\qquad\quad +\sum_{\valtwo}\coeffdytwo_\valtwo\cdot \max\{ \sum_{\stateone} \sem{\subst\termeight \varone\valtwo}(\stateone)\cdot \coefftwo_\stateone
\mid \coeffone_\stateone\leq 1 \wedge \coefftwo_\stateone \leq 1 \wedge \coeffone_\stateone + \coefftwo_\statetwo \leq \appl{\howe\bisim}{\stateone}{\statetwo}\})\\
& + \sum_{\termseven} \coeffdxone_{\termseven} 
 + \sum_{\termeight}\coeffdyone_{\termeight}\\ 
\end{align*}

\hide{
And now we have:
\begin{align*}
& \appl{\howe\bisim }{\sem{\termsix\termthree}}{\sem{\termfour\termfive}} \\
 & \leq \max\{ \sum_{\stateone}\sum_{\termseven, \termeight} \coeffdone_{\termseven, \termeight}( \sum_{\valone}
 (\sum_{\valtwo}\coeffdtwo_{\valone, \valtwo} + \coeffdxtwo_\valone)\cdot\sem{\subst\termseven \varone\valone}(\stateone) \cdot (\coeffone_\stateone - \coefftwo)\\
 & \qquad\qquad\qquad\qquad \quad - \sum_{\valtwo} (\sum_{\valone}\coeffdtwo_{\valone, \valtwo} + \coeffdytwo_\valtwo )\cdot\sem{\subst\termeight \varone\valtwo}(\stateone)\cdot (\coeffone_\stateone - \coefftwo)) \\ &\qquad \qquad \mid 0 \leq \coeffone_\stateone, \coefftwo \leq 1 \wedge \coeffone_\stateone - \coefftwo_\statetwo \leq \appl{\howe\bisim}{\stateone}{\statetwo}\}\\
& \qquad + \sum_{\termseven} \coeffdxone_{\termseven} 
 + \sum_{\termeight}\coeffdyone_{\termeight}\\ 
\ \\
 & \leq
 \max \{\sum_{\termseven, \termeight}( \coeffdone_{\termseven, \termeight}( \sum_{\valone,\valtwo}\coeffdtwo_{\valone, \valtwo}\sum_{\stateone}
 \sem{\subst\termseven \varone\valone}(\stateone) - 
 \sem{\subst\termeight \varone\valtwo}(\stateone) \cdot (\coeffone_\stateone - \coefftwo)\\
 & \qquad\qquad\qquad\quad + \sum_{\valone}\coeffdxtwo_\valone\cdot \sum_{\stateone}
\sem{\subst\termseven \varone\valone}(\stateone)\cdot (\coeffone_\stateone - \coefftwo)\\
 & \qquad \qquad\qquad\quad -\sum_{\valtwo}\coeffdytwo_\valtwo\cdot\sum_{\stateone} \sem{\subst\termeight \varone\valtwo}(\stateone))\cdot (\coeffone_\stateone - \coefftwo)
   \\ &\qquad \qquad\mid 0 \leq \coeffone_\stateone, \coefftwo \leq 1 \wedge \coeffone_\stateone - \coefftwo_\statetwo \leq \appl{\howe\bisim}{\stateone}{\statetwo}\}\\
& + \sum_{\termseven} \coeffdxone_{\termseven} 
 + \sum_{\termeight}\coeffdyone_{\termeight}
 \\ \ \\
 & \leq
   \sum_{\termseven, \termeight} \coeffdone_{\termseven, \termeight}( \sum_{\valone,\valtwo}\coeffdtwo_{\valone, \valtwo}\max\{\sum_{\stateone}
 \sem{\subst\termseven \varone\valone}(\stateone) - 
 \sem{\subst\termeight \varone\valtwo}(\stateone) \cdot (\coeffone_\stateone - \coefftwo)\mid 0 \leq \coeffone_\stateone, \coefftwo \leq 1 \wedge \coeffone_\stateone - \coefftwo_\statetwo \leq \appl{\howe\bisim}{\stateone}{\statetwo}\}\\
  & \qquad\qquad\quad +\sum_{\valone}\coeffdxtwo_\valone\cdot\max\{ \sum_{\stateone}
\sem{\subst\termseven \varone\valone}(\stateone)\cdot (\coeffone_\stateone - \coefftwo)
 \mid 0 \leq \coeffone_\stateone, \coefftwo \leq 1 \wedge \coeffone_\stateone - \coefftwo_\statetwo \leq \appl{\howe\bisim}{\stateone}{\statetwo}\}\\
  & \qquad\qquad\quad +\sum_{\valtwo}\coeffdytwo_\valtwo\cdot \max\{- \sum_{\stateone} \sem{\subst\termeight \varone\valtwo}(\stateone))\cdot (\coeffone_\stateone - \coefftwo)
\mid 0 \leq \coeffone_\stateone, \coefftwo \leq 1 \wedge \coeffone_\stateone - \coefftwo_\statetwo \leq \appl{\howe\bisim}{\stateone}{\statetwo}\})\\
& + \sum_{\termseven} \coeffdxone_{\termseven} 
 + \sum_{\termeight}\coeffdyone_{\termeight}\\ 
\end{align*}
}
Now, we can use equation (\ref{eqn1b}), and the fact that the sum of a distribution is always lesser or equal to 1: 

\begin{align*} 
 \appl{\howe\bisim }{\sem{\termsix\termthree}}{\sem{\termfour\termfive}} \leq
  &\sum_{\termseven, \termeight} \coeffdone_{\termseven, \termeight}\left( \sum_{\valone,\valtwo}\coeffdtwo_{\valone, \valtwo}
 \appl{\howe\bisim}{\subst\termseven\varone\valone}{\subst\termeight\varone\valtwo}
  + \sum_{\valone}\coeffdxtwo_\valone
  +\sum_{\valtwo}\coeffdytwo_\valtwo\right)\\
& + \sum_{\termseven} \coeffdxone_{\termseven} 
 + \sum_{\termeight}\coeffdyone_{\termeight}\\
 \end{align*}

 We can here use lemma \ref{pseudosubst}, which states that $\howe \bisim$ is pseudo-substitutive:
 \begin{align*}
 \appl{\howe\bisim }{\sem{\termsix\termthree}}{\sem{\termfour\termfive}} \leq
  &\sum_{\termseven, \termeight} \coeffdone_{\termseven, \termeight}\left( \sum_{\valone,\valtwo}\coeffdtwo_{\valone, \valtwo} (\appl{\howe\bisim}{\termseven}{\termeight} +
 \appl{\howe\bisim}{\valone}{\valtwo})
  + \sum_{\valone}\coeffdxtwo_\valone
  +\sum_{\valtwo}\coeffdytwo_\valtwo\right)\\
& + \sum_{\termseven} \coeffdxone_{\termseven} 
 + \sum_{\termeight}\coeffdyone_{\termeight}\\
 & \leq
  \sum_{\termseven, \termeight} \coeffdone_{\termseven, \termeight} (\sum_{\valone,\valtwo}\coeffdtwo_{\valone, \valtwo}) \cdot \appl{\howe\bisim}{\termseven}{\termeight} + \sum_{\termseven} \coeffdxone_{\termseven} 
 + \sum_{\termeight}\coeffdyone_{\termeight} \\
  & + \sum_{\termseven, \termeight} \coeffdone_{\termseven, \termeight}  \sum_{\valone,\valtwo}\coeffdtwo_{\valone, \valtwo}\cdot  
 \appl{\howe\bisim}{\valone}{\valtwo}
  + \sum_{\valone}\coeffdxtwo_\valone
  +\sum_{\valtwo}\coeffdytwo_\valtwo)\\
\end{align*}
and, since $\sum_{\valone, \valtwo}\coeffdtwo_{\valone, \valtwo} \leq 1$, and similarly $\sum_{\termseven, \termeight}\coeffdone_{\termseven, \termeight} \leq 1$, we have that:

\begin{align*}
\appl{\howe\bisim }{\sem{\termsix\termthree}}{\sem{\termfour\termfive}} & \leq 
\sum_{\termseven, \termeight} \coeffdone_{\termseven, \termeight} \cdot \appl{\howe\bisim}{\termseven}{\termeight} + \sum_{\termseven} \coeffdxone_{\termseven} 
 + \sum_{\termeight}\coeffdyone_{\termeight} \\
  & +   \sum_{\valone,\valtwo}\coeffdtwo_{\valone, \valtwo}\cdot  
 \appl{\howe\bisim}{\valone}{\valtwo}
  + \sum_{\valone}\coeffdxtwo_\valone
  +\sum_{\valtwo}\coeffdytwo_\valtwo\\
\end{align*}
We can now use equations (\ref{eqn1}) and (\ref{eqn1b}): 
$$\appl{\howe\bisim }{\sem{\termsix\termthree}}{\sem{\termfour\termfive}} \leq \appl{\howe\bisim}{\sem \termsix}{\sem \termfour} + \appl{\howe\bisim}{\sem \termthree}{\sem \termfive} \leq \epsone + \epsthree $$

\end{itemize}
\end{proof}


\condskip
\begin{lemma}\label{KL1}
$$\appl{\howe\bisim}{\subst\termone\varone\valone}{\subst\termtwo\varone\valone} \leq \appl{\howe\bisim}{\abstr\varone\termone}{\abstr\varone\termtwo}$$
\end{lemma}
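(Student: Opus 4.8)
The plan is to unfold the definition of $\howe\bisim$ on the right-hand side and argue bound by bound: I will show that for every value $\valone$ and every $\epsone$ with $\judghowe{}{\bisim}{\abstr\varone\termone}{\abstr\varone\termtwo}{\epsone}$ valid, one has $\appl{\howe\bisim}{\subst\termone\varone\valone}{\subst\termtwo\varone\valone}\leq\epsone$; taking the infimum over all such $\epsone$ (and, when there is none, using the trivial bound $1$, which is vacuous since $\appl{\howe\bisim}{\abstr\varone\termone}{\abstr\varone\termtwo}$ is then $1$) yields the claim.

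So fix $\valone$ and such an $\epsone$. The first step is inversion on the derivation of $\judghowe{}{\bisim}{\abstr\varone\termone}{\abstr\varone\termtwo}{\epsone}$: since the left-hand term $\abstr\varone\termone$ is an abstraction, the only rule of Figure~\ref{Howe} that can conclude it is the $\lambda$-rule, so there exist a term $\termfour$ and reals $\epsthree,\epstwo$ with $\epsone=\epsthree+\epstwo$ such that $\judghowe{\varone}{\bisim}{\termone}{\termfour}{\epsthree}$ is valid and $\appl\bisim{\abstr\varone\termfour}{\abstr\varone\termtwo}\leq\epstwo$. From the valid judgement $\judghowe{\varone}{\bisim}{\termone}{\termfour}{\epsthree}$ and the very definition of $\howe\bisim$ we get $\appl{\howe\bisim}{\termone}{\termfour}\leq\epsthree$.

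Next I push the substitution through. The open extension of $\bisim$ is value-substitutive (as observed right after Lemma~\ref{pseudosubst}), so Lemma~\ref{pseudosubst} applies with $\valtwo:=\valone$; combined with $\appl{\howe\bisim}{\valone}{\valone}=0$, which holds because $\howe\bisim$ is a premetric (Lemma~\ref{reflhowe}), this gives $\appl{\howe\bisim}{\subst\termone\varone\valone}{\subst\termfour\varone\valone}\leq\appl{\howe\bisim}{\termone}{\termfour}+\appl{\howe\bisim}{\valone}{\valone}\leq\epsthree$. For the complementary estimate, recall that $\bisim$ is a fixpoint of $\functionnal$; using the computation of $\functionnal$ on $\markovterm$ (together with Lemma~\ref{distdirac} to identify the program state $\abstr\varone\termfour$ with the value state $\dval{\abstr\varone\termfour}$, and likewise for $\abstr\varone\termtwo$) we get $\appl\bisim{\abstr\varone\termfour}{\abstr\varone\termtwo}=\sup_{\valtwo}\appl\bisim{\subst\termfour\varone\valtwo}{\subst\termtwo\varone\valtwo}$, whence in particular $\appl\bisim{\subst\termfour\varone\valone}{\subst\termtwo\varone\valone}\leq\epstwo$.

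Finally, pseudo-transitivity (Lemma~\ref{pseudotrans}) lets me combine the two halves: $\appl{\howe\bisim}{\subst\termone\varone\valone}{\subst\termtwo\varone\valone}\leq\appl{\howe\bisim}{\subst\termone\varone\valone}{\subst\termfour\varone\valone}+\appl\bisim{\subst\termfour\varone\valone}{\subst\termtwo\varone\valone}\leq\epsthree+\epstwo=\epsone$, which is what was needed. The only mildly delicate points are the inversion step (which is really immediate, since an abstraction on the left can only be justified by the $\lambda$-rule) and verifying that the hypotheses of Lemmas~\ref{pseudosubst} and~\ref{pseudotrans} are available for the open extension of $\bisim$; I do not expect a genuine obstacle, the argument being essentially a bookkeeping of the two error terms $\epsthree$ (coming from the Howe premise on the bodies) and $\epstwo$ (coming from the $\bisim$-closeness of $\abstr\varone\termfour$ and $\abstr\varone\termtwo$).
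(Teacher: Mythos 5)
Your proof is correct and follows essentially the same route as the paper's: inversion on the $\lambda$-rule of the Howe construction, pseudo-substitutivity (Lemma~\ref{pseudosubst}) on the body, the fixpoint property of $\bisim$ under $\functionnal$ to descend from $\appl\bisim{\abstr\varone\termfour}{\abstr\varone\termtwo}$ to the substituted bodies, and pseudo-transitivity (Lemma~\ref{pseudotrans}) to combine the two error terms. If anything, your bookkeeping of $\epsthree$ and $\epstwo$ is cleaner than the paper's (which swaps the indices in a couple of places), and your explicit handling of the vacuous case where no valid judgement exists is a welcome addition.
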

\begin{proof}
Let be $\epsone$ such that:
$\judghowe{}\bisim{\abstr \varone \termone}{\abstr \varone \termtwo}\epsone$

The only rule that can have been applied is:
$$
\AxiomC{$\judghowe{\varone}\bisim\termone \termfour \epsthree $}
\AxiomC{$\appl\bisim{\abstr\varone\termfour}{\abstr \varone \termtwo} \leq \epstwo $}
\AxiomC{$\wfj {}{\abstr \varone \termtwo} $}
  \TrinaryInfC{$\judghowe{}{\bisim}{\abstr\varone\termone}{\abstr \varone \termtwo}{\epsone = \epsthree + \epstwo} $}
\DisplayProof
$$
We can now apply Lemma \ref{pseudosubst} to $\judghowe{\varone}\bisim\termone \termfour \epsone$, and we see that:
$\appl{\howe\bisim}{\subst\termone \varone \valone}{\subst \termfour \varone \valone} \leq \appl{\howe \bisim}\termone \termfour \leq \epsone$.
Moreover, we know that $\appl\bisim{\abstr\varone\termfour}{\abstr \varone \termtwo} \leq \epstwo $. Since $\bisim$ is a fixpoint for $\functionnal$, we can see that:
\begin{align*}
\epsthree &\geq  \appl \bisim {\abstr \varone \termfour}{\abstr \varone \termtwo}  
= \appl \bisim {\dval {\abstr \varone \termfour}}{\dval{\abstr \varone \termtwo}}
 \geq \appl \bisim {\subst  \termfour \varone \valone}{\subst  \termtwo \varone \valone} 
\end{align*}
and now we can conclude by Lemma \ref{pseudotrans} that:
$\appl{\howe\bisim}{\subst\termone \varone \valone}{\subst \termtwo \varone \valone} \leq \epstwo + \epsthree $
\end{proof}
\condskip
Now we extend these two lemmas to $\closure{\howe \bisim}$:
\begin{lemma}
Let be $\termone$, $\termtwo$ two terms.
Then $$\appl{\closure{\howe\bisim}}{\dval{\sem\termone}}{\dval{\sem \termtwo}}\leq \appl{\closure{\howe\bisim}}{\termone}{ \termtwo} $$
\end{lemma}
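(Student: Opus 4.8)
The plan is to proceed by induction on the derivation of a valid judgement $\judgclh \termone \termtwo \epsone$, proving the auxiliary statement: whenever $\judgclh \termone \termtwo \epsone$ holds, so does $\judgclh {\dval{\sem\termone}}{\dval{\sem\termtwo}} \epsone$. The lemma then follows at once, since $\{\epsone \mid \judgclh \termone \termtwo \epsone\} \subseteq \{\epsone \mid \judgclh {\dval{\sem\termone}}{\dval{\sem\termtwo}} \epsone\}$, hence taking infima gives $\appl{\closure{\howe\bisim}}{\dval{\sem\termone}}{\dval{\sem\termtwo}} \leq \appl{\closure{\howe\bisim}}{\termone}{\termtwo}$.

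For the base case, $\judgclh\termone\termtwo\epsone$ is obtained from $\appl{\howe\bisim}{\termone}{\termtwo}\leq\epsone$. Here the Key Lemma (Lemma~\ref{KL}) gives $\appl{\howe\bisim}{\sem\termone}{\sem\termtwo}\leq\appl{\howe\bisim}{\termone}{\termtwo}\leq\epsone$, where $\sem\termone,\sem\termtwo$ are finite distributions over values, so their lifted distance is well-defined. It then remains to observe $\appl{\howe\bisim}{\dval{\sem\termone}}{\dval{\sem\termtwo}}=\appl{\howe\bisim}{\sem\termone}{\sem\termtwo}$: the lifting of a premetric, whether in its minimum form or in the dual form of Theorem~\ref{theo:duality}, refers only to the masses $\distrone(\stateone_i)$, $\distrtwo(\stateone_j)$ and the pairwise distances $\appl{\howe\bisim}{\stateone_i}{\stateone_j}$, and by definition of the extension of $\howe\bisim$ to $\setterm$ we have $\appl{\howe\bisim}{\dval\valone}{\dval\valtwo}=\appl{\howe\bisim}{\valone}{\valtwo}$, so relabelling the supports through $\valone\mapsto\dval\valone$ changes nothing. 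Consequently $\appl{\howe\bisim}{\dval{\sem\termone}}{\dval{\sem\termtwo}}\leq\epsone$, and the first rule of Figure~\ref{closure} yields $\judgclh{\dval{\sem\termone}}{\dval{\sem\termtwo}}\epsone$.

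The two inductive cases are purely structural. If $\judgclh\termone\termtwo{\min(\epsone,\epstwo)}$ comes from $\judgclh\termone\termtwo\epsone$ and $\judgclh\termtwo\termone\epstwo$, apply the induction hypothesis to both premises and re-apply the symmetry rule of Figure~\ref{closure} to $\judgclh{\dval{\sem\termone}}{\dval{\sem\termtwo}}\epsone$ and $\judgclh{\dval{\sem\termtwo}}{\dval{\sem\termone}}\epstwo$. If $\judgclh\termone\termthree{\epsone+\epstwo}$ comes from $\judgclh\termone\termtwo\epsone$ and $\judgclh\termtwo\termthree\epstwo$, apply the induction hypothesis twice and close with the transitivity rule to get $\judgclh{\dval{\sem\termone}}{\dval{\sem\termthree}}{\epsone+\epstwo}$. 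No new content enters these cases. The only real content of the proof is the base case, which rests entirely on Lemma~\ref{KL}; the sole point needing care there is the invariance of the lifting under relabelling the support, which is immediate from the definitions.
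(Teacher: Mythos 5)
Your overall strategy coincides with the paper's: induction on the derivation of the closure judgement, with all the real content concentrated in the base case via the Key Lemma (Lemma~\ref{KL}) together with the observation that the lifting only sees masses and pairwise distances, so that passing from $\sem\termone$ to $\dval{\sem\termone}$ costs nothing. The base case is correct.

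There are, however, two genuine (though repairable) gaps in the inductive cases. First, in the transitivity case you silently assume the intermediate state is a \emph{term}, so that its semantics exists and the induction hypothesis applies. But the transitivity rule of Figure~\ref{closure} relates arbitrary states of $\setterm$, so the middle state may be a distinguished value $\dval\valone$, to which neither $\sem{\cdot}$ nor the induction hypothesis applies. The paper disposes of this by observing that $\appl{\howe\bisim}{\termone}{\dval\valone}=1$ forces the total weight $\epsone+\epstwo$ to be at least $1$, in which case the claimed bound is trivial; your proof needs this (or an equivalent) case split.

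Second, your auxiliary statement derives judgements $\judgclh{\dval{\sem\termone}}{\dval{\sem\termtwo}}{\epsone}$, but the judgements of Figure~\ref{closure} relate \emph{states}, not distributions; the left-hand side of the lemma is the Kantorovich lifting of $\closure{\howe\bisim}$ to distributions. So you cannot literally ``apply the first rule,'' ``re-apply the symmetry rule,'' or ``close with the transitivity rule'' to the lifted objects. What is actually needed is that the lifted metric inherits monotonicity (for the base case), symmetry, and the triangle inequality from the underlying premetric. Monotonicity and symmetry of the lifting are immediate from the linear-programming formulation, but the triangle inequality for the lifting is not: it is precisely Lemma~\ref{lifting} of the paper, proved by composing couplings, and your transitivity case tacitly relies on it. Once these two points are supplied, your argument matches the paper's proof.
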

\begin{proof}
Let be $\epsone$ such that the judgement $\judgclh{\termone}{ \termtwo} \epsone$ is valid by the rules of figure \ref{closure}. 
We are going to show by induction on the structure of its derivation that: $\appl{\closure{\howe\bisim}}{\dval{\sem\termone}}{\dval{\sem \termtwo}} \leq \epsone$. We consider different cases depending of the structure of the proof tree used to derive the validity of $\judgclh{\termone}{ \termtwo} \epsone$:
\begin{itemize}
\item If the proof tree is: 
$$\AxiomC{$\appl{\howe\bisim}{\termone}{\termtwo}\leq\epsone$}
\UnaryInfC{$\judgclh{\termone}{\termtwo} \epsone $}
\DisplayProof$$ 
We can use Lemma \ref{KL}, and we obtain that $\appl {\howe \bisim}{\sem \termone}{\sem \termtwo} \leq \epsone$.
Now we can see that:
\begin{align*}
\appl {\closure{\howe \bisim}}{\dval{\sem \termone}}{\dval{\sem \termtwo}} & \leq \appl{\howe \bisim}{\dval {\sem \termone}}{\dval{\sem \termtwo}} &&\text{ since } \leqmetr {\howe \bisim}{\closure{\howe \bisim}} \\
& = \appl{\howe \bisim}{\sem \termone}{\sem \termtwo} &&\text{ by construction of the extension of $\howe \bisim$ to $\setterm$ }\\
&\leq \epsone
\end{align*}
\item If the proof tree is of the form:
$$
\AxiomC{$\judgclh{\termone}{\termtwo}\epsthree$}
\AxiomC{$\judgclh{\termtwo}{\termone}\epstwo$}
\BinaryInfC{$\judgclh{\termone}{\termtwo} {\epsone = \min(\epsthree, \epstwo)} $}
\DisplayProof
$$
We can apply the induction hypothesis to $\judgclh \termone \termtwo \epsthree$ and $\judgclh \termtwo \termone \epstwo$. 
We obtain that 
$\appl{\closure{\howe\bisim}}{\dval{\sem\termone}}{\dval{\sem\termtwo}} \leq \epsthree$ and that $\appl{\closure{\howe\bisim}}{\dval{\sem\termtwo}}{\dval{\sem\termone}} \leq \epstwo$. Since $\closure{\howe\bisim}$ is symmetric, it means that:
$\appl{\closure{\howe\bisim}}{\dval{\sem\termone}}{\dval{\sem\termtwo}} \leq \epstwo$. And so we have the result.
\item If the proof tree is of the form:
$$
\AxiomC{$\judgclh{\termone}{\stateone}\epsthree$}
\AxiomC{$\judgclh{\stateone}{\termtwo}\leq\epstwo$}
\BinaryInfC{$\judgclh{\termone}{\termtwo} {\epsone = \epsthree + \epstwo} $}
\DisplayProof$$
If $\epsone = 1$, the result holds. Otherwise, please observe that $\stateone$ cannot be a distinguished value. So there exist a closed term $\termthree$ such that $\stateone = \termthree$.
By induction hypothesis: $\appl{\closure{\howe\bisim}}{\dval{\sem\termone}}{\dval{\sem\termthree}}\leq\epsthree$, and
$\appl{\closure{\howe\bisim}}{\dval{\sem\termthree}}{\dval{\sem\termtwo}}\leq \epstwo$.
So by Lemma \ref{lifting}, and since $\closure{\howe \bisim}$ verifies the triangular inequality,  we have:
$\appl{\closure{\howe\bisim}}{\dval{\sem\termone}}{\dval{\sem\termtwo}} \leq \epsthree + \epstwo$.
\end{itemize}
\end{proof}
\condskip
\begin{lemma}
For every $\termone$, $\termtwo$: $$
\appl{\closure{\howe\bisim}}{\subst\termone\varone\valone}{\subst\termtwo\varone\valone} \leq \appl{\closure {\howe\bisim}}{\dval{\abstr \varone \termone}}{\dval{\abstr\varone\termtwo}}$$
\end{lemma}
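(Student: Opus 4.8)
The plan is to argue by induction on the structure of a derivation of a valid judgement $\judgclh{\dval{\abstr\varone\termone}}{\dval{\abstr\varone\termtwo}}\epsone$, mirroring exactly the proof of the previous lemma (the one for the $\eval$ action). Concretely, I fix such an $\epsone$ and show that $\appl{\closure{\howe\bisim}}{\subst\termone\varone\valone}{\subst\termtwo\varone\valone}\leq\epsone$; since $\appl{\closure{\howe\bisim}}{\dval{\abstr\varone\termone}}{\dval{\abstr\varone\termtwo}}$ is the infimum over all such $\epsone$, this is enough. Observe first that, as $\dval{\abstr\varone\termone}$ and $\dval{\abstr\varone\termtwo}$ are states of $\markovterm$ arising from programs, both $\abstr\varone\termone$ and $\abstr\varone\termtwo$ are closed, so $\subst\termone\varone\valone$ and $\subst\termtwo\varone\valone$ are closed terms and the right-hand side of the inequality is well defined.

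For the base rule, the last step has premise $\appl{\howe\bisim}{\dval{\abstr\varone\termone}}{\dval{\abstr\varone\termtwo}}\leq\epsone$; by the definition of the extension of $\howe\bisim$ to $\setterm$ this equals $\appl{\howe\bisim}{\abstr\varone\termone}{\abstr\varone\termtwo}$, so Lemma~\ref{KL1} gives $\appl{\howe\bisim}{\subst\termone\varone\valone}{\subst\termtwo\varone\valone}\leq\epsone$, and $\leqmetr{\howe\bisim}{\closure{\howe\bisim}}$ then yields $\appl{\closure{\howe\bisim}}{\subst\termone\varone\valone}{\subst\termtwo\varone\valone}\leq\epsone$. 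For the symmetry rule, the premises are $\judgclh{\dval{\abstr\varone\termone}}{\dval{\abstr\varone\termtwo}}\epsthree$ and $\judgclh{\dval{\abstr\varone\termtwo}}{\dval{\abstr\varone\termone}}\epstwo$ with $\epsone=\min(\epsthree,\epstwo)$; applying the induction hypothesis to both, together with symmetry of $\closure{\howe\bisim}$, gives $\appl{\closure{\howe\bisim}}{\subst\termone\varone\valone}{\subst\termtwo\varone\valone}\leq\epsthree$ and $\leq\epstwo$, hence $\leq\epsone$.

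For the transitivity rule the premises are $\judgclh{\dval{\abstr\varone\termone}}{\stateone}\epsthree$ and $\judgclh{\stateone}{\dval{\abstr\varone\termtwo}}\epstwo$ with $\epsone=\epsthree+\epstwo$. If $\epsone\geq 1$ there is nothing to prove; otherwise $\epsthree,\epstwo<1$, and since $\howe\bisim$ — and therefore $\closure{\howe\bisim}$ — assigns distance $1$ to any pair made of a distinguished value and a program, $\stateone$ must itself be a distinguished value, so up to $\alpha$-renaming $\stateone=\dval{\abstr\varone\termthree}$ for some $\termthree$. The induction hypothesis applied to the two sub-derivations yields $\appl{\closure{\howe\bisim}}{\subst\termone\varone\valone}{\subst\termthree\varone\valone}\leq\epsthree$ and $\appl{\closure{\howe\bisim}}{\subst\termthree\varone\valone}{\subst\termtwo\varone\valone}\leq\epstwo$, and the triangular inequality for $\closure{\howe\bisim}$ concludes.

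The main obstacle, and it is a mild one, is the transitivity case: one needs the observation that a finite $\closure{\howe\bisim}$-distance forces the intermediate state $\stateone$ to live in the $\valset$ component of $\setterm$, and a careful handling of $\alpha$-conversion so that the induction hypothesis — stated for the fixed bound variable $\varone$ — genuinely applies to $\stateone$. Everything else is a direct replay of the argument already carried out for the $\eval$ action.
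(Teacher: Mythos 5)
Your proof is correct and follows exactly the route the paper intends: the paper's own proof of this lemma is just the remark that one argues ``as for the previous lemma, by induction on the structure of the proof tree'' for $\judgclh{\dval{\abstr\varone\termone}}{\dval{\abstr\varone\termtwo}}\epsone$, and your three cases (base rule via Lemma~\ref{KL1} and $\leqmetr{\howe\bisim}{\closure{\howe\bisim}}$, symmetry, and transitivity with the observation that a bound below $1$ forces the intermediate state to be a distinguished value) are precisely the mirror image of the paper's detailed proof for the $\eval$ action.
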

\begin{proof}
Let be $\epsone$ such that the judgement 
$\judgclh {\dval{\abstr \varone \termone}}{\dval{\abstr \varone \termtwo}} \epsone$ is valid.
As for the previous lemma, the proof is by induction on the structure of the proof tree for this judgement.
\end{proof}
\end{proof}
\condskip
Since $\howe \bisim$ is non-expansive by construction, we now have the result we were aiming for:
\condskip
\begin{theorem}
$\bisim$ is non-expansive.
\end{theorem}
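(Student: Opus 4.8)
The plan is to assemble the pieces already in place; the only point that needs attention is the reverse-order convention built into $\leqmetr{}{}$. First I would invoke the theorem that $\closure{\howe\bisim}$ is a pre-fixpoint of $\functionnal$, i.e.\ $\leqmetr{\closure{\howe\bisim}}{\functionnal(\closure{\howe\bisim})}$. Since $\markovterm$ is image-finite, $\functionnal$ is a monotone endofunction on the complete lattice $(\metrs{\setterm},\leqmetr{}{})$, and $\bisim$ is its greatest fixpoint; by the Knaster--Tarski coinduction principle the greatest fixpoint dominates, in the $\leqmetr{}{}$ order, every pre-fixpoint, so $\leqmetr{\closure{\howe\bisim}}{\bisim}$. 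Here it is essential that $\closure{\howe\bisim}$ be a genuine metric on $\setterm$ --- this is exactly why the symmetry and triangle rules of Figure~\ref{closure} were grafted on top of $\howe\bisim$ --- so that it really is an element of $\metrs{\setterm}$ and the principle applies.

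Next I would chain this with the two inequalities available for free: $\leqmetr{\bisim}{\howe\bisim}$, which holds because $\appl{\howe\bisim}{\termone}{\termtwo}\leq\appl{\bisim}{\termone}{\termtwo}$ for all terms, and $\leqmetr{\howe\bisim}{\closure{\howe\bisim}}$, which holds by construction. By transitivity of $\leqmetr{}{}$ we then get $\leqmetr{\bisim}{\howe\bisim}$ and $\leqmetr{\howe\bisim}{\bisim}$ at once; translated back into the pointwise order on $[0,1]$ this reads $\appl{\howe\bisim}{\stateone}{\statetwo}=\appl{\bisim}{\stateone}{\statetwo}$ for all $\stateone,\statetwo\in\setterm$, and in particular on programs, so that $\howe\bisim$ and $\bisim$ coincide on closed terms.

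Finally, non-expansiveness of $\bisim$ is immediate from non-expansiveness of $\howe\bisim$ (proved by induction on the structure of the context) together with this equality: for any programs $\termone,\termtwo$ and any context $\ctxone$,
$$
\appl{\bisim}{\fillc{\ctxone}{\termone}}{\fillc{\ctxone}{\termtwo}}
= \appl{\howe\bisim}{\fillc{\ctxone}{\termone}}{\fillc{\ctxone}{\termtwo}}
\leq \appl{\howe\bisim}{\termone}{\termtwo}
= \appl{\bisim}{\termone}{\termtwo}.
$$
I do not expect any real obstacle at this stage: all the hard work lives in the Key Lemma and the pseudo-substitutivity lemma that underpin the pre-fixpoint property, which are already established; the present step is bookkeeping, the sole delicate point being to keep straight that $\leqmetr{}{}$ reverses the pointwise order, so that ``greatest fixpoint'' and ``pre-fixpoint'' land on the right sides of the inequalities.
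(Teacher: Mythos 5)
Your proposal is correct and follows essentially the same route as the paper: the paper also concludes non-expansiveness of $\bisim$ by combining the pre-fixpoint theorem for $\closure{\howe\bisim}$ (which, via the greatest-fixpoint/coinduction principle, yields $\leqmetr{\closure{\howe\bisim}}{\bisim}$) with the chain $\leqmetr{\bisim}{\howe\bisim}$ and $\leqmetr{\howe\bisim}{\closure{\howe\bisim}}$ to deduce $\bisim=\howe\bisim$, and then invokes the non-expansiveness of $\howe\bisim$. Your added care about the reversed ordering convention and about $\closure{\howe\bisim}$ genuinely being a pseudometric is exactly the right bookkeeping.
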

\condskip

\begin{proof}
As a consequence of Theorem \ref{bisimfix}, $\bisim = \howe \bisim$. Since $\howe \bisim$ is non-expansive, the result holds. 
\end{proof}

\subsection{On Full-Abstraction and Pairs}
The bisimulation distance is a sound approximation of the context
distance.  But how about full-abstraction? Is there any hope to prove
that the two coincide? The answer is negative: there are terms whose
distance is \emph{strictly} higher in the bisimulation metric than in
the context (or trace) metric.
\condskip
\begin{example}
Consider the following terms: $\termone$ corresponds to the program
that takes an argument, and then returns $\identity$ with probability
$\frac 1 2$, and diverges with probability $\frac 1 2$. $\termtwo$
corresponds to the program which chooses first between the function
which return $\identity$ whenever it is called, and the function which
diverges whenever called. Formally:
$$
\termone \defi \abstr \varone {(\psum \identity \diver)}; \qquad
\termtwo \defi \psum {(\abstr \varone \identity)}{(\abstr \varone
  \diver)}.
$$ 
These two terms are at distance $0$ for the context distance:
since the calculus is linear, the step where the choice is done is
irrelevant. However, $\appl \bisim \termone
\termtwo = \frac 1 2$: the proof, use the characterisation of bisimulation distance by testing
from \cite{DesharnaisCONCUR99}, in which not only linear tests, but
also more complicated tests (like \emph{threshold tests}) are
available. 
\end{example}
\condskip
  
But how about pairs? Indeed, for the sake of simplicity, we have
presented the metatheory of the bisimulation metric for a purely
applicative $\lambda$-calculus. Following the lines of our discussion in Section~\ref{sect:pairs}, however,
the LMC $\markovterm$ can be extended into one handling pairs in 
a relatively simple way. The difficulties we encountered when trying
to evaluate the (trace, or context) distance between pairs of terms
unfortunately remain: it is not clear whether coinduction could provide any additional
\emph{advantage} over contextual distance.
As for the trace metric in the previous section, we would like to
extend the bisimulation metric to a language with pairs. In order to
do that, we add the action $\tenseur \termfour$ to the LMC
$\markovterm$. We transform the definition to the probability matrix
$\probmatrterm$ by adding:
$$
\probmatrterm {(\pair \termone \termtwo)}{({\tenseur \termthree})} = \sum_{\valone, \valtwo} \sem \termone( \valone) \cdot \sem \termtwo (\valtwo) \cdot \dirac { {\subst \termthree {\varone, \vartwo} {\valone, \valtwo}} } 
$$
We now have to transform the definition of validity for Howe's judgement in order to consider the case of pairs:
 $$
\AxiomC{$
\begin{array}{c}
\judghowe \contone {\metrone}\termone\termfour\epsone\\
\judghowe \conttwo \metrone\termtwo\termfive \epsthree
\end{array}$
}
\AxiomC{$
\begin{array}{c}
\wfj{\contone, \conttwo}{\termthree}\\
\appl\metrone{\pair \termfour\termfive}{\termthree} \leq \epstwo 
 \end{array}$}
  \BinaryInfC{$\judghowe {\contone, \conttwo} \metrone{\pair \termone\termtwo}\termthree{\epsone +\epsthree + \epstwo}$}
  \DisplayProof$$
\section{The Tuple Distance}\label{sect:tupledistance}
The two metrics we have just defined have been shown to be
non-expansive, even if the calculus is extended with pairs. In that case, 
however, they do not represent so much of an
improvement with respect to the context distance. Please recall
\emph{where} the problem comes from: we would like to define actions
starting from $\pair \termone \termtwo$, and respecting the affine
paradigm. We have seen that taking \emph{projections} as actions lead
to an unsound metric, and we have circumvented the problem by
considering an action $\tenseur \termthree$,
following~\cite{DengZhang}. Intuitively the action $\tenseur
\termthree$ corresponds to replacing the free variables of
$\termthree$ (which are supposed to be included in
$\{\varone,\vartwo\}$) by the components of the pair: if for instance
$\valone$ and $\valtwo$ are values, we have that $\doact{\pair \valone
  \valtwo}{\tenseur \termthree}{\dirac{\subst \termthree{\varone,
      \vartwo}{\valone, \valtwo}}}$. But what can any environment
$\termthree$ do if we give it $\valone$ and $\valtwo$ as two
values to interact with? Let us suppose that both $\valone$ and
$\valtwo$ are functions, and remember that we are in an affine
setting. The environment can (probabilistically) pass some arguments
to $\valone$, and independently some other arguments to $\valtwo$, and
then possibly pass to one of the two programs an argument that
contains the other one. The idea behind the construction we present in this
section, then, is to \emph{keep} the information about the two components
of the pairs in the states until they really interact with each other.

Our idea can be made concrete by introducing another LMC, whose states
are not closed terms anymore, but \emph{tuples} in the form
$\tuple{\valone_1}{\cdots}{\valone_n}$, where $\valone_1, \cdots
\valone_n $ are values. The possible actions the environment can
perform on a tuple $\tuple{\valone_1}{\cdots}{\valone_n}$
correspond to the choice of an index $i \in \{1,\cdots,n\}$ and of an
action to apply to the value $\valone_i$. If $\valone_i$ is a pair,
the only possible action is to split it into two components. We call
this action $\actcut i$. If $\valone_i$ is a function, the environment
can pass it an argument, which can possibly be constructed using other
$\valone_j$'s. More precisely, the argument is built by way of an open
term $\ctxone$, and a typing context $\contone$, such that $\wfj
\contone \ctxone$, and $\contone$ is a subset of $\{\varone_j \mid j
\neq i\}$: the free variables of $\contone$ represent the places where
other values $\valone_j$, with $j \neq i$, are used. Moreover, we ask
that for any value $\valtwo_1, \cdots \valtwo_n$, the term obtained in
substituting $\varone_j$ by $\valtwo_j$ is a value: it
means that $\ctxone$ is one of the $\varone_j$, or of the form $\abstr
\vartwo \ctxtwo$. We call a pair $(\contone, \ctxone)$ which verifies
these conditions a \emph{$(n,i)$-open value}. Formally, the LMC
$\markovtupl = (\settupl, \labelstupl, \probmatrtupl)$ is defined in
Figure \ref{markovtupl}.
\begin{figure}
\begin{center}
\fbox{
\begin{minipage}{.97\textwidth}
\begin{align*}
\settupl&= \left\{\tuple {\valone_1}{\cdots}{\valone_n} \mid \valone_1, \cdots,\valone_n \text{ closed values } \right\}\\
\labelstupl&= \{ \actcut i \mid i \in \NN \} \cup \{\actappl{ (\contone, \ctxone)} i \mid i \in \NN, (\contone, \ctxone) \text{ a }(n,i)\text{-open-value }\}.
\end{align*}
\vspace{-20pt}
\begin{align*}
\applprob{\probmatrtupl} { \tuple{\torvone_1,\cdots}{{\pair \termtwo\termthree}}{\cdots, \torvone_n}}{\actcut i}{  \tuple {\torvone_1,\cdots \torvone_{i-1}}{ \valone, \valtwo}{\torvone_{i+1}\cdots, \torvone_n}} &=    \sem  \termtwo (\valone) \cdot \sem \termthree(\valtwo)\\
\applprob {\probmatrtupl }{\tuple{\torvone_1,\cdots}{{\abstr \vartwo \termtwo}}{\cdots, \torvone_n}}{\actappl{(\contone,\ctxone)} i}{\tuple{\torvone_{h_1}, \cdots}{ \valtwo}{\cdots, \torvone_{h_m}}} & = \sem {\subst\termtwo\vartwo {\subst \ctxone{\varone_{j_l}} {\torvone_{j_l}}}}(\valtwo)\\ 
\text{with }  \{1,\cdots, n \} = {i} \cup \{j_1,...,j_k\} \cup \{h_1,...,h_m\} &\text{ (disjoint union) } \text{ and } \contone = \varone_{\j_1}, \cdots \varone_{j_k}
\end{align*} 
\end{minipage}}
\end{center}
\caption{The Tuple LMC} \label{markovtupl}
\end{figure}
\subsection{The Metric}
We are going to define a metric on closed terms which corresponds to
linear tests in $\markovtupl$. First, we define \emph{tuple traces}
simply as words over $\labelstupl$. The probability to succeed in
doing a trace $\traceone$ starting from a tuple $\tuplone \in
\settupl$ can be naturally defined, and paves the way to defining a
metric on tuples of values:
\begin{align*}
\probtrtupl \tuplone \epsone &= 1;\\
\probtrtupl \tuplone {\concat \actone \traceone}&= \sum_\tupltwo \applprob\probmatrone \tuplone \actone \tupltwo \cdot \probtrtupl \tupltwo \traceone;\\
\appl\metrtrtupl \tuplone \tupltwo&= \sup_\traceone \abs{\probtrtupl \tuplone \traceone - \probtrtupl \tupltwo \traceone}.
\end{align*}
What we need, however, is a metric on \emph{programs}. Please remember
that states of the LMC $\markovtupl$ are tuples of \emph{values}. 
Any program $\termone$, however,  can be viewed as the distribution
of values obtained by evaluating it, i.e. its semantics
$\sem{\termone}$:
\begin{align*}
\appl\metrtrtupl \termone \termtwo = \sup_\traceone \abs {&{\sum \sem \termone(\valone)\cdot \probtrtupl {\tuplonea \valone}{\traceone}} \\ & -\,{\sum \sem \termtwo(\valtwo) \cdot\probtrtupl {\tuplonea \valtwo}{\traceone}}}.
\end{align*}
The just introduced metric should at least be put in relation to the
context metric for it to be useful. We know from Section
\ref{sect:tracedistance} that the context metric coincides with the
trace metric. The following theorem relates the trace
metric $\metrtr$ and the metric $\metrtrtupl$: \UGO{From now on, there
  is definitely the weakest part of the paper. I think you should at
  least try to rephrase the sentence after Theorem 7, which does not
  make much sense. }
\condskip
\begin{theorem}\label{nonexpansivtupl}
Let $I$ be any finite set of variables, and
$\{\valone_\varone\}_{\varone \in I}$ and $\{\valtwo_\varone\}_{\varone \in I}$  any two collections of values.
For any open term $\ctxone$ such that $\wfj{I}{\ctxone} $,
 it holds
that:
\begin{align*}
&\appl \metrtr {{\subst\ctxone\varone{\valone_\varone}_{(\varone \in I)}}}{{\subst \ctxone\varone{\valtwo_{\varone}}}_{(\varone \in I)}} \\ &
\qquad  \leq \appl \metrtrtupl {\tuplonea{\valone_\varone}_{(\varone\in I)}}{\tuplonea{\valtwo_\varone}_{(\varone \in I)}}
\end{align*}
\end{theorem}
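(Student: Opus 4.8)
The plan is to unfold $\metrtr$ on the left-hand side, fix an arbitrary trace $\traceone$, and bound the quantity $\abs{\probtr{\subst\ctxone\varone{\valone_\varone}}{\traceone}-\probtr{\subst\ctxone\varone{\valtwo_\varone}}{\traceone}}$ by $\appl\metrtrtupl{\tuplonea{\valone_\varone}_{(\varone\in I)}}{\tuplonea{\valtwo_\varone}_{(\varone\in I)}}$; taking the supremum over $\traceone$ then gives the theorem. To make the induction go through I would let the environment evolve: call a \emph{configuration} a pair $(\ctxthree,\tuplonea{\valfive_1,\ldots,\valfive_m})$ of an affine $m$-hole context $\ctxthree$ (with $\fillc{\ctxthree}{\valfive_1,\ldots,\valfive_m}$ always a program) and a tuple of values, write $\forget{(\ctxthree,\tuplonea{\valfive_j})}=\fillc{\ctxthree}{\valfive_1,\ldots,\valfive_m}$ for the denoted program, and extend this operator to finite distributions of configurations (needed because probabilistic evaluation and $\actcut\cdot$ produce distributions). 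Two configuration distributions $\distrone,\distrtwo$ are \emph{$\epsone$-tuple-related} when they are supported on configurations sharing the same context part and the associated value tuples are at $\metrtrtupl$-distance at most $\epsone$. This is, for $\markovtupl$, the exact analogue of the $\epsone$-related distributions of Section~\ref{sect:tracedistance}, and the theorem is the instance where $\ctxthree=\ctxone$ seen as an $\abs I$-hole context.

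The core lemma is then: if $\distrone$ and $\distrtwo$ are $\epsone$-tuple-related, then $\abs{\probtr{\forget{\distrone}}{\traceone}-\probtr{\forget{\distrtwo}}{\traceone}}\le\epsone$ for every trace $\traceone$. I would prove it by a well-founded induction on the lexicographic pair $(\abs{\traceone},\,\text{maximal number of internal reduction steps of }\forget{\distrone})$ — the second component being finite since the calculus is affine and hence strongly normalising — with a case analysis on the evaluation-position redex of $\forget{}$ of a configuration in the support. Redexes living entirely inside the context part are the easy ones: for $\diver$ both convergence probabilities are $0$; for $\psum{}{}$ the probability splits as a convex combination of two strictly smaller configurations over the \emph{same} tuples, so the estimate is $\tfrac12\abs\cdot+\tfrac12\abs\cdot\le\max\le\epsone$ by the induction hypothesis; a $\beta$-redex or a $\texttt{let}$ whose principal argument is already a value, with all pieces carrying only holes, contracts to a fresh, strictly smaller context over the same tuples. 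A non-empty trace $\concat{@\valone}{\tracetwo}$ applied to a configuration whose program is a value is handled by firing that application: if the value was produced by the context this yields a smaller context and the strictly shorter trace $\tracetwo$; if the value is one of the tuple components $\valfive_i$ (the context being the bare hole $\hole_i$ there), we emit the tuple action $\actappl{(\emptyset,\valone)}{i}$ and again shorten the trace.

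The interactive cases are those where a redex couples a tuple component with the context: a component $\valfive_i$ applied to an argument $U$, or a pair-component $\valfive_i$ destructed by a $\texttt{let}$. For the first, I would first, by an inner appeal to the induction hypothesis, reduce $U$ to an \emph{open value} $\ctxfive$ whose free variables are the holes of the remaining components occurring in it — affineness guarantees this term is value-shaped, so $(\contone,\ctxfive)$ is a legitimate $(m,i)$-open value — and then emit the action $\actappl{(\contone,\ctxfive)}{i}$ (respectively $\actcut i$), whose effect on the tuple is exactly the distribution of values coming out of the contracted redex. Hence the new configuration distributions are again $\epsone$-tuple-related: this uses the defining identity $\applprob{\probmatrtupl}{\cdot}{\actone}{\cdot}$ appearing in $\probtrtupl\tuplone{\concat\actone\traceone}=\sum_\tupltwo\applprob{\probmatrtupl}{\tuplone}{\actone}{\tupltwo}\cdot\probtrtupl\tupltwo\traceone$, which makes the operation ``perform a tuple action'' non-expansive for $\metrtrtupl$ (prepending $\actone$ to a witnessing trace), while the denoted program has strictly fewer remaining reduction steps so the induction hypothesis applies.

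The main obstacle is twofold. First, bookkeeping: the tuple changes shape at every interactive step — components are consumed, fresh ones created, and several may need to be merged, exactly as when one regroups identical contexts in the proof of Lemma~\ref{lemmehuitauxmetr} — so one must maintain a precise correspondence between the hole indices of the evolving multi-hole context and the positions of the evolving tuple, and check the monotonicity facts that dropping an unused component and forming convex combinations of tuples can only decrease $\metrtrtupl$. Second, and more conceptually, there is in general \emph{no} exact simulation of the context's behaviour by one tuple trace — the probabilistic choices internal to $\ctxone$ are invisible to $\markovtupl$ — so one only ever obtains an inequality; what rescues the argument is that the bound $\appl\metrtrtupl{\cdot}{\cdot}$ is itself a supremum over \emph{all} tuple traces, so every emitted action and every $\psum{}{}$-branch can be absorbed into that supremum (by prepending the action, respectively by taking the $\max$ of the two branch estimates) without ever committing to a single trace.
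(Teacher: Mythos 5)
Your proposal follows essentially the same route as the paper's proof: a multi-hole context paired with a tuple, a $\forget{}$ operator, a notion of $\epsone$-relatedness in which the context parts coincide and the associated tuple distributions are at $\metrtrtupl$-distance at most $\epsone$, and a stability argument by induction on trace length and remaining reduction steps, with interactive redexes discharged by emitting the corresponding action of $\markovtupl$. The only cosmetic differences are that you state the invariant directly as a bound on the trace-acceptance probabilities of the forgotten distributions rather than as preservation of $\epsone$-relatedness followed by adequacy, and that your termination measure should be the maximum of the reduction lengths of both sides (as in Lemma~\ref{lemmehuitauxmetr}), since the two denoted programs need not normalise in the same number of steps.
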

\begin{proof}
The proof of Theorem \ref{nonexpansivtupl} is similar to the proof of non-expansiveness for the trace metric : first we define a small step semantics, which corresponds to the transition relation in the Markov Chain $\markovtupl$, then we define another small step semantics, which corresponds to keep separated the context, which is now seen as a term with several holes, and the tuple used to fill it, and we end the proof by defining a notion of $\epsone$-parentality for disributions over pairs of contexts and tuple, and showing a stability result for $\epsone$-parents distributions. These steps are displayed in more details below.


\subsubsection{Trace Semantics Big Steps for Tuples}
We're going to be interested in the labelled transition system on finite distributions over $\settupl$ induced by the Markov Chain $\markovtupl$.
\begin{definition}
 We'll note $\fdistrs \settupl$ the set of finite distributions over $\settupl$. We define a reduction relation $\wtrbs \distrone \actone \distrtwo$, where $\distrone, \distrtwo \in \fdistrs \settupl$, and $\actone \in \labelstupl$, by :
$$\wtrbs \distrone \actone {\sum_\tuplone \distrone(\tuplone) \cdot \probmatrtupl(\tuplone)(\actone) } $$
\end{definition}

Now, we define the success probability of a trace for a distribution as :
\begin{definition}
If $\traceone = \actone_1 \cdots \actone_n$, 
$$\probtrbstupl \distrone \traceone = \sumdistr \distrtwo \text{ with }\wtrbs{\distrone}{\actone_1}{\wtrbs{\cdots}{\actone_n}{\distrtwo}}$$
\end{definition}

The relation between this deterministic labelled transition system and the Markov Chain $\markovtupl$ can be expressed by the following lemma :
\begin{lemma}
Let be $\tuplone \in \settupl$, and $\traceone$ a trace.
Then $\probtrbstupl{\dirac \tuplone }{\traceone} = \probtrtupl \tuplone \actone$.
\end{lemma}

\subsubsection{Trace Semantics Small Steps for Tuples}
We would like now to have a notion of small-step semantics for tuples corresponding to the trace semantics of the Markov Chain.
Since we are now small steps, we should consider not only tuples of values, but tuples of terms as well.
Moreover, during the execution, we should remember which term of the tuple is being reduced. For this reason, we must add intermediate states, where there is explicit focus on terms being evaluated.

\begin{definition}
\begin{itemize}
\item We define a set $\torvstates$ consisting in closed terms of $\calcwpair$, and distinguished values of $\calcwpair$:
$\torvstates = \{ \termone \mid \termone \text{ closed term } \} \cup \{\dval \valone \mid \valone \text{ closed value } \}$. Then we define the corresponding set of tuples $\setone = \left\{\tuple {\torvone_1}{\cdots}{\torvone_n} \mid \torvone_1, \cdots,\torvone_n \in \torvstates \right\}$
\item $\torvstateswf =  \{ \termone \mid \termone \text{ closed term } \} \cup \{ \focus i \termone \mid \termone \text{ closed term}, \, i \in \NN  \} \cup \{\dval \valone \mid \valone \text{ closed value } \}$, and  $\setonewf = \left\{\tuple {\torvone_1}{\cdots}{\torvone_n} \mid \torvone_1, \cdots,\torvone_n \in \torvstateswf \text{ and the focus integer are all distincts }\right\}$
\end{itemize}
\end{definition}

The term which should be reduced first is the term which has the smaller focus index. That's the sense of the following definition.
\begin{definition}
For any $\tuplone \in \setonewf$, we note $\focusindex \tuplone$ defined as:
\begin{itemize}
\item $\focusindex \tuplone = \infty$ if $\tuplone$ has no element with focus.
\item $\focusindex {\tuple {\torvone_1}{\cdots}{\torvone_n}} = j \text{ such that } \torvone_i = \focus j \termone \text{ and } j \text{ is the smaller focus in }\tuplone $
\end{itemize}

\end{definition}

We now define a small step probabilistic labelled reduction relation, where the actions can be divided in two kinds :
\begin{definition}
We define a labelled reduction relation $ \wtrtupl \tuplone \actone \distrone$ where $\tuplone \in \setonewf$, $\distrone$ a distribution over $\setonewf$, and where $\actone \in\actss = \{ \internact\} \cup  \{ \acteval i \mid i \in \NN \} \cup \{ \actcut i \mid i \in \NN \}\cup \{\actappl{(\contone,\ctxone)} i \mid i \in \NN, \, (\contone,\ctxone) \text{ a } (n,i) \text{open-value for a }n \in \NN \}$.
The rules are the one given in figure \ref{sstrsemtupl}.
\end{definition}
$\internact$ is called an internal action, and corresponds to the internal reduction terms under focus in the tuple. The other actions are called external actions, and correspond to interactions with the environment. The definition given in Figure \ref{sstrsemtupl} use the small step semantics for term $\ssp{}{}$. 
\begin{figure}
{\small
$$\AxiomC{$\focusindex {\tuple{\torvone_1}{\cdots, \focus i \termone}{\cdots, \torvone_n}} = i $}
  \AxiomC{$\ssp \termone \distrone $}
  \BinaryInfC{$\wtrtupl{\tuple{\torvone_1, \cdots}{\focus i \termone, \cdots}{\torvone_n}}{\internact}{\sum \distrone(\termtwo) \cdot \dirac {\tuple{\torvone_1, \cdots}{\focus i \termtwo, \cdots}{\torvone_n}} }$} 
  \DisplayProof$$
$$\AxiomC{$\focusindex {\tuple{\torvone_1}{\cdots, \focus i \termone}{\cdots, \torvone_n}} = i $}
  \AxiomC{$ \termone \not \rightarrow $}
  \BinaryInfC{$\wtrtupl{\tuple{\torvone_1, \cdots}{\focus i \termone, \cdots}{\torvone_n}} {\internact}{\emptydistr}$} 
  \DisplayProof$$
$$\AxiomC{$\focusindex {\tuple{\torvone_1}{\cdots, \focus i \valone}{\cdots, \torvone_n}} = i $}
  \AxiomC{$\valone \text{ is a value } $}
  \BinaryInfC{$\wtrtupl{\tuple{\torvone_1, \cdots}{\focus i \valone, \cdots}{\torvone_n}}{\internact}{ \dirac {\tuple{\torvone_1, \cdots}{\dval \valone, \cdots}{\torvone_n}} }$} 
  \DisplayProof$$
$$\AxiomC{$\focusindex {\tuple{\torvone_1}{\cdots}{ \torvone_n}} = \infty $}
  \UnaryInfC{$\wtrtupl{\tuple{\torvone_1}{\cdots, \torvone_{i-1}, \termone, \torvone_{i+1}}{\cdots,\torvone_n}}{\acteval i}{ \dirac {\tuple{\torvone_1, \cdots}{\torvone_{i-1},\focus 1 \termone,\torvone_{i+1}, \cdots}{\torvone_n}} }$} 
  \DisplayProof$$

$$\AxiomC{$\focusindex {\tuple{\torvone_1}{\cdots}{ \torvone_n}} = \infty $}
  \UnaryInfC{$\wtrtupl{\tuple{\torvone_1}{\cdots, \torvone_{i-1}, \dval {\pair \termone \termtwo}, \torvone_{i+1}}{\cdots,\torvone_n}}{\actcut i}{ \dirac {\tuple{\torvone_1, \cdots}{\torvone_{i-1},\focus 1 \termone, \focus 2 \termtwo,\torvone_{i+1}, \cdots}{\torvone_n}} }$} 
  \DisplayProof$$

$$\AxiomC{$\focusindex {\tuple{\torvone_1}{\cdots}{ \torvone_n}} = \infty $}
  \UnaryInfC{$\wtrtupl{\tuple{\torvone_1}{\cdots, \torvone_{i-1}, \dval {\abstr\vartwo\termone}, \torvone_{i+1}, \cdots, \torvone_{j-1}, \dval \valone, \torvone_{j+1}}{\cdots,\torvone_n}}{\actappl {\varone_j} i}{ \dirac {\tuple{\torvone_1, \cdots}{\torvone_{i-1},\focus 1 {\subst \termone \vartwo \valone}, \torvone_{i+1}, \cdots, \torvone_{j-1}, \torvone_{j+1}, \cdots}{\torvone_n}} }$} 
  \DisplayProof$$

$$\AxiomC{$\focusindex {\tuple{\torvone_1}{\cdots}{ \torvone_n}} = \infty $}
\AxiomC{$\wfj {(x_{j_k})_{1\leq k \leq l}} \ctxtwo$}
\AxiomC{$\{1,\cdots,n\} = \{j_k \mid 1 \leq k \leq n \} \sqcup \{h_k \mid 1 \leq k \leq m\} $}
  \TrinaryInfC{$\wtrtupl{\tuple{\torvone_1}{\cdots, \torvone_{i-1}, \dval {\abstr\vartwo\termone}, \torvone_{i+1}}{\cdots,\torvone_n}}{\actappl {\abstr \varthree \ctxtwo} i}{ \dirac {\tuple{\torvone_{h_1}}{\cdots,\focus 1 {\subst \termone \vartwo {\abstr\varthree{{\subst \ctxtwo {\varone_{j_k}} {\torvone_{j_k}}}_{1\leq k \leq l}}}}, \cdots}{\torvone_{h_m}}} }$} 
  \DisplayProof$$
}
\caption{small-step trace semantics for tuples}\label{sstrsemtupl}
\end{figure}

We want to formalize the probability of doing a trace for a distribution. First we lift the trace semantics to a reduction (non probabilistic) to distributions. We'll note $\fdistrs \setonewf$ the set of finite distributions over $\setonewf$.

\begin{definition}
We define a labelled relation $\wtrcdistrtupl \distrone \actone \distrtwo$, where $\distrone, \distrtwo \in \fdistrs \setonewf$,  and $\actone \in \actss$. The rules are the one given in Figure \ref{trdistrtupl}.
\end{definition}

\begin{figure}
$$\AxiomC{$\wtrtupl \tuplone \internact \distrtwo  $}
  \UnaryInfC{$\wtrdistrtupl{\distrone \disjplus p \cdot \dirac \tuplone}{\emptytr}{\distrone + p \cdot \distrtwo }$} 
  \DisplayProof$$
$$\AxiomC{$\wtrtupl \tuplone \actone \distrtwo_\tuplone  $}
  \AxiomC{$\distrone \text{ in normal form } $}
  \BinaryInfC{$\wtrdistrtupl{\distrone}{\actone}{\sum_{\tuplone \text{s.t }\focusindex \tuplone = \infty} \distrone(\tuplone)\cdot \distrtwo_\tuplone }$} 
  \DisplayProof$$
$$
\AxiomC{$\wtrdistrtupl \distrone \traceone \distrtwo $}
\AxiomC{$\wtrcdistrtupl \distrtwo \tracetwo \distrthree $}
\BinaryInfC{$\wtrcdistrtupl  {\distrone}{\concat \traceone \tracetwo} \distrthree$}
\DisplayProof
$$
\caption{small-step trace semantics on distributions of tuples}\label{trdistrtupl}
\end{figure}

\begin{definition}
$\probtrsstupl \distrone \traceone = \max\{\sum_{\focusindex \tuplone = \infty} {\normal \distrtwo}(\tuplone) \mid \wtrcdistrtupl \distrone \traceone {\normal \distrtwo}\}$
\end{definition}

Please observe that for any (external or internal) action $\actone$, the relations (between tuples and distributions over tuples) $\wtrtupl \tuplone \actone \distrone$, and $\wtrbs \tuplone \actone \distrone$ are deterministic. It's not the case anymore when we lift to relations between distributions, but we have the following lemma : 

\begin{lemma}
The reduction $\wtrdistrtupl \cdot  \tau \cdot$ on distributions over $\setonewf$ is strongly normalizing. 
\end{lemma}
\begin{proof}
It follows from the fact that the relation $\wtr{}{}{}$ over distributions of terms is strongly normalizing.
\end{proof}

We note $\normal \distrone $ the normal form of $\distrone$ for the relation $\wtrdistrtupl \cdot  \tau \cdot$. By abuse of notation, if $\torvone \in \torvstateswf$, we note $\normal \torvone$ for $\normal {\dirac \torvone}$. We can in fact be more precised on the shape of the normal form of a distribution :

\begin{definition}
 Let be $\torvone \in \torvstateswf$. We define $\reduced \torvone$ by :
\begin{itemize}
\item If $\torvone = \focus i \termone$, then $\reduced \torvone = \sum_\valone\sem \torvone(\valone) \cdot \dirac{\dval \valone}$
\item otherwise, $\reduced \torvone = \dirac \torvone$  
\end{itemize}
\end{definition}
\begin{lemma}\label {normalformsst}
 Let be $\tuplone = {\tuple{\torvone_1}{\cdots}{\torvone_n}}\in \setonewf$. 
Then $\normal \tuplone  = \sum_{\torvtwo_1, \cdots \torvtwo_n} \prod_{1 \leq i \leq n} (\reduced {\torvone_i})(\torvtwo_i) \cdot \dirac{ \tuple{\torvtwo_1}{\cdots}{\torvtwo_n}} $
\end{lemma}
\begin{proof}
The proof is by induction on the maximal number of reduction steps from $\distrone$ to $\normal \distrone$ (which is well defined since $\wtrdistrtupl \cdot \tau \cdot$ is strongly normalizing)
\end{proof}

Now we want to compare the probability to do a trace for the small-step semantics and for the big-step semantics. For doing that, we show first the following lemma :
\begin{lemma}\label{bsssauxext}
Let be $\actone \in \actbs$, and $\distrone \in \fdistrs \statebs$.
Then let be $\distrtwo$ the distribution over $\statebs$ such that $\wtrbs \distrone \actone \distrtwo$.
 Let be $\distrthree$ the distribution over $\statess$ such that : $ \wtrtupl{\wtrtupl{\wtrtupl {\dval\distrone} {\actone} \distrthree}{\tau}{\cdots}}{\tau}{\normal \distrthree}$.
Then : $$\dval \distrtwo = \normal \distrthree  $$
\end{lemma}
\begin{proof}
Let be $\actone \in \actbs$.
We can see that for every $\tuplone \in \statebs$, there exists an (only one) $\tupltwo \in \statess$ such that :
   $\wtrtupl \tuplone \actone \dirac{\tupltwo}$.
It is sufficient to show that : if $\distrone$ is the distribution over $\statebs$ such that $\wtrbs \tuplone \actone \distrone$, we have that $\normal \tupltwo = \dval \distrone$. The proof of that is by case analysis on the rules of $\wtrtupl{}{\actone}{}$, and using the characterisation given in Lemma \ref{normalformsst} of the normal form for $\wtrtupl{}{\internact}{}$
 
\hide{It is sufficient to show : for every $\tuplone \in \setone$, for every $\tupltwo \in \setonewf$, and for every $\distrone$ distribution over $\setone$,  if $\wtrtupl \tuplone \actone \tupltwo$ and $\wtrbs \tuplone \actone \distrone$, then $\distrone = \reduced \tupltwo$.
The proof of that is by case analysis on the external action $\actone$.
For example, if the action is $\acteval i$, we have :
$\wtrtupl {(\tuple{\torvone_1, \cdots}{\termone}{\cdots, \torvone_n})}{\acteval i}{(\tuple{\torvone_1, \cdots}{\focus 1 \termone}{\cdots, \torvone_n})}$, and $\wtrbs {(\tuple{\torvone_1, \cdots}{\termone}{\cdots, \torvone_n})}{\acteval i}{(\tuple{\torvone_1, \cdots}{\sem \termone}{\cdots, \torvone_n})}$, and the result holds.}
\end{proof}
\hide{
and :
\begin{lemma}
 \label{bsssauxint}
Let be $\distrone$ a distribution over $\setonewf$ (that is, a distribution over tuples potentially with focus).
Then $\normal \distrone = \sum_\tuplone \distrone(\tuplone) \cdot \reduced \tuplone$
\end{lemma}
\begin{proof}
The proof is by induction on the length of the longest reduction path to reach the normal form (if you start from $\distrone$). 
\end{proof}
}
Now we can extend this result to traces :

\begin{lemma}
Let be $\traceone$ a word over $\actbs$. Let be $\distrone$ a distribution over $\statebs$. Then :
$\probtrbstupl \distrone \traceone = \probtrsstupl {\dval\distrone} \traceone$
\end{lemma}
\begin{proof}
The proof is by induction on the length of $\traceone$.

\begin{itemize}
\item if $\traceone = \emptytr$ : $\probtrbstupl \distrone \emptytr = \sumdistr \distrone$. 
\hide{\begin{align*}
\probtrsstupl {\dval\distrone} \emptytr =&  \max\{\sum_{\focusindex \tuplone = \infty} {\normal {\distrtwo}}(\tuplone) \mid \wtrc {\dval\distrone} \emptytr {\normal \distrtwo}\} \\
& = \sum_{\focusindex \tuplone = \infty} {\dval \distrone}(\tuplone) \\&= \sumdistr \distrone
\end{align*}}
Since $\dval \distrone$ is a normal form for $\wtrtupl \cdot \internact \cdot$, we have that : $\probtrsstupl {\dval\distrone} \emptytr =  \sum_{\focusindex \tuplone = \infty} {\dval \distrone}(\tuplone) = \sumdistr \distrone $
\item if $\traceone = \concat \actone \tracetwo$ then let be $\distrtwo$ such that $\wtrbs{\distrone}{\actone}{\distrtwo}$. Then $\probtrbstupl \distrone \traceone = \probtrbstupl \distrtwo \tracetwo$. We apply Lemma \ref{bsssauxext}, and we obtain that : $\wtrtupl{\dval \distrone}{\actone}{\distrthree}$, and $\normal \distrthree = \dval \distrtwo$.
Moreover, we have that : 
$\probtrsstupl {\dval\distrone} \traceone = \probtrsstupl{\distrthree} \tracetwo = \probtrsstupl{\normal \distrthree}{\tracetwo}$
\hide{
 \begin{align*}
\probtrsstupl {\dval\distrone} \traceone =&  \max\{\sum_{\focusindex \tuplone = \infty} {\normal {\distrtwo}}(\tuplone) \mid \wtrc {\dval\distrone} \traceone {\normal \distrtwo}\} \\

& = \sum_{\focusindex \tuplone = \infty} {\dval \distrone}(\tuplone) \\&= \sumdistr \distrone
\end{align*}}
\end{itemize}
\end{proof}

\hide{
If we consider only terms without pairs, we recover the trace distance :
\begin{lemma}
For $\termone$, $\termtwo$ two terms of type without pairs. 
$$\appl{\metrtrtupl}{\termone}{\termtwo} = \appl{\metrtr}{\termone}{\termtwo} $$
\end{lemma}
\begin{proof}
\begin{itemize}
\item Any trace on the Markov Chain on terms can be translated on an equivalent trace on the Markov Chain for tuples who talks only about the first element of the tupe.
\item Any trace on the Markov Chain for tuples, which can be done with a non-zero probability starting from $\tuplonea \termone$ or $\tuplonea \termtwo$ (thai is, we especially don't consider trace with the $\actcut i$ action), can be simulated (for these two state) by a context of the language without pairs (?)
\end{itemize}
\end{proof}
}

\subsubsection{Trace semantics for distribution over contexts and tuples}

Here we consider the same traces used for defining trace semantics for distribution on closed terms.

We are first going to introduce useful notations :
\begin{definition}
We define an operator $\parall n \functionone \functiontwo$ on functions by :
If $\functionone : A \rightarrow \NN$, $\functiontwo : B \rightarrow \NN$ such that : 
\begin{itemize}
\item $A \cap B = \emptyset$
\item $\image \functionone \subseteq \{1, \cdots, n \}$
\end{itemize}
Then $\parall n \functionone \functiontwo : A \cup B \rightarrow \NN$ is defined by :
\begin{itemize}
\item $\parall n \functionone \functiontwo (x)= \functionone(x)$ if $x \in A$
\item $\parall n \functionone \functiontwo (x) = n+ \functiontwo(x)$ if $x \in B$
\end{itemize}
\end{definition}

We now want to define a set of pairs of context with several holes, and tuples used for filling these holes. Formally the idea is the following :
We first define things for the untyped case (without pairs) :

\begin{definition}
\begin{itemize}
\hide{\item For any set of possible open terms $\ctxone_1,...,\ctxone_n$ we say : $$\wfj {\varone_1, \cdots, \varone_m}{\ctxone_1, \cdots \ctxone_n}$$ if there exists a partition of the variables ${\varone_1, \cdots \varone_m}$ in $n$ disjoints sets $\Gamma_i$ such that $\wfj {\Gamma_i}{\ctxone_i}$ for every $1 \leq i \leq n$.}
\item Let be $\functionone : \variables \rightarrow \NN$ a partial injective function, $\ctxone$ an (open) term, and $\tuple{\torvone_1}{ \cdots}{ \torvone_n}$ an element of $\setonewf$. 
We define the judgment $\wfj{\varone_1,... \varone_m}{(\ctxone, \functionone, \tuple{\torvone_1}{\cdots}{\torvone_n})}$ by :
\begin{itemize}
\item $\{\varone_1,\cdots, \varone_m\}\cap\domain \functionone = \emptyset$ 
\item $ \wfj {\varone_1  \cdots \varone_n, (\vartwo)_{\vartwo \in \domain \functionone}}{\ctxone} $
\item $\image \functionone \subseteq \{1\cdots n\}$
\end{itemize}
We define the judgment $\wfjval{\varone_1,... \varone_m}{(\ctxone, \functionone, \tuple{\torvone_1}{\cdots}{\torvone_n})}$ by :
\begin{itemize}
\item $\wfj{\varone_1,... \varone_m}{(\ctxone, \functionone, \tuple{\torvone_1}{\cdots}{\torvone_n})}$
\item if $\ctxone = \vartwo$, then there exists a value $\valone$ such that : $\torvone_{\functionone(\vartwo)} = \dval \valone$
\item if $\ctxone$ is not a variable, $\ctxone$ is an abstraction (that is, $\ctxone = \abstr \vartwo \ctxtwo$, where $\ctxtwo$ is an open term), or $\ctxone$ is a pair (that is $\ctxone = \pair {\ctxtwo_1} {\ctxtwo_2}$, where $\ctxtwo_1$ and $\ctxtwo_2$ are open terms).
\end{itemize}

\item  We define the set of pairs of context and tuples which are well formed :
$$\ctwellformed = \{(\ctxone, \functionone, \tuple {\torvone_1}{\cdots}{\torvone_n}) \text{ such that } \wfj{\emptyset} {(\ctxone, \functionone, \tuple {\torvone_1}{\cdots}{\torvone_n})}   \} $$
\item We define a notion of congruence for elements in $\ctwellformed$ :
For every permutation $\permut : \{1,..,n\} \rightarrow \{1,...,n\}$, $(\ctxone, \functionone \tuple {\torvone_1}{\cdots}{\torvone_n}) \equiv (\ctxone, \compose{\permut^{-1}}\functionone , (\tuple {\torvone_{\permut(1)}}{\cdots}{\torvone_{\permut(n)}})) $
\end{itemize}
\end{definition}

We should modify the definition if we consider a typed calculus :
\begin{definition}
\begin{itemize}
\item We define the judgment $\wfjt{\varone_1 : \typone_1,... \varone_m : \typone_m}{(\ctxone, \functionone, \tuple{\termone_1}{\cdots}{\termone_n})}{\typtwo}$ by :
\begin{itemize}
\item $\{\varone_1,\cdots, \varone_m\}\cap\domain \functionone = \emptyset$ 
\item $ \wfjt {\varone_1 : \typone_1 \cdots \varone_n : \typone_n, (\vartwo : \typthree_\vartwo)_{\vartwo \in \domain \functionone}}{\ctxone}{\typtwo} $
\item $\image \functionone \subseteq \{1, \cdots,n\}, \text{ and }\wfjt{}{\termone_{\functionone (\vartwo)}}{\typthree_{\vartwo}}$ for every $\vartwo \in \domain \functionone $
\end{itemize}

\hide{
\item For any set of possible open terms $\ctxone_1,...,\ctxone_n$ we say : ${\wfjt {\varone_1 : \typone_1, \cdots, \varone_m : \typone_m}{\ctxone_1}{\typtwo_1}}, \cdots \ctxone_n : \typtwo_n$ if there exists a partition of the typing context ${\varone_1 : \typone_1, \cdots \varone_m : \typone_m}$ in $n$ disjoints typing context $\Gamma_i$ such that $\wfjt {\Gamma_i}{\ctxone_i}{\typtwo_i}$ for every $1 \leq i \leq n$.}
\item  We define the set of pairs of context and tuples which are well formed :
$$
\ctwellformedt \typone = \{(\ctxone, \functionone, \tuple {\termone_1}{\cdots}{\termone_n}) \mid \wfjt \emptyset{(\ctxone, \functionone, \tuple {\termone_1}{\cdots}{\termone_n})} \typone \} 
$$
\item We define a notion of congruence for elements in $\ctwellformed$ :
For every permutation $\sigma : \{1,..,n\} \rightarrow \{1,...,n\}$, $(\ctxone, \functionone \tuple {\termone_1}{\cdots}{\termone_n}) \equiv (\ctxone, \compose{\sigma^{-1}}\functionone , \tuple {\termone_{\sigma(1)}}{\cdots}{\termone_{\sigma(n)}}) $
\end{itemize}
\end{definition}

In the following, we consider equivalence class of $\equiv$. It corresponds to reorder elements of the tuple, and to modify the function $\functionone$ in order to have still the same mapping from the free variables of $\ctxone$.

We define a small-step semantics on elements of $\ctwellformed$.

\begin{figure}[!h]  
{\small
\hide{
$$\AxiomC{$\sssp \termtwo \distrtwo $}
  \UnaryInfC{$\wtr{\distrone \disjplus p \cdot \dirac{(\termtwo,\functionone, \tuplone)}}{\emptytr}{\distrone + p\cdot (\distrtwo,\functionone, \tuplone) }$} 
  \DisplayProof \qquad
  \AxiomC{$ $}
  \UnaryInfC{$\wtr{p \cdot \dirac{(\abstr\varone\termtwo,\functionone, \tuplone)}}{\app\valone}{p\cdot (\subst\termtwo\varone \valone, \functionone,\tuplone) }$}
  \DisplayProof 
  $$
  }
    $$
  \AxiomC{$\wtr {\dirac{(\ctxone, \functionone, \tuple {\torvone_1}{\cdots}{\torvone_n})}} \emptytr \distrtwo $}
  \AxiomC{$
    \begin{array}{c}
      \wfj{}{(\ctxtwo,\functiontwo, \tuple{\torvtwo_1}{\cdots}{\torvtwo_p})}\\ 
      \wfj{}{(\ctxone,\functionone, \tuple{\torvone_1}{\cdots}{\torvone_n})}
      \end{array}
      $}
  \AxiomC{$\domain \functionone \cap \domain \functiontwo = \emptyset $}
  \TrinaryInfC{$\wtr{\distrone \disjplus p \cdot \dirac{(\ctxone\ctxtwo, \parall n \functionone \functiontwo  \tuple {\torvone_1,\cdots, \torvone_n}{\torvtwo_1}{\cdots \torvtwo_p})}}{\emptytr}{\distrone + p\cdot \sum \distrtwo(\ctxthree, \functionthree, \tuple{\torvthree_1}{\cdots}{\torvthree_q}) \cdot (\ctxthree\ctxtwo, \parall q \functionthree \functiontwo, \tuple  {\torvthree_1,\cdots, \torvthree_q}{\torvtwo_1}{\cdots \torvtwo_p}) }$}
  \DisplayProof $$  \\ 
$$
  \AxiomC{$\wtr {\dirac{(\ctxtwo, \functiontwo, \tuple{\torvtwo_1}{\cdots}{\torvtwo_p})}}{\emptytr}{\distrtwo}$}
  \AxiomC{$
\begin{array}{c}
\wfjval{}{(\ctxone, \functionone, \tuple{\torvone_1}{\cdots}{\torvone_n})}  \\ 
\wfj{}{(\ctxtwo, \functiontwo, \tuple{\torvtwo_1}{\cdots}{\torvtwo_p})} \\ \domain \functionone \cap \domain \functiontwo = \emptyset 
\end{array}$}
  \BinaryInfC{$\wtr{\distrone \disjplus p\cdot { \dirac{(\ctxone\ctxtwo, \parall n \functionone \functiontwo  \tuple {\torvone_1,\cdots, \torvone_n}{\torvtwo_1}{\cdots \torvtwo_p})}}}{\emptytr}{\distrone + p\cdot \sum \distrtwo(\ctxthree, \functionthree, \tuple{\torvthree_1}{\cdots}{\torvthree_q}) \cdot (\ctxone\ctxthree, \parall n \functionone \functionthree, \tuple  {\torvone_1,\cdots, \torvone_n}{\torvthree_1}{\cdots \torvthree_q})  }$}
  \DisplayProof 
 $$
 \\
  
 $$
 \AxiomC{$\focusindex{\tuple{\termone}{\torvone_1, \cdots}{\torvone_n}} = \infty$}
\UnaryInfC{$\wtr{\distrone \disjplus p\cdot \dirac{\varone, \parall 1 {\{\varone \rightarrow 1\}}\functionone, \tuple{\termone, \torvone_1}{\cdots}{\torvone_n}}}{\emptytr}{\distrone + p\cdot (\varone, {\parall 1  {\{\varone \rightarrow 1\}}\functionone}, \tuple{\focus 1 \termone}{\torvone_2}{\cdots, \torvone_n}})$}
\DisplayProof $$ 
 $$
 \AxiomC{$\sssp {\termone}  \distrtwo$}
\AxiomC{$\focusindex {\tuple{\focus i \termone}{\torvone_1, \cdots}{\torvone_n}} = i $}
\BinaryInfC{$\wtr{\distrone \disjplus p\cdot \dirac{\varone, \parall 1 {\{\varone \rightarrow 1\}}\functionone, \tuple{\focus i \termone, \torvone_1}{\cdots}{\torvone_n}}}{\emptytr}{\distrone + p\cdot (\varone, {\parall 1  {\{\varone \rightarrow 1\}}\functionone}, \tuple{\focus i \distrtwo}{\torvone_1}{\cdots, \torvone_n}})$}
\DisplayProof $$
 $$
\AxiomC{$\focusindex {\tuple{\focus i \valone}{\torvone_1, \cdots}{\torvone_n}} = i $}
\UnaryInfC{$\wtr{\distrone \disjplus p\cdot \dirac{\varone, \parall 1 {\{\varone \rightarrow 1\}}\functionone, \tuple{\focus i \valone, \torvone_1}{\cdots}{\torvone_n}}}{\emptytr}{\distrone + p\cdot (\varone, {\parall 1  {\{\varone \rightarrow 1\}}\functionone}, \tuple{\dval \valone}{\torvone_1}{\cdots, \torvone_n}})$}
\DisplayProof $$
\\ 
$$
\AxiomC{$\wfjval{}{(\ctxone, \functionone, \tuple{\torvone_1}{\cdots}{\torvone_q})} $}
\AxiomC{$\valone = \subst {\ctxone}{\varthree}{\termone_{\functionone{(\varthree)}}}_{\varthree \in {\freevar \ctxone}}$}
\BinaryInfC{$\wtr{\distrone \disjplus p \cdot\dirac{\varone\ctxone,\parall q {\parall 1 {(\varone \rightarrow 1)}\functionone} \functiontwo, \tuple{\dval{\abstr\vartwo \termthree},\torvone_1 \cdots \torvone_q}{\torvtwo_1\cdots}{\torvtwo_n}}}{\emptytr}{\distrone + p \cdot\dirac {\varone,\parall 1 {(\varone \rightarrow 1)} \functiontwo,\tuple {\focus 1{\subst \termthree \varone {\valone}}}{\torvtwo_1,\cdots}{\torvtwo_n}}}$}
\DisplayProof$$
$$
\AxiomC{$
\begin{array}{c}
\wfj{}{(\ctxone, \functionone, \tuple{\torvone_1}{\cdots}{\torvone_n})}  \\ 
\wfjval{}{(\ctxtwo, \functiontwo, \tuple{\torvtwo_1}{\cdots}{\torvtwo_p})} \\ \domain \functionone \cap \domain \functiontwo = \emptyset 
\end{array}$}
\UnaryInfC{$\wtr{\distrone \disjplus p \cdot\dirac{(\abstr\varone \ctxone)\ctxtwo, \parall n \functionone \functiontwo, \tuple{\torvone_1, \cdots \torvone_n, \torvtwo_1}{\cdots}{\torvtwo_p}}}{\emptytr}{\distrone + p \cdot \dirac {\subst \ctxone \varone \ctxtwo, \parall n \functionone \functiontwo, \tuple{\torvone_1, \cdots \torvone_n, \torvtwo_1}{\cdots}{\torvtwo_p}}}$}
\DisplayProof
$$ $$
 \AxiomC{}
 \UnaryInfC{$\wtr {p \cdot \dirac{\abstr \varone \ctxone,\functionone, \tuple{\torvone_1}{\cdots}{\torvone_n}}}{\app \valone}{p \cdot\dirac{({\subst\ctxone \varone \valone},\functionone, \tuple{\torvone_1}{\cdots}{\torvone_n})}}  $} 
\DisplayProof
$$ $$
\AxiomC{$ $}
\UnaryInfC{$\wtr{p \cdot \dirac{(\varone,\parall 1 {(\varone \rightarrow 1)}\functionone, \tuple{\dval{\abstr \vartwo\termtwo}, \torvone_1}{\cdots}{\torvone_n})}}{\app \valone}{p \cdot \dirac {(\varone, \parall 1 {(\varone \rightarrow 1)}{\functionone} ,\tuple{\focus 1 {\subst \termtwo \vartwo \valone}}{\torvone_1}{\cdots, \torvone_n}}}$}
\DisplayProof
$$
 $$
\AxiomC{$\wtr {\distrone_i}{\app \valone}{\distrtwo_i} $}
\AxiomC{$\forall i, \distrone_i \text{ value distribution } $}
\AxiomC{$\distrthree$ stopped distribution}
\TrinaryInfC{$\wtr {\stackrel{\cdot}{\sum_i} \distrone_i \dotplus \distrthree} {\app \valone} {\sum_i{\distrtwo_i}} $}
\DisplayProof$$ 
$$\AxiomC{}
  \UnaryInfC{$\wtrc{\distrone}{\emptytr}{\distrone}$}
  \DisplayProof 
\qquad 
\AxiomC{$\wtr \distrone \traceone \distrtwo $}
\AxiomC{$\wtrc \distrtwo \tracetwo \distrthree $}
\BinaryInfC{$\wtrc  {\distrone}{\concat \traceone \tracetwo} \distrthree$}
\DisplayProof
$$
}
  \caption{small-step trace relation on distributions over $\ctwellformed$ (without pairs)}\label{sstrc}
  \end{figure}
Please observe that the rules would be exactly the same for an strictly linear (that is, not affine) calculus. The only thing to change would be the definition of : $\wfj{\Gamma}{(\ctxone, \functionone, \tuplone)} $.

{We need a definition of $\ctwellformed$ taking into account possible other free variables}
We need to add rules specific for the language with pairs :
\begin{figure}
{\small
 $$
  \AxiomC{$
    \begin{array}{c}
      \wfjt{}{{(\ctxone, \functionone, \tuple {\termone_1}{\cdots}{\termone_n})}}{\tprod \typone \typtwo}\\
      \wfjt{\varone : \typone, \vartwo : \typtwo}{(\ctxtwo, \functiontwo, \tuple{\termtwo_1}{\cdots}{\termtwo_p })}{\typthree}\\
        \domain \functionone \cap(\domain\functiontwo \cup\{\varone, \vartwo\}) = \emptyset
      \end{array}
    $}
  \AxiomC{$\wtr {\dirac{(\ctxone, \functionone, \tuple {\termone_1}{\cdots}{\termone_n})}} \emptytr \distrtwo $}
  \BinaryInfC{$
\begin{array}{c}
\wtr{\distrone \disjplus p \cdot \dirac{(\letin \varone \vartwo\ctxone\ctxtwo, \parall n \functionone \functiontwo  \tuple {\termone_1,\cdots, \termone_n}{\termtwo_1}{\cdots \termtwo_p})}}{\emptytr} {} \\ {\distrone + p\cdot \sum \distrtwo(\ctxthree, \functionthree, \tuple{\termthree_1}{\cdots}{\termthree_q}) \cdot (\letin \varone\vartwo \ctxthree\ctxtwo, \parall q \functionthree \functiontwo, \tuple  {\termthree_1,\cdots, \termthree_q}{\termtwo_1}{\cdots \termtwo_p}) }
\end{array}
$}
  \DisplayProof $$ \\

$$
  \AxiomC{$\wtr {\dirac{(\ctxone, \functionone, \tuple {\termone_1}{\cdots}{\termone_n})}} \emptytr \distrtwo $}
  \AxiomC{$
\begin{array}{c}
 \domain \functionone , \domain \functiontwo, \{\varone, \vartwo\}, \domain \functionthree \\ \text{ disjoints sets }
\end{array}
 $}
  \AxiomC{$
    \begin{array}{c}
      \wfjt{}{\ctxone, \functionone, \tuple{\termone_1}{\cdots}{\termone_n}}{\typone}\\
       \wfjt{}{\ctxtwo, \functiontwo, \tuple{\termtwo_1}{\cdots}{\termtwo_m}}{\typtwo}\\
       \wfjt{\varone : \typone, \vartwo : \typtwo}{(\ctxthree,\functionthree, \tuple{\termthree_1}{\cdots}{\termthree_l} }{\typthree}
      \end{array}
    $}
  \TrinaryInfC{$
\begin{array}{c}
\wtr{\distrone \disjplus p \cdot \dirac{(\letin \varone \vartwo {\pair\ctxone\ctxtwo}\ctxthree, \parall n{ \functionone}{\parall m \functiontwo \functionthree},  \tuple {\termone_1,\cdots, \termone_n}{\termtwo_1}{\cdots \termtwo_m, \termthree_1 \cdots \termthree_l})}}{\emptytr} {} \\ {\distrone + p\cdot \sum \distrtwo(\ctxfour, \functionfour, \tuple{\termfour_1}{\cdots}{\termfour_q}) \cdot (\letin \varone\vartwo {\pair\ctxfour\ctxtwo}\ctxthree, \parall q \functionfour {\parall m \functiontwo\functionthree}, \tuple  {\termfour_1,\cdots, \termfour_q}{\termtwo_1}{\cdots \termtwo_m, \termthree_1, \cdots \termthree_l}) }
\end{array}
$}
  \DisplayProof $$ 

$$
  \AxiomC{$\wtr {\dirac{(\ctxtwo, \functiontwo, \tuple {\termtwo_1}{\cdots}{\termtwo_m})}} \emptytr \distrtwo $}
  \AxiomC{$\subst \ctxone{\varone}{\termone_{\functionone{(\varone)}}}_{\varone \in \domain \functionone} \text{ value } $}
  \AxiomC{$
    \begin{array}{c}
      \wfjt{}{\ctxone, \functionone, \tuple{\termone_1}{\cdots}{\termone_n}}{\typone}\\
       \wfjt{}{\ctxtwo, \functiontwo, \tuple{\termtwo_1}{\cdots}{\termtwo_m}}{\typtwo}\\
       \wfjt{\varone : \typone, \vartwo : \typtwo}{(\ctxthree,\functionthree, \tuple{\termthree_1}{\cdots}{\termthree_l} }{\typthree}\\
       \domain \functionone , \domain \functiontwo, \{\varone, \vartwo\}, \domain \functionthree  \text{ disjoints sets }
      \end{array}
    $}
  \TrinaryInfC{$
\begin{array}{c}
\wtr{\distrone \disjplus p \cdot \dirac{(\letin \varone \vartwo {\pair\ctxone\ctxtwo}\ctxthree, \parall n \functionone  {\parall m \functiontwo \functionthree}  \tuple {\termone_1,\cdots, \termone_n}{\termtwo_1}{\cdots \termtwo_m, \termthree_1 \cdots \termthree_l})}}{\emptytr} {} \\ {\distrone + p\cdot \sum \distrtwo(\ctxfour, \functionfour, \tuple{\termfour_1}{\cdots}{\termfour_q}) \cdot \dirac{(\letin \varone\vartwo {\pair\ctxfour\ctxtwo}\ctxthree, \parall n \functionone {\parall q \functionfour\functionthree}, \tuple  {\termone_1,\cdots, \termone_n}{\termfour_1}{\cdots \termfour_q, \termthree_1, \cdots \termthree_l}) }}
\end{array}
$}
  \DisplayProof $$ 

$$
  \AxiomC{$
    \begin{array}{c}
    \subst \ctxone{\varthree}{\termone_{\functionone{(\varthree)}}}_{\varthree \in \domain \functionone} \text{ value } \\
     \subst \ctxtwo{\varthree}{\termtwo_{\functiontwo{(\varthree)}}}_{\varthree \in \domain \functiontwo} \text{ value } \\
  \end{array}$}
  \AxiomC{$
    \begin{array}{c}
      \wfjt{}{\ctxone, \functionone, \tuple{\termone_1}{\cdots}{\termone_n}}{\typone}\\
       \wfjt{}{\ctxtwo, \functiontwo, \tuple{\termtwo_1}{\cdots}{\termtwo_m}}{\typtwo}\\
       \wfjt{\varone : \typone, \vartwo : \typtwo}{(\ctxthree,\functionthree, \tuple{\termthree_1}{\cdots}{\termthree_l} }{\typthree}\\
       \domain \functionone , \domain \functiontwo, \{\varone, \vartwo\}, \domain \functionthree  \text{ disjoints sets }
      \end{array}
    $}
  \BinaryInfC{$
\begin{array}{c}
\wtr{\distrone \disjplus p \cdot \dirac{(\letin \varone \vartwo {\pair\ctxone\ctxtwo}\ctxthree, \parall n \functionone {\parall m \functiontwo\functionthree}  \tuple {\termone_1,\cdots, \termone_n}{\termtwo_1}{\cdots \termtwo_m, \termthree_1 \cdots \termthree_l})}}{\emptytr} {} \\ {\distrone + p\cdot \dirac{( \subst{\subst\ctxthree \varone \ctxone}\vartwo \ctxtwo, \parall n \functionone {\parall m \functiontwo\functionthree}, \tuple  {\termone_1,\cdots, \termone_n}{\termtwo_1}{\cdots \termtwo_m, \termthree_1, \cdots \termthree_l}) }}
\end{array}
$}
  \DisplayProof $$ 

$$
  \AxiomC{$
    {\termone_1 = \pair \termtwo \termthree}  
  $}
  \AxiomC{$
    \begin{array}{c}
       \wfjt{\varone : \typone, \vartwo : \typtwo}{(\ctxone,\functionone, \tuple{\termone_2}{\cdots}{\termone_n})}{\typthree}\\ 
       \varthree \not \in \domain \functionone
      \end{array}
    $}
  \BinaryInfC{$
\begin{array}{c}
\wtr{\distrone \disjplus p \cdot \dirac{(\letin \varone \vartwo {\varthree}\ctxone, \parall 1 {\{\varthree \rightarrow 1\}} \functionone , \tuple {\termone_1} \cdots {\termone_n})}} {\emptytr} {} \\ {\distrone + p\cdot \dirac{( \ctxone , \parall 2 {\{\varone \rightarrow 1, \vartwo \rightarrow 2\}}\functionone , \tuple {\termtwo, \termthree}{\termone_2}{\cdots \termone_n})}}
\end{array}
$}
  \DisplayProof $$ 

$$
  \AxiomC{$
    {\termone_1 = \pair \termtwo \termthree}  
  $}
  \UnaryInfC{$
\begin{array}{c}
\wtr{p \cdot \dirac{({\varthree}, \parall 1 {\{\varthree \rightarrow 1\}} \functionone , \tuple {\termone_1} \cdots {\termone_n})}} {\actpair \ctxone} {} \\ { p\cdot \dirac{(\ctxone , \parall 2 {\{\varone \rightarrow 1, \vartwo \rightarrow 2\}}\functionone , \tuple {\termtwo, \termthree}{\termone_2}{\cdots \termone_n})}}
\end{array}
$}
  \DisplayProof $$ 

$$
  \AxiomC{$
    {\termone_1 = \pair \termtwo \termthree}  
  $}
  \UnaryInfC{$
\begin{array}{c}
\wtr{p \cdot \dirac{({\pair \ctxtwo \ctxthree}, \functionone , \tuple {\termone_1} \cdots {\termone_n})}} {\actpair \ctxone} {} \\ { p\cdot \dirac{(\subst{\subst\ctxone\varone\ctxtwo}\vartwo\ctxthree , \functionone , \tuple{\termone_1}{\cdots}{ \termone_n})}}
\end{array}
$}
  \DisplayProof $$ 
}

\caption{small-step trace relation on distributions over $\ctwellformed$ for pairs}
\end{figure}

Please observe that there is two different non-determinism in the rules : the choice of the part of the distribution which is going to be reduced, and the way the tuple is divided (for the affine case).  The second one is not really meaningful, since we have the following lemma :
\begin{lemma}
Suppose that : $\wfj{}{(\ctxone, \functionone, \tuplone)}$, and let be $\distrone$, $\distrtwo$ such that $\wtr{\dirac{(\ctxone, \functionone, \tuplone)}}{\actone}{\distrone}$ and $\wtr{\dirac{(\ctxone, \functionone, \tuplone)}}{\actone}{\distrtwo}$. Then $ \distrone \equiv \distrtwo$.
\end{lemma}
\begin{proof}
Let be  $\wfj{}{(\ctxone, \functionone, \tuplone)}$. \\ 
We are first going to show the following result :
Suppose that $\tuplone = \tuple{\termone_1, \cdots}{\termone_n, \cdots}{\termone_q} $, et that $\functionone(\freevar \ctxone) \subseteq \{1,\cdots,n\}$. Then let be $\distrone$ such that : $\wtr {(\ctxone, \functionone,\tuplone )}{\actone}{\distrone}$.
Then there exist $\distrtwo$ such that : $\wtr {(\ctxone, \restr \functionone{\freevar \ctxone},\tuple{\termone_1}{\cdots}{\termone_n} }{\actone}{\distrtwo}$, and $\distrone = \sum \distrtwo (\ctxtwo, \functiontwo, \tuple{\termtwo_1}{\cdots}{\termtwo_p})\cdot \dirac{(\ctxtwo, \functionthree, \tuple{\termtwo_1, \cdots \termtwo_p}{\termone_{n+1}}{\cdots, \termone_n})}$, with $\functionthree(\varone) = \functiontwo(\varone)$ if $\varone \in \freevar \ctxone$, and $\functionthree(\varone) = p -n + \functionone(\varone) $ otherwise.
We show that by induction on the derivation of  $\wtr {(\ctxone, \functionone,\tuplone )}{\actone}{\distrone}$.\\
Then it is sufficient to remark that, if the free variables of $\ctxone$ correspond exactly to the terms in the tuple, there is only one possible rule that can be applied.
\end{proof}

\begin{definition}
Let be $\distrone$ a distribution over $\ctwellformed$.
We define $\forget \distrone$ a distribution over closed terms by :
$\forget \distrone = \sum \distrone(\ctxone, \functionone, \tuple{\termone_1}{\cdots}{\termone_n}) \cdot \dirac {\subst{\ctxone}{\varone}{\termone_{\functionone(\varone)}} _{\varone\in\domain \functionone}}$
\end{definition}
We would like to know that, if a distribution on terms can do a trace, then the correponding distribution where we split contexts and terms filling them can do the same trace. Unfortunately, we need to be more precis  on how we split the distribution, and especially on what focus we can have on the components of the tuple. (For example, $\left(\varone, (\varone \rightarrow 1) ,\tuplonea{\termone, \focus 1\termtwo}\right) \not \rightarrow$, since it is not possible to evaluate $\termone$ before having evaluated $\termtwo$.) So we define a notion of coherent tuples in $\setonewf$ for a given context, where the idea is : This context could have triggered the evaluation on the terms which are under focus : 

\begin{lemma}
Let be $\distrone$ a distribution over $\ctwellformed$, and $\traceone$ a trace such that :
$\wtrc {\forget \distrone} \traceone \distrtwo$.
Then there exists $\distrthree$ such that
$\wtrc {\distrone}{\traceone}{\distrthree}$, and $\forget \distrthree = \distrtwo$.
\end{lemma}

The rules of the trace semantics for elements in $\ctwellformed$ are designed to match the one for trace semantics for terms. More precisely, it means that :
\begin{lemma}
Let be $\wfj{}{(\ctxone,\functionone, \tuple{\termone_1}{\cdots}{\termone_n})}$, and let be $\distrone$ such that :
$\wtr{\dirac{(\ctxone, \functionone, \tuple{\termone_1}{\cdots}{\termone_n})}}{\emptytr}{\distrone}$. 
Then $\wtr {\dirac{\subst \ctxone \varone {\termone_{\functionone{\varone}}}_{\varone \in \freevar \ctxone}}} {\emptytr}{\forget \distrone}$ 
\end{lemma}
\begin{proof}
The proof is by case analysis of the derivation of $\wtr{\dirac{(\ctxone, \functionone, \tuple{\termone_1}{\cdots}{\termone_n})}}{\emptytr}{\distrone}$ 
\end{proof}

\begin{lemma}
Let be $\distrone$ a distribution over $\ctwellformed$, and $\traceone$ a trace. 
Suppose that $\wtrc \distrone \traceone \distrtwo$. 
Then $\wtrc {\forget \distrone} \traceone {\forget {\normal\distrtwo}}$
\end{lemma}
\begin{proof}
It is in fact sufficient to show :
\begin{itemize}
\item If $\wtr \distrone \emptytr \distrtwo$, then there exist $\distrthree$, such that $\wtrc  \distrtwo \emptytr \distrthree$, and  $\wtr {\forget \distrone}{\emptytr}{\forget \distrthree}$.   No matter the last rule used in the derivation of $\wtr \distrone \actone \distrtwo$, it is of the form :
$\wtr {\distrone = \distrfour \disjplus p\cdot (\ctxone, \functionone, \tuplone) }{\emptytr}{\distrfour + p \cdot \distrfive}$ with $\wtr{\dirac{(\ctxone,\functionone, \tuplone)}}{\emptytr}{\distrfive}$.
Now we have to consider all the possible $(\ctxtwo, \functiontwo, \tupltwo) \in \supp \distrfour$ such that $\subst \ctxtwo {\varone}{\tupltwo_{\functiontwo(\varone)}}_{\varone \in \freevar \ctxtwo} = \subst \ctxone {\varone}{\tuplone_{\functiontwo(\varone)}}_{\varone \in \freevar \ctxone}  $

\item and : if $\wtr \distrone \actone \distrtwo$, then $\wtr {\forget \distrone} \actone {\forget \distrtwo}$
\end{itemize}
\end{proof}

\subsubsection{Link beetween trace semantics on terms and trace semantics on $\ctwellformed$.}

\begin{definition}
Let be $\distrone$ and $\distrtwo$ two distributions over $\ctwellformed$.
For $\epsone \geq 0$, we say that $\distrone$ and $\distrtwo$ are $\epsone$-related if :
there exist $p_i,...,p_m$ positive reals, and $\ctxtwo_1,...,\ctxtwo_d$ distincts contexts, and $\distrthree_1,...,\distrthree_d$, $\distrfour_1,...,\distrfour_d$ distributions on tuples such that :
\begin{align*}
&\distrone = \sum_j p_j \cdot (\ctxtwo_j, \distrthree_j)\\
&\distrtwo = \sum_j p_j \cdot (\ctxtwo_j, \distrfour_j)\\
&\appl \metrtrtupl {\distrthree_j}{\distrfour_j} \leq \epsone
\end{align*} 
\end{definition}

\begin{lemma}
The relation $\wtr {\cdot}{\emptytr}{\cdot}$ on distributions over $\ctwellformed$ is strongly normalizing.
\end{lemma}

\begin{lemma}
Let be $\distrone$, $\distrtwo$ two $\epsone$-related distributions.
Then $\normal \distrone$ and $\normal \distrtwo$ are related.
\end{lemma}

\begin{lemma}\label{epsonereltupl}
Let be $\distrone$, $\distrtwo$ two $\epsone$-related distribution.
Let be $\normal \distrthree$, and $\normal \distrfour$ in normal form such that : $\wtrc \distrone \traceone {\normal \distrthree}$, and $\wtrc \distrtwo \traceone {\normal \distrfour}$. Then $\distrthree$ and $\distrfour$ are $\epsone$-related
\end{lemma}
Theorem \ref{nonexpansivtupl} is deduced of Lemma \ref{epsonereltupl} in a similar way as for the trace distance.


\end{proof}

\condskip
Theorem \ref{nonexpansivtupl} can be read as a non-expansiveness
result: if we have a system $\mathcal E$, playing the role of the
environment, and which is prepared to interact with $n$ components,
and moreover we have two tuples $\tuplone$ and $\tupltwo$ of length
$n$, then the tuple distance between $\tuplone$ and $\tupltwo$
gives us an upper bound on the trace distance between the system
composed of $\mathcal E$ interacting with $\tuplone$, and the system
composed of $\mathcal E$ interacting with $\tupltwo$.

We can now see that $\metrtrtupl$ coincides with the context metric:
one inequality comes from Theorem \ref{nonexpansivtupl}, the other
comes from the fact that any trace $\traceone$ over $\labelstupl$ and
designed to start from a single value, can be simulated by a context.
\condskip
\begin{theorem}
On programs, $\metrtrtupl=\metrctx$ 
\end{theorem}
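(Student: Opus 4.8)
The plan is to exploit the identity $\metrctx=\metrtr$ established in Section~\ref{sect:tracedistance}, so that it is enough to show $\appl\metrtrtupl\termone\termtwo=\appl\metrtr\termone\termtwo$ for every pair of programs $\termone,\termtwo$. Both of these quantities depend only on the value distributions $\sem\termone$ and $\sem\termtwo$: for $\metrtrtupl$ this is built into its definition on programs, and for $\metrtr$ it follows from $\probtr\termone\traceone=\sum_\valone\sem\termone(\valone)\cdot\probtr\valone\traceone$ (so the context distance is a function of the semantics too, since $\metrctx=\metrtr$). Thus the statement splits into two inequalities comparing the trace-style tests and the tuple-trace tests, which I prove separately.

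For $\appl\metrtr\termone\termtwo\leq\appl\metrtrtupl\termone\termtwo$ (hence $\appl\metrctx\termone\termtwo\leq\appl\metrtrtupl\termone\termtwo$), I would proceed label by label along an ordinary trace. An $\app\valone$-action on a one-element tuple is realised verbatim by the $\markovtupl$-label $\actappl{(\emptyset,\valone)}{1}$, so every purely applicative test embeds into a tuple test with the same acceptance functional; this part is routine. The real work is the $\tenseur\ctxthree$-action of the calculus with pairs: here the acceptance functional of ``$\tenseur\ctxthree$ followed by $\tracetwo$'' is, up to evaluation of the components, the map sending a pair of values $(\valone,\valtwo)$ to the convergence probability of $\ctxthree$ with its free variables replaced by $\valone,\valtwo$ and then fed $\tracetwo$. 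Theorem~\ref{nonexpansivtupl}, instantiated with $I=\{\varone,\vartwo\}$ and the open term that implements $\ctxthree$ followed by $\tracetwo$, together with adequacy of $\metrtr$, shows that such a functional is $1$-Lipschitz for $\metrtrtupl$ on two-element tuples; combining this with the $\actcut{1}$-step that precedes it in $\markovtupl$, and averaging over the value distributions produced by the two programs, yields the desired bound.

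For $\appl\metrtrtupl\termone\termtwo\leq\appl\metrctx\termone\termtwo$, I would show that every tuple trace $\traceone$ over $\labelstupl$ that starts from a one-element tuple can be simulated by a context $\ctxtwo$ of the calculus with pairs, meaning $\sumsem{\fillc\ctxtwo\termone}=\sum_\valone\sem\termone(\valone)\cdot\probtrtupl{\tuplonea\valone}{\traceone}$ for every program $\termone$. The context is produced by induction on $\traceone$, carrying as invariant the current tuple of values threaded through a stack of $\texttt{let}$-bindings: a $\actcut i$ action is realised by a $\letin\varone\vartwo{}{}$ that destructs the $i$-th component, and a $\actappl{(\contone,\ctxthree)}{i}$ action is realised by applying the $i$-th component to $\ctxthree$, where the free variables listed in $\contone$ are the $\texttt{let}$-bound names of the other still-live components. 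The disjointness conditions on index sets that govern the labels of $\markovtupl$ correspond exactly to the affine context-formation rules, so the $\ctxtwo$ thus built is a legal affine context. Given such a $\ctxtwo$ for each $\traceone$, the separation $\abs{\sum_\valone\sem\termone(\valone)\probtrtupl{\tuplonea\valone}{\traceone}-\sum_\valtwo\sem\termtwo(\valtwo)\probtrtupl{\tuplonea\valtwo}{\traceone}}$ equals $\abs{\sumsem{\fillc\ctxtwo\termone}-\sumsem{\fillc\ctxtwo\termtwo}}\leq\appl\metrctx\termone\termtwo$, and taking the supremum over $\traceone$ closes this direction.

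The main obstacle is the simulation step of the second inequality: one has to choose the inductive invariant with care --- recording which components are still available, with which variable each is bound, and in which order they may legally be forced --- and then verify the probability identity, reconciling the operational behaviour of $\texttt{let}$ (which, by the big-step rule for pairs, evaluates a pair's two components eagerly) with the way $\markovtupl$ pre-evaluates the components of a pair it splits with $\actcut{}$. The applicative fragment of both inequalities is straightforward; essentially all the difficulty, and the only genuine use of Theorem~\ref{nonexpansivtupl}, concerns the pair constructs.
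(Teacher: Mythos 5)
Your proposal follows essentially the same route as the paper: both inequalities are obtained exactly as there, with $\appl\metrctx\termone\termtwo\leq\appl\metrtrtupl\termone\termtwo$ reduced via the earlier identity $\metrctx=\metrtr$ to an instance of Theorem~\ref{nonexpansivtupl} (the paper instantiates it once on singleton tuples of values $\abstr\varone\termone$, $\abstr\varone\termtwo$, while you thread it label by label), and the converse obtained by compiling each tuple trace starting from a single value into an affine context. Your write-up is in fact considerably more detailed than the paper's, whose second direction is a one-sentence claim that such a simulating context exists, but the decomposition and the key lemma are identical.
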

\begin{proof}
\begin{itemize}
\item We apply Theorem \ref{nonexpansivtupl} to $\abstr \varone \termone$ and $\abstr \varone \termtwo$, which are values, and the context $\ctxone = \hole$:
\begin{align*}
\appl \metrtr{\termone}{\termtwo} = 
\appl\metrtr{\abstr \varone \termone}{\abstr \varone \termtwo} \\
& \leq \appl\metrtrtupl{\tuplonea {\abstr \varone \termone}}{\abstr \varone \termtwo}\\& = \appl \metrtrtupl{\termone}{\termtwo}
\end{align*}
\item Let be $\traceone$ a trace in the LMC $\markovtupl$ which starts from a single value. Then we can find a context that simulate this trace.
\end{itemize}
\end{proof}
\condskip

\subsection{Examples}
The tuple distance, that we have just proved to be fully-abstract,
can be seen as yet another presentation of the context distance. But
there is much more: it allows to evaluate the distance between
concrete programs, even when the latter contains pairs, in a
relatively easy way. In this section, we will give two examples.

\subsubsection{A Simple Example}
Consider the terms $\termone$ and $\termtwo$ defined in Example
\ref{expair}.  We can prove that $\appl\metrtrtupl{\termone}{\termtwo}
= \frac 3 4$.  
We are first going to show that
  $\appl\metrtrtupl{\termone}{\termtwo} \geq \frac 3 4$. In order to
show that, we are going to present a particular trace $\traceone$ such
that $\abs{ {\probtrtupl {\tuplonea \termone} \traceone} -
  {\probtrtupl {\tuplonea \termtwo} \traceone}} = \frac 3 4$. More
precisely, we take $\traceone = \concat{\actcut 1}{\concat{\actappl{
      (\emcon, \identity)} 1}{\actappl{ (\emcon, \identity)} 2}}$: it
corresponds to first separating the two components of the pair, and then
passing $\identity$ as an argument to the first and to the
second component. The relevant fragment of $\markovtupl$ can be found in Figure
\ref{fig:lmctupleex}. In particular, we can see that $\probtr{\tuplonea
  \termone}{\traceone} = 1 $, and $\probtr {\tuplonea \termtwo}
\traceone = \frac 1 4$.
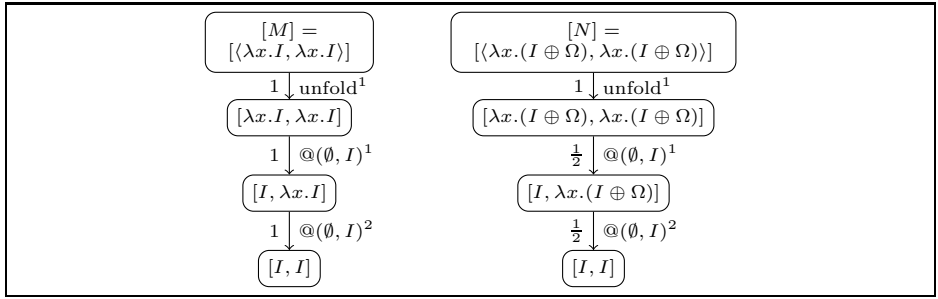
\begin{figure}[!h]
\begin{center}
\fbox{
\begin{minipage}{\figurewidth}
\begin{center}
\begin{tikzpicture}[auto]
\node [draw, rectangle, rounded corners] (M) at (0,0) {\scriptsize $
\begin{array}{c}\tuplonea\termone = \\ \tuplonea{\pair{\abstr \varone\identity}{\abstr \varone \identity}} \end{array}$};
\node[draw, rectangle, rounded corners] (N) at (4,0) {\scriptsize$\begin{array}{c}\tuplonea\termtwo = \\ \tuplonea{\pair{\abstr \varone (\psum \identity \diver)}{\abstr \varone (\psum \identity \diver)}}\end{array}$};
\node [draw, rectangle, rounded corners] (C) at (0,-1) {\scriptsize$\tuplonea{{\abstr \varone\identity},{\abstr \varone \identity}} $};
\node[draw, rectangle, rounded corners] (D) at (4,-1) {\scriptsize$\tuplonea{{\abstr \varone (\psum \identity \diver)},{\abstr \varone (\psum \identity \diver)}}$};
\node[draw, rectangle, rounded corners] (E) at (0,-2){\scriptsize$\tuplonea{\identity, \abstr \varone \identity}$};
\node[draw, rectangle, rounded corners] (F) at (4,-2){\scriptsize$\tuplonea{\identity, \abstr \varone {(\psum \identity\diver)}}$};
\node[draw, rectangle, rounded corners] (G) at (0,-3){\scriptsize$\tuplonea{\identity, \identity}$};
\node[draw, rectangle, rounded corners] (H) at (4,-3){\scriptsize$\tuplonea{\identity, \identity}$};
\draw[->](M) to node[right] {\scriptsize $\actcut 1$} node[left] {\scriptsize $1$} (C);
\draw[->](N) to node[right] {\scriptsize $\actcut 1$}node[left]{\scriptsize $1$}(D);
\draw[->](C) to node[right] {\scriptsize $\actappl{ (\emcon, \identity)} 1$} node[left]{\scriptsize $ 1$} (E);
\draw[->](D) to node[right] {\scriptsize $\actappl{ (\emcon, \identity)} 1$} node [left]{\scriptsize $\frac 1 2$} (F);
\draw[->](E) to node [right]{\scriptsize $\actappl{ (\emcon, \identity)} 2$} node[left]{\scriptsize $1$}(G);
\draw[->](F) to node [right]{\scriptsize $\actappl{ (\emcon, \identity)} 2$}node[left]{\scriptsize $\frac 1 2$}(H);

\end{tikzpicture}
\end{center}\end{minipage}
}\end{center}
\caption{The relevant fragment of the tuple LMC}\label{fig:lmctupleex}
\end{figure}
 Now we want to show the reverse inequality, namely that 
  $\appl\metrtrtupl{\termone}{\termtwo} \leq \frac 3 4$. For that, we are going to use the alternative characterisation of trace distance : it is sufficient to find a $\frac 3 4$-bisimulation $\relone$ on the LTS of distributions such that $ (\dirac {\tuplonea\termone}), (\dirac{\tuplonea \termtwo}) \in \relone$

\subsubsection{A More Complicated Example}
Please remember the example we presented in Section~\ref{sect:anatomy}. We
note $\{u_n\}_{n \in \NN}$ the sequence defined as: $u_n = \prod_{1
  \leq i \leq n}{(1 - \frac 1 {2^i} )}$.
Please observe that the sequence $(u_n)_{n\in \NN} $ has a limit strictly between $0$ and $1$.
\condskip
\begin{lemma}
La suite $(\suitone n)_{n \in \NN}$ has a limit $\limit$, and $\frac 1 2 > \limit > 0$
\end{lemma}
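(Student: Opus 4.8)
The plan is to handle convergence, the upper bound, and the lower bound in turn. First I would observe that for every $i \geq 1$ the factor $1 - \frac{1}{2^i}$ lies strictly between $0$ and $1$, so that $\suitone n$ is a strictly positive and strictly decreasing sequence: indeed $\suitone{n+1} = \suitone n \cdot \bigl(1 - \frac{1}{2^{n+1}}\bigr) < \suitone n$. A decreasing sequence bounded below by $0$ converges, hence $\limit \defi \lim_{n} \suitone n$ exists and satisfies $\limit \geq 0$.

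For the upper bound it suffices to use monotonicity once more: since the sequence is decreasing, $\limit \leq \suitone 2 = \frac 1 2 \cdot \frac 3 4 = \frac 3 8 < \frac 1 2$.

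The only point that genuinely needs an argument is $\limit > 0$, i.e. that the infinite product does not collapse to $0$. Here I would invoke the elementary inequality
\[
1 - x \;\geq\; 4^{-x} \qquad \text{for all } x \in [0, \tfrac 1 2],
\]
which holds because $g(x) = \ln(1-x) + 2x \ln 2$ is concave on $[0, \tfrac 1 2)$ (its second derivative is $-\tfrac{1}{(1-x)^2} < 0$) and vanishes at both endpoints $x = 0$ and $x = \tfrac 1 2$. Since $\tfrac{1}{2^i} \leq \tfrac 1 2$ for every $i \geq 1$, applying this factor by factor gives
\[
\suitone n \;=\; \prod_{i=1}^{n}\Bigl(1 - \tfrac{1}{2^i}\Bigr) \;\geq\; \prod_{i=1}^{n} 4^{-1/2^i} \;=\; 4^{-\sum_{i=1}^{n} 1/2^i} \;=\; 4^{-(1 - 1/2^n)} \;>\; 4^{-1} \;=\; \tfrac 1 4,
\]
so $\limit \geq \tfrac 1 4 > 0$, which together with the previous paragraph yields $\frac 1 2 > \limit > 0$.

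The only mild obstacle is the lower bound: one is really using the classical fact that a product $\prod_i (1 - a_i)$ with $a_i \in [0,1)$ converges to a nonzero limit exactly when $\sum_i a_i$ converges, and here $\sum_i \tfrac{1}{2^i} = 1 < \infty$. The displayed estimate is simply a self-contained way of making this explicit, and it even produces the concrete bound $\limit \geq \tfrac 1 4$, which is all we need.
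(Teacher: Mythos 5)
Your proof is correct. The convergence argument (positive, strictly decreasing, hence convergent) and the upper bound via $\limit \leq \suitone 2 = \frac{3}{8} < \frac{1}{2}$ match the paper's first two steps (the paper's inequality at that point is stated with the wrong orientation, but the intent is the same monotonicity argument). Where you genuinely diverge is the positivity of the limit: the paper passes to logarithms, writes $\log \suitone n = \sum_{i \leq n} \log\bigl(1 - 2^{-i}\bigr)$, and invokes d'Alembert's ratio test on the terms of that series to conclude it converges to a finite sum, hence $\limit = e^{\sum_i \log(1-2^{-i})} > 0$ — a convergence-of-series argument that yields no explicit bound. You instead use the pointwise inequality $1 - x \geq 4^{-x}$ on $[0,\frac{1}{2}]$ (correctly justified by concavity of $\ln(1-x) + 2x\ln 2$ together with its vanishing at both endpoints) to bound each factor from below, collapsing the product into $4^{-\sum_i 2^{-i}} > \frac{1}{4}$. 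Your route is more elementary — it avoids the ratio test and any asymptotic computation of ratios of logarithms — and it buys a concrete numerical bound $\limit \geq \frac{1}{4}$, whereas the paper's argument only establishes $\limit > 0$ abstractly. Both are sound; for the purposes of the surrounding theorem only $\limit > 0$ is needed, so the extra precision is a bonus rather than a necessity.
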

\begin{proof}
\begin{itemize}
\item $\suitone n$ is a decreasing and bounded sequence : it has a limit.
\item $\limit > \suitone 1 = \frac 1 2$
\item We consider the sequence : $\suittwo n = \log {\suitone n} = \sum_{1 \leq i \leq n} {\log{(1 - {(\frac 1 2)}^i)}}$. We pose $\suitthree n =  {\log{1 - {(\frac 1 2)}^i}}$. Then we consider $$\abs {\frac{\suitthree n}{\suitthree {n+1}}} = \abs{\frac { {\log{(1 - {(\frac 1 2)}^n)}}}{ {\log{(1 - {(\frac 1 2)}^{n+1})}}}} \rightarrow_{n \rightarrow \infty} \frac 1 2$$
D'Alembert's theorem for infinite sum implies that the serie is convergent and has a finite limit. 
\end{itemize}
\end{proof}
\condskip
\begin{theorem}
For every $n \in \NN$, $\appl \metrtrtupl {\termone_n}{\termtwo_n} = 1 - u_n.$
\end{theorem}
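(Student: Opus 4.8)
The plan is to prove the two inequalities $\appl\metrtrtupl{\termone_n}{\termtwo_n}\ge 1-u_n$ and $\appl\metrtrtupl{\termone_n}{\termtwo_n}\le 1-u_n$ separately, arguing throughout inside the tuple LMC $\markovtupl$ (it coincides with $\metrctx$, but tuple traces are the convenient vehicle). Both $\termone_n$ and $\termtwo_n$ are pair values, so every non-empty tuple trace out of $\tuplonea{\termone_n}$ or $\tuplonea{\termtwo_n}$ must begin with the action $\actcut 1$, which fires with probability $1$ on either side; the whole question is therefore about what can be done after that split. Also $\termone_0=\termtwo_0$ and $u_0=1$, so $n=0$ is trivial.

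For the lower bound I would exhibit the explicit trace $\traceone_n$ consisting of $n$ copies of the length-two block $\concat{\actcut 1}{\actappl{(\emcon,\identity)}{1}}$: split the pair currently in position $1$, then feed $\identity$ to the first component of that pair. Started from $\tuplonea{\termone_n}$, each block turns a tuple containing $\termone_k$ into one containing $\termone_{k-1}$ with probability $1$, because all subterms involved are values and the function bodies ignore their argument; hence $\probtrtupl{\tuplonea{\termone_n}}{\traceone_n}=1$. Started from $\tuplonea{\termtwo_n}$, the $k$-th block peels off $\termtwo_{n-k+1}$ and survives only with probability $1-\frac{1}{2^{n-k+1}}$, since $\subst{(\psumindex{\termtwo_{n-k}}{\frac{1}{2^{n-k+1}}}{\diver})}{\varone}{\identity}$ converges with exactly that probability; multiplying the $n$ factors, $\probtrtupl{\tuplonea{\termtwo_n}}{\traceone_n}=\prod_{i=1}^{n}(1-\frac{1}{2^i})=u_n$. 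Thus this one trace already separates the two terms by $\abs{1-u_n}=1-u_n$.

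For the upper bound I would give a structural description of the pair of states reachable on the two sides along an arbitrary common trace and read off the bound. The crucial points are: (i) every probabilistic choice occurring in $\termone_n$ or $\termtwo_n$ has $\diver$ on one branch, so the reachable distributions never split into two live branches — the $\termone$-side is always a single tuple of mass $0$ or $1$, the $\termtwo$-side a single tuple of some mass $w$; (ii) the two sides stay structurally matched position by position ($\termone_k\leftrightarrow\termtwo_k$, the descender $\abstr\varone{\termone_{k-1}}\leftrightarrow\abstr\varone{\psumindex{\termtwo_{k-1}}{\frac{1}{2^k}}{\diver}}$, an inert $\abstr\varone\diver\leftrightarrow$ an escaper $\abstr\varone{\psumindex{\diver}{\frac{1}{2^j}}{\identity}}$, and $\identity\leftrightarrow\identity$), and since every $\termone$-side function ignores or merely copies its argument, the environment's affine cleverness is wasted: it can essentially only descend the single chain $\termone_n,\termone_{n-1},\ldots$ or apply an argument to an $\abstr\varone\diver$ component, which instantly zeroes the $\termone$-side; (iii) consequently, while the $\termone$-side is alive the $\termtwo$-side mass $w$ is a product of a top segment of $\{1-\frac{1}{2^i}\}_{i=1}^{n}$, hence $w\ge u_n$, while once the $\termone$-side has been zeroed the $\termtwo$-side mass is at most $\frac12\le 1-u_n$. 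In both cases the difference of the two current masses — which is the difference of success probabilities if the trace stops there — is at most $1-u_n$, whence $\sup_\traceone\abs{\probtrtupl{\tuplonea{\termone_n}}{\traceone}-\probtrtupl{\tuplonea{\termtwo_n}}{\traceone}}\le 1-u_n$. One can alternatively run this as an induction on $n$: after the forced $\actcut 1$ one compares $[\abstr\varone{\termone_n},\abstr\varone\diver]$ with $[\abstr\varone{\psumindex{\termtwo_n}{\frac{1}{2^{n+1}}}{\diver}},\abstr\varone{\psumindex{\diver}{\frac{1}{2^{n+1}}}{\identity}}]$, and the induction hypothesis $\abs{a-b}\le 1-u_n$ together with the identity $1-u_{n+1}=(1-\frac{1}{2^{n+1}})(1-u_n)+\frac{1}{2^{n+1}}$ yields the post-descent bound $\abs{a-(1-\frac{1}{2^{n+1}})b}\le 1-u_{n+1}$ and the post-escape bound $\le\frac{1}{2^{n+1}}\le 1-u_{n+1}$.

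The main obstacle is exactly the bookkeeping behind (i)--(iii): one must argue carefully that in an \emph{affine} calculus the environment gains nothing by passing tuple components to one another, by interleaving the descent with escape attempts, or by cashing in escaper components at several levels — that is, that the descend-all-the-way-and-stop strategy used for the lower bound is optimal. Making this precise amounts to spelling out the set of reachable tuple configurations (on the $\termone$-side: one descender at some level $k$ plus a bag of inert $\abstr\varone\diver$'s and $\identity$'s; on the $\termtwo$-side: the matching descender together with escapers at the levels $k+1,\ldots,n$ and inert junk) and checking that this description is closed under every action of $\markovtupl$ with the claimed effect on masses. The cleanest packaging, as in the simpler example above, is an explicit $(1-u_n)$-bisimulation on the LTS of distributions over tuples relating $\dirac{\tuplonea{\termone_n}}$ and $\dirac{\tuplonea{\termtwo_n}}$.
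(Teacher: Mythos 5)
Your proposal is correct and follows essentially the same route as the paper: the identical separating trace (iterating $\concat{\actcut 1}{\actappl{(\emcon,\identity)}{1}}$) for the lower bound, and for the upper bound the same invariant family of matched tuple configurations (the paper's sets $A_n$, with one descender plus inert $\abstr\varone\diver$'s on one side and escapers $\abstr\varone{\psumindex{\diver}{\frac{1}{2^{k_i}}}{\identity}}$, $k_i\geq n+1$, on the other) together with the same dichotomy (either the $\termone$-side mass is $0$ and the $\termtwo$-side mass is at most $\frac12$, or it is $1$ and the $\termtwo$-side mass is at least $u_n$), proved by induction on the trace. The bookkeeping you flag as the main obstacle is exactly what the paper's Lemma~\ref{exauxn} carries out by case analysis on the first action of the trace.
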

\begin{proof}
We first show that $\appl \metrtrtupl {\termone_n}{\termtwo_n} \geq 1
- u_n$. As in the previous example, we do that by finding, for each $n
\in \NN$, a trace $\traceone_n$ such that $\abs{{\probtr
    {\tuplonea{\termone_n}}{\traceone_n}} -
  {\probtr{\tuplonea{\termtwo_n}}{\traceone_n}}} =1- u_n$. We define 
the sequence $(\traceone_n)_{n \in \NN}$ inductively as follows:
$$\traceone_0 = \emptytr \qquad \traceone_{n+1} = \concat{\actcut
  1}{\concat{\actappl {(\emcon, \identity)} 1}{\traceone_{n}}}$$
$\traceone_0$ is the trace which always succeeds, whatever the
starting state is. $\traceone_{n+1}$ corresponds to separating the two
components of the pair which is in first position in the tuple,
then passing the identity as an argument to the first component of
this pair, and then executing $\traceone_{n}$. For this
sequence of traces, the recursive equations of Figure \ref{fig:receq}
are verified (the proof can be found in \cite{EV}).
\begin{figure}[!h]
\begin{center}
\fbox{\begin{minipage}{\figurewidth}{\footnotesize
$$\probtr{\tuplonea{\termone_0}}{\traceone_0} = 1 \qquad \probtr{\tuplonea{\termtwo_0}}{\traceone_0} = 1 $$
$$\probtr{\tuplonea{\termone_{n+1}}}{\traceone_{n+1}}= 1\cdot \probtr{\tuplonea{\termone_{n}}}{\traceone_n} $$ 
$$ \probtr{\tuplonea{\termtwo_{n+1}}}{\traceone_{n+1}} = (1 - \frac 1 {2^{n+1}}  ) \cdot \probtr{\tuplonea{\termtwo_n}}{\traceone_n}$$} 
\end{minipage}}
\end{center}
\caption{Recursive equations verified by $\traceone_n$}\label{fig:receq}
\end{figure}
We can see by solving these equations that for every $n \in \NN$, $\probtr{\termone_{n}}{\traceone_n} = 1$ and $\probtr{\termtwo_{n}}{\traceone_n} = u_n$. As a direct consequence, we obtain the result. 
We want now to show that $\appl \metrtrtupl {\termone_n}{\termtwo_n} \leq 1 - u_n$. To do that, we need to establish that there doesn't exist a trace $\tracetwo$ such that  $\abs{{\probtr {\tuplonea{\termone_n}}{\tracetwo}} - {\probtr{\tuplonea{\termtwo_n}}{\tracetwo}}} > 1- u_n$. We're in fact going to show something stronger: for every $n \in \NN$, we're going to define a set $A_n$ of pairs of tuple, which contains the pair $({\tuplonea \termone_n}, \tuplonea {\termtwo_n} )$, and such that for every $(\tuplone, \tupltwo) \in A_n$, for every trace $\tracetwo$, $\abs{{\probtr \tuplone \tracetwo}-{\probtr \tupltwo \tracetwo}} \leq 1 - u_n$.
Intuitively, the idea behind the sequence $\{A_n\}_{n \in \NN}$ is the following: if we start from $\tuplonea{\termone_n}$, do a trace of even length, and end up in a tuple $\tuplone$ with a non-zero probability, and if when we do \emph{the same trace} starting from $\tuplonea{\termtwo_n}$ ending up in the tuple $\tupltwo$, then the pair of tuple $(\tuplone, \tupltwo)$ is in one of the $A_j$, with $j$ smaller than $n$. 
\condskip
\begin{definition}
Let be $n \in \NN$. Let $A_n$ be the set of $(\tuplone, \tupltwo) $ such that: there exist $m \in \NN$, and $k_i \geq n +1 $ (for $1 \leq i\leq m$), where:
\begin{align*}
\tuplone &= \tuplonea{\termone_n,[\abstr \varone \diver]^m};\\
\tupltwo &= \tuplonea{\termtwo_n, [\abstr \varone \psumindex \diver {\frac 1 {2^{k_i}}} \identity]_{1 \leq i \leq m}}.
\end{align*}
\end{definition}
\condskip
We want now to give an upper bound to the separation between $\tuplone$ and $\tupltwo$ any trace can induce, if $(\tuplone, \tupltwo) \in A_n$.
\condskip
\begin{lemma}\label{exauxn}
For every $n \in \NN$, for every $(\tuplone, \tupltwo) \in A_n$, we can partition the set of traces as:
\begin{align*}
\words =& \{\traceone \mid \probtr \tuplone \traceone = 0 \text{ and } \probtr \tupltwo \traceone \leq \frac 1 {2}\}\\
& \bigcup \{\traceone \mid \probtr \tuplone \traceone = 1 \text{ and } \probtr \tupltwo \traceone  \geq u_n \}.
\end{align*}
\end{lemma}
\begin{proof}
Let $\traceone \in \words$. We are going to show by induction on the length of $\traceone$ that for every $n \in \NN $, for every $(\tuplone, \tupltwo) \in A_n$, either $\probtr \tuplone \traceone = 0 \text{ and } \probtr \tupltwo \traceone \leq \frac 1 {2}$, or $ \probtr \tuplone \traceone = 1 \text{ and } \probtr \tupltwo \traceone  \geq u_n $.
\begin{varitemize}
\item If $\traceone = \emptytr$, then for every $n \in \NN$ and $(\tuplone, \tupltwo ) \in A_n$ $\probtr \tuplone \traceone = \probtr \tupltwo \traceone = 1$, and we are in the second case.
\item If the length of $\traceone$ is $l >0$. 
Let be $n \in \NN$, and $(\tuplone, \tupltwo) \in A_n$. Then  we can write: \begin{align*}
\tuplone &= \tuplonea{\termone_n,[\abstr \varone \diver]^m};\\
\tupltwo &= \tuplonea{\termtwo_n, [\abstr \varone \psumindex \diver {\frac 1 {2^{k_i}}} \identity]_{1 \leq i \leq m}} \quad \text{ with } k_i \geq n+1.
\end{align*}
We are now going to distinguish the cases depending on which element of the tuple is applied the first action of the trace. 
\begin{varitemize}
\item If the first action is not applied to the first element of the tuple, then  $\traceone = \concat{\actappl{ (\contone, \ctxone)} j}{\tracetwo}$, with $j >1$: Then $\probtr \tuplone \traceone = 0$, and $\probtr \tupltwo \traceone \leq   {\frac 1 {2^{k_j}}} \leq \frac 1 2$: we are in the first case.
\item If the first action is applied to the first element of the tuple : Then we can see that $\traceone = \concat{ \actcut 1}{\tracetwo}$ (since the first element of the tuple is actually a pair, the only action that can be applied to it is the unfold action). 
\begin{varitemize}
\item First, let's consider the case where $n = 0$. Please remember that by definition we have that $\termone_0 = \termtwo_0 = \pair{\abstr \varone \diver}{\abstr \varone \diver}$. Observe that:
\begin{align*}
\tuplone_1 &= \tuplonea{[\abstr \varone \diver]^{m+2}};\\
\tupltwo_1 &=   \tuplonea{{\abstr \varone \diver, \abstr \varone \diver, [\abstr \varone \psumindex \diver {\frac 1 {2^{k_i}}} \identity]_{1 \leq i \leq m}}}.
\end{align*}
With these notations, we can see that $\probtr \tuplone \traceone = \probtr {\tuplone_1}\tracetwo $, and 
$\probtr \tupltwo \traceone =   \probtr{\tupltwo_1} \tracetwo$. If $\tracetwo = \emptytr$, these two expressions are equal to $1$, and we are in the second case. Otherwise, $\probtr {\tuplone_1} \tracetwo = 0$, and $\probtr {\tupltwo_1} \tracetwo \leq \frac 1 2$, and we are in the first case. 

\item Now let's consider the case where ${n \geq 1}$. Please remember that:
\begin{align*} 
\termone_n &= \pair{\abstr \varone \termone_{n-1}}{ \abstr \varone \diver};\\
\termtwo_n &= \pair {\abstr \varone ({\psumindex {\termtwo_{n-1}} {\frac 1 {2^n}} \diver})}{\abstr \varone (\psumindex \diver {\frac 1 {2^{n}}} \identity)}.
\end{align*}
 
Then we'll note:
\begin{align*}
\tuplone_2 &= {\tuplonea{\abstr \varone \termone_{n-1}, \abstr \varone \diver, [\abstr \varone \diver]^m}};\\
\tupltwo_2 &=   {\tuplonea{{\abstr \varone ({\psumindex {\termtwo_{n-1}} {\frac 1 {2^n}} \diver})},{\abstr \varone (\psumindex \diver {\frac 1 {2^{n}}} \identity)} ,[\abstr \varone (\psumindex \diver {\frac 1 {2^{k_i}}} \identity)]_{1 \leq i \leq m}}}.
\end{align*}
With these notations, we can see that $\probtr \tuplone \traceone = \probtr {\tuplone_2}\tracetwo $, and 
$\probtr \tupltwo \traceone =   \probtr{\tupltwo_2} \tracetwo$.
Now we have to consider the different possible form of the trace $\tracetwo$ :
\begin{varitemize}
\item if $\tracetwo = \emptytr$, $\probtr \tuplone \traceone= \probtr \tupltwo \tracetwo = 1$.
\item if $\tracetwo = \concat{\actappl{ (\contone, \ctxone)} j}{\tracethree}$, with $j >1$, we have $\probtr {\tuplone_1} \tracethree = 0$ and $\probtr {\tuplone_2} \tracethree \leq \frac 1 {2^{n}} \leq \frac 1 2$ , and we are in the first case.
\item if $\tracetwo = \concat{\actappl{ (\contone, \ctxone)} 1}{\tracethree}$. Please remember the semantics of this action in the Markov Chain : If we start from $\tuplone_2$, with probability 1 we go to a state
$\tuplone_3$ of the form :
$$\tuplone_3 = \tuplonea{\termone_{n-1}, {[\abstr \varone \diver]}^l}$$ with $l \leq m$. If we start from $\tupltwo_2$, with probability $(1 - \frac 1 {2^n})$ we go in a state $\tupltwo_3$ of the form :
$$\tupltwo_3 = {\tuplonea{\termtwo_{n-1},[\abstr \varone (\psumindex \diver {\frac 1 {2^{k_i}}} \identity)]_{1 \leq i \leq l}}}$$  with  $k_i \geq n$.
Now we can see that : 
\begin{align*}
\probtr \tuplone \traceone &= \probtr {\tuplone_3}\tracethree; \\
\probtr \tupltwo \traceone &= (1 - \frac 1 {2^n}) \cdot   \probtr{\tupltwo_3} \tracethree.
\end{align*}
Moreover, please observe that $(\tuplone_3, \tupltwo_3) \in A_{n-1}$, so we can apply the induction hypothesis (since the length of $\tracethree$ is strictly smaller that the length of $\traceone$). Now, there are two possible cases :
\begin{varitemize}
\item $\probtr {\tuplone_3} \tracethree = 0$, and $\probtr {\tupltwo_3}{\tracethree} \leq \frac 1 2$. Then we can see that the result holds, since it implies that : $\probtr \tuplone \traceone = 0$ and $\probtr \tupltwo \traceone \leq  (1 - \frac 1 {2^n}) \cdot \frac 1 2 \leq \frac 1 2$.
\item$\probtr {\tuplone_3} \tracethree = 1$, and $\probtr {\tupltwo_3}{\tracethree} \geq u_{n-1}$.  Then we can see that the result holds, since it implies that : $\probtr \tuplone \traceone = 1$ and $\probtr \tupltwo \traceone \geq  (1 - \frac 1 {2^n}) \cdot u_{n-1} = u_n$.
\end{varitemize}
\end{varitemize}
\end{varitemize}
\end{varitemize}
\end{varitemize}
\end{proof}
\condskip
 The result we're seeking to show is a direct consequence of Lemma \ref{exauxn}: we can see easily that for any trace $\traceone$,  if $(\tuplone, \tupltwo) \in A_n$, the separation that $\traceone$ can induce is smaller than $1-u_n$.
Indeed, let be $\traceone \in \words$. Since $(\tuplonea {\termone_n}, \tuplonea {\termtwo_n}) \in A_n$, we can see that :
\begin{itemize}
\item Or the trace $\traceone$ is in the first set of the partition given by Lemma \ref{exauxn}, and $\abs{\probtr {\termone_n} \traceone - \probtr {\termtwo_n} \tracetwo}\leq \frac 1 2 \leq 1 - u_n$.
\item Or the trace $\traceone$ is in the second set of this partition, and then $\abs{\probtr {\termone_n} \traceone - \probtr {\termtwo_n} \tracetwo}  \leq 1 - u_n$.
\end{itemize}
\end{proof}
\condskip
\subsection{On Tuples and Copying}\label{sect:exponentials}
The tuple distance naturally suggests a way to handle
$\lambda$-calculi in which copying is indeed allowed. Although the
details are clearly outside the scope of this paper, we
anyway want to give some hints about why this is the case.

What makes the trace and behavioural distances unsound in presence of
copying is their inability to capture an environment which can access
the program at hand \emph{more than once}.  In our view, however, the
problem does not come from the way those distances are defined in the
abstract, but rather in the way \emph{the underlying LMC} reflects the
operational semantics of the calculus at hand. In a sense, it is in
the responsibility of the LMC to guarantee that the environment can 
access terms multiple times. The LMC $\markovterm$ we introduced in this
paper (which is close to the ones from the
literature~\cite{CrubilleDalLago2014ESOP,DalLagoSangiorgiAlberti2014POPL,DengZhang}),
as an example, is not adequate.

Suppose, however, to extend $\markovtupl$ to an LMC for a
$\lambda$-calculus in the style of Wadler's linear
$\lambda$-calculus~\cite{Wadler}: there, the grammar of terms includes
a construct $!\termone$ whose purpose is marking those subterms which
can indeed be duplicated. The actions the environment can perform on a
term in the form $!\termone$ simply reflects the above: the
environment can create a \emph{new copy} of $!\termone$, \emph{but
  also keeps} the possibility to access $!\termone$ in the future. One
immediately realises that tuples are indeed the right way to model
the access to both $!\termone$ and $\termone$.
\section{Conclusions}
We have initiated the study of metrics in higher-order languages, 
starting with the relatively easy
case of affine $\lambda$-terms, where copying capabilities are
simply not available. We showed that three different notions
of distance are sound (and sometime fully-abstract) for the
context distance, the natural generalisation of Morris'
observational equivalence. One of them, the tuple distance,
reflects the inherently monoidal structure of the underlying
calculus, this way allowing to solve some nontrivial distance
problems. 

We are actively working on extending the results described here to the
non-affine case, which for various reasons turns out to be more
difficult, as discussed in Section~\ref{sect:anatomy}. We are in
particular quite optimistic about the possibility of generalising the
tuple distance to a metric reflecting copying. The real challenge,
however, consists in handling the case in which copying is indeed
available, but the number of copies of a given term the environment
can have access to is somehow bounded, maybe polynomially on the value
on an security parameter. That would indeed be a way to get closer to
computational indistinguishability, a central notion in modern
cryptography.

\bibliographystyle{abbrv}
\bibliography{biblio}



\end{document}